\documentclass[preprint,12pt]{elsarticle}
\usepackage{amsmath,amssymb,amsthm}
\usepackage[margin=2.5cm]{geometry}
\usepackage{setspace,booktabs,graphicx,subcaption,placeins,morefloats,hyperref}
\usepackage{tikz}
\usetikzlibrary{positioning,fit,backgrounds,calc,shapes.misc,arrows.meta}
\hypersetup{colorlinks=true,linkcolor=blue,citecolor=blue}
\journal{Journal of Computational Physics}
\biboptions{sort&compress}
\newtheorem{proposition}{Proposition}
\newtheorem{theorem}{Theorem}
\newcounter{algcounter}
\newlength{\figW}

\newcommand{\dd}{\mathrm{d}}
\newcommand{\bx}{\boldsymbol{x}}
\newcommand{\bxi}{\boldsymbol{\xi}}
\newcommand{\bu}{\boldsymbol{u}}
\newcommand{\bg}{\boldsymbol{g}}
\newcommand{\bn}{\boldsymbol{n}}
\newcommand{\bc}{\boldsymbol{c}}
\newcommand{\bq}{\boldsymbol{q}}
\newcommand{\bpsi}{\boldsymbol{\psi}}
\newcommand{\bU}{\boldsymbol{U}}
\newcommand{\bF}{\boldsymbol{F}}
\newcommand{\bj}{\boldsymbol{j}}
\newcommand{\bt}{\boldsymbol{t}}
\newcommand{\avg}[1]{\left\langle #1\right\rangle}

\begin{document}
\begin{frontmatter}
\title{Wave--Particle Decomposition for Kinetic Equations II:\\ Full Boltzmann equation}
\author[iapcm]{Chang Liu\corref{cor1}}
\ead{liuchang@iapcm.ac.cn}
\cortext[cor1]{Corresponding author}
\author[hkustmath,hkustmae,hkustsri]{Kun Xu}
\ead{makxu@ust.hk}

\address[iapcm]{Institute of Applied Physics and Computational Mathematics, Beijing, P.R. China}
\address[hkustmath]{Department of Mathematics, Hong Kong University of Science and Technology, Clear Water Bay, Kowloon, Hong Kong}
\address[hkustmae]{Department of Mechanical and Aerospace Engineering, Hong Kong University of Science and Technology, Clear Water Bay, Kowloon, Hong Kong}
\address[hkustsri]{Shenzhen Research Institute, Hong Kong University of Science and Technology, Shenzhen, China}

\begin{abstract}
The wave--particle decomposition of kinetic relaxation equations is
extended to the full Boltzmann equation. An identity transformation writes
the collision term as a relaxation toward the local Maxwellian plus a
conservative remainder. Along characteristics, the equilibrium integral over a
local kinetic horizon defines the wave, and its
exact complement, the particle, carries the remainder. The
moments of the wave and particle equations are extended Navier--Stokes
equations, and the total conservation law and the particle equation form the
wave--particle multiscale equations, a closed system, exact for
every horizon, that spans a continuous spectrum from the Boltzmann equation to
Navier--Stokes hydrodynamics, parametrized by the horizon-to-relaxation
ratio. A coupled implicit wave--particle iteration solves
this system, updating the conservative variables with the particle fixed and
the particle with the predicted wave and an endpoint traction from the
macroscopic prediction. Under discrete compatibility assumptions the
iteration preserves the discrete Boltzmann solution as its fixed point, and
its near-continuum convergence factor is bounded by a particle transport
share decaying exponentially with this ratio. Tests on a normal shock,
Couette and cavity flows, and hypersonic cylinder and Apollo flows agree with
the conventional iterative scheme in rarefied and transition regimes and
with Navier--Stokes solutions near the continuum limit. There the wave--particle
iteration needs about three to more than four orders of magnitude fewer
iterations than this scheme or, where it did not converge, its Shakhov
version, and the Boltzmann solution costs one to four times the wall time of a
Navier--Stokes solver on the same mesh.
\end{abstract}
\begin{keyword}Boltzmann equation \sep wave--particle decomposition \sep discrete ordinates \sep implicit kinetic method\end{keyword}
\end{frontmatter}

\section{Introduction}\label{sec:intro}
The Boltzmann equation~\cite{boltzmann1872} describes a dilute gas through the
distribution of its molecular velocities and remains valid from the
free-molecular regime to the continuum limit. Its mathematical theory and
applications were systematized by Cercignani~\cite{cercignani1988}, and its
asymptotic theory for small Knudsen numbers, the Knudsen number being the
ratio of the mean free path to a macroscopic length, was developed by
Sone~\cite{sone2007}. The kinetic and the hydrodynamic descriptions are
connected by the Chapman--Enskog expansion~\cite{chapman1970}, which recovers
the Euler equations at leading order and the Navier--Stokes equations at first
order. In this regime collisions drive the distribution towards a local
Maxwellian while conserving mass, momentum and energy, so that the slow
evolution of the conservative variables is governed by transport. An
efficient multiscale method should therefore keep the kinetic description
where non-equilibrium effects are essential and, near equilibrium, treat the
slow conservative modes directly.

The direct simulation Monte Carlo (DSMC) method of Bird~\cite{bird1994} is the
standard particle method for rarefied flows. It represents the gas by
simulation particles with stochastic collisions, and since its cell size and
time step must be smaller than the mean free path and the collision time, it
becomes costly near the continuum limit. Deterministic discrete-velocity
methods solve for the distribution on a velocity grid without statistical
noise. Yang and Huang~\cite{yang1995} computed rarefied flows with nonlinear
model Boltzmann equations, and Mieussens~\cite{mieussens2000} constructed
discrete-velocity models and schemes for the Bhatnagar--Gross--Krook (BGK)
equation. The BGK model~\cite{bgk1954} replaces the collision integral by a
relaxation toward the local Maxwellian, and the Shakhov
model~\cite{shakhov1968} adds a heat-flux correction that gives the correct
Prandtl number. For the
full collision operator, Mouhot and Pareschi~\cite{mouhot2006} developed fast
spectral algorithms, with which Wu et al.~\cite{wu2013} obtained
deterministic Boltzmann solutions of rarefied gas flows.

For steady problems these kinetic solvers commonly rely on the conventional
iterative scheme (CIS)~\cite{xu2022implicit,su2020}, which treats the
transport and the loss term implicitly and takes the gain term from the
previous iterate. The same iteration is known as source iteration in
discrete-ordinates particle transport, where its acceleration was reviewed by
Adams and Larsen~\cite{adams2002}. Because the conservative modes relax only
through transport, the scheme converges slowly as the Knudsen number
decreases. Asymptotic-preserving (AP) schemes address the stiffness of the
continuum limit. Jin~\cite{jin1999} formulated the AP concept for multiscale
kinetic equations, Pareschi and Russo~\cite{pareschi2005} developed
implicit--explicit Runge--Kutta schemes for relaxation systems, Bennoune,
Lemou and Mieussens~\cite{bennoune2008} constructed a micro--macro scheme that
preserves the compressible Navier--Stokes asymptotics of the Boltzmann
equation, and Filbet and Jin~\cite{filbet2010} penalized the stiff collision
operator by a BGK operator; these developments were reviewed by Dimarco and
Pareschi~\cite{dimarco2014}. For steady rarefied flows, the general synthetic
iterative scheme (GSIS) of Su, Wu and co-workers~\cite{su2020} couples the
kinetic equation with synthetic macroscopic equations and reaches steady
solutions within dozens of iterations.

The unified gas-kinetic scheme (UGKS) of Xu and Huang~\cite{xu2010} builds
the multiscale behaviour into the numerical flux, following the direct
modeling methodology. It updates the conservative variables and the discrete
distribution together within a time step, and its interface flux is obtained
by integrating over the step the time-dependent integral solution of the
kinetic model equation along characteristics. In this solution the
equilibrium integral supplies the hydrodynamic part of the flux and the free
transport of the initial distribution supplies the kinetic part, with weights
set by the ratio of the time step to the local collision time, so that the
flow physics is recovered on the cell scale without restricting the cell size
and the time step to the mean free path and the collision time. Huang, Xu and
Yu~\cite{huang2012ugks} extended the scheme to multidimensional flows,
Xu~\cite{xu2014direct} formulated the underlying direct modeling methodology,
and Liu et al.~\cite{liu2016boltzmann} incorporated the full Boltzmann
collision operator. For steady flows, Zhu, Zhong and Xu~\cite{zhu2016implicit}
developed an implicit UGKS in which an implicit solution of the macroscopic
equations predicts the equilibrium state and the kinetic equation is then
solved implicitly with this prediction, so that the macroscopic and
microscopic equations are coupled within one iteration. They accelerated the
method further by multigrid~\cite{zhu2017multigrid}, and Xu et
al.~\cite{xu2022implicit} formulated a UGKS-based implicit iterative method
for multiscale non-equilibrium flows. The discrete unified gas-kinetic scheme
of Guo, Xu and Wang~\cite{guo2013} provides a related finite-volume
discrete-velocity framework, and the UGKS has been extended to radiative
transfer~\cite{sun2015ugks}, multicomponent plasma
transport~\cite{liu2017plasma} and gas--particle multiphase
flows~\cite{liu2019multiphase}; its first decade was reviewed by Zhu and
Xu~\cite{zhu2021decade}.

The unified gas-kinetic wave--particle (UGKWP) method represents the same
integral solution within each time step by an analytically evolved wave and
by stochastic particles. The fraction $e^{-\Delta t/\tau}$ of the gas that
travels the time step $\Delta t$ without collision is represented by sampled
particles, and the remainder is evolved analytically as a wave, so that the
particle population decays exponentially in the continuum regime. Liu, Zhu
and Xu~\cite{liu2020ugkwp} constructed the method for continuum and rarefied
gas flows and Zhu et al.~\cite{zhu2019ugkwp} for unstructured meshes.
Subsequent work developed it for photon transport~\cite{li2020ugkwp}, gas
mixtures and plasma~\cite{liu2021plasma}, diatomic gases with
rotational~\cite{xu2021diatomic} and vibrational~\cite{wei2024ugkwp}
non-equilibrium, gas--particle flows~\cite{yang2022multiphase}, hypersonic
non-equilibrium flows~\cite{long2024nonequilibrium} and non-equilibrium
turbulence~\cite{yang2026turbulence}, and Guo, Zhu and
Xu~\cite{guo2026representation} analysed its kinetic representation. For
steady flows, local time stepping was introduced into the simplified unified
wave--particle method by Yang et al.~\cite{yang2026lts} and, with a rigorously
justified balance of the interfacial particle flux, into the UGKWP method by
Guo et al.~\cite{guo2026lts}.

Two structural ideas run through these developments: the wave--particle
modeling of the integral solution, which comes from the UGKS and UGKWP line,
and the two-way coupling between a macroscopic conservation law and a kinetic
component, which comes from the micro--macro decomposition
line~\cite{bennoune2008}. Building on both, the first paper of this
series~\cite{liu_wpd_I}, referred to as Part~I, formulated the
wave--particle decomposition (WPD) at the level of the kinetic equation for
kinetic relaxation equations. In that formulation the characteristic integral
solution is truncated at a local kinetic horizon: the relaxation contribution
accumulated over the horizon defines the wave, and the collisionless factor
beyond the horizon defines the particle. The proportion of wave and particle
is therefore set by the local horizon-to-relaxation ratio rather than by the
global time step of UGKWP, and the particle is a collisionless fraction of the
distribution rather than the signed residual of a micro--macro decomposition.
The wave equation, the particle equation and the source-free total
conservation law form a unified wave--particle system across the full
Knudsen spectrum, which Part~I realized by an explicit conservative
macro--micro finite-volume scheme with a Navier--Stokes gas-kinetic wave flux
and a discrete-ordinate or Monte Carlo particle solver, establishing its
asymptotic-preserving continuum limit and its regime-adaptive kinetic
representation.

The present work extends the decomposition to the full Boltzmann equation, and
its contribution is both theoretical and numerical. On the theoretical side,
an identity transformation splits the collision term into a relaxation
toward the local Maxwellian and a conservative remainder, leaving the
Boltzmann equation unchanged. Integration along characteristics then yields
the characteristic integral solution, whose equilibrium integral defines the
wave and whose complement defines a particle carrying the collision
remainder. The particle is thus no longer the pure collisionless fraction of
Part~I: it adds to that fraction the accumulated remainder, which is
conservative but signed, and reduces to the collisionless population when the
remainder vanishes. The wave and the particle satisfy exact microscopic
equations and extended Navier--Stokes equations, and the total conservation
law and the particle equation form the wave--particle multiscale equations, a
closed system that is equivalent to the Boltzmann equation for every horizon
and spans a continuous spectrum, parametrized by the horizon-to-relaxation
ratio, from the Boltzmann equation to its Euler and Navier--Stokes limits.

On the numerical side, the system is solved by a coupled implicit
wave--particle (WP) iteration built on second-order gas-kinetic wave fluxes and MUSCL-type upwind particle
fluxes, which inherits the prediction--evolution structure of the implicit
UGKS~\cite{zhu2016implicit}. The W iteration advances the conservative
variables with the particle fixed, and the P iteration advances the particle
with the wave fixed at the predicted state and with an endpoint traction
supplied by the macroscopic prediction. The fixed point of the iteration is
the discrete Boltzmann solution, and a linear analysis shows how the two-way
coupling removes the slow relaxation of the conservative modes that limits
the conventional iterative scheme near equilibrium. In the continuum tests
the coupled WP iteration needs about three to more than four orders of magnitude
fewer iterations than the conventional iterative scheme, applied to the
Boltzmann equation or, where that could not be converged, to the cheaper and
faster Shakhov model, and its wall time lies between one and about four times
that of a Navier--Stokes solver sharing its macroscopic discretization.

Section~\ref{sec:theory} presents the Boltzmann equation, the wave--particle
equations and their continuous spectrum, and Section~\ref{sec:algorithm}
constructs the coupled WP iteration. Section~\ref{sec:analysis} establishes the
equivalence of the coupled system with the Boltzmann equation, the
asymptotic-preserving continuum limit of the discrete fixed point and a bound
on the convergence factor of the iteration. The numerical tests are reported
in Section~\ref{sec:results}, and Section~\ref{sec:conclusion} concludes the
paper.

\section{Wave--Particle Multiscale Equations}\label{sec:theory}
\subsection{Boltzmann equation and fluid limits}\label{sec:gas_model}
We consider a monatomic dilute gas described by its mass distribution
$f(\bx,\bxi,t)$, a function of position $\bx$, molecular velocity $\bxi$ and
time $t$. Lengths, densities and speeds are scaled by reference values
$L_0$, $\rho_0$ and $c_0$, and time, temperature, viscosity and distribution
by $L_0/c_0$, $c_0^2/R_g$, $\rho_0c_0L_0$ and $\rho_0/c_0^3$, where $R_g$ is
the specific gas constant. With the Knudsen number
$\varepsilon\equiv\mathrm{Kn}>0$ and the normalized Boltzmann collision
operator $Q$, the governing equation reads
\begin{equation}\label{eq:boltzmann}
 \partial_t f+\bxi\cdot\nabla_{\bx}f
 =\frac{1}{\varepsilon}Q(f,f),
\end{equation}
where $\nabla_{\bx}$ is the spatial gradient. Velocity moments are written
$\avg{\cdot}=\int_{\mathbb R^3}\cdot\,\dd\bxi$, and
$\bpsi=(1,\bxi^T,|\bxi|^2/2)^T$ are the collision invariants, the superscript
$T$ denoting transpose. The conservative state is
\begin{equation}\label{eq:conservative_state}
 \bU=\avg{\bpsi f}=(\rho,\rho\bu^T,\rho E)^T,
\end{equation}
in which $\rho$, $\bu$ and $T$ are the density, bulk velocity and
temperature, $E=|\bu|^2/2+3T/2$ is the specific total energy and $p=\rho T$ is
the pressure. In terms of the peculiar velocity $\bc=\bxi-\bu$, the local
Maxwellian
\begin{equation}\label{eq:maxwellian}
 M=\frac{\rho}{(2\pi T)^{3/2}}\exp\!\left(-\frac{|\bc|^2}{2T}\right)
\end{equation}
has the same conservative moments as $f$.

For an elastic collision between molecules with velocities $\bxi$ and
$\bxi_*$, let $\bg=\bxi-\bxi_*$ be the relative velocity, $\bn$ the unit
vector on the sphere $S^2$ along the post-collision relative velocity, and
$\theta\in[0,\pi]$ the scattering angle, $\cos\theta=\bg\cdot\bn/|\bg|$.
The post-collision velocity of the first molecule is
\begin{equation}\label{eq:post_collision}
 \bxi'=\frac{\bxi+\bxi_*}{2}+\frac{|\bg|}{2}\bn,
\end{equation}
and momentum conservation gives that of its partner,
$\bxi'_*=\bxi+\bxi_*-\bxi'$. With the collision kernel $B$, the collision
operator is
\begin{equation}\label{eq:collision_integral}
 Q(f,f)=\int_{\mathbb R^3}\int_{S^2}B(|\bg|,\theta)
   \big(f'f'_*-ff_*\big)\,\dd\bn\,\dd\bxi_*,
\end{equation}
where $f$, $f_*$, $f'$ and $f'_*$ denote the distribution at $\bxi$,
$\bxi_*$, $\bxi'$ and $\bxi'_*$, all at the same position and time, and the
associated symmetric bilinear operator is its polarization
$Q(g,h)=\frac12\big[Q(g+h,g+h)-Q(g,g)-Q(h,h)\big]$. Collisions conserve mass,
momentum and energy,
$\avg{\bpsi Q(f,f)}=\boldsymbol{0}$. The kernel encodes the law of molecular
interaction; the decomposition developed below does not depend on it, and
the kernel used in the computations is specified in
Section~\ref{sec:test_gas_model}. Since the kernel does not depend on
$\varepsilon$, the factor $\varepsilon^{-1}$ in \eqref{eq:boltzmann} sets the
kinetic scale.

For a smooth flow close to local equilibrium, the Chapman--Enskog
expansion~\cite{chapman1970} writes
\begin{equation}\label{eq:boltzmann_ce_expansion}
 f=M+\varepsilon\,f^{(1)}+\mathcal O(\varepsilon^2),
\end{equation}
where the first-order correction carries no conservative moments,
$\avg{\bpsi f^{(1)}}=\boldsymbol 0$, and $\mathcal O(\varepsilon^2)$ denotes
terms bounded by a constant times $\varepsilon^2$ in the velocity-weighted
norms required for the moments below. Since $Q(M,M)=0$, the first order of
\eqref{eq:boltzmann} gives
\begin{equation}\label{eq:boltzmann_first_correction}
 L_Mf^{(1)}=\partial_t^{(0)}M+\bxi\cdot\nabla_{\bx}M,
\end{equation}
where $L_M\,\delta f=2Q(M,\delta f)$ is the linearization of $Q$ at $M$ acting
on a perturbation $\delta f$, and $\partial_t^{(0)}M$ is the time derivative
of $M$ evaluated with the leading-order conservation laws. Solvability
requires the right-hand side to have zero conservative moments, which yields
the Euler equations, and the moment constraint then selects $f^{(1)}$ on the
complement of the collision invariants.

The deviatoric stress of a distribution $Z$ is
\begin{equation}\label{eq:transport_moments}
 \boldsymbol\sigma_Z=
 \avg{\left(\bc\bc^T-\frac{|\bc|^2}{3}\mathbf I\right)Z},
\end{equation}
and its heat flux is
\begin{equation}\label{eq:heat_flux_moment}
 \bq_Z=\frac12\avg{|\bc|^2\bc\,Z},
\end{equation}
where $\mathbf I$ is the identity tensor and $\bc$ is formed with the bulk
velocity of $f$ for every distribution. For a rotationally invariant kernel,
the correction $f^{(1)}$ gives $\boldsymbol\sigma_f=\boldsymbol\sigma_{\rm NS}+\mathcal
O(\varepsilon^2)$ and $\bq_f=\bq_{\rm NS}+\mathcal O(\varepsilon^2)$, with
Newton's law
\begin{equation}\label{eq:boltzmann_transport_laws}
 \boldsymbol\sigma_{\rm NS}=
 -\mu\left[\nabla_{\bx}\bu+(\nabla_{\bx}\bu)^T
       -\frac23(\nabla_{\bx}\cdot\bu)\mathbf I\right]
\end{equation}
and Fourier's law
\begin{equation}\label{eq:fourier_law}
 \bq_{\rm NS}=-\kappa\nabla_{\bx}T.
\end{equation}
The viscosity $\mu=\varepsilon\mu_1$ and the heat conductivity
$\kappa=\varepsilon\kappa_1$ contain order-one coefficients $\mu_1$ and
$\kappa_1$ determined by $L_M$, and their ratio defines the Prandtl number
$\mathrm{Pr}=c_p\mu/\kappa$ with the specific heat $c_p=5/2$. Taking moments
of \eqref{eq:boltzmann} and dropping terms of order $\varepsilon^2$ yields the
Navier--Stokes system
\begin{equation}\label{eq:ns_system}
 \left\{
 \begin{aligned}
 &\partial_t\rho+\nabla_{\bx}\cdot(\rho\bu)=0,\\[4pt]
 &\partial_t(\rho\bu)+\nabla_{\bx}\cdot
       (\rho\bu\bu^T+p\mathbf I+\boldsymbol\sigma_{\rm NS})=\boldsymbol 0,\\[4pt]
 &\partial_t(\rho E)+\nabla_{\bx}\cdot
       \big[(\rho E+p)\bu+\boldsymbol\sigma_{\rm NS}\bu+\bq_{\rm NS}\big]=0,
 \end{aligned}
 \right.
\end{equation}
and setting $\boldsymbol\sigma_{\rm NS}$ and $\bq_{\rm NS}$ to zero gives the
Euler system. Both limits hold formally in smooth flow, under moment bounds
and the solvability of the linearized collision problem.

\subsection{Wave operator and wave--particle equations}\label{sec:wave_theory}
Let $\tau(\bx,t)>0$ be a velocity-independent relaxation time, whose choice
for the computations is given in Section~\ref{sec:test_gas_model}. The
identity transformation adds and subtracts the BGK relaxation
term~\cite{bgk1954} and writes the Boltzmann equation as
\begin{equation}\label{eq:collision_remainder}
 \partial_t f+\bxi\cdot\nabla_{\bx}f=\frac{M-f}{\tau}+R,
\end{equation}
with the collision remainder
\begin{equation}\label{eq:remainder_definition}
 R=\frac{1}{\varepsilon}Q(f,f)-\frac{M-f}{\tau}.
\end{equation}
Because $M$ and $f$ share their conservative moments and $\tau$ does not
depend on $\bxi$, the remainder satisfies $\avg{\bpsi R}=\boldsymbol{0}$, so that the relaxation term
and the remainder are separately conservative. Equation
\eqref{eq:collision_remainder} is an identity rather than a model: the
relaxation time parametrizes the decomposition below, but it does not enter
the Boltzmann equation or its solutions.

Along the characteristic through $(\bx,t)$, let $s\geq0$ be the flight time,
$(\bx_s,t_s)=(\bx-\bxi s,t-s)$ the characteristic point and $a_s$ the value
of any field $a$ at that point, so that $M_s$, $f_s$, $R_s$ and $\tau_s$
need no further arguments. Let $\ell(\bx,t)\geq0$ be the local kinetic
horizon, measured in time, with the subscript $\ell$ denoting evaluation at
$s=\ell$. The cumulative collision frequency
$\nu(s)=\int_0^s\tau_r^{-1}\,\dd r$, in which $r$ is a flight-time integration
variable, defines the collisionless factor $e^{-\nu(s)}$, the probability
that a molecule travels the flight time $s$ without a collision. Integrating
\eqref{eq:collision_remainder} along the characteristic with the integrating
factor $e^{\nu(s)}$ gives the characteristic integral solution
\begin{equation}\label{eq:integral_solution}
 f=\int_0^{\ell}\frac{M_s}{\tau_s}e^{-\nu(s)}\,\dd s
   +e^{-\nu(\ell)}f_{\ell}
   +\int_0^{\ell}R_s e^{-\nu(s)}\,\dd s .
\end{equation}
For a relaxation model, $R=0$, with a locally frozen relaxation time and a
horizon equal to the time elapsed within a step, this is the integral solution
from which the UGKS constructs its multiscale flux~\cite{xu2010,xu2014direct}
and on which the wave--particle split of the UGKWP method is
based~\cite{liu2020ugkwp}. Its first term, the equilibrium integral,
accumulates the local Maxwellian along the characteristic with the collision
probability density $e^{-\nu(s)}/\tau_s$ and dominates when the horizon is
long compared with the relaxation time. Its second term is the free transport
of the upstream distribution, the free-transport memory that survives beyond
the horizon, and dominates when the horizon is short. The wave is defined as
the equilibrium integral, which is the wave operator of Part~I with the local
Maxwellian as relaxation target,
\begin{equation}\label{eq:wave_operator}
 W=\int_0^{\ell}\frac{M_s}{\tau_s}e^{-\nu(s)}\,\dd s,
\end{equation}
and the particle is its exact complement,
\begin{equation}\label{eq:particle_operator}
 P=f-W=e^{-\nu(\ell)}f_{\ell}+\int_0^{\ell}R_s e^{-\nu(s)}\,\dd s .
\end{equation}
For a prescribed horizon and relaxation time, the wave depends only on the
macroscopic state along the backward characteristic segment of duration
$\ell$, truncated at the initial time or at an inflow boundary where data are
prescribed. The particle, in turn, contains the surviving upstream information
and the accumulated collision remainder. Unlike the particle of Part~I, which
is the collisionless fraction of a non-negative distribution, it carries the
signed remainder and need not be non-negative; non-negativity is required of
$f$ only.

For a locally frozen relaxation time let $\eta=\ell/\tau$ be the local
horizon-to-relaxation ratio, and let the second derivative of $M$ along each
characteristic be bounded within the horizon. A Taylor expansion of $M_s$ in
the wave operator then gives, uniformly in $\eta\geq0$,
\begin{equation}\label{eq:wave_expansion}
 W=A_0(\eta)M-A_1(\eta)\,\tau
       \left(\partial_t M+\bxi\cdot\nabla_{\bx}M\right)
       +\mathcal{O}(\tau^2),
\end{equation}
where $\mathcal O(\tau^2)$ denotes terms bounded by a constant times $\tau^2$
and the horizon weights are $A_0(\eta)=1-e^{-\eta}$ and
$A_1(\eta)=1-(1+\eta)e^{-\eta}$. These weights are the shares of the
Maxwellian and of its first-order relaxation response that the wave carries.
For $\eta\ll1$ they vanish as $A_0\sim\eta$ and $A_1\sim\eta^2/2$, so that
the wave carries little of the distribution and the transport is left to the
particle. For $\eta\gg1$ both weights approach one, and the particle retains
the fraction $1-A_0=e^{-\eta}$ of the Maxwellian, the fraction
$1-A_1=(1+\eta)e^{-\eta}$ of the first-order response and the accumulated
collision remainder.

Differentiating the wave operator along the characteristic
(Proposition~\ref{prop:wp_equivalence}) shows that the wave obeys a
relaxation equation with the horizon exchange
\begin{equation}\label{eq:endpoint_exchange}
 \mathcal S_{\ell}=\frac{M_{\ell}}{\tau_{\ell}}e^{-\nu(\ell)}
       \left(1-\partial_t\ell-\bxi\cdot\nabla_{\bx}\ell\right),
\end{equation}
which is the distribution transferred per unit time across the horizon and
corresponds to the local-horizon source of Part~I. Subtracting the wave
equation from the Boltzmann equation gives the particle equation, and the two
together form the microscopic wave--particle equations
\begin{equation}\label{eq:wp_kinetic_system}
 \left\{
 \begin{aligned}
 &\partial_tW+\bxi\cdot\nabla_{\bx}W
   =\frac{M-W}{\tau}-\mathcal S_{\ell},\\[6pt]
 &\partial_tP+\bxi\cdot\nabla_{\bx}P
   =\frac{1}{\varepsilon}Q(W+P,W+P)
    -\frac{M-W}{\tau}+\mathcal S_{\ell}.
 \end{aligned}
 \right.
\end{equation}
They are posed in phase space and coupled through $f=W+P$,
$\bU=\avg{\bpsi f}$ and $M$. Their exchange terms cancel in the sum, so that
the full collision operator is retained for every horizon, including the
wave--wave, cross and particle--particle collisions in
$Q(W+P,W+P)=Q(W,W)+2Q(W,P)+Q(P,P)$.

The macroscopic equations are the velocity moments of
\eqref{eq:wp_kinetic_system}, and for each component they take the form of
extended Navier--Stokes equations. For $Z=W,P$, let $\rho_Z=\avg{Z}$ be the
density, $\bj_Z=\avg{\bc\,Z}$ the diffusion flux relative to the bulk velocity
$\bu$ of $f$, $(\rho E)_Z=\avg{|\bxi|^2Z}/2$ the total energy and
$p_Z=\avg{|\bc|^2Z}/3$ the pressure of the component, so that its
conservative state is
$\bU_Z=\avg{\bpsi Z}=\big(\rho_Z,(\rho_Z\bu+\bj_Z)^T,(\rho E)_Z\big)^T$.
Combined with the diffusion flux, the stress \eqref{eq:transport_moments} and
the heat flux \eqref{eq:heat_flux_moment} of the component define the
generalized stress
\begin{equation}\label{eq:generalized_stress}
 \widetilde{\boldsymbol\sigma}_Z=\boldsymbol\sigma_Z+\bu\bj_Z^T+\bj_Z\bu^T
\end{equation}
and the generalized heat flux
\begin{equation}\label{eq:generalized_heat_flux}
 \widetilde\bq_Z=\bq_Z-(\bu\cdot\bj_Z)\,\bu-\frac12|\bu|^2\bj_Z,
\end{equation}
which reduce to $\boldsymbol\sigma_Z$ and $\bq_Z$ when the component carries
no diffusion flux. The relaxation and exchange terms of
\eqref{eq:wp_kinetic_system} transfer conservative moments from the particle
to the wave at the rate
\begin{equation}\label{eq:exchange_source}
 \boldsymbol{\mathcal S}=\big(s_\rho,\boldsymbol s_m^T,s_E\big)^T
 =\frac{\bU_P}{\tau}-\avg{\bpsi\,\mathcal S_\ell},
\end{equation}
in which relaxation converts particle into wave at the rate $1/\tau$ and the
horizon exchange returns the part that leaves the horizon. In terms of these
quantities, the moments of \eqref{eq:wp_kinetic_system} are the extended
Navier--Stokes equations
\begin{equation}\label{eq:wp_moment_system}
 \left\{
 \begin{aligned}
 &\partial_t\rho_Z+\nabla_{\bx}\cdot\big(\rho_Z\bu+\bj_Z\big)=\pm\,s_\rho,\\[4pt]
 &\partial_t\big(\rho_Z\bu+\bj_Z\big)+\nabla_{\bx}\cdot\big(\rho_Z\bu\bu^T
   +p_Z\mathbf I+\widetilde{\boldsymbol\sigma}_Z\big)=\pm\,\boldsymbol s_m,\\[4pt]
 &\partial_t(\rho E)_Z+\nabla_{\bx}\cdot\big[\big((\rho E)_Z+p_Z\big)\bu
   +\widetilde{\boldsymbol\sigma}_Z\bu+\widetilde\bq_Z\big]=\pm\,s_E,
 \end{aligned}
 \right.
\end{equation}
where the upper sign gives the macroscopic wave equations, $Z=W$, and the
lower sign the macroscopic particle equations, $Z=P$. The collision operator
does not appear, being conservative. In contrast to the Navier--Stokes system
\eqref{eq:ns_system}, each component has its own density, pressure and
diffusion flux, and the two components exchange mass, momentum and energy
through $\boldsymbol{\mathcal S}$. Since $\rho_W+\rho_P=\rho$,
$\bj_W+\bj_P=\boldsymbol0$, $p_W+p_P=p$ and $(\rho E)_W+(\rho E)_P=\rho E$,
the diffusion corrections in \eqref{eq:generalized_stress} and
\eqref{eq:generalized_heat_flux} and the exchange cancel in the sum, which is
the source-free total conservation law
\begin{equation}\label{eq:total_conservation}
 \left\{
 \begin{aligned}
 &\partial_t\rho+\nabla_{\bx}\cdot(\rho\bu)=0,\\[4pt]
 &\partial_t(\rho\bu)+\nabla_{\bx}\cdot\big(\rho\bu\bu^T+p\mathbf I
   +\boldsymbol\sigma_W+\boldsymbol\sigma_P\big)=\boldsymbol0,\\[4pt]
 &\partial_t(\rho E)+\nabla_{\bx}\cdot\big[(\rho E+p)\bu
   +(\boldsymbol\sigma_W+\boldsymbol\sigma_P)\bu+\bq_W+\bq_P\big]=0.
 \end{aligned}
 \right.
\end{equation}
Written compactly as
$\partial_t\bU+\nabla_{\bx}\cdot(\bF_W+\bF_P)=\boldsymbol0$, this law has
the structure of the Navier--Stokes system \eqref{eq:ns_system} without any
constitutive relation, for its stress and heat flux are exact moments of the
wave and the particle.

\subsection{Closed system and continuous spectrum}\label{sec:wave_scales}
Since the wave is determined by the conservative variables along the
characteristics, the total conservation law and the particle equation form a
closed system for $\bU$ and $P$,
\begin{equation}\label{eq:up_system}
 \left\{
 \begin{aligned}
 &\partial_t\rho+\nabla_{\bx}\cdot(\rho\bu)=0,\\[4pt]
 &\partial_t(\rho\bu)+\nabla_{\bx}\cdot\big(\rho\bu\bu^T+p\mathbf I
   +\boldsymbol\sigma_W+\boldsymbol\sigma_P\big)=\boldsymbol0,\\[4pt]
 &\partial_t(\rho E)+\nabla_{\bx}\cdot\big[(\rho E+p)\bu
   +(\boldsymbol\sigma_W+\boldsymbol\sigma_P)\bu+\bq_W+\bq_P\big]=0,\\[4pt]
 &\partial_tP+\bxi\cdot\nabla_{\bx}P=R+\frac{P_e-P}{\tau},
 \end{aligned}
 \right.
\end{equation}
where the collision remainder \eqref{eq:remainder_definition} and the endpoint
distribution are
\begin{equation}\label{eq:up_closure}
 \begin{aligned}
 &R=\frac{1}{\varepsilon}Q(f,f)-\frac{M-f}{\tau},\\[4pt]
 &P_e=\tau\mathcal S_\ell=\frac{\tau}{\tau_\ell}\,M_\ell\,e^{-\nu(\ell)}
   \left(1-\partial_t\ell-\bxi\cdot\nabla_{\bx}\ell\right),
 \end{aligned}
\end{equation}
$W$ follows from $\bU$ by \eqref{eq:wave_operator} with $M$ the
Maxwellian of $\bU$, and $f=W+P$. The last equation of \eqref{eq:up_system}
is the particle equation of \eqref{eq:wp_kinetic_system} written with the
remainder: the particle carries the collision remainder and relaxes towards
the endpoint distribution at the rate $1/\tau$. We call the system
\eqref{eq:up_system}, on which the method is built, the wave--particle
multiscale equations. For compatible initial and boundary data it is
equivalent to the Boltzmann equation for every horizon and preserves
$\bU=\avg{\bpsi(W+P)}$ (Proposition~\ref{prop:wp_equivalence}). The
horizon-to-relaxation ratio $\eta=\ell/\tau$ therefore parametrizes a
continuous spectrum of equivalent representations of the Boltzmann dynamics.
At $\eta=0$ the wave vanishes, $P_e=M$, and the particle equation is the
Boltzmann equation itself, whereas as $\eta\to\infty$ the free-transport term
disappears from the particle, which then retains only the accumulated
collision remainder.

The particle stress $\boldsymbol\sigma_P$ and heat flux $\bq_P$ in
\eqref{eq:up_system} are moments of the unknown $P$, whereas those of the wave
are functionals of $\bU$. For a locally frozen relaxation time, and with the
time derivative of $M$ taken from the Euler equations, the expansion
\eqref{eq:wave_expansion} gives them explicitly. The wave carries the fraction
$A_0$ of the conservative state,
\begin{equation}\label{eq:wave_state}
 \bU_W=A_0(\eta)\,\bU+\mathcal O(\tau^2),
\end{equation}
so that $\bj_W=\boldsymbol0$, $p_W=A_0p$,
$\widetilde{\boldsymbol\sigma}_W=\boldsymbol\sigma_W$ and
$\widetilde\bq_W=\bq_W$ at this order, and the fraction $A_1$ of the
relaxation stress and heat flux,
\begin{equation}\label{eq:wave_stress}
 \boldsymbol\sigma_W=-A_1(\eta)\,\tau p\left[\nabla_{\bx}\bu+(\nabla_{\bx}\bu)^T
       -\frac23(\nabla_{\bx}\cdot\bu)\mathbf I\right]+\mathcal O(\tau^2),
\end{equation}
\begin{equation}\label{eq:wave_heat_flux}
 \bq_W=-A_1(\eta)\,c_p\,\tau p\,\nabla_{\bx}T+\mathcal O(\tau^2).
\end{equation}
Thus $\boldsymbol\sigma_W=A_1\boldsymbol\sigma_\tau$ and
$\bq_W=A_1\bq_\tau$, where $\boldsymbol\sigma_\tau$ and $\bq_\tau$ are the
laws \eqref{eq:boltzmann_transport_laws} and \eqref{eq:fourier_law} with the
relaxation viscosity $\tau p$ and heat conductivity $c_p\tau p$, whose
Prandtl number is one.

In the continuum limit we write $\tau=\varepsilon\bar\tau$ with an order-one
coefficient $\bar\tau(\bx,t)$ and let the horizon be bounded below
independently of $\varepsilon$, so that $\eta\to\infty$ and the equilibrium
integral beyond the horizon becomes negligible. The wave then takes the
Chapman--Enskog form
\begin{equation}\label{eq:wave_ce}
 W=M+\varepsilon\,W^{(1)}+\mathcal O(\varepsilon^2),
\end{equation}
with the first-order correction
\begin{equation}\label{eq:wave_first_order}
 W^{(1)}=-\bar\tau\left(\partial_t^{(0)}M+\bxi\cdot\nabla_{\bx}M\right),
\end{equation}
and subtraction from \eqref{eq:boltzmann_ce_expansion} shows that the
particle is of first order,
\begin{equation}\label{eq:particle_ce}
 P=\varepsilon\big(f^{(1)}-W^{(1)}\big)+\mathcal O(\varepsilon^2).
\end{equation}
The total flux is therefore the Euler flux plus a first-order correction,
\begin{equation}\label{eq:total_flux_ce}
 \bF_W+\bF_P=\bF_E+\varepsilon\,\bF^{(1)}+\mathcal O(\varepsilon^2),
\end{equation}
with the Boltzmann flux correction $\bF^{(1)}=\avg{\bpsi\bxi^Tf^{(1)}}$ and
the Euler flux tensor
\begin{equation}\label{eq:euler_flux}
 \bF_E=\avg{\bpsi\bxi^TM}=
 \begin{pmatrix}
 \rho\bu^T\\[2pt]
 \rho\bu\bu^T+p\mathbf I\\[2pt]
 (\rho E+p)\bu^T
 \end{pmatrix}.
\end{equation}
The conservation law in \eqref{eq:up_system} accordingly reduces to the Euler
equations at leading order and to the Navier--Stokes equations
\eqref{eq:ns_system}, with the Boltzmann stress and heat flux, at first order.
Since $A_1\to1$, the particle stress and heat flux are, at first order, the
differences $\boldsymbol\sigma_{\rm NS}-\boldsymbol\sigma_\tau$ and
$\bq_{\rm NS}-\bq_\tau$: the wave carries the Euler flux and the
unit-Prandtl-number relaxation part of the viscous flux, and the particle
supplies the remaining Boltzmann transport. For the relaxation time
$\tau=\mu/p$ of Section~\ref{sec:test_gas_model}, whose viscosity is that of
the collision kernel, $\boldsymbol\sigma_\tau=\boldsymbol\sigma_{\rm NS}$.
The particle stress then vanishes at first order, and the particle heat flux
\begin{equation}\label{eq:particle_heat_flux}
 \bq_P=-\big(\kappa-c_p\mu\big)\nabla_{\bx}T+\mathcal O(\varepsilon^2)
\end{equation}
corrects the unit Prandtl number of the relaxation to the Prandtl number of
the Boltzmann operator.

For a finite, locally frozen ratio $\eta$ the wave carries only a part of
these terms. The expansion \eqref{eq:wave_expansion}, with the time derivative
of $M$ taken from the Euler equations, and the expansion
\eqref{eq:boltzmann_ce_expansion} give
\begin{equation}\label{eq:wp_spectrum}
 \begin{aligned}
 &W=A_0M+\varepsilon A_1W^{(1)}+\mathcal O(\varepsilon^2),\\[4pt]
 &P=e^{-\eta}M+\varepsilon\big[(1+\eta)e^{-\eta}\,W^{(1)}+f^{(1)}-W^{(1)}\big]
   +\mathcal O(\varepsilon^2),
 \end{aligned}
\end{equation}
so that the particle carries the complementary state $\bU_P=e^{-\eta}\bU$,
and its stress and heat flux are, at first order,
\begin{equation}\label{eq:particle_spectrum}
 \begin{aligned}
 &\boldsymbol\sigma_P=(1+\eta)e^{-\eta}\,\boldsymbol\sigma_\tau
   +\boldsymbol\sigma_{\rm NS}-\boldsymbol\sigma_\tau,\\[4pt]
 &\bq_P=(1+\eta)e^{-\eta}\,\bq_\tau+\bq_{\rm NS}-\bq_\tau .
 \end{aligned}
\end{equation}
Hence $\boldsymbol\sigma_W+\boldsymbol\sigma_P=\boldsymbol\sigma_{\rm NS}$ and
$\bq_W+\bq_P=\bq_{\rm NS}$ for every locally frozen $\eta$. The conservation
law in \eqref{eq:up_system} is the Navier--Stokes system whatever the
horizon, and only the division of the fluxes between the two components
changes along the spectrum. As $\eta$ increases, the wave takes over the
conservative state with the weight $A_0$ and the relaxation stress and heat
flux with the weight $A_1$, while the particle share $(1+\eta)e^{-\eta}$ of
the relaxation transport decays exponentially. What remains in the particle
is the correction of the transport coefficients, which does not depend on
$\eta$ and, for $\tau=\mu/p$, is the heat flux
\eqref{eq:particle_heat_flux}. The discrete counterpart of this structure is
established in Theorem~\ref{thm:ap}.

The two parts of \eqref{eq:up_system} interact only through the particle
stress and heat flux in the conservation law and through the wave, the
Maxwellian and the endpoint in the particle equation. A steady solution can
therefore be computed by alternating two solves: a macroscopic solve of the
conservation law for $\bU$ with $P$ fixed, which requires no collision
integral, and a kinetic solve of the particle equation for $P$ with $W$
fixed, in which the change of the endpoint drives the particle towards the new
macroscopic state. The conventional iterative scheme relaxes the conservative
modes only through transport and, on a fixed mesh, needs a number of
iterations of order $\varepsilon^{-2}$. In the coupled WP iteration the
macroscopic solve treats the conservative modes directly, and in the linear
model of Section~\ref{sec:acceleration} the error passed to the next
iteration is controlled by the particle share $(1+\eta)e^{-\eta}$ of the
transport, which decays exponentially as the horizon grows relative to the
relaxation time. Section~\ref{sec:algorithm} constructs this iteration, and
Theorem~\ref{thm:acceleration} bounds its convergence factor.

\section{Coupled WP Iteration}\label{sec:algorithm}
The steady computation takes the total conservative state $\bU$ and the
particle distribution $P$ as unknowns and combines two iterations built on
system \eqref{eq:up_system}. The wave is never an independent unknown; it is
always evaluated from the conservative state, and the total distribution is
$f=W+P$. The W iteration updates $\bU$ with $P$ fixed, so that only the wave
and its flux follow the macroscopic update. The P iteration then updates $P$
with the wave fixed at its value for the predicted state, driven by the
particle residual and by the endpoint traction of the macroscopic
prediction. Both residuals are evaluated on the accepted total distribution,
whereas all preconditioning quantities multiply increments only. Both
components are reconstructed to second order: the wave flux is the
gas-kinetic flux of the reconstructed $\bU$, and the particle flux is a
second-order upwind discrete-ordinate flux of MUSCL type for the
reconstructed $P$. The reconstruction and the fluxes are given in
Section~\ref{sec:alg_fluxes} and the discrete residuals in
Section~\ref{sec:alg_residuals}. Sections~\ref{sec:alg_w}
and~\ref{sec:alg_p} describe the W and P iterations, and
Section~\ref{sec:alg_wp} combines them into the coupled WP iteration,
whose name joins its macroscopic (W) and kinetic (P) iterations.

\subsection{Reconstruction and fluxes}\label{sec:alg_fluxes}
Physical space is divided into finite-volume cells $\Omega_i$ of volume
$|\Omega_i|$ and centre $\bx_i$. A face $\Gamma\subset\partial\Omega_i$ has area
$|\Gamma|$, midpoint $\bx_\Gamma$, outward unit normal $\bn_\Gamma$ and, in the plane
of the two-dimensional computations, unit tangent $\bt_\Gamma$, and $i_\Gamma$
denotes the neighbour of cell $i$ across it. The cell size in mesh direction
$d$ is $\Delta_{d,i}$, and $\Delta_i=\min_d\Delta_{d,i}$. Velocity space is
represented by nodes $\bxi_j$ with quadrature weights $\varpi_j$, and the
discrete conservative state is $\bU_i=\sum_j\varpi_j\bpsi_jf_{i,j}$ with
$\bpsi_j=\bpsi(\bxi_j)$. The kinetic horizon is taken as a cell-crossing time,
\begin{equation}\label{eq:discrete_horizon}
 \ell_i=C_\ell\,\frac{\Delta_i}{|\bu_i|+a_i},
\end{equation}
where $a_i$ is the sound speed and $C_\ell\geq0$ the horizon number. With the
relaxation time $\tau_i$ of \eqref{eq:reference_viscosity}, the
horizon-to-relaxation ratio $\eta_i=\ell_i/\tau_i$ compares the mesh scale
with the local mean free path. At a face, $\ell_\Gamma$ is the
distance-weighted average of the horizons of the two adjacent cells,
$\tau_\Gamma$ is the relaxation time of the face state, and
$\eta_\Gamma=\ell_\Gamma/\tau_\Gamma$.

Both components are reconstructed to second order with van Leer-limited
slopes~\cite{vanleer1974}. For a quantity with cell values $q_i$, a component
of the conservative state $\bU_i$ or an ordinate value $P_{i,j}$ of the
particle, the slope in the mesh direction $d$ is the limited combination
\begin{equation}\label{eq:vanleer_slope}
 (\partial_dq)_i=\big[\mathrm{sgn}(s^-)+\mathrm{sgn}(s^+)\big]
 \frac{|s^-|\,|s^+|}{|s^-|+|s^+|}
\end{equation}
of the one-sided differences $s^\mp$ of $q$ towards the neighbours $i\mp$ of
cell $i$ in direction $d$. This slope vanishes at local extrema and is
second-order accurate where $q$ is smooth and monotone, and the slopes form
the cell gradients $\nabla_hq_i$ and $\nabla_hP_{i,j}$. At face $\Gamma$ a
component of the conservative state is the average of the values
extrapolated from the two adjacent cells,
\begin{equation}\label{eq:face_reconstruction}
 q_\Gamma=\frac12\big[q_i+(\bx_\Gamma-\bx_i)\cdot\nabla_hq_i\big]
    +\frac12\big[q_{i_\Gamma}+(\bx_\Gamma-\bx_{i_\Gamma})\cdot\nabla_hq_{i_\Gamma}\big],
\end{equation}
its normal derivative is the central difference of the two cell values,
\begin{equation}\label{eq:face_normal_derivative}
 \partial_nq_\Gamma=\frac{q_{i_\Gamma}-q_i}{(\bx_{i_\Gamma}-\bx_i)\cdot\bn_\Gamma},
\end{equation}
and its tangential derivative
$\partial_\parallel q_\Gamma=(q_{\Gamma+}-q_{\Gamma-})/|\bx_{\Gamma+}-\bx_{\Gamma-}|$ is the
central difference of the values on the two neighbouring faces $\Gamma\pm$
parallel to $\Gamma$. Together, the face data $q_\Gamma$, $\partial_nq_\Gamma$ and
$\partial_\parallel q_\Gamma$ define the linear distribution
$q_\Gamma+(\bx-\bx_\Gamma)\cdot(\partial_nq_\Gamma\,\bn_\Gamma+\partial_\parallel q_\Gamma\,\bt_\Gamma)$
around the face midpoint, which is second-order accurate for smooth $q$. In
the Reynolds-number cavity flows of Section~\ref{sec:results}, which are
smooth, both components are instead reconstructed with sixth-order central
stencils, one-sided next to the walls.

Derivatives of a Maxwellian are expressed by micro slopes. For a state
with Maxwellian $M$, the micro slope of a derivative $\partial_d\bU$ is the
linear combination $a_d=\boldsymbol\alpha_d\cdot\bpsi$ of the collision
invariants with $\avg{\bpsi\,a_dM}=\partial_d\bU$, so that $\partial_dM=a_dM$.
The time slope $A=\boldsymbol\alpha_t\cdot\bpsi$ follows from the
compatibility condition of the gas-kinetic scheme~\cite{xu2001},
\begin{equation}\label{eq:compatibility}
 \avg{\bpsi\,AM}=-\sum_d\avg{\bpsi\,\xi_d\,a_dM},
\end{equation}
which makes $\avg{\bpsi AM}$ the Euler time derivative of $\bU$, and the
material derivative of the Maxwellian becomes
$\partial_tM+\bxi\cdot\nabla_{\bx}M=\big(A+\sum_d\xi_da_d\big)M$. The cell
wave is evaluated from $\bU$ alone as the first-order truncation of
\eqref{eq:wave_expansion},
\begin{equation}\label{eq:discrete_wave}
 W_i=A_0(\eta_i)M_i-A_1(\eta_i)\,\tau_i
 \big(\partial_tM+\bxi\cdot\nabla_{\bx}M\big)^h_i,
\end{equation}
where $M_i$ is the discrete Maxwellian whose quadrature moments equal
$\bU_i$, and the material derivative is formed with the micro slopes of
$\nabla_h\bU_i$. Since the moments in these slopes are evaluated by the
velocity quadrature,
$\sum_j\varpi_j\bpsi_j\big(\partial_tM+\bxi\cdot\nabla_{\bx}M\big)^h_{i,j}
=\boldsymbol0$ and $\sum_j\varpi_j\bpsi_jW_{i,j}=A_0(\eta_i)\bU_i$, which is
the discrete counterpart of \eqref{eq:wave_state}.

The wave flux is the second-order gas-kinetic flux~\cite{xu2001} weighted by
the horizon. It is built from the face state $\bU_\Gamma$ and its derivatives
$\partial_n\bU_\Gamma$ and $\partial_\parallel\bU_\Gamma$ given by
\eqref{eq:face_reconstruction} and \eqref{eq:face_normal_derivative}. Let
$g_\Gamma$ be the Maxwellian of $\bU_\Gamma$, $\xi_n=\bxi\cdot\bn_\Gamma$ and
$\xi_\parallel=\bxi\cdot\bt_\Gamma$ the normal and tangential molecular
velocities, $a_n$ and $a_\parallel$ the micro slopes of the two derivatives,
$A$ the time slope from \eqref{eq:compatibility}, and
$A_{0,\Gamma}=A_0(\eta_\Gamma)$, $A_{1,\Gamma}=A_1(\eta_\Gamma)$. The velocity-resolved
wave flux of face $\Gamma$ is
\begin{equation}\label{eq:gks_wave_flux}
 \phi_{W,\Gamma}(\bxi)=\xi_n\,g_\Gamma\Big[A_{0,\Gamma}
   -A_{1,\Gamma}\,\tau_\Gamma\big(\xi_na_n+\xi_\parallel a_\parallel\big)
   +\Big(\frac{A_{0,\Gamma}\,\Delta t}{2}-A_{1,\Gamma}\,\tau_\Gamma\Big)A\Big].
\end{equation}
It is the average over the interval $\Delta t$ of the first-order
Chapman--Enskog solution
$g_\Gamma\big[1-\tau_\Gamma(\xi_na_n+\xi_\parallel a_\parallel+A)+tA\big]$ of
the relaxation model, with the Maxwellian part weighted by $A_{0,\Gamma}$ and the
relaxation part by $A_{1,\Gamma}$, and for $A_{0,\Gamma}=A_{1,\Gamma}=1$ it reduces to
the gas-kinetic Navier--Stokes flux. The averaging interval is set by the
explicit stability limit of the velocity grid,
$\Delta t=C_{\rm exp}\big/\max_i\sum_d\xi_{\max}/\Delta_{d,i}$, with
$\xi_{\max}$ the largest molecular speed of the grid and
$C_{\rm exp}\in(0,1)$. The moments of \eqref{eq:gks_wave_flux} are evaluated
analytically, and the Chapman--Enskog structure gives the macroscopic wave
flux in the explicit form
\begin{equation}\label{eq:gks_wave_moments}
 \bF^h_{W,\Gamma}=A_{0,\Gamma}\Big[\bF_E(\bU_\Gamma)
   +\frac{\Delta t}{2}\,\partial_t\bF_E(\bU_\Gamma)\Big]\bn_\Gamma
   +A_{1,\Gamma}\,\bF_R(\bU_\Gamma)\,\bn_\Gamma,
\end{equation}
where $\bF_E$ is the Euler flux tensor \eqref{eq:euler_flux},
$\partial_t\bF_E=\avg{\bpsi\bxi^TA\,g_\Gamma}$ is its Euler time derivative,
and
\begin{equation}\label{eq:viscous_flux}
 \bF_R=
 \begin{pmatrix}
 \boldsymbol0^T\\[2pt]
 \boldsymbol\sigma_\tau\\[2pt]
 (\boldsymbol\sigma_\tau\bu+\bq_\tau)^T
 \end{pmatrix}
\end{equation}
is the relaxation flux tensor, the viscous flux with the stress and heat flux
of \eqref{eq:wave_stress} and \eqref{eq:wave_heat_flux} at unit weight,
evaluated with $\tau_\Gamma$ and the gradients of $\bu$ and $T$ implied by the
face derivatives. The wave flux thus contains the Euler flux weighted by
$A_{0,\Gamma}$ and the Navier--Stokes viscous and heat fluxes, with viscosity
$\tau_\Gamma p_\Gamma$ and conductivity $c_p\tau_\Gamma p_\Gamma$, weighted by $A_{1,\Gamma}$.

The particle flux is the second-order upwind discrete-ordinate flux in MUSCL
form~\cite{vanleer1979}. For each node $j$ the particle is reconstructed
linearly in the upwind cell $c$ of face $\Gamma$, with $c=i$ for
$\bxi_j\cdot\bn_\Gamma>0$ and $c=i_\Gamma$ otherwise, from the limited slopes
\eqref{eq:vanleer_slope}. Its free transport along $\bxi_j$, averaged over
the same interval $\Delta t$ as the wave flux, gives
\begin{equation}\label{eq:particle_flux}
 \phi_{P,\Gamma,j}=\xi_{n,j}\Big[P_{c,j}+(\bx_\Gamma-\bx_c)\cdot\nabla_hP_{c,j}
 -\frac{\Delta t}{2}\,\xi_{n,j}\,\partial_nP_{c,j}\Big],
\end{equation}
with $\xi_{n,j}=\bxi_j\cdot\bn_\Gamma$ and
$\partial_nP_{c,j}=\bn_\Gamma\cdot\nabla_hP_{c,j}$. The term proportional to
$\Delta t$ shifts the face value upstream along $\bxi_j$ by half the
averaging interval and has the same origin as the time average in the wave
flux \eqref{eq:gks_wave_flux}, while the limited upwind reconstruction
supplies the dissipation that stabilizes the transport. The quadrature
moments of \eqref{eq:particle_flux} form the macroscopic particle flux
$\bF^h_{P,\Gamma}=\sum_j\varpi_j\bpsi_j\phi_{P,\Gamma,j}$.

At a boundary face the particle is extrapolated from the interior cell. At a
diffuse wall with velocity $\bu_{\rm w}$ and temperature $T_{\rm w}$, the
distributions entering the domain are Maxwellians of
$(\bu_{\rm w},T_{\rm w})$ whose densities are chosen separately for the wave
and the particle, so that the normal mass flux of each component through the
wall vanishes.

The particle equation requires the wave flux at the velocity nodes. The
nodal values of \eqref{eq:gks_wave_flux} are corrected by a
Maxwellian-weighted combination of the collision invariants,
\begin{equation}\label{eq:wave_flux_lifting}
 \phi^h_{W,\Gamma,j}=\phi_{W,\Gamma}(\bxi_j)+g_\Gamma(\bxi_j)\,\boldsymbol\lambda_\Gamma\cdot\bpsi_j,
\end{equation}
where the five coefficients $\boldsymbol\lambda_\Gamma$ solve the linear system
$\sum_j\varpi_j\bpsi_j\phi^h_{W,\Gamma,j}=\bF^h_{W,\Gamma}$, so that the correction
removes the quadrature error of the velocity grid. The cell divergence of the
velocity-resolved wave flux is
\begin{equation}\label{eq:cell_divergence}
 D^h_W(W)_{i,j}=\frac{1}{|\Omega_i|}\sum_{\Gamma\subset\partial\Omega_i}|\Gamma|\,
 \phi^h_{W,\Gamma,j},
\end{equation}
and $D^h_P(P)$ and the finite-volume divergence $\nabla_h\cdot$ of face
moment fluxes are formed in the same way from $\phi_{P,\Gamma,j}$ and from
$\bF^h_{W,\Gamma}$ and $\bF^h_{P,\Gamma}$. By construction, at interior and boundary
faces alike,
\begin{equation}\label{eq:flux_compatibility}
 \sum_j\varpi_j\bpsi_j\big[D^h_W(W)+D^h_P(P)\big]_{i,j}
 =\nabla_h\cdot\big(\bF^h_W+\bF^h_P\big)\big|_i .
\end{equation}

\subsection{Discrete residuals}\label{sec:alg_residuals}
The collision term is evaluated by the fast spectral method
(FSM)~\cite{mouhot2006,wu2013}, which writes the collision integral on a
periodized velocity box as a sum of convolutions in Fourier space at a cost
of $\mathcal O(N_{\rm ang}^2N_v\log N_v)$ operations for $N_v$ velocity nodes
and $N_{\rm ang}^2$ angular sections. Let $\widehat Q^h(f,f)$ denote its
output after a Maxwellian-weighted projection that restores the discrete
collision invariants, $\sum_j\varpi_j\bpsi_j\widehat Q^h_j=\boldsymbol0$. The
collision term used in the residuals is
\begin{equation}\label{eq:fsm_equilibrium}
 Q^h_{\rm FSM}(f,f)=\widehat Q^h(f,f)-\widehat Q^h(M^h,M^h),
\end{equation}
where $M^h$ is the discrete Maxwellian of $f$. This term is conservative and
has the discrete Maxwellian as an exact equilibrium,
$Q^h_{\rm FSM}(M^h,M^h)=0$; it contains the factor $\varepsilon^{-1}$ of
\eqref{eq:boltzmann}, and the same evaluation provides the discrete loss rate
$\Lambda_{i,j}$, the loss part of the collision term divided by $f_{i,j}$.

The discrete counterpart of the endpoint $P_e=\tau\mathcal S_\ell$ is
\begin{equation}\label{eq:discrete_endpoint}
 P_e^{h}=M-W-\tau D^h_W(W),
\end{equation}
which satisfies $(M-W-P_e^{h})/\tau=D^h_W(W)$ identically, and the discrete
collision remainder, the counterpart of \eqref{eq:remainder_definition}, is
\begin{equation}\label{eq:discrete_remainder}
 R^h=Q^h_{\rm FSM}(f,f)-\frac{M-f}{\tau}.
\end{equation}
The physical residuals of an accepted state $f=W+P$ are the macroscopic
residual
\begin{equation}\label{eq:macro_residual}
 R_U=-\nabla_h\cdot\big(\bF^h_W+\bF^h_P\big)
\end{equation}
and the particle residual
\begin{equation}\label{eq:discrete_residuals}
 R_P=R^h+\frac{P_e^h-P}{\tau}-D^h_P(P).
\end{equation}
The particle residual thus collects the collision remainder and the
relaxation of the particle towards the endpoint at the rate $1/\tau$, less the
particle transport. It is the steady discrete form of the particle equation
in \eqref{eq:up_system}, with $R^h$ carrying the whole non-relaxation part of
the Boltzmann collision term. Substituting \eqref{eq:discrete_endpoint} gives
the equivalent form
\begin{equation}\label{eq:discrete_residual_direct}
 R_P=Q^h_{\rm FSM}(f,f)-D^h_W(W)-D^h_P(P),
\end{equation}
in which the residual is evaluated. Since the endpoint and the residual use
the same wave divergence, the endpoint introduces no wave-truncation defect
into the stationary equation. The relaxation towards the endpoint is part of
the discrete equation and is balanced, not removed, at a stationary state;
the endpoint traction of Section~\ref{sec:alg_p}, by contrast, acts on the
change of the endpoint with the preconditioning rate $1/\widehat\tau$ and
vanishes at convergence. The conservation of $Q^h_{\rm FSM}$ and
\eqref{eq:flux_compatibility} imply that the quadrature moments of $R_P$
equal $R_U$, so that the macroscopic residual is the conservative part of the
particle residual. A superscript $k$ marks quantities evaluated on the
accepted state of outer iteration $k$.

\subsection{Macroscopic W iteration}\label{sec:alg_w}
The W iteration solves the total conservation law for the conservative state
with the particle distribution fixed at $P^k$, so that the particle flux
keeps its accepted value $\bF^{h,k}_P$ and only the wave follows the update
of $\bU$. Its $n$th update, $n=1,\dots,N_U$, starts from $\bU^{(n-1)}$, with
$\bU^{(0)}=\bU^k$, and from the residual
\begin{equation}\label{eq:w_residual}
 R_U^{(n-1)}=-\nabla_h\cdot\big(\bF^h_W(\bU^{(n-1)})+\bF^{h,k}_P\big),
\end{equation}
in which only the wave flux \eqref{eq:gks_wave_moments} is recomputed. As in
the implicit unified gas-kinetic scheme~\cite{zhu2016implicit}, the
conservation law is advanced in pseudo-time by the backward Euler method in
delta form,
\begin{equation}\label{eq:macro_delta_form}
 \frac{|\Omega_i|}{\Delta t_{\rm num}}\Delta\bU_i
 +\sum_{\Gamma\subset\partial\Omega_i}|\Gamma|\,\Delta\bF_\Gamma=|\Omega_i|\,R^{(n-1)}_{U,i},
\end{equation}
where $\Delta\bF_\Gamma$ is the increment of the normal flux through face
$\Gamma$ and the pseudo-time step is set by the implicit CFL number
$C_{\rm imp}$,
\begin{equation}\label{eq:dt_num}
 \Delta t_{\rm num}=\frac{C_{\rm imp}}
 {\max_i\sum_d\big(|u_{d,i}|+a_i\big)/\Delta_{d,i}},
\end{equation}
with $u_{d,i}$ the component of $\bu_i$ in mesh direction $d$. The
right-hand side is the physical residual and determines the converged state,
whereas the flux increment affects only the path of the iteration and is
therefore approximated by the Euler flux splitting
\begin{equation}\label{eq:macro_flux_splitting}
 \Delta\bF_\Gamma=\frac12\Big[\big(\Delta\bF_{E,i}+\Delta\bF_{E,i_\Gamma}\big)\bn_\Gamma
 +r_\Gamma\big(\Delta\bU_i-\Delta\bU_{i_\Gamma}\big)\Big],
\end{equation}
where $\Delta\bF_{E,m}=\bF_E(\bU_m+\Delta\bU_m)-\bF_E(\bU_m)$ is the
increment of the Euler flux tensor \eqref{eq:euler_flux} in cell
$m\in\{i,i_\Gamma\}$ and $r_\Gamma$ is a face spectral radius. Because
$\sum_\Gamma|\Gamma|\bn_\Gamma=\boldsymbol0$ on a closed cell, the contribution of
$\Delta\bF_{E,i}$ vanishes, and \eqref{eq:macro_delta_form} becomes
\begin{equation}\label{eq:macro_increment}
 d_i\,\Delta\bU_i+\sum_{\Gamma\subset\partial\Omega_i}G_{i,\Gamma}\big(\Delta\bU_{i_\Gamma}\big)
 =|\Omega_i|\,R^{(n-1)}_{U,i}.
\end{equation}
Its diagonal coefficient collects the pseudo-time term and the dissipation
of all faces,
\begin{equation}\label{eq:macro_diagonal}
 d_i=\frac{|\Omega_i|}{\Delta t_{\rm num}}
   +\frac12\sum_{\Gamma\subset\partial\Omega_i}r_\Gamma|\Gamma|,
\end{equation}
and a neighbour increment $\boldsymbol\delta$ enters through
\begin{equation}\label{eq:macro_offdiag}
 G_{i,\Gamma}(\boldsymbol\delta)=\frac{|\Gamma|}{2}
 \Big[\big(\bF_E(\bU_{i_\Gamma}+\boldsymbol\delta)-\bF_E(\bU_{i_\Gamma})\big)\bn_\Gamma
 -r_\Gamma\boldsymbol\delta\Big].
\end{equation}
The spectral radius combines the acoustic speed with a viscous contribution,
\begin{equation}\label{eq:spectral_radius}
 r_\Gamma=\max_{m\in\{i,i_\Gamma\}}\big(|\bu_m\cdot\bn_\Gamma|+a_m\big)
    +\frac{2\,\chi_\Gamma\,\mu_\Gamma}{\rho_\Gamma \Delta_\Gamma},
\end{equation}
where $\mu_\Gamma/\rho_\Gamma$ is the kinematic viscosity averaged over the two
cells, $\Delta_\Gamma$ is the cell size normal to face $\Gamma$, and
$\chi_\Gamma\in[0,1]$ weights the viscous part, for instance by the wave share
$A_{1,\Gamma}$. The radius keeps the system diagonally dominant. A larger
$\Delta t_{\rm num}$ strengthens the implicit coupling, whereas a larger
$r_\Gamma$ improves diagonal dominance at the expense of the convergence
rate~\cite{zhu2016implicit}.

Collected over all cells, \eqref{eq:macro_increment} is a sparse block system
whose diagonal consists of the coefficients $d_i$ and whose off-diagonal
part, for a chosen cell ordering, splits into the neighbour terms
$G_{i,\Gamma}$ of the cells that precede and of those that follow cell $i$.
The lower--upper symmetric Gauss--Seidel (LU--SGS)
method~\cite{yoon1988,zhu2016implicit} replaces this matrix by the product
of its lower and upper triangular parts, each including the diagonal, with
the inverse of the diagonal between them, and inverts the product by a
forward and a backward sweep. The point-relaxation (PR) scheme repeats the
two sweeps with the most recent neighbour values. Let $\partial\Omega_i^-$
and $\partial\Omega_i^+$ be the faces shared with the preceding and the
following neighbours of cell $i$. Starting from
$\Delta\bU^{[0]}=\boldsymbol0$, the $q$th of $m_U$ pairs consists of a
forward sweep in increasing cell order,
\begin{equation}\label{eq:macro_pr_sweeps}
 d_i\Delta\bU_i^{[q-1/2]}=|\Omega_i|R^{(n-1)}_{U,i}
   -\sum_{\Gamma\subset\partial\Omega_i^-}G_{i,\Gamma}\big(\Delta\bU^{[q-1/2]}_{i_\Gamma}\big)
   -\sum_{\Gamma\subset\partial\Omega_i^+}G_{i,\Gamma}\big(\Delta\bU^{[q-1]}_{i_\Gamma}\big),
\end{equation}
followed by a backward sweep in decreasing cell order,
\begin{equation}\label{eq:macro_pr_backward}
 d_i\Delta\bU_i^{[q]}=|\Omega_i|R^{(n-1)}_{U,i}
   -\sum_{\Gamma\subset\partial\Omega_i^-}G_{i,\Gamma}\big(\Delta\bU^{[q-1/2]}_{i_\Gamma}\big)
   -\sum_{\Gamma\subset\partial\Omega_i^+}G_{i,\Gamma}\big(\Delta\bU^{[q]}_{i_\Gamma}\big).
\end{equation}
For $m_U=1$ the pair is one LU--SGS sweep pair, and further pairs solve the
same linear system more accurately without accepting the state or rebuilding
the wave. At boundary faces the neighbour increment is set to zero in both
sweeps, so that boundary data enter the W iteration through the residual
only. The state is then updated as
\begin{equation}\label{eq:w_update}
 \bU^{(n)}=\bU^{(n-1)}+\theta_U\,\Delta\bU^{[m_U]},
\end{equation}
with the largest $\theta_U\in\{1,\tfrac12,\tfrac14,\dots\}$ for which density
and temperature remain positive and the total distribution
$W(\bU^{(n)})+P^k$ is non-negative, where $W(\bU)$ denotes the wave evaluated
from the state $\bU$. The wave, its flux and the residual
\eqref{eq:w_residual} are then re-evaluated with $P^k$ unchanged, and no
collision term is evaluated. After $N_U$ updates,
$\widetilde\bU^{k+1}=\bU^{(N_U)}$ defines the predicted wave
$\widetilde W^{k+1}=W(\widetilde\bU^{k+1})$ and, through
\eqref{eq:discrete_endpoint}, the predicted endpoint
$\widetilde P_e^{h,k+1}$. These steps form the left panel of
Fig.~\ref{fig:algorithm}.

\subsection{Particle P iteration}\label{sec:alg_p}
The P iteration solves the particle equation for the particle distribution
with the wave fixed at the predicted value $\widetilde W^{k+1}$. Its $n$th
update, $n=1,\dots,N_P$, starts from $P^{k,n-1}$, with $P^{k,0}=P^k$, and
from the total distribution $f^{k,n-1}=\widetilde W^{k+1}+P^{k,n-1}$, and it
solves the linear system
\begin{equation}\label{eq:particle_increment}
 \left[\left(\frac{1}{\Delta t_{\rm num}}+\frac{1}{\widehat\tau}\right)\mathbf I
 +D^h_{\rm up}\right]\Delta P=b^{(n)},
\end{equation}
whose operator is denoted $\mathbf A_P$; here $D^h_{\rm up}$ is the
first-order upwind transport operator acting on increments, and $\mathbf I$
is the identity on the discrete distribution space. The first update uses
the residual of the accepted state together with the endpoint traction
supplied by the W iteration,
\begin{equation}\label{eq:particle_rhs}
 b^{(1)}=R_P^k+\Theta^k,
\end{equation}
where the traction is the change of the endpoint scaled by the
preconditioning rate,
\begin{equation}\label{eq:endpoint_traction}
 \Theta^k=\frac{\widetilde P_e^{h,k+1}-P_e^{h,k}}{\widehat\tau^k}.
\end{equation}
The later updates, $n\geq2$, use the direct residual
\eqref{eq:discrete_residual_direct} of $f^{k,n-1}$ with the wave fixed at
$\widetilde W^{k+1}$,
\begin{equation}\label{eq:particle_rhs_later}
 b^{(n)}=Q^h_{\rm FSM}\big(f^{k,n-1},f^{k,n-1}\big)
  -D^h_W\big(\widetilde W^{k+1}\big)-D^h_P\big(P^{k,n-1}\big),
\end{equation}
which requires no endpoint, so that the endpoint traction enters the first
update only. The numerical preconditioning time is
\begin{equation}\label{eq:preconditioning_time}
 \frac{1}{\widehat\tau_i}=
 \max\left(\frac{1}{\tau_i},\,C_{\rm col}\max_j\Lambda_{i,j}\right),
\end{equation}
with a safety factor $C_{\rm col}\geq1$, evaluated on the state that provides
the residual. The same coefficient multiplies $\Delta P$ and the endpoint
change in $\Theta^k$. Near equilibrium this pairing makes the particle follow
the Maxwellian share carried by the endpoint, while the value of
$\widehat\tau$ affects only the path of the iteration.

For velocity node $j$, the row of \eqref{eq:particle_increment} in cell $i$
reads
\begin{equation}\label{eq:particle_row}
 d_{i,j}\Delta P_{i,j}+\sum_{\Gamma\subset\partial\Omega_i}c_{i,\Gamma,j}\Delta P_{i_\Gamma,j}
 =b_{i,j},
\end{equation}
with $x^\pm=(x\pm|x|)/2$, the neighbour coefficient
$c_{i,\Gamma,j}=(\bxi_j\cdot\bn_\Gamma)^-|\Gamma|/|\Omega_i|$ and the diagonal
coefficient
\begin{equation}\label{eq:particle_diagonal}
 d_{i,j}=\frac{1}{\Delta t_{\rm num}}+\frac{1}{\widehat\tau_i}
 +\sum_{\Gamma\subset\partial\Omega_i}\frac{(\bxi_j\cdot\bn_\Gamma)^+|\Gamma|}{|\Omega_i|},
\end{equation}
so that only inflow faces, where $c_{i,\Gamma,j}<0$, couple the cell to its
neighbours. As in the W iteration, the matrix is approximately factorized
into its lower and upper parts~\cite{zhu2016implicit}. For a right-hand side
$y$, the kinetic LU--SGS sweep pair $\delta P=\mathbf M_P^{-1}y$ consists, for
each velocity node, of a forward sweep
\begin{equation}\label{eq:particle_forward}
 d_{i,j}\,\delta P^{*}_{i,j}=y_{i,j}
   -\sum_{\Gamma\subset\partial\Omega_i^-}c_{i,\Gamma,j}\,\delta P^{*}_{i_\Gamma,j}
\end{equation}
and a backward sweep
\begin{equation}\label{eq:particle_backward}
 d_{i,j}\,\delta P_{i,j}=d_{i,j}\,\delta P^{*}_{i,j}
   -\sum_{\Gamma\subset\partial\Omega_i^+}c_{i,\Gamma,j}\,\delta P_{i_\Gamma,j}.
\end{equation}
The linear system is solved by $m_P$ preconditioned iterations, each
applying one sweep pair,
\begin{equation}\label{eq:particle_lusgs}
 \Delta P^{[q]}=\Delta P^{[q-1]}
 +\mathbf M_P^{-1}\big(b^{(n)}-\mathbf A_P\Delta P^{[q-1]}\big),
 \qquad q=1,\dots,m_P,
\end{equation}
starting from $\Delta P^{[0]}=0$ and stopping earlier when the linear
residual falls below a prescribed tolerance. The velocity nodes are
independent and are processed in parallel.

As in the W iteration, the neighbour increment is set to zero at boundary
faces, so that boundary data enter the P iteration through the residual
only. The particle is then updated as $P^{k,n}=P^{k,n-1}+\theta\,\Delta
P^{[m_P]}$, where the cellwise factor $0<\theta\leq1$ is the largest value,
reduced by a safety margin, for which the resolved ordinates of
$f^{k,n}=\widetilde W^{k+1}+P^{k,n}$ remain non-negative; positivity is
thus imposed on $f$ rather than on the signed particle. The collision term,
$R_P$ and $\widehat\tau$ are evaluated on $f^{k,n}$ for the next update.
After $N_P$ updates the distribution $f^{k+1}=\widetilde W^{k+1}+P^{k,N_P}$
defines the new conservative state as its moments,
\begin{equation}\label{eq:acceptance}
 \bU^{k+1}=\sum_j\varpi_j\bpsi_jf^{k+1}_j,
\end{equation}
and the new particle as its complement to the new wave,
$P^{k+1}=f^{k+1}-W(\bU^{k+1})$. The acceptance leaves the total distribution
unchanged and restores $\bU=\sum_j\varpi_j\bpsi_j(W+P)_j$. Because the
accepted conservative state is the moment of the corrected distribution
rather than the prediction, the particle update feeds back on the
macroscopic state and closes the two-way coupling. These steps form the
right panel of Fig.~\ref{fig:algorithm}.

\subsection{WP algorithm}\label{sec:alg_wp}
The WP algorithm alternates the two iterations within each outer iteration.
The W iteration receives the fixed particle from the P iteration, the P
iteration receives the predicted wave and the endpoint traction from the W
iteration, and the accepted pair $(\bU^{k+1},P^{k+1})$ determines both
residuals of the next outer iteration. The steps are summarized in
Algorithm~\ref{alg:wp} and Fig.~\ref{fig:algorithm}.

\begin{center}
\begin{minipage}{0.95\linewidth}\small
\refstepcounter{algcounter}\label{alg:wp}
\textbf{Algorithm~\thealgcounter} (WP iteration).
\emph{Input:} initial macroscopic field, controls $N_U$, $m_U$, $N_P$, $m_P$
and tolerance. Set $\bU^0$ from the initial field and $P^0=M^0-W(\bU^0)$,
where $M^0$ is the Maxwellian of $\bU^0$, and evaluate $Q^h_{\rm FSM}$ on
$f^0=M^0$. For $k=0,1,2,\dots$:
\begin{enumerate}\setlength{\itemsep}{0pt}\setlength{\parskip}{0pt}
\item On the accepted pair $(\bU^k,P^k)$, with $f^k=W(\bU^k)+P^k$, form
  $R_U^k$ and $R_P^k$ by \eqref{eq:macro_residual} and \eqref{eq:discrete_residuals}, $P_e^{h,k}$ by
  \eqref{eq:discrete_endpoint} and $\widehat\tau^k$ by
  \eqref{eq:preconditioning_time}; stop if both normalized residuals are
  below the tolerance.
\item W iteration with $P=P^k$ fixed: perform $N_U$ updates of $\bU$, each
  consisting of $m_U$ PR pairs \eqref{eq:macro_pr_sweeps}--\eqref{eq:macro_pr_backward} and the update
  \eqref{eq:w_update}; this gives $\widetilde\bU^{k+1}$,
  $\widetilde W^{k+1}$ and $\widetilde P_e^{h,k+1}$.
\item Form the endpoint traction $\Theta^k$ by \eqref{eq:endpoint_traction}.
\item P iteration with $W=\widetilde W^{k+1}$ fixed: perform $N_P$ updates of
  $P$, each consisting of $m_P$ iterations \eqref{eq:particle_lusgs}, the
  positivity-limited update and one FSM evaluation, the first with
  $b^{(1)}=R_P^k+\Theta^k$.
\item Set $\bU^{k+1}$ and $P^{k+1}$ by \eqref{eq:acceptance}.
\end{enumerate}
\end{minipage}
\end{center}

\begin{figure}[!htbp]
\centering
\resizebox{\linewidth}{!}{%
\begin{tikzpicture}[
  font=\footnotesize, >={Stealth[length=2.2mm,width=1.7mm]},
  node distance=5mm,
  box/.style={rounded corners=3pt, align=center, inner sep=4pt, minimum height=9mm},
  outer/.style={box, draw=black!55, fill=black!4, text width=43mm},
  wmain/.style={box, draw=blue!60!black, fill=blue!13, very thick, text width=43mm},
  pmain/.style={box, draw=orange!80!black, fill=orange!17, very thick, text width=43mm},
  decide/.style={chamfered rectangle, chamfered rectangle xsep=2.5mm, draw=black!55,
                 fill=black!4, align=center, text width=36mm, inner sep=3pt, minimum height=9mm},
  wsub/.style={box, draw=blue!60!black, fill=white, text width=43mm},
  psub/.style={box, draw=orange!80!black, fill=white, text width=43mm},
  flow/.style={->, thick, black!65},
  wflow/.style={->, thick, blue!60!black},
  pflow/.style={->, thick, orange!80!black}]
\node[outer] (o1) {Accepted $\bU^k$ and $P^k$\\ $f^k=W(\bU^k)+P^k$};
\node[outer, below=of o1] (o2) {Residuals $R_U^k$, $R_P^k$ \eqref{eq:macro_residual}--\eqref{eq:discrete_residuals}\\ endpoint $P_e^{h,k}$ and $\widehat\tau^k$};
\node[wmain, below=of o2] (o3) {\textbf{W iteration}\\ $N_U$ updates $\to\widetilde{\bU}^{k+1}$};
\node[outer, below=of o3] (o4) {Endpoint traction \eqref{eq:endpoint_traction}\\ $\Theta^k=(\widetilde P_e^{h,k+1}-P_e^{h,k})/\widehat\tau^k$};
\node[pmain, below=of o4] (o5) {\textbf{P iteration}\\ $N_P$ updates $\to\bU^{k+1}$, $P^{k+1}$};
\node[decide, below=of o5] (o6) {$R_U$ and $R_P$\\ below tolerance?};
\node[box, draw=green!45!black, fill=green!10, text width=43mm, below=6mm of o6] (o7) {Fixed point $f^*$};
\draw[flow] (o1) -- (o2);
\draw[flow] (o2) -- (o3);
\draw[flow] (o3) -- (o4);
\draw[flow] (o4) -- (o5);
\draw[flow] (o5) -- (o6);
\draw[flow] (o6) -- node[right]{yes} (o7);
\coordinate (L) at (-9.9,0);
\coordinate (T) at (0,1.35);
\draw[flow] (o6.west) -- node[above, pos=0.1]{no} (o6.west -| L) -- (L |- T)
  -- node[above]{next outer iteration, $k\leftarrow k+1$} (T) -- (o1.north);
\node[wsub] (w2) at ($(o3)+(-6.0,0.8)$) {Assemble $d_i$, $G_{i,\Gamma}$ and $r_\Gamma$\\ of \eqref{eq:macro_increment}};
\node[wsub, above=3.5mm of w2] (w1) {Fix $P=P^k$ and its flux $\bF^{h,k}_P$};
\node[wsub, below=3.5mm of w2] (w3) {Point relaxation \eqref{eq:macro_pr_sweeps}:\\ forward $\rightarrow$ backward sweep, $m_U$ pairs};
\node[wsub, below=3.5mm of w3] (w4) {Update \eqref{eq:w_update} $\bU\leftarrow\bU+\theta_U\Delta\bU$\\ evaluate $W(\bU)$ and $R_U$};
\draw[wflow] (w1) -- (w2);
\draw[wflow] (w2) -- (w3);
\draw[wflow] (w3) -- (w4);
\draw[wflow] (w4.west) -- ++(-4mm,0) coordinate (wl) |- node[pos=0.25, rotate=90, anchor=south, inner sep=2pt, text=blue!60!black]{$\times N_U$} (w2.west);
\coordinate (wll) at ($(wl)+(-4mm,0)$);
\node[psub] (p2) at ($(o5)+(6.0,0.8)$) {LU--SGS \eqref{eq:particle_lusgs} over all $\bxi_j$:\\ $m_P$ forward--backward sweeps};
\node[psub, above=3.5mm of p2] (p1) {Fix $W=\widetilde W^{k+1}$\\ right-hand side $R_P+\Theta$};
\node[psub, below=3.5mm of p2] (p3) {Update $P\leftarrow P+\theta\Delta P$\\ with $f=\widetilde W^{k+1}+P\geq0$};
\node[psub, below=3.5mm of p3] (p4) {FSM on $f$:\\ $Q^h_{\rm FSM}$, $R_P$ and $\widehat\tau$};
\draw[pflow] (p1) -- (p2);
\draw[pflow] (p2) -- (p3);
\draw[pflow] (p3) -- (p4);
\draw[pflow] (p4.east) -- ++(4mm,0) coordinate (pr) |- node[pos=0.25, rotate=90, anchor=north, inner sep=2pt, text=orange!80!black]{$\times N_P$} (p1.east);
\coordinate (prr) at ($(pr)+(4mm,0)$);
\begin{scope}[on background layer]
\node[draw=blue!60!black, dashed, rounded corners=6pt, fill=blue!4, inner sep=6pt,
      fit=(w1)(w4)(wll), label={[text=blue!60!black, font=\small\bfseries]above:W iteration}] (wf) {};
\fill[blue!7] (o3.north west) -- (wf.north east) -- (wf.south east) -- (o3.south west) -- cycle;
\node[draw=orange!80!black, dashed, rounded corners=6pt, fill=orange!5, inner sep=6pt,
      fit=(p1)(p4)(prr), label={[text=orange!80!black, font=\small\bfseries]above:P iteration}] (pf) {};
\fill[orange!9] (o5.north east) -- (pf.north west) -- (pf.south west) -- (o5.south east) -- cycle;
\end{scope}
\draw[blue!60!black, dashed] (o3.north west) -- (wf.north east);
\draw[blue!60!black, dashed] (o3.south west) -- (wf.south east);
\draw[orange!80!black, dashed] (o5.north east) -- (pf.north west);
\draw[orange!80!black, dashed] (o5.south east) -- (pf.south west);
\end{tikzpicture}}
\caption{WP algorithm. The outer iteration (centre) performs $N_U$ W updates
of $\bU$ with $P$ fixed (left) and $N_P$ P updates of $P$ with the wave fixed
(right), after which the accepted distribution defines $\bU^{k+1}$ and
$P^{k+1}$ by \eqref{eq:acceptance}. Each W update contains $m_U$
forward--backward point-relaxation pairs without collision evaluation, and
each P update $m_P$ forward--backward LU--SGS sweeps over every velocity node
followed by one FSM evaluation.}
\label{fig:algorithm}
\end{figure}

One outer iteration therefore contains $N_U$ macroscopic updates with $m_U$
PR pairs each and $N_P$ particle updates with $m_P$ LU--SGS iterations each.
The collision operator is evaluated once for every accepted particle update,
that is, $N_P$ times per outer iteration, whereas the macroscopic updates
only rebuild the wave and its flux; additional macroscopic work therefore
adds no collision evaluations. In Section~\ref{sec:results} the
configuration is denoted W$N_U$/PR$m_U$/P$N_P$, as in the companion
study~\cite{liu_wpd_shakhov}, and the number $m_P$ of kinetic iterations is
stated separately.

\section{Numerical Analysis}\label{sec:analysis}
This section proves the properties stated in Section~\ref{sec:wave_scales}.
Section~\ref{sec:ap} establishes the equivalence of the coupled system with
the Boltzmann equation and its continuum limit, and shows that the fixed
point of the coupled WP iteration is the discrete Boltzmann solution with the
correct Euler and Navier--Stokes limits. Section~\ref{sec:acceleration} then
bounds the convergence factor of the iteration in a linear model and explains
why it improves as the flow approaches the continuum.

\subsection{Asymptotic preservation}\label{sec:ap}
\begin{proposition}[Equivalence and continuum limit]\label{prop:wp_equivalence}
Let $\tau>0$ be velocity independent, let the characteristic integrals and
their derivatives exist, and let the required velocity moments be finite.
\begin{enumerate}
\item[(i)] With $\bU=\avg{\bpsi f}$ and compatible initial and boundary data,
\eqref{eq:wave_operator} maps a Boltzmann solution to a solution of
\eqref{eq:wp_kinetic_system} and \eqref{eq:up_system}. Conversely, a solution
of \eqref{eq:up_system} with compatible data gives the Boltzmann solution
$f=W+P$, and $\bU-\avg{\bpsi(W+P)}$ is constant in time.
\item[(ii)] Assume in addition the Chapman--Enskog expansion
\eqref{eq:boltzmann_ce_expansion}, smooth macroscopic fields,
$\tau=\varepsilon\bar\tau$ with $\bar\tau$ bounded above and away from zero
and with bounded derivatives along the retained characteristics, and a
horizon for which the equilibrium integral beyond it is
$\mathcal O(\varepsilon^2)$. Then \eqref{eq:wave_ce}--\eqref{eq:total_flux_ce}
hold, together with
$\boldsymbol\sigma_W+\boldsymbol\sigma_P=\boldsymbol\sigma_{\rm NS}+\mathcal O(\varepsilon^2)$
and $\bq_W+\bq_P=\bq_{\rm NS}+\mathcal O(\varepsilon^2)$.
\end{enumerate}
\end{proposition}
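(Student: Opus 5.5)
For (i) I will first derive the wave equation by differentiating the wave operator \eqref{eq:wave_operator} along the characteristic. Write $W(\bx,t)=\int_0^{\ell(\bx,t)}\tau^{-1}(\bx-\bxi s,t-s)M(\bx-\bxi s,t-s)e^{-\nu(s;\bx,t)}\,\dd s$ and apply $D=\partial_t+\bxi\cdot\nabla_{\bx}$. Two observations make this tractable. First, $D$ annihilates any function of $(\bx-\bxi s,t-s)$ at fixed $s$, so $DM_s=0$ and $D\tau_s=0$. Second, $\nu(s)=\int_0^s\tau_r^{-1}\dd r$ satisfies $D\nu(s)=\tau^{-1}-\tau_s^{-1}$, which follows from $D\tau_r=0$ and $\partial_r$ along the line. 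Hence $D$ acting on the integrand gives $-(\tau^{-1}-\tau_s^{-1})$ times the integrand, which I rewrite as $-\tau^{-1}(\cdot)-\partial_s e^{-\nu(s)}\cdot M_s/\tau_s$. Integrating the second piece in $s$ should produce $\tau^{-1}M$ from the lower limit, together with a boundary term at $s=\ell$. That boundary term combines with the Leibniz contribution $(D\ell)\,(M_\ell/\tau_\ell)e^{-\nu(\ell)}$, and together they give exactly $-\mathcal S_\ell$ of \eqref{eq:endpoint_exchange}. A cleaner route is the change of variables $s\mapsto$ absolute time $t'=t-s$, in which the variable limit becomes $t-\ell$ and $D$ passes inside. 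I will carry out this computation carefully, because the signs of the endpoint term ($1-D\ell$) are the main point.

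With $DW=(M-W)/\tau-\mathcal S_\ell$ established, subtracting it from the Boltzmann equation $Df=\varepsilon^{-1}Q(f,f)$ for $P=f-W$ gives the second line of \eqref{eq:wp_kinetic_system}. Rewriting $\varepsilon^{-1}Q=R+(M-f)/\tau$ and using $P_e=\tau\mathcal S_\ell$ then gives the particle equation of \eqref{eq:up_system}. The conservation lines of \eqref{eq:up_system} are the $\bpsi$-moments of the Boltzmann equation, using $\avg{\bpsi Q}=\boldsymbol0$ and the decomposition of the flux into $W$ and $P$ moments.

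For the converse, given $(\bU,P)$ solving \eqref{eq:up_system}, I set $f=W[\bU]+P$ with $M$ the Maxwellian of $\bU$. The wave identity above holds for any smooth field $\bU$, since it never used the Boltzmann equation. Adding it to the particle equation gives $Df=(M-W)/\tau-\mathcal S_\ell+R+(P_e-P)/\tau=R+(M-f)/\tau$. This is the Boltzmann equation provided $M$ is the Maxwellian of $f$, i.e.\ $\bU=\avg{\bpsi f}$. To obtain that, let $\boldsymbol\delta=\bU-\avg{\bpsi f}$ and take moments. The flux terms in the conservation law are by definition the moments of $\bxi(W+P)$, so they cancel against $\avg{\bpsi\bxi f}$. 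The collision-remainder term is defined by \eqref{eq:up_closure} in terms of $f$ and $M$, and it is conservative only when $M$ matches $f$. I therefore expect to use the closure $R$ as written with $M=M[\bU]$ and to check that $\avg{\bpsi(R+(M-f)/\tau)}=\avg{\bpsi\varepsilon^{-1}Q(f,f)}=\boldsymbol0$ regardless, since this combination is exactly $\varepsilon^{-1}Q(f,f)$. This gives $\partial_t\boldsymbol\delta=\boldsymbol0$, so $\boldsymbol\delta$ is constant; compatible data set it to zero, and then $f$ solves Boltzmann. This closure bookkeeping is the step I expect to be most delicate.

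For (ii), I Taylor expand $M_s$ and $\tau_s$ about $s=0$ in $W$. With $\tau=\varepsilon\bar\tau$, the collision density $e^{-\nu}/\tau_s$ concentrates on $s=\mathcal O(\varepsilon)$. Extending the upper limit to infinity costs $\mathcal O(\varepsilon^2)$ by the horizon assumption, and the variation of $\tau_s$ contributes at $\mathcal O(\varepsilon^2)$ by the derivative bounds. This yields $W=M-\tau DM+\mathcal O(\varepsilon^2)$. Replacing $\partial_tM$ by $\partial_t^{(0)}M$ costs $\mathcal O(\varepsilon)\cdot\tau$, so \eqref{eq:wave_ce}--\eqref{eq:wave_first_order} follow. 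Subtracting from \eqref{eq:boltzmann_ce_expansion} gives \eqref{eq:particle_ce}. Summing the flux moments then gives \eqref{eq:total_flux_ce}, since the stress and heat flux of $W+P=f$ are those of $f$, which are $\boldsymbol\sigma_{\rm NS},\bq_{\rm NS}$ up to $\mathcal O(\varepsilon^2)$ by the Chapman--Enskog result quoted in Section~\ref{sec:gas_model}.
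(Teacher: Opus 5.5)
Your overall route is the paper's: Leibniz differentiation of the wave integral along the characteristic, subtraction from the Boltzmann equation, the moment bookkeeping $\partial_t(\bU-\avg{\bpsi(W+P)})=\boldsymbol 0$ for the converse, and an expansion of the wave integral plus linearity of the moment maps for (ii). The step that fails as written is your first lemma in (i). At fixed flight time $s$, your $D$ does not annihilate $a_s=a(\bx-\bxi s,t-s)$. It commutes with the shift, so $Da_s=(Da)_s=-\partial_s a_s$. Your two facts also contradict each other: $D\tau_r=0$ would force $D\nu(s)=0$. The correct value $D\nu(s)=\tau_0^{-1}-\tau_s^{-1}$, with $\tau_0=\tau(\bx,t)$, follows precisely from $D\tau_r^{-1}=-\partial_r\tau_r^{-1}$. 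Setting $DM_s=D\tau_s=0$ drops the term $-e^{-\nu(s)}\partial_s(M_s/\tau_s)$ from $DK$, where $K(s)=M_se^{-\nu(s)}/\tau_s$. The remaining piece $-(M_s/\tau_s)\partial_se^{-\nu(s)}$ is then not an exact $s$-derivative. Integrating it by parts gives $M/\tau_0-K(\ell)+\int_0^\ell e^{-\nu(s)}\partial_s(M_s/\tau_s)\,\dd s$. Your wave equation would therefore carry a spurious integral that vanishes only when $M/\tau$ is constant along the characteristic.

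The repair is the identity $(D+\partial_s)K(s)=-K(s)/\tau_0$, which is what the paper uses. It gives $\int_0^\ell DK\,\dd s=-W/\tau_0+K(0)-K(\ell)$ with $K(0)=M/\tau_0$, and the Leibniz term $K(\ell)\,D\ell$ completes $-\mathcal S_\ell$.

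Your absolute-time route is also correct, because at fixed $t'=t-s$ the operator $D$ does annihilate the fields. There, however, $D\nu=1/\tau_0$, coming from the moving upper limit of $\nu=\int_{t'}^{t}\tau^{-1}(\bx-\bxi(t-r'),r')\,\dd r'$, so the integrand contributes $-W/\tau_0$. Both limits of the $t'$-integral move: the upper one contributes $M/\tau_0$ and the lower one $-(1-D\ell)K(\ell)$. Do not mix the two parametrizations.

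The rest agrees with the paper:
\begin{enumerate}
\item the particle equation through $P_e=\tau\mathcal S_\ell$;
\item the converse via $\partial_t(\bU-\avg{\bpsi(W+P)})=\boldsymbol 0$, where your caveat ``provided $M$ is the Maxwellian of $f$'' is unnecessary, since with the closure \eqref{eq:up_closure} the two kinetic equations sum exactly to $Df=\varepsilon^{-1}Q(f,f)$ and the matching of moments is only what makes $W$ and $P$ the wave and particle of $f$;
\item part (ii), where Taylor-expanding $\tau_s$ under the derivative bounds is an acceptable substitute for the paper's integration by parts with $\dd e^{-\nu}/\dd s=-e^{-\nu}/\tau_s$.
\end{enumerate}
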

\begin{proof}
(i) For the characteristic integrand $K(s)=M_s e^{-\nu(s)}/\tau_s$ and the
relaxation time $\tau_0=\tau(\bx,t)$ at the observation point,
\begin{equation}\label{eq:kernel_identity}
 (\partial_t+\bxi\cdot\nabla_{\bx}+\partial_s)K(s)=-\frac{K(s)}{\tau_0},
\end{equation}
because $(\partial_t+\bxi\cdot\nabla_{\bx})a_s=-\partial_sa_s$ for any field
evaluated at the characteristic point, while
$(\partial_t+\bxi\cdot\nabla_{\bx})\nu(s)=1/\tau_0-1/\tau_s$ and
$\partial_s\nu(s)=1/\tau_s$. In the Leibniz differentiation of the wave
integral the integrand contributes $K(0)-K(\ell)-W/\tau_0$ with
$K(0)=M/\tau_0$, and the upper limit contributes
$K(\ell)(\partial_t\ell+\bxi\cdot\nabla_{\bx}\ell)$. Together they give the
wave equation of \eqref{eq:wp_kinetic_system}, and subtracting it from
\eqref{eq:boltzmann} gives the particle equation, which the identity
$\varepsilon^{-1}Q(f,f)-(M-W)/\tau+\mathcal S_\ell=R+(\tau\mathcal S_\ell-P)/\tau$
turns into the particle equation of \eqref{eq:up_system}; the moments then
give \eqref{eq:wp_moment_system} and \eqref{eq:total_conservation}.
Conversely, let $(\bU,P)$ solve \eqref{eq:up_system} with $W$ defined from
$\bU$. The wave equation then holds identically, and the moments of the wave
and particle equations add up to
$\partial_t\avg{\bpsi(W+P)}+\nabla_{\bx}\cdot(\bF_W+\bF_P)=\boldsymbol0$;
combined with the conservation law this gives
$\partial_t\big(\bU-\avg{\bpsi(W+P)}\big)=\boldsymbol0$. Compatible data
therefore keep $M$ the Maxwellian of $f=W+P$, and adding the two kinetic
equations gives the Boltzmann equation for $f$.

(ii) For a locally frozen relaxation time, \eqref{eq:wave_expansion} follows
from the second-order Taylor expansion of $M_s$, whose remainder, integrated
against $e^{-s/\tau}/\tau$, has a weighted second moment of at most
$2\tau^2$. Integrating the wave integral by parts with
$\mathrm d(e^{-\nu(s)})/\mathrm ds=-e^{-\nu(s)}/\tau_s$, which avoids
freezing $\bar\tau$, gives the expansion of $W$ in \eqref{eq:wave_ce}; the
variation of $\bar\tau$ enters at second order, and the equilibrium integral
beyond the horizon is $\mathcal O(\varepsilon^2)$ by assumption. Subtracting
this expansion from \eqref{eq:boltzmann_ce_expansion} gives the expansion of
$P$. The flux moments and the moment maps \eqref{eq:transport_moments} and
\eqref{eq:heat_flux_moment} are linear when evaluated with the same bulk
velocity, so the wave corrections cancel the corresponding terms of the
particle and leave the flux, stress and heat flux determined by $f^{(1)}$.
\end{proof}

The discrete results concern the fixed points of the coupled WP iteration. A
state is called a fixed point when every increment produced by one outer
iteration vanishes; states left unchanged only because an update factor
$\theta_U$ or $\theta$ is zero are excluded.

\begin{theorem}[Discrete fixed point and continuum limit]\label{thm:ap}
Let $Q^h_{\rm FSM}$ conserve the discrete collision invariants, let the
fluxes, including those at boundary faces, satisfy
\eqref{eq:flux_compatibility}, and let the update factors satisfy
$0<\theta_U,\theta\leq1$ and $N_U,N_P\geq1$. Assume that, for the chosen
finite numbers $m_U$ and $m_P$ of inner iterations, the W and P increment
maps have trivial nullspaces, so that a computed increment vanishes if and
only if its right-hand side vanishes.
\begin{enumerate}
\item[(i)] A state $f^*=W^*+P^*$ is a fixed point of the iteration if and
only if
\begin{equation}\label{eq:discrete_boltzmann}
 D^h_W(W^*)+D^h_P(P^*)=Q^h_{\rm FSM}(f^*,f^*),
\end{equation}
and then $R_U^*=\boldsymbol0$. The fixed-point equation contains none of
$\Delta t_{\rm num}$, $\widehat\tau$, $r_\Gamma$, $N_U$, $m_U$, $N_P$ and
$m_P$; these quantities affect the path of the iteration but not its fixed
points.
\item[(ii)] Consider a family of such fixed points whose moments remain
bounded and whose distributions stay in a neighbourhood of their discrete
Maxwellians in which the equilibrium is isolated. Assume further that
$Q^h_{\rm FSM}$ has the form \eqref{eq:fsm_equilibrium}, so that every
discrete Maxwellian is an exact equilibrium, that the linearized discrete
collision operator is invertible on the complement of the collision
invariants with bounds independent of $\varepsilon$, that
$\ell_i\geq\ell_{\min}>0$ independently of $\varepsilon$, that $\bF^h_W$ is
the gas-kinetic Navier--Stokes flux of the wave, and that the discrete
solution is smooth. Then, formally as $\varepsilon\to0$ on a fixed mesh of
cell size $\Delta_i$,
\begin{equation}\label{eq:discrete_ce}
 \begin{aligned}
 &f^*=M^*+\varepsilon f^{(1),h}+\mathcal O(\varepsilon^2),\\[4pt]
 &P^*=\varepsilon\big(f^{(1),h}-W^{(1),h}\big)+\mathcal O(\varepsilon^2),
 \end{aligned}
\end{equation}
where $f^{(1),h}$ and $W^{(1),h}$ are the discrete first-order corrections of
the distribution and the wave, and $R_U^*=\boldsymbol0$ reduces to the
finite-volume Euler system at leading order and to the gas-kinetic
Navier--Stokes scheme with the Boltzmann stress and heat flux at first
order, up to $\mathcal O(\varepsilon\Delta_i)$.
\item[(iii)] Let the accepted and predicted states be smooth
near-equilibrium states that differ by an $\mathcal O(1)$ macroscopic
increment, let $\eta$ be locally frozen, $\varepsilon\ll\Delta t_{\rm num}$
and $\theta=1$. Then, for every $\eta\geq0$,
\begin{equation}\label{eq:hard_pull}
 \begin{aligned}
 &\Delta W=\Delta\big[(1-e^{-\eta})M\big]+\mathcal O(\varepsilon),\\[4pt]
 &\Delta P=\Delta\big(e^{-\eta}M\big)+\mathcal O(\varepsilon),
 \end{aligned}
\end{equation}
so that $\Delta f=\Delta M+\mathcal O(\varepsilon)$ and
$\bU^{k+1}=\widetilde\bU^{k+1}+\mathcal O(\varepsilon)$.
\end{enumerate}
\end{theorem}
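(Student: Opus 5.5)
The plan is to treat the three parts separately, using the discrete identities of Sections~\ref{sec:alg_residuals}--\ref{sec:alg_p} for (i) and the continuous arguments of Proposition~\ref{prop:wp_equivalence} for (ii) and (iii).

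\emph{Part (i).} Suppose first that $f^*$ is a fixed point. Every increment of the P iteration vanishes. By the trivial-nullspace assumption on the P increment map, the first right-hand side $b^{(1)}=R_P^*+\Theta^*$ vanishes. Since the W increments also vanish, $\widetilde\bU^{k+1}=\bU^*$. Then $\widetilde P_e^{h,k+1}=P_e^{h,*}$, and so $\Theta^*=0$ and $R_P^*=0$. By the direct form \eqref{eq:discrete_residual_direct}, this is exactly \eqref{eq:discrete_boltzmann}. Taking quadrature moments and using the conservation of $Q^h_{\rm FSM}$ together with \eqref{eq:flux_compatibility} gives $R_U^*=\boldsymbol0$.

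Conversely, suppose \eqref{eq:discrete_boltzmann} holds. Then $R_P^*=0$, and its moments give $R_U^*=\boldsymbol0$. Each W right-hand side \eqref{eq:w_residual} equals $R_U^*$ at $\bU^{(0)}=\bU^*$. Trivial nullspace and $\theta_U>0$ give zero increments inductively, so $\widetilde\bU=\bU^*$ and $\Theta=0$. The P right-hand sides are then $R_P^*=0$ for every $n$, and the P increments vanish. Finally, acceptance \eqref{eq:acceptance} reproduces $(\bU^*,P^*)$. Parameter independence is read off directly from \eqref{eq:discrete_boltzmann}, in which none of the listed quantities appears.

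\emph{Part (ii).} I would mirror Proposition~\ref{prop:wp_equivalence}(ii) at the discrete level. Write $f^*=M^*+\varepsilon g$. Since $Q^h_{\rm FSM}$ contains $\varepsilon^{-1}$ and has $M^h$ as an exact equilibrium, expanding \eqref{eq:discrete_boltzmann} gives the leading order $L^h_{M}g=D^h_W(W^*)+D^h_P(P^*)+\mathcal O(\varepsilon)$. The left side is bounded by the moment bounds, so $g=\mathcal O(1)$. Invertibility of $L^h_M$ on the complement, with $\varepsilon$-uniform bounds, then gives $f^{(1),h}$ and the first line of \eqref{eq:discrete_ce}. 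Because $\ell_i\ge\ell_{\min}$, we have $\eta_i\to\infty$, so $A_0,A_1=1-\mathcal O(e^{-\eta})$ are exponentially close to one. The discrete wave \eqref{eq:discrete_wave} is therefore $M^*+\varepsilon W^{(1),h}$ up to exponentially small terms, and subtracting gives the expansion of $P^*$. Inserting both into $R_U^*=\boldsymbol0$, the wave flux becomes the gas-kinetic Navier--Stokes flux with relaxation coefficients. The particle flux carries $\varepsilon\bF^{(1)}$ minus the relaxation part, exactly as in \eqref{eq:particle_spectrum}, up to the reconstruction error $\mathcal O(\varepsilon\Delta_i)$ of the second-order fluxes. \textbf{The main obstacle} is showing that the solvability conditions at order $\varepsilon^0$ reproduce the finite-volume Euler system rather than merely a consistent approximation of it. I would handle this by taking moments of the $\mathcal O(1)$ balance and using \eqref{eq:flux_compatibility}.

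\emph{Part (iii).} With $\varepsilon\ll\Delta t_{\rm num}$, we have $1/\widehat\tau\ge1/\tau=\mathcal O(\varepsilon^{-1})$, so the operator $\mathbf A_P$ is $\widehat\tau^{-1}(\mathbf I+\mathcal O(\varepsilon))$. Hence $\Delta P=\widehat\tau\,b^{(1)}+\mathcal O(\varepsilon)$, and the same holds for $\mathbf M_P^{-1}$, so $m_P$ does not matter. The traction term contributes $\widetilde P_e^{h}-P_e^{h}$. By \eqref{eq:discrete_endpoint} and the expansion of $W$, the endpoint equals $e^{-\eta}M+\mathcal O(\varepsilon)$, so its contribution is $\Delta(e^{-\eta}M)+\mathcal O(\varepsilon)$. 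The residual contribution $\widehat\tau R_P^k$ is $\mathcal O(\varepsilon)$ for a near-equilibrium accepted state whose $R_P^k$ is $\mathcal O(1)$. Later updates only add $\mathcal O(\varepsilon)$ corrections by the same scaling. Meanwhile $\Delta W=\Delta[(1-e^{-\eta})M]+\mathcal O(\varepsilon)$ follows directly from \eqref{eq:discrete_wave}. Summing the two increments gives $\Delta f=\Delta M+\mathcal O(\varepsilon)$, and taking moments yields $\bU^{k+1}=\widetilde\bU^{k+1}+\mathcal O(\varepsilon)$.
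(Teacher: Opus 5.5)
Your proposal follows the paper's proof essentially step for step: in (i) the vanishing W increments remove the traction, so the vanishing P increment forces $R_P^*=0$; in (ii) the solvability of the first-order balance under \eqref{eq:flux_compatibility} gives the Euler system, and the particle flux perturbs the face moments only by $\mathcal O(\varepsilon\Delta_i)$; in (iii) multiplying the particle system by $\widehat\tau\le\tau$ leaves $\Delta P=\widetilde P_e^{h,k+1}-P_e^{h,k}+\mathcal O(\varepsilon)$. The one step to tighten is $g=\mathcal O(1)$ in (ii), since dropping $\varepsilon Q^h(g,g)$ already presupposes it; derive it as the paper does, from $Q^h(f^*,f^*)=\varepsilon\big[D^h_W(W^*)+D^h_P(P^*)\big]=\mathcal O(\varepsilon)$ together with the isolated-equilibrium neighbourhood and the uniform inverse on the complement of the invariants, which contains $f^*-M^*$.
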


\begin{proof}
(i) At a fixed point the macroscopic increments vanish, so the predicted
endpoint coincides with the accepted one, $\Theta^k=0$, and the right-hand
side of \eqref{eq:particle_increment} reduces to $R_P^*$. Since the particle
increment vanishes, the assumption gives $R_P^*=0$, which is
\eqref{eq:discrete_boltzmann} by \eqref{eq:discrete_residual_direct}, and its
quadrature moments give $R_U^*=\boldsymbol0$. Conversely,
\eqref{eq:discrete_boltzmann} gives $R_P^*=0$ and $R_U^*=\boldsymbol0$; the
macroscopic increments then vanish, the endpoint traction vanishes, the
particle increment is zero, and the iteration returns $f^*$. The
preconditioning quantities multiply increments only and do not enter
\eqref{eq:discrete_boltzmann}.

(ii) Write $Q^h_{\rm FSM}=\varepsilon^{-1}Q^h$ and $\tau=\varepsilon\bar\tau$.
Since $\eta\geq\ell_{\min}/(\varepsilon\bar\tau)$, the factor $e^{-\eta}$ is
smaller than any power of $\varepsilon$, so $A_0$ and $A_1$ equal one up to
such terms and the reconstructed wave has the form
$W^*=M^*+\varepsilon W^{(1),h}+\mathcal O(\varepsilon^2)$. Multiplying
\eqref{eq:discrete_boltzmann} by $\varepsilon$ gives
$Q^h(f^*,f^*)=\mathcal O(\varepsilon)$. Because the discrete equilibrium is
exact and isolated and the linearization has a uniformly bounded inverse on
the complement of the invariants, it follows locally that
$f^*=M^*+\mathcal O(\varepsilon)$. With $L^h_M$ the linearization of $Q^h$
at $M^*$, the next order is
$L^h_Mf^{(1),h}=D^h_W(M^*)+\mathcal O(\varepsilon)$. Since
$P^*=\mathcal O(\varepsilon)$, the divergence $D^h_P(P^*)$ enters this
kinetic balance one order beyond the equation that determines $f^{(1),h}$,
whereas the first-order moment flux of $P^*$ remains part of the
Navier--Stokes correction below. Solvability requires the quadrature moments
of $D^h_W(M^*)$ to vanish at leading order, which by
\eqref{eq:flux_compatibility} is the finite-volume Euler system, and the
solution on the complement of the invariants defines $f^{(1),h}$ and
\eqref{eq:discrete_ce}. The total flux is the Euler flux plus $\varepsilon$
times the moments of $W^{(1),h}$ and of the particle distribution
$f^{(1),h}-W^{(1),h}$. The upwind reconstruction and the transport
correction in \eqref{eq:particle_flux} change the face moments of a smooth
$\mathcal O(\varepsilon)$ distribution by $\mathcal O(\varepsilon\Delta_i)$,
since $\Delta t=\mathcal O(\Delta_i)$. The first-order flux is therefore that
of $f^{(1),h}$, with the wave supplying the relaxation stress for $\mu=\tau p$
and the particle adding the difference from the Boltzmann response,
including the heat-flux correction of the unit-Prandtl-number relaxation.

(iii) By \eqref{eq:discrete_wave}, in which $A_1\tau=\mathcal O(\varepsilon)$
uniformly in $\eta$, $W=(1-e^{-\eta})M+\mathcal O(\varepsilon)$ for both
states. Since $\tau D^h_W(W)=\mathcal O(\varepsilon)$ on a smooth state,
\eqref{eq:discrete_endpoint} gives $P_e^h=e^{-\eta}M+\mathcal O(\varepsilon)$.
Multiplying \eqref{eq:particle_increment} by $\widehat\tau\leq\tau$ gives
$[\mathbf I+\widehat\tau(\Delta t_{\rm num}^{-1}\mathbf I+D^h_{\rm up})]\Delta P
=\widehat\tau R_P^k+\widetilde P_e^{h,k+1}-P_e^{h,k}$. On a smooth
near-equilibrium state $R_P^k=\mathcal O(1)$, because the factor
$\varepsilon^{-1}$ multiplies a collision term of order $\varepsilon$, and
the bracket equals $\mathbf I+\mathcal O(\varepsilon)$. Hence
$\Delta P=\Delta(e^{-\eta}M)+\mathcal O(\varepsilon)$. The same balance holds
for one LU--SGS sweep, whose diagonal is dominated by $1/\widehat\tau$.
Adding the wave increment gives $\Delta f=\Delta M$; the two shares need not
be separately velocity independent.
\end{proof}

The assumptions in (ii) identify the two discrete ingredients of the limit.
The first is an exact discrete equilibrium, which keeps the defect
$\varepsilon^{-1}Q^h(M,M)$ from growing as $\varepsilon\to0$; the second is
the gas-kinetic wave flux, whose dissipation in the limit is the
Navier--Stokes viscosity rather than a numerical viscosity of order
$\Delta_i$. Part (iii) explains the paired coefficient in
\eqref{eq:particle_increment}. The endpoint traction moves the particle by the
Maxwellian share $e^{-\eta}$ of the prediction, so that the accepted moments
recover the full macroscopic prediction rather than only its wave share
$1-e^{-\eta}$.

\subsection{Convergence acceleration}\label{sec:acceleration}
The rate of the coupled WP iteration is analyzed in the simplest model that
retains transport, relaxation and the horizon split. We linearize about a
uniform equilibrium at rest, keep a single molecular speed normalized to one
with isotropic directions, and let the collision be relaxation of the density
at constant $\tau$, so that the collision remainder vanishes. For a periodic
mode of wavenumber $\varkappa$, let $\zeta\in[-1,1]$ be the cosine between
$\bxi$ and the wave vector, with angular average
$\frac12\int_{-1}^{1}\cdot\,\dd\zeta$, and put $z=\varkappa\tau$ and
$\eta=\ell/\tau$. The particle increment is solved exactly, with
$\Delta t_{\rm num}\to\infty$, $m_P\to\infty$ and $\widehat\tau=\tau$, the W
iteration is converged with the particle fixed, and $N_P=1$ and $\theta=1$.
For comparison, the conventional iterative scheme (CIS) obtains $f^{k+1}$
from $\bxi\cdot\nabla_{\bx}f^{k+1}+f^{k+1}/\tau=\rho^k/\tau$, where $\rho^k$
is the density of $f^k$.

\begin{theorem}[Convergence acceleration]\label{thm:acceleration}
In this model, one iteration multiplies the error of the density mode by
\begin{equation}\label{eq:iteration_factors}
 \begin{aligned}
 &g_{\rm CIS}(z)=\frac{\arctan z}{z}\quad\text{for CIS},\\[4pt]
 &g_{\rm WP}(z,\eta)=\left|1-g_{\rm CIS}(z)+\frac{\varphi_P(z,\eta)}{\varphi_W(z,\eta)}\right|
 \quad\text{for WP},
 \end{aligned}
\end{equation}
where the particle and wave flux responses are
\begin{equation}\label{eq:flux_responses}
 \begin{aligned}
 &\varphi_P=\frac{e^{-\eta}}{2}\int_{-1}^{1}
   \frac{\zeta\,e^{-iz\eta\zeta}}{1+iz\zeta}\,\dd\zeta,\\[4pt]
 &\varphi_W=\frac12\int_{-1}^{1}
   \frac{\zeta\,\big(1-e^{-\eta}e^{-iz\eta\zeta}\big)}{1+iz\zeta}\,\dd\zeta .
 \end{aligned}
\end{equation}
With $\beta(\eta)=(1+\eta)e^{-\eta}$, for $0<z\leq1$ and
$\beta(\eta)+3z^2/5<1$,
\begin{equation}\label{eq:factor_bounds}
 \begin{aligned}
 &g_{\rm CIS}(z)\geq1-\frac{z^2}{3},\\[4pt]
 &g_{\rm WP}(z,\eta)\leq\frac{z^2}{3}
   +\frac{\beta(\eta)}{1-\beta(\eta)-3z^2/5}.
 \end{aligned}
\end{equation}
\end{theorem}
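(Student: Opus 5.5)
We first derive the amplification factors in \eqref{eq:iteration_factors} by a direct Fourier computation in the linear model, and then bound the flux responses \eqref{eq:flux_responses} using elementary integral estimates.

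\emph{Setup.} We measure lengths in units of $\tau$ and take a single mode $e^{i\varkappa x}$. The steady kinetic operator then acts on an ordinate as multiplication by $1+iz\zeta$. The exact solution is the zero mode, so every quantity below is an error.

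For CIS we have $f^{k+1}=\rho^k/(1+iz\zeta)$. Its angular average gives
\[
\rho^{k+1}=\rho^k\,\tfrac12\int_{-1}^{1}\frac{\dd\zeta}{1+z^2\zeta^2}=\rho^k\,\frac{\arctan z}{z}.
\]
The imaginary odd part integrates to zero.

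For WP, \eqref{eq:wave_operator} with $M_s=\rho\,e^{-iz\sigma\zeta}$ in the variable $\sigma=s/\tau$ gives
\[
\widehat W=\rho\,\frac{1-E}{1+iz\zeta},\qquad E=e^{-\eta}e^{-iz\eta\zeta}.
\]
The endpoint \eqref{eq:up_closure} with frozen $\ell$ and $\tau$ is $P_e=\rho E$.

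\emph{Derivation of $g_{\rm WP}$.} We track the pair $(\rho^k,P^k)$ through one outer iteration.
\begin{itemize}
\item \emph{W iteration.} The converged W iteration solves the density component of the steady total conservation law with $P^k$ fixed. This gives $\widetilde\rho\,\varphi_W=-\avg{\zeta P^k}$.
\item \emph{P iteration.} The exact particle solve (with $\Delta t_{\rm num}\to\infty$, $\widehat\tau=\tau$, $R=0$) gives $P=\widetilde\rho E/(1+iz\zeta)$. Hence $f^{k+1}=\widetilde\rho/(1+iz\zeta)$ and $\rho^{k+1}=g_{\rm CIS}\,\widetilde\rho$.
\item \emph{Acceptance.} By \eqref{eq:acceptance}, $P^{k+1}=f^{k+1}-W(\rho^{k+1})$.
\end{itemize}
Using the identity $\avg{\zeta/(1+iz\zeta)}=\varphi_W+\varphi_P$, the next prediction is
\[
\widetilde\rho'\varphi_W=-\widetilde\rho\big[(\varphi_W+\varphi_P)-g_{\rm CIS}\varphi_W\big].
\]
Thus the predicted density, and hence the accepted density, is multiplied by $g_{\rm CIS}-1-\varphi_P/\varphi_W$. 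Its modulus is $g_{\rm WP}$.

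\emph{Bounds.} For $0<z\leq1$, the alternating series of $\arctan z$ gives $g_{\rm CIS}\geq1-z^2/3$. Consequently $0\leq1-g_{\rm CIS}\leq z^2/3$, and it remains to show
\[
\bigl|\varphi_P/\varphi_W\bigr|\leq\frac{\beta}{1-\beta-3z^2/5}.
\]
\begin{itemize}
\item \emph{Upper bound on $|\varphi_P|$.} Write $\varphi_P=\int_\eta^\infty e^{-\sigma}\avg{\zeta e^{-iz\sigma\zeta}}\,\dd\sigma$. This follows from $E/(1+iz\zeta)=\int_\eta^\infty e^{-\sigma(1+iz\zeta)}\dd\sigma$. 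The inner average is $-i\,j_1(z\sigma)$, and $|j_1(a)|\leq a/3$. Hence
\[
|\varphi_P|\leq\frac z3\int_\eta^\infty\sigma e^{-\sigma}\dd\sigma=\frac{z\beta}{3}.
\]
\item \emph{Lower bound on the total response.} We have
\[
|\varphi_W+\varphi_P|=z\,\avg{\zeta^2/(1+z^2\zeta^2)}\geq z\,\avg{\zeta^2(1-z^2\zeta^2)}=\frac z3\Big(1-\frac{3z^2}{5}\Big).
\]
\item \emph{Combination.} By the triangle inequality, $|\varphi_W|\geq\frac z3(1-3z^2/5-\beta)>0$ under the stated hypothesis. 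Dividing the two estimates gives the claim.
\end{itemize}

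\emph{Main obstacle.} The delicate step is the derivation of $g_{\rm WP}$. One must account correctly for the acceptance step: it reassigns the particle as the complement of the wave of the accepted state, not of the predicted state. This step produces the term $1-g_{\rm CIS}$. Without it one would obtain the incorrect factor $-\varphi_P/\varphi_W$. A second point to check is that the sign conventions of $j_1$ and of the angular averages are consistent.
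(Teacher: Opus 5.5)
Your proposal is correct and follows the paper's proof. It tracks the predicted density the same way: through the exact P solve ($f^{k+1}=\widetilde\rho/(1+iz\zeta)$), then the acceptance, then the next macroscopic flux balance. It also combines $|\varphi_P|\leq\beta z/3$ with $|\varphi_W+\varphi_P|\geq(z/3)(1-3z^2/5)$ by the triangle inequality, as the paper does. The only difference is how you obtain those two estimates: a Laplace representation through $-i\,j_1(z\sigma)$ and the pointwise bound $1/(1+x)\geq1-x$, where the paper uses an even--odd splitting with $|\sin x|\leq|x|$ and the closed form $-i(z-\arctan z)/z^2$.

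You should also write out the P step you only assert. The paper does so: $(1+iz\zeta)\Delta P=R_P^k+(\widetilde\rho-\rho^k)E$ with $R_P^k=\rho^kE-(1+iz\zeta)P^k$. This is where the endpoint traction replaces $\rho^k$ by $\widetilde\rho$, which your formula for $P$ relies on.
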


\begin{proof}
Scale time by $\tau$. For density amplitude $\rho$, the wave operator
\eqref{eq:wave_operator} gives $W=w\rho$ with
$w=(1-e^{-\eta}e^{-iz\eta\zeta})/(1+iz\zeta)$, and the endpoint is
$P_e=e^{-\eta}e^{-iz\eta\zeta}\rho$, so that $(1+iz\zeta)w+P_e/\rho=1$.
CIS gives $f^{k+1}=\rho^k/(1+iz\zeta)$, whose average is
$g_{\rm CIS}\rho^k$. For WP, $R_P^k=\rho^k-(1+iz\zeta)f^k$, and the particle
increment satisfies
$(1+iz\zeta)\Delta P=R_P^k+(\widetilde\rho^{k+1}-\rho^k)e^{-\eta}e^{-iz\eta\zeta}$,
with $\widetilde\rho^{k+1}$ the predicted density. The identity above gives
$f^{k+1}=w\widetilde\rho^{k+1}+P^k+\Delta P=\widetilde\rho^{k+1}/(1+iz\zeta)$,
and hence $\rho^{k+1}=g_{\rm CIS}\widetilde\rho^{k+1}$. The converged
macroscopic iteration balances the wave flux of the new prediction against
the fixed particle flux,
$\varphi_W\widetilde\rho^{k+2}+\frac12\int\zeta P^{k+1}\dd\zeta=0$. Since
$P^{k+1}=f^{k+1}-w\rho^{k+1}$ and
$\frac12\int\zeta(1+iz\zeta)^{-1}\dd\zeta=\varphi_W+\varphi_P$, the particle
flux is $(\varphi_W+\varphi_P-g_{\rm CIS}\varphi_W)\widetilde\rho^{k+1}$, and
$\widetilde\rho^{k+2}=-(1-g_{\rm CIS}+\varphi_P/\varphi_W)\widetilde\rho^{k+1}$,
which is \eqref{eq:iteration_factors}.

The alternating series of $\arctan z$ gives $g_{\rm CIS}\geq1-z^2/3$. In
$\varphi_P$ the odd parts of the integrand cancel, leaving
$\varphi_P=-\frac{i}{2}e^{-\eta}\int_{-1}^{1}
[\zeta\sin(z\eta\zeta)+z\zeta^2\cos(z\eta\zeta)](1+z^2\zeta^2)^{-1}\dd\zeta$,
and $|\sin x|\leq|x|$ yields $|\varphi_P|\leq\beta z/3$. Moreover
$\varphi_W+\varphi_P=-i(z-\arctan z)/z^2$, whose modulus is at least
$z/3-z^3/5$ for $0<z\leq1$. Hence $|\varphi_W|\geq(z/3)(1-\beta-3z^2/5)$,
and the triangle inequality gives \eqref{eq:factor_bounds}.
\end{proof}

The bound explains how the convergence depends on the Knudsen number. On a
fixed mesh of cell size $\Delta_i$, a resolved mode has
$\varkappa\leq\pi/\Delta_i$ and therefore $z=\mathcal O(\varepsilon)$, while
$\eta=\ell/\tau$ grows like $\varepsilon^{-1}$. CIS multiplies the error by
only $1-\mathcal O(\varepsilon^2)$ per iteration and needs
$\mathcal O(\varepsilon^{-2})$ iterations, which is the classical slow
convergence of source iteration in diffusive regimes~\cite{adams2002}. The
WP factor, in contrast, is bounded by
$\mathcal O(\varepsilon^2)+\mathcal O\big((1+\eta)e^{-\eta}\big)$. The
coefficient $\beta=1-A_1$ is the part of the first-order transport not
carried by the wave; as the horizon-to-relaxation ratio grows, the particle
share of the transport decays exponentially, and the coupled WP iteration
inherits this decay. For $\eta=4$ and $z=10^{-2}$, for example, the bound
\eqref{eq:factor_bounds} gives $g_{\rm WP}\leq0.10$, whereas
$g_{\rm CIS}=1-3.33\times10^{-5}$. The acceleration thus reduces the number
of outer iterations without increasing the number of FSM evaluations in
each particle update. In the long-wave limit $z\to0$ the bound
\eqref{eq:factor_bounds} tends to $\beta/(1-\beta)$, which is smaller than
one when $\beta(\eta)<1/2$, that is, when the wave carries more than half of
the first-order transport, $\eta>\eta_c\approx1.68$; for resolved modes with
finite $z$ the complete bound applies.

The collision remainder, which is absent from the linear model, is treated
through the preconditioning time \eqref{eq:preconditioning_time}. With the
macroscopic state fixed and transport neglected, the particle update
multiplies the remainder error by $\mathbf I+d^{-1}J^h$, where
$d=\Delta t_{\rm num}^{-1}+\widehat\tau^{-1}$ and $J^h$ is the Jacobian of
$Q^h_{\rm FSM}$. Perturbations along the discrete Maxwellians lie in the null
space of $J^h$, because \eqref{eq:fsm_equilibrium} vanishes on every discrete
Maxwellian; on these conservative modes the factor is one, and they are
advanced by the W iteration. Assume that $-J^h$ is self-adjoint and
non-negative in a weighted inner product, as the linearized Boltzmann
operator is in the $M^{-1}$-weighted inner product, and that its largest
eigenvalue satisfies $\lambda_{\max}\leq C_{\rm col}\max_j\Lambda_{i,j}$.
Then, on the complement of the null space, the eigenvalues $1-\lambda/d$,
with $\lambda\in(0,\lambda_{\max}]$ an eigenvalue of $-J^h$, lie in $[0,1)$
for every $\Delta t_{\rm num}$, because
$d\geq\widehat\tau^{-1}\geq\lambda_{\max}$ by
\eqref{eq:preconditioning_time}.

\section{Numerical Tests}\label{sec:results}
The tests proceed from a normal shock through Couette and lid-driven cavity
flows to hypersonic flows around a cylinder and an Apollo section. For each
case we give the physical setting and the controls of the coupled WP iteration,
compare the WP solution with a kinetic or continuum reference, and report the
ratios of iteration counts and wall times with respect to the conventional
iterative scheme.

\subsection{Gas model and comparison protocol}\label{sec:test_gas_model}
All tests use the reference speed $c_0=\sqrt{2R_gT_0}$, where $T_0$ is the
dimensional reference temperature, so that the nondimensional reference
temperature is $T_{\rm ref}=1/2$. The gas is argon, modelled by variable hard
spheres (VHS) with viscosity exponent $\omega=0.81$, and the Knudsen number
$\varepsilon$ is the reference mean free path divided by $L_0$. The viscosity
follows the power law
\begin{equation}\label{eq:viscosity_law}
 \mu(T)=\mu_{\rm ref}\left(\frac{T}{T_{\rm ref}}\right)^{\omega},
\end{equation}
whose reference value $\mu_{\rm ref}=\mu(T_{\rm ref})$ is fixed by the
viscosity-based normalization,
\begin{equation}\label{eq:reference_mu}
 \mu_{\rm ref}=\frac{15\sqrt{\pi}}{2(5-2\omega)(7-2\omega)}\,\varepsilon,
\end{equation}
so that $\mu_{\rm ref}=0.73$ at $\varepsilon=1$, and the wave relaxation time
is
\begin{equation}\label{eq:reference_viscosity}
 \tau=\frac{\mu(T)}{p}.
\end{equation}
This relaxation time enters only the wave. The collision operator itself uses
the single-term power-law kernel
\begin{equation}\label{eq:power_law_kernel}
 B(|\bg|,\theta)=
 \frac{|\bg|^\alpha\sin^{\alpha-1}\!\left(\theta/2\right)}
  {4\pi\,2^{\alpha/2}\Gamma\!\left((\alpha+3)/2\right)}
\end{equation}
with $\alpha=2(1-\omega)=0.38$, where $\Gamma$ is the gamma function. The
exponent $\alpha$ reproduces the temperature dependence of the viscosity law
\eqref{eq:viscosity_law}, and the FSM applies this normalization together
with the factor $\varepsilon^{-1}$ of \eqref{eq:boltzmann}.

In the tests, WP denotes Algorithm~\ref{alg:wp} and CIS the conventional
iterative scheme for the full distribution, which transports $f$ with the
flux \eqref{eq:particle_flux} and performs one kinetic LU--SGS sweep pair and
one FSM evaluation per iteration. NS denotes the gas-kinetic Navier--Stokes
scheme that serves as the continuum reference, namely the flux
\eqref{eq:gks_wave_moments} with $A_{0,\Gamma}=A_{1,\Gamma}=1$ combined with
the W iteration of Section~\ref{sec:alg_w} and the reconstruction of WP. The
configuration W$N_U$/PR$m_U$/P$N_P$ of Section~\ref{sec:alg_wp} is given with
each case. All tests use $N_P=1$ and $m_P=1$, so that one outer iteration of
WP, like one CIS iteration, performs a single kinetic LU--SGS sweep pair and a
single FSM evaluation. All calculations use the second-order van Leer-limited
reconstruction of Section~\ref{sec:alg_fluxes} for both the wave and the
particle, except those of the Reynolds-number cavities, which use the
sixth-order central reconstruction. The horizon number of
\eqref{eq:discrete_horizon} is $C_\ell=1$ throughout. The pseudo-time step
$\Delta t_{\rm num}$ of \eqref{eq:dt_num} is set by the implicit CFL number
$C_{\rm imp}=1000$, or 150 for the Reynolds-number cavities, and the
averaging interval $\Delta t$ of the fluxes by the explicit CFL number
$C_{\rm exp}=0.2$, or $0.25$ for the normal shock and $0.02$ for Couette
flow. All wall times are solver wall times measured with 64 OpenMP threads on
one node, for WP, CIS and NS alike and including the calculations taken from
the companion study~\cite{liu_wpd_shakhov}. Every calculation is carried below
the normalized macroscopic residual $R_U=10^{-7}$, the threshold at which
iteration counts and wall times are compared; the only exceptions are the CIS
solution of the normal shock, discussed in Section~\ref{sec:shock}, and the
Boltzmann CIS calculations in the continuum regime, discussed below. In the
continuum regime the distribution stays close to equilibrium and the particle
carries only a small correction, so that coarser velocity grids suffice, with
$16\times16\times8$ nodes for Couette flow at $\varepsilon=10^{-4}$ and
$32\times32\times12$ nodes for the Reynolds-number cavities and for the
cylinder and the Apollo section at $\varepsilon=10^{-4}$.

Accuracy is measured by the relative difference
$E_{L_2}(q)=\|q_{\rm WP}-q_{\rm ref}\|_2/\|q_{\rm ref}\|_2$, where $q$ is a
macroscopic field, a profile or a sampled distribution and the norm is taken
over the corresponding cells, samples or velocity nodes. The values quoted
below are those printed in the figures, given to three significant figures.
Distributions are shown as reduced distributions of $f$, integrated over the
velocity components that are not displayed, and for WP the distribution is
reconstructed as $W+P$. Contour plots show WP by filled contours and the
reference by lines at common levels, and profile plots show WP by lines and
the reference by symbols. Efficiency is quantified, as in the companion
study~\cite{liu_wpd_shakhov}, by the ratio $S_{\rm iter}=N_{\rm CIS}/N_{\rm WP}$
of the numbers of outer iterations and the ratio
$S_t=t_{\rm CIS}/t_{\rm WP}$ of the solver wall times, both evaluated at the
residual threshold and printed in the residual figures, where WP is drawn
with solid, CIS with dashed and NS with dotted lines.

In the continuum regime, Boltzmann CIS on the three-dimensional velocity grid
did not reach the threshold within the available computing time, except for
the cavity at $\mathrm{Re}=100$. For Couette flow at $\varepsilon=10^{-4}$ on
the finer $32^3$ velocity grid, for instance, its residual was still
$R_U=5.2\times10^{-4}$ after $10^6$ iterations and $3.4\times10^5$~s. For the
four remaining continuum cases, which are Couette flow at
$\varepsilon=10^{-4}$, the cavity at $\mathrm{Re}=1000$, and the cylinder and
the Apollo section at $\varepsilon=10^{-4}$, the CIS reference is therefore
the Shakhov model with a reduced two-dimensional velocity distribution on the
same physical mesh, computed with the Shakhov solver of the companion
study~\cite{liu_wpd_shakhov}. This reference supplies the curves labelled CIS
in the iteration and wall-time panels of these cases, as well as the CIS
profiles of Couette flow at $\varepsilon=10^{-4}$ and of the cavity at
$\mathrm{Re}=1000$ and the CIS fields of the cylinder and the Apollo section
at $\varepsilon=10^{-4}$. The companion study reports the Couette, cavity and
cylinder cases on $32\times32$, $16\times16$ and $64\times64$ reduced velocity
nodes, and the same solver is applied here to the Apollo section on
$64\times64$ reduced velocity nodes, as for the cylinder. The substitution is
justified because, near equilibrium, the convergence of CIS is limited by the
relaxation of the conservative variables through transport, which is
governed by the viscosity and the heat conduction rather than by the details
of the collision operator. The Shakhov iteration counts therefore define
$S_{\rm iter}$ in place of those of Boltzmann CIS, and the wall times of the
same calculations define $S_t$. The macroscopic sweeps of the W iteration do
not depend on the velocity grid, whereas the cost of a CIS iteration grows
with the number of velocity nodes, so that the cost of one WP outer iteration
measured in CIS iterations is smallest on the finest velocity grids, as
observed in the companion study~\cite{liu_wpd_shakhov}. A Boltzmann CIS
iteration on the three-dimensional grids of the present tests, which in
addition evaluates the FSM collision operator at every node, costs far more
than a Shakhov iteration on the reduced grid. The Shakhov iteration is thus
the faster of the two references, and the values of $S_t$ obtained with it
understate the wall-time gain over Boltzmann CIS, which makes the comparison a
conservative one. At $\mathrm{Re}=100$ both ratios refer to Boltzmann CIS, as
in the rarefied cases.

In the continuum regime WP is also timed against the NS solver, and this
comparison is a strict one. The NS solver uses the same mesh, reconstruction,
CFL numbers and implicit macroscopic iteration as the W iteration of WP but
performs no particle transport and no collision evaluation, so that its wall
time is the reference cost of the macroscopic solution on the same mesh and
the ratio $t_{\rm WP}/t_{\rm NS}$ measures the cost of the kinetic solution
relative to it. The NS solutions and residual histories are computed with
this macroscopic solver, and for the Couette, cavity and cylinder cases they
are those reported in the companion study~\cite{liu_wpd_shakhov}.

In relation to the analysis, the agreement between WP and CIS in the rarefied
and transition cases tests the common fixed point of
Theorem~\ref{thm:ap}(i), the Reynolds-number cavity flows compare WP with the
Navier--Stokes limit of Theorem~\ref{thm:ap}(ii), and the external flows at
$\varepsilon=10^{-4}$ provide further comparisons in the continuum regime.
The dependence of the speedups on the Knudsen number is compared with
Theorem~\ref{thm:acceleration}.

\subsection{Normal shock}\label{sec:shock}
The first test is a normal shock with upstream Mach number $10$, Knudsen
number $\varepsilon=1$ and upstream temperature $0.5$, whose upstream and
downstream equilibrium states are related by the Rankine--Hugoniot
conditions. The physical domain $x\in[-80,40]$ is divided into 176 uniform
cells, the velocity grid has $100\times32\times32$ nodes on
$[-26.40,26.40]^3$, and WP runs W1/PR4/P1. Because the location of a normal
shock is a neutral mode of the steady problem, the shock computed by CIS
drifts slowly along the domain and its residual stagnates near $10^{-4}$,
whereas WP reaches $R_U<10^{-8}$. All comparisons are therefore made relative
to the density midpoint of each solution, which removes this drift.

Figure~\ref{fig:shock-moments} compares the macroscopic profiles, with
positions measured from the density midpoint $x_{\rho\mathrm{-mid}}$ of each
solution. Density and temperature are normalized to rise from zero upstream
to one downstream, velocity to fall from one to zero, and heat flux and normal
stress are scaled by their maximum magnitudes. The relative differences are
$4.33\times10^{-4}$, $5.73\times10^{-4}$ and $6.96\times10^{-4}$ in density,
velocity and temperature and $1.45\times10^{-3}$ and $1.87\times10^{-3}$ in
heat flux and normal stress, so that the agreement extends from the
conservative moments to the non-equilibrium stress and energy transport
inside the shock layer. Figure~\ref{fig:shock-dist} compares the reduced
distribution
$f_u(x,\xi_x)=\int_{\mathbb R^2}f(x,\xi_x,\xi_y,\xi_z)
\,\mathrm d\xi_y\,\mathrm d\xi_z$, reconstructed as $W+P$ for WP, at four
positions from downstream to upstream, with relative differences
$6.88\times10^{-4}$, $4.20\times10^{-4}$, $7.21\times10^{-4}$ and
$1.31\times10^{-3}$. Inside the layer a broad downstream population coexists
with a narrow upstream peak. This two-peak structure cannot be represented by
a single local Maxwellian, and resolving it is the main kinetic test of this
case.

\begin{figure}[tbp]
\centering\singlespacing\setlength{\figW}{\linewidth}
\begin{subfigure}[t]{0.4337\figW}\centering
\raisebox{0.0002\figW}{\includegraphics[width=0.4337\figW]{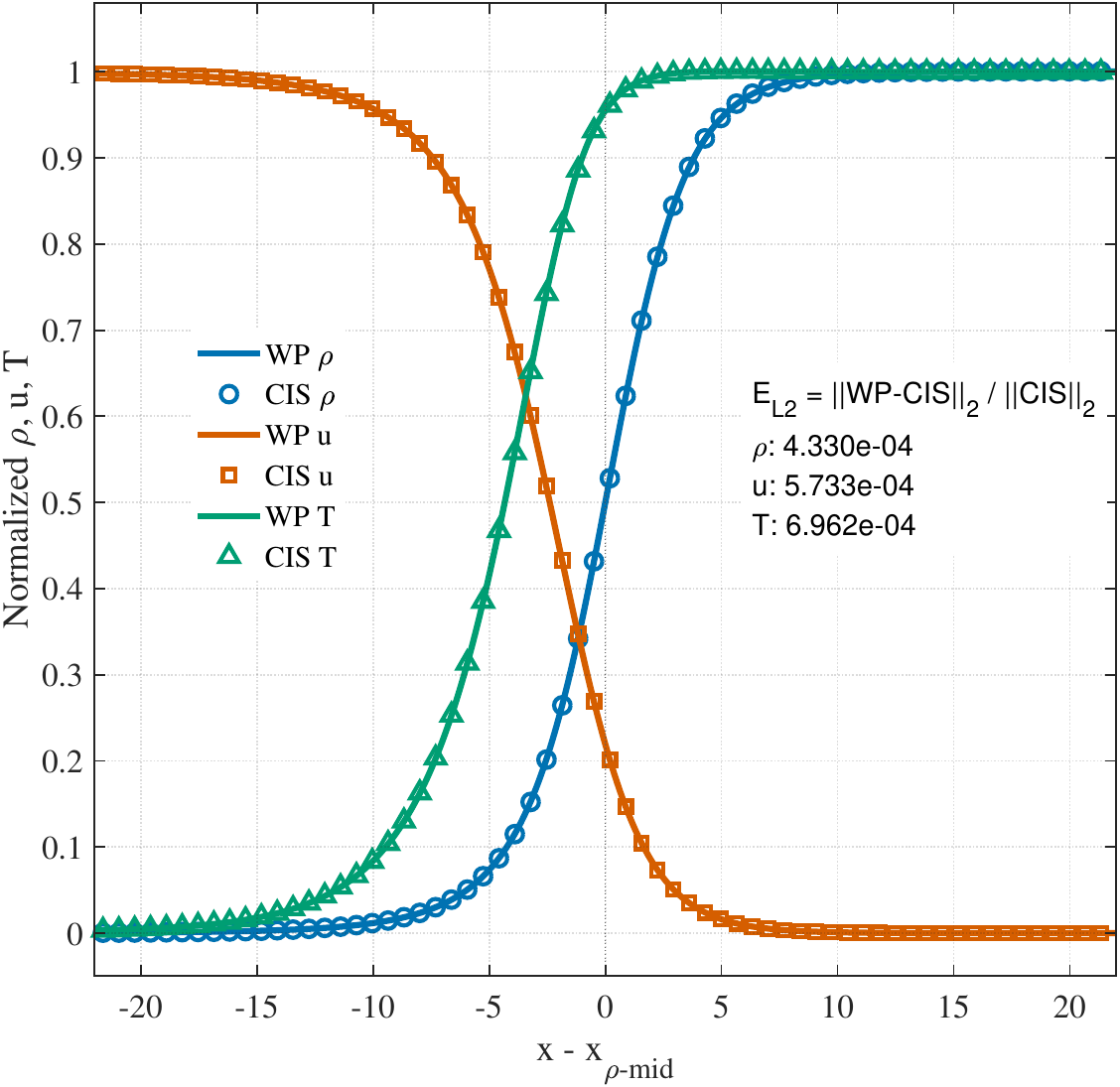}}
\caption{Density, velocity and temperature}
\end{subfigure}\hspace{0.0200\figW}%
\begin{subfigure}[t]{0.5413\figW}\centering
\raisebox{0.0000\figW}{\includegraphics[width=0.5413\figW]{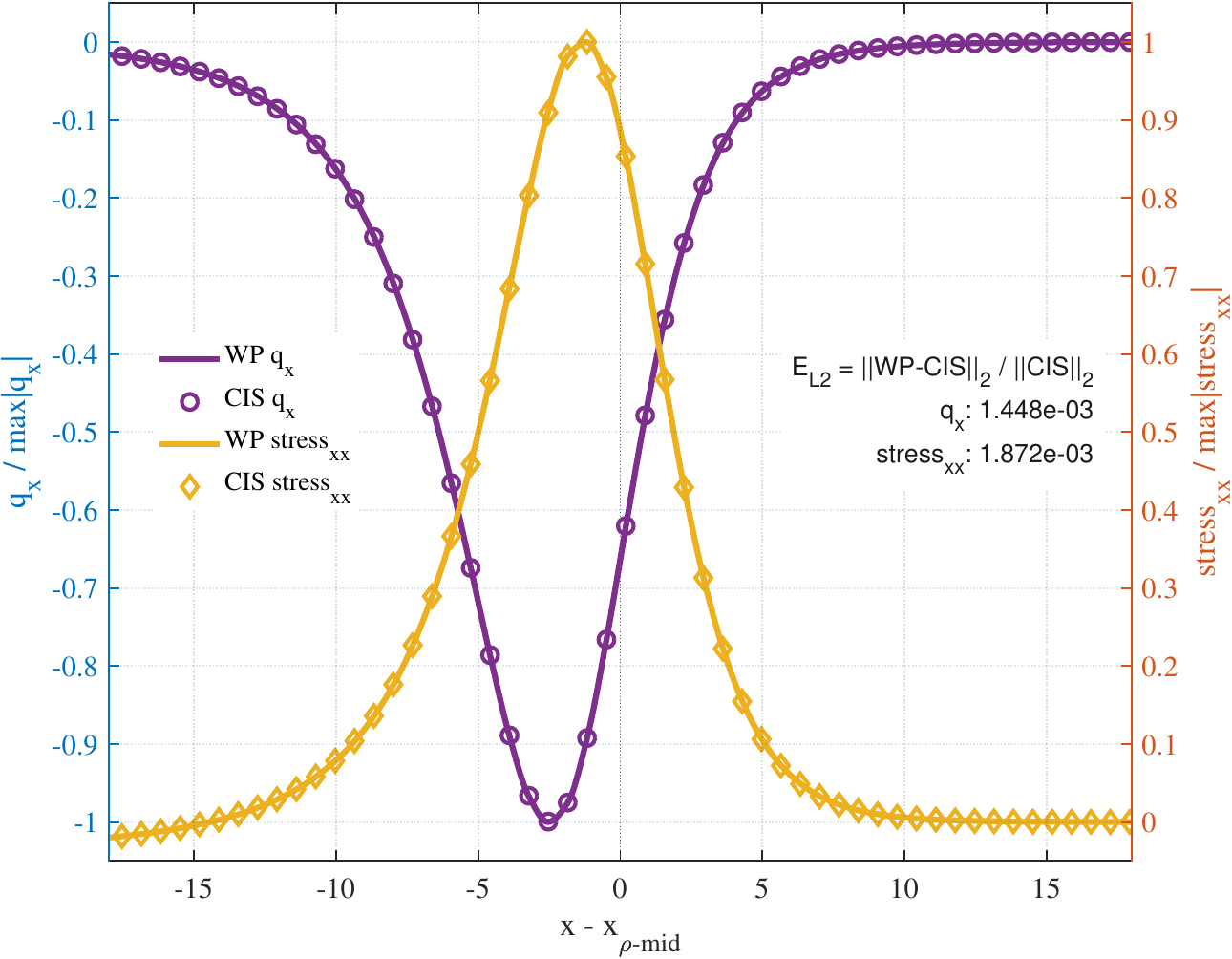}}
\caption{Heat flux and normal stress}
\end{subfigure}
\caption{Normal shock at Mach number 10 and $\varepsilon=1$: normalized macroscopic profiles against the distance from the density midpoint. WP is shown by lines and CIS by symbols, and $E_{L_2}$ is printed in each panel.}\label{fig:shock-moments}
\end{figure}
\begin{figure}[tbp]
\centering\singlespacing\setlength{\figW}{\linewidth}
\begin{subfigure}[t]{0.4875\figW}\centering
\raisebox{0.0000\figW}{\includegraphics[width=0.4875\figW]{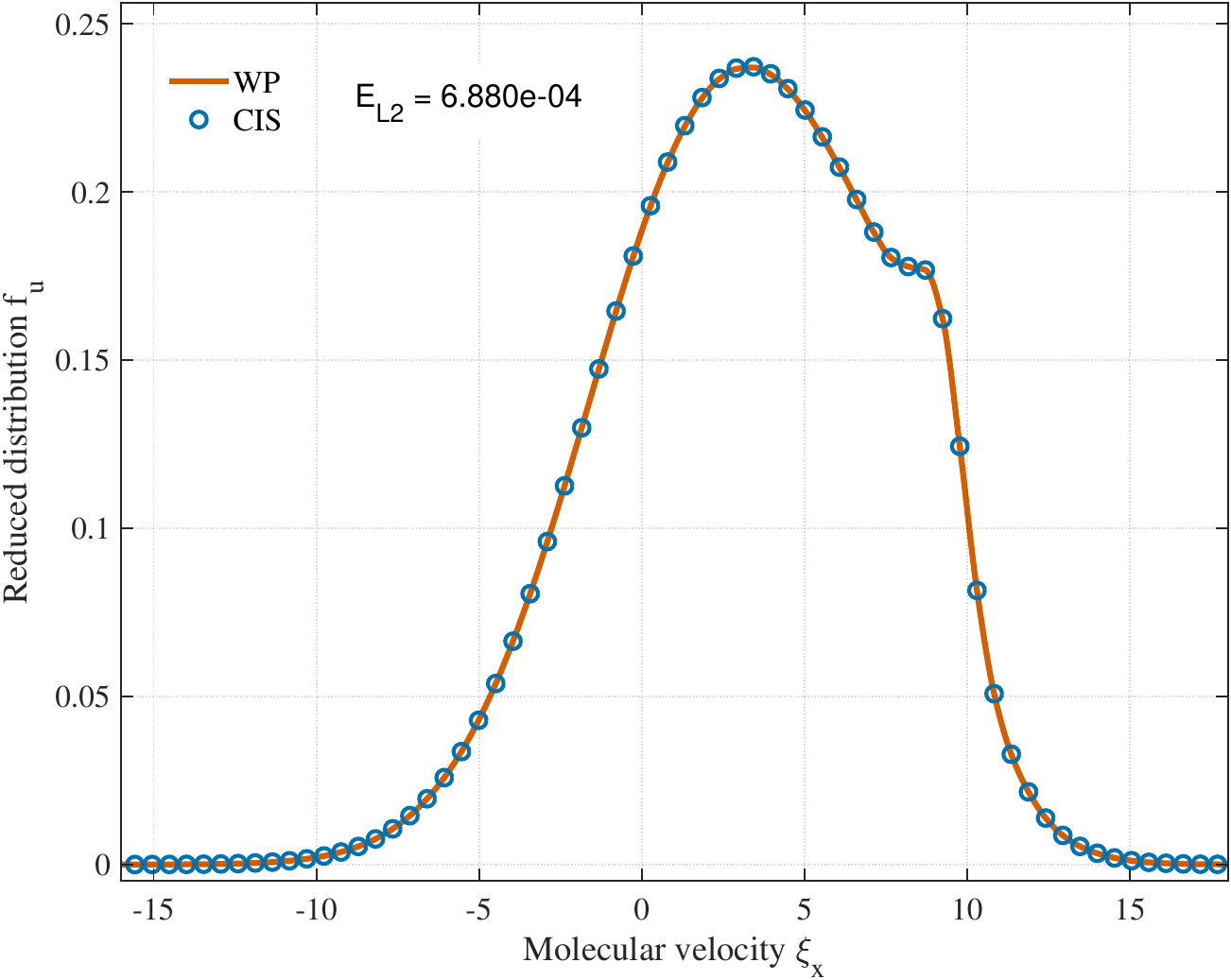}}
\caption{$x-x_{\rho\mathrm{-mid}}=0.87$}
\end{subfigure}\hspace{0.0200\figW}%
\begin{subfigure}[t]{0.4875\figW}\centering
\raisebox{0.0000\figW}{\includegraphics[width=0.4875\figW]{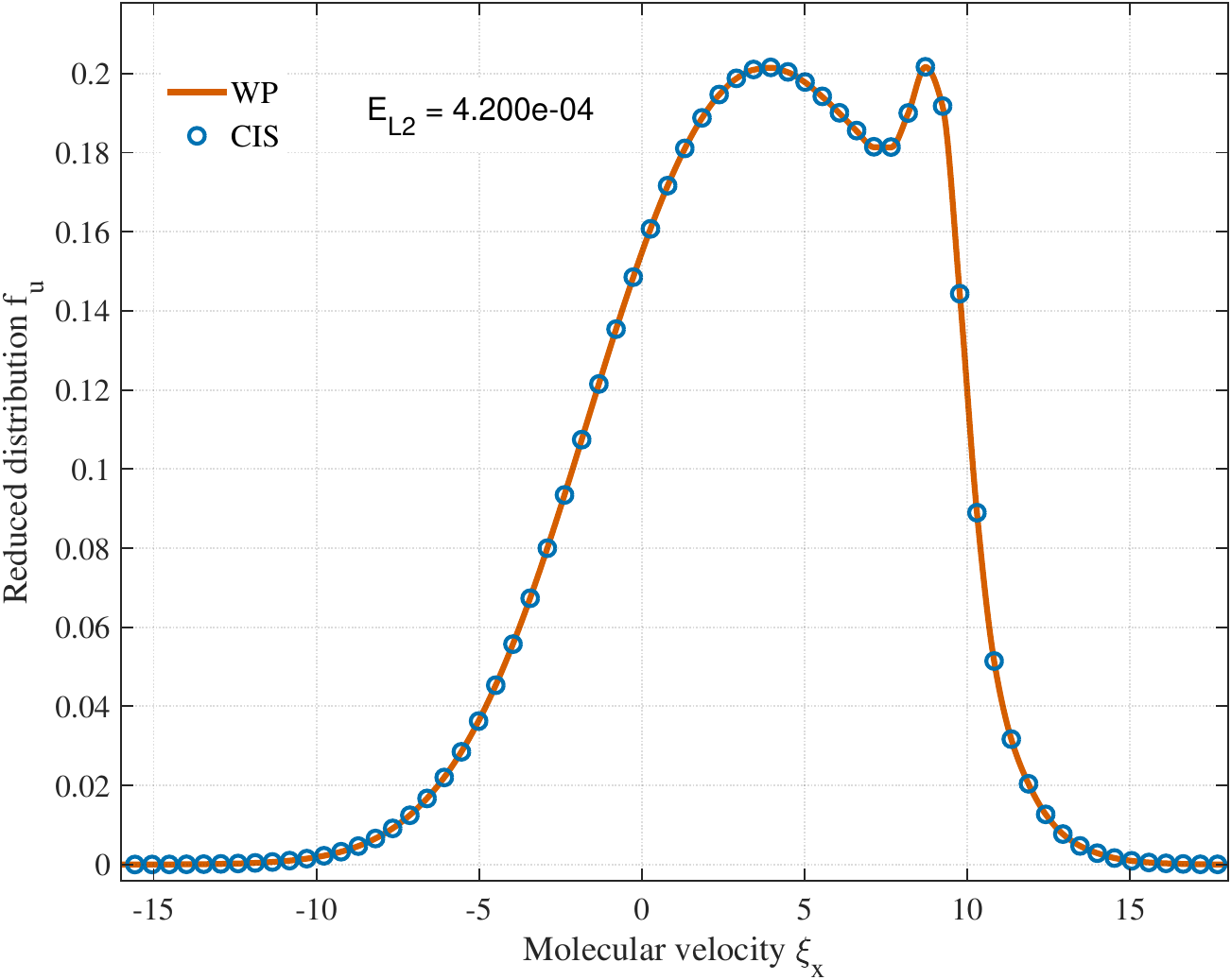}}
\caption{$x-x_{\rho\mathrm{-mid}}=0.18$}
\end{subfigure}
\par\smallskip
\begin{subfigure}[t]{0.4875\figW}\centering
\raisebox{0.0000\figW}{\includegraphics[width=0.4875\figW]{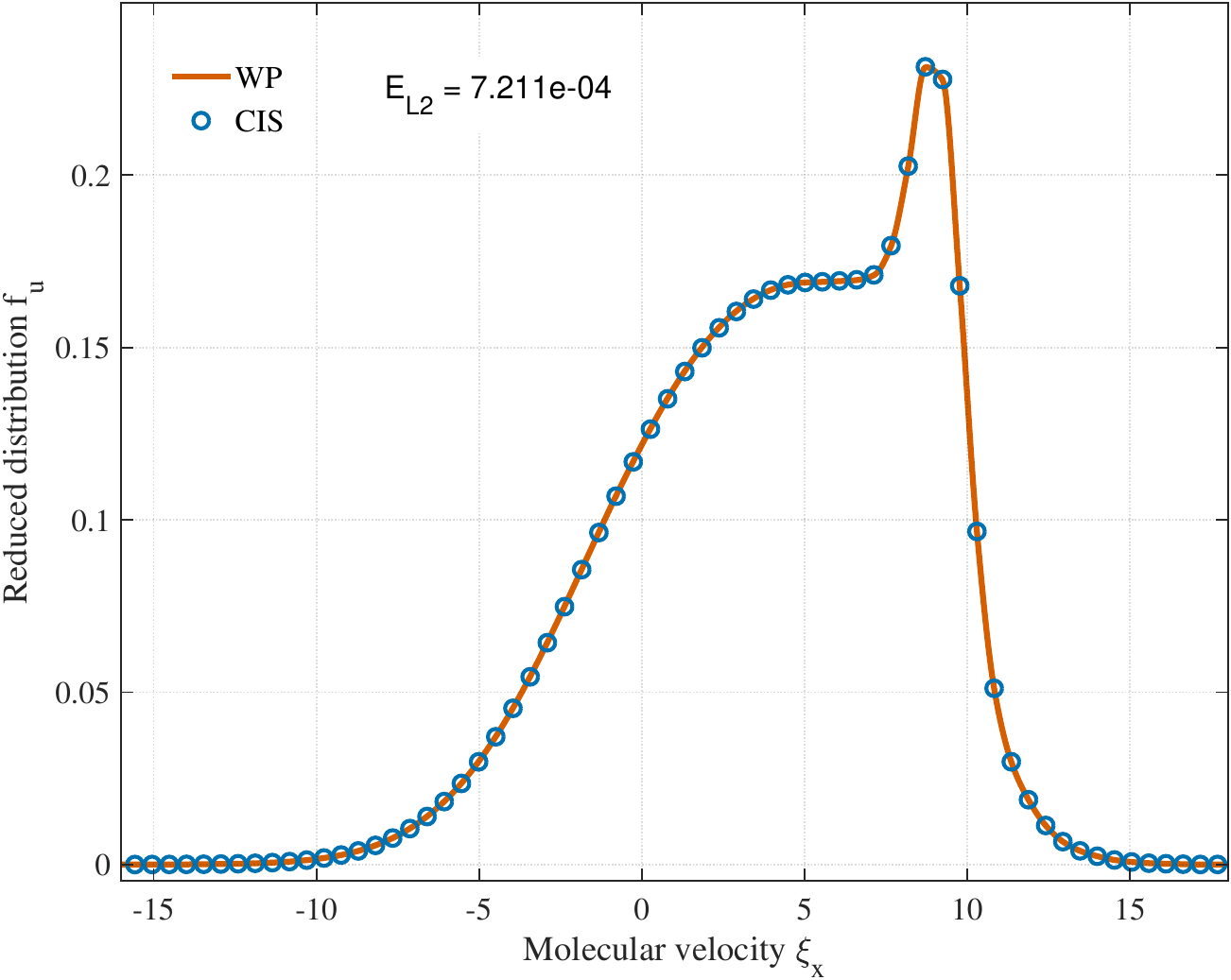}}
\caption{$x-x_{\rho\mathrm{-mid}}=-0.50$}
\end{subfigure}\hspace{0.0200\figW}%
\begin{subfigure}[t]{0.4875\figW}\centering
\raisebox{0.0000\figW}{\includegraphics[width=0.4875\figW]{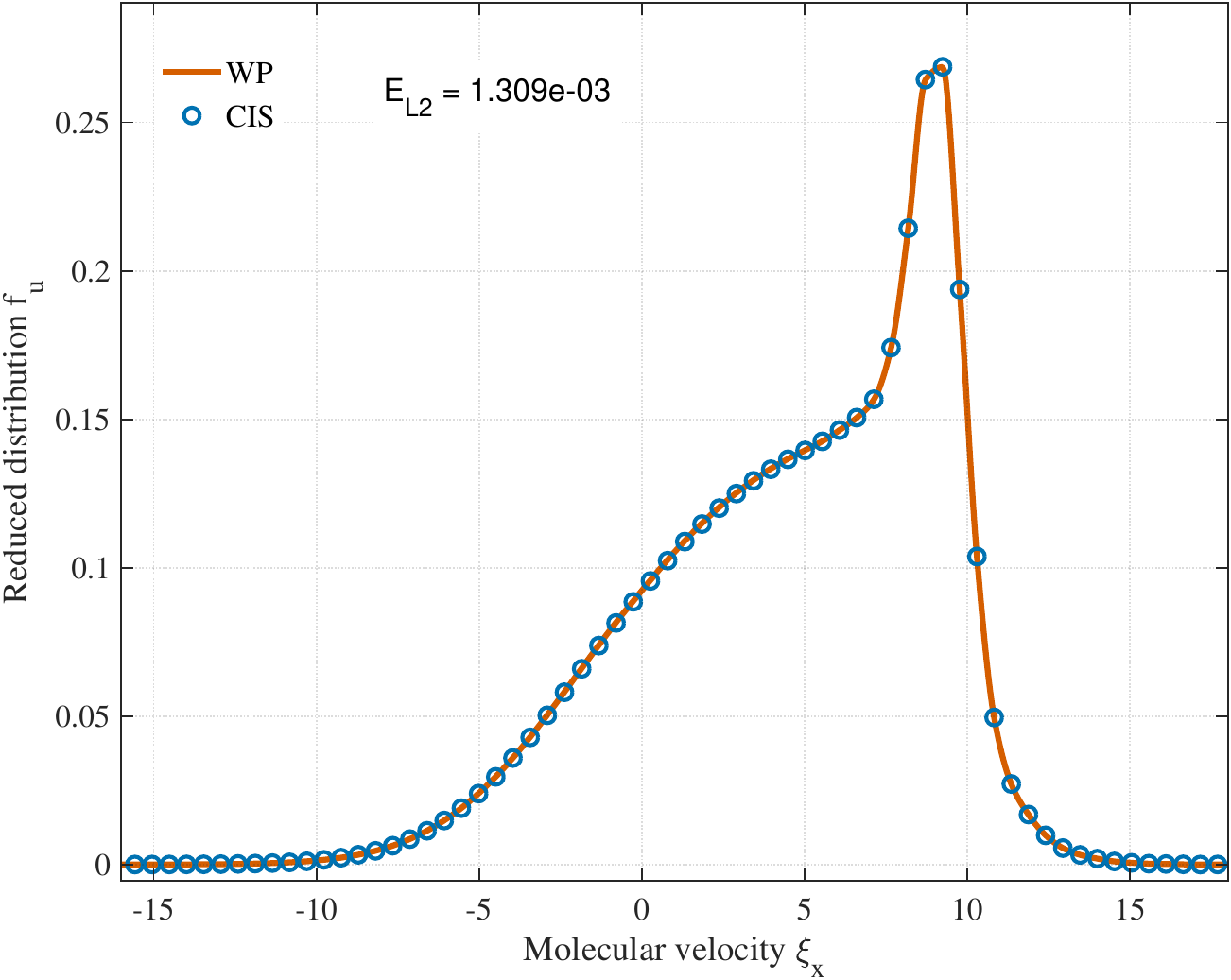}}
\caption{$x-x_{\rho\mathrm{-mid}}=-1.18$}
\end{subfigure}
\caption{Normal shock: reduced distribution $f_u(x,\xi_x)$ at four positions from downstream to upstream of the density midpoint. WP (reconstructed as $W+P$) is shown by lines and CIS by symbols, and $E_{L_2}$ is printed in each panel.}\label{fig:shock-dist}
\end{figure}

\subsection{Couette flow}
Planar Couette flow is computed between two diffuse walls at temperature
$0.5$ a unit distance apart, with periodic conditions in the streamwise
direction. The lower wall is at rest and the upper wall moves at speed
$0.15$, and the gas is initially at rest with $\rho=1$ and $T=0.5$. The
physical mesh has $4\times100$ cells. The velocity grid has $32^3$ nodes on
$[-5,5]^3$ at $\varepsilon=1$, $0.6$, $0.2$ and $10^{-2}$ and
$16\times16\times8$ nodes at $\varepsilon=10^{-4}$, and the coupled WP iteration
runs W1/PR6/P1 at the four larger Knudsen numbers and W256/PR6/P1 at
$\varepsilon=10^{-4}$.

Figure~\ref{fig:couette-profiles} compares the velocity and temperature
profiles at $\varepsilon=1$, $0.6$, $0.2$, $10^{-2}$ and $10^{-4}$. At the
larger Knudsen numbers the velocity slips at the walls and is curved across
the channel, whereas at $\varepsilon=10^{-4}$ it approaches the linear
continuum profile. The relative velocity differences between WP and CIS are
$3.06\times10^{-7}$, $1.43\times10^{-6}$, $1.06\times10^{-6}$,
$4.55\times10^{-8}$ and $8.73\times10^{-4}$ for the five Knudsen numbers, and
the temperature differences are $5.80\times10^{-10}$, $6.10\times10^{-9}$,
$6.52\times10^{-9}$, $8.85\times10^{-11}$ and $1.56\times10^{-5}$. At
$\varepsilon=10^{-4}$ the reference is CIS of the Shakhov
model~\cite{liu_wpd_shakhov}, because Boltzmann CIS could not be converged at
this Knudsen number (Section~\ref{sec:test_gas_model}), and the differences
there include those between the Boltzmann and Shakhov solutions on their
respective velocity grids.

The residual histories in Fig.~\ref{fig:couette-residuals} give
$S_{\rm iter}=1.56$ and $S_t=1.26$ at $\varepsilon=1$, and $50.83$ and
$35.27$ at $\varepsilon=10^{-2}$. At $\varepsilon=10^{-4}$ WP reaches
$R_U=10^{-7}$ in 323 outer iterations, whereas CIS of the Shakhov model on
the same mesh needs $2.385\times10^{6}$ iterations~\cite{liu_wpd_shakhov}, so
that $S_{\rm iter}=7383.90$. The speedup thus grows rapidly as the Knudsen
number decreases, in agreement with Theorem~\ref{thm:acceleration}. CIS
relaxes the macroscopic shear mode only through transport, whereas the W
iteration solves for this mode directly and the endpoint traction transmits
the macroscopic prediction to the particle update. In wall time, WP reaches
$R_U=10^{-7}$ at $\varepsilon=10^{-4}$ in $320.5$~s and the NS solver in
$163.0$~s. The converged Boltzmann solution thus costs $1.97$ times the NS
wall time, and its wall-time ratio to CIS of the Shakhov model is
$S_t=18.14$.

\begin{figure}[tbp]
\centering\singlespacing\setlength{\figW}{\linewidth}
\begin{subfigure}[t]{0.4905\figW}\centering
\raisebox{0.0000\figW}{\includegraphics[width=0.4905\figW]{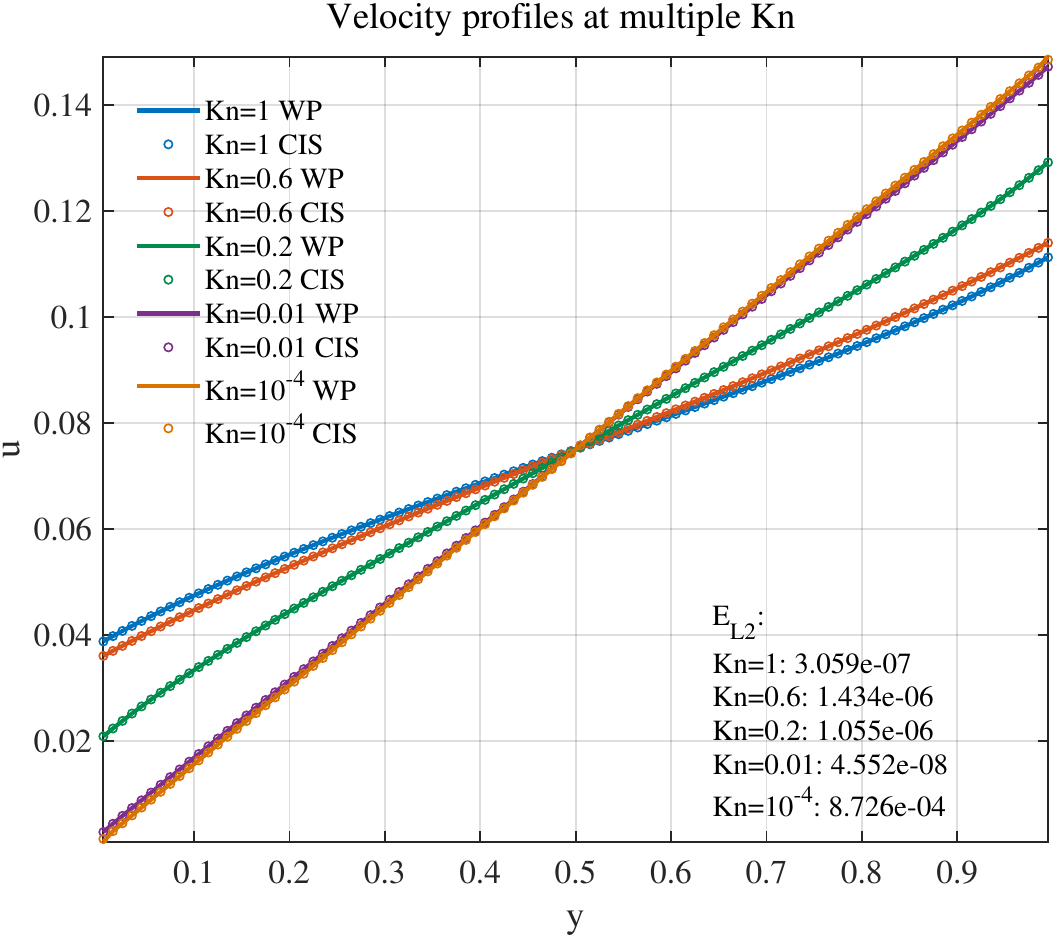}}
\caption{Streamwise velocity}
\end{subfigure}\hspace{0.0200\figW}%
\begin{subfigure}[t]{0.4845\figW}\centering
\raisebox{0.0038\figW}{\includegraphics[width=0.4845\figW]{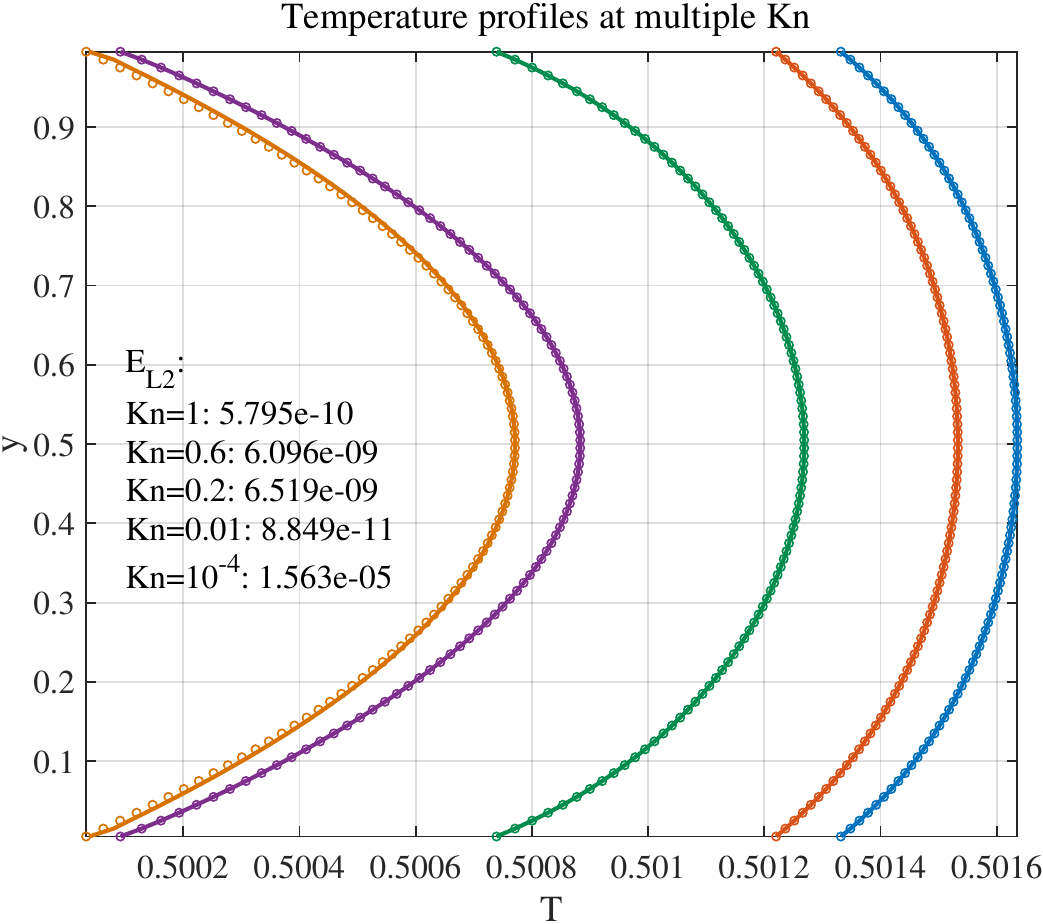}}
\caption{Temperature}
\end{subfigure}
\caption{Couette flow at $\varepsilon=1$, $0.6$, $0.2$, $10^{-2}$ and $10^{-4}$: velocity and temperature across the channel. WP is shown by lines and CIS by symbols, and $E_{L_2}$ is listed for each Knudsen number; the colours and symbols of panel (b) are those of the legend of panel (a). At $\varepsilon=10^{-4}$ the symbols are CIS of the Shakhov model~\cite{liu_wpd_shakhov}.}\label{fig:couette-profiles}
\end{figure}
\begin{figure}[tbp]
\centering\singlespacing\setlength{\figW}{\linewidth}
\begin{subfigure}[t]{0.9463\figW}\centering
\makebox[\linewidth]{\raisebox{0.0019\figW}{\includegraphics[width=0.4726\figW]{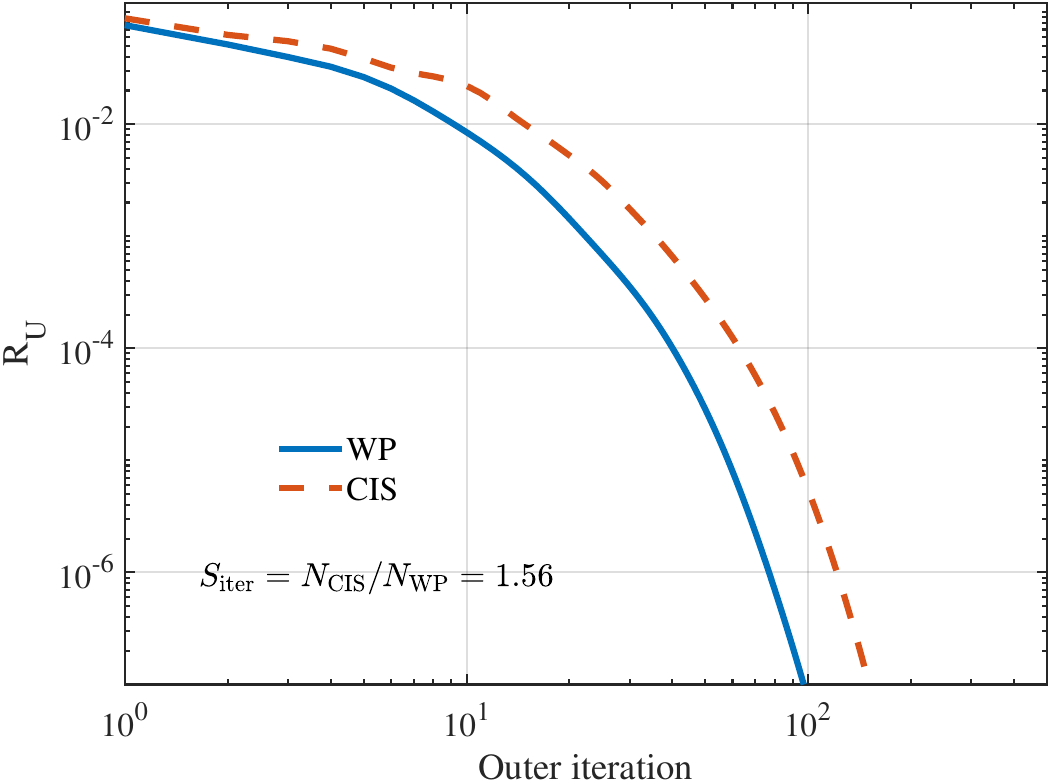}}\hspace{0.0200\figW}\raisebox{0.0000\figW}{\includegraphics[width=0.4537\figW]{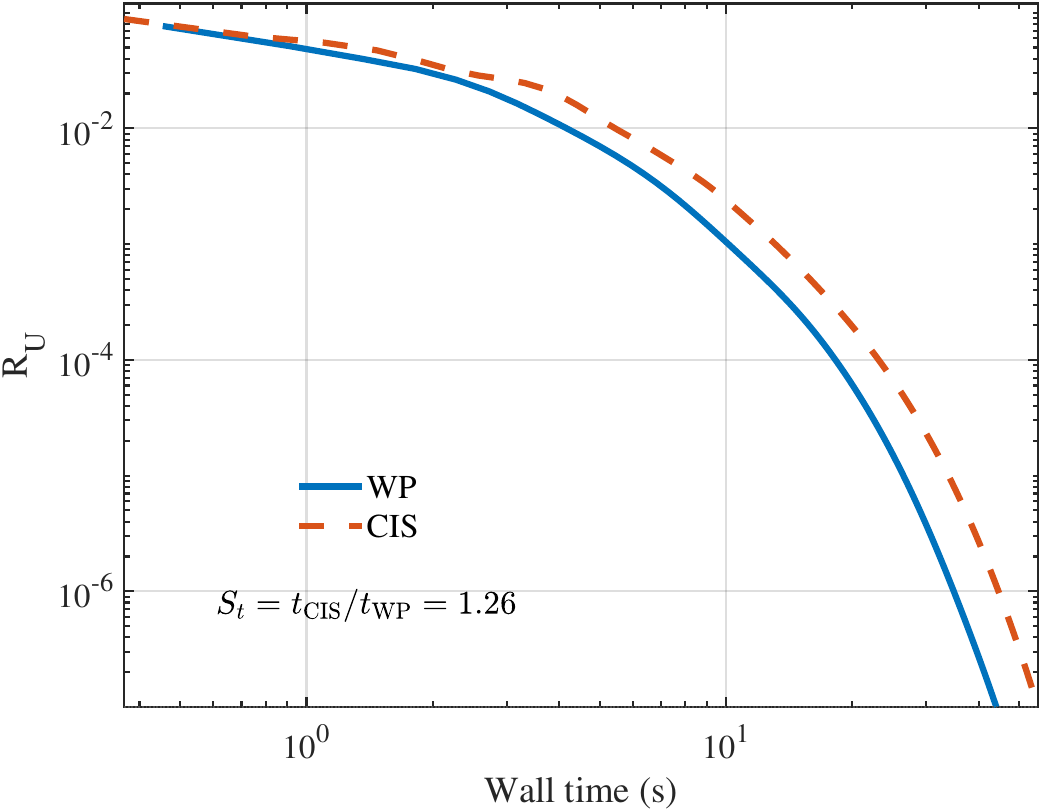}}}
\caption{$\varepsilon=1$}\label{fig:couette-res-kn1}
\end{subfigure}
\par\smallskip
\begin{subfigure}[t]{0.9463\figW}\centering
\makebox[\linewidth]{\raisebox{0.0063\figW}{\includegraphics[width=0.4579\figW]{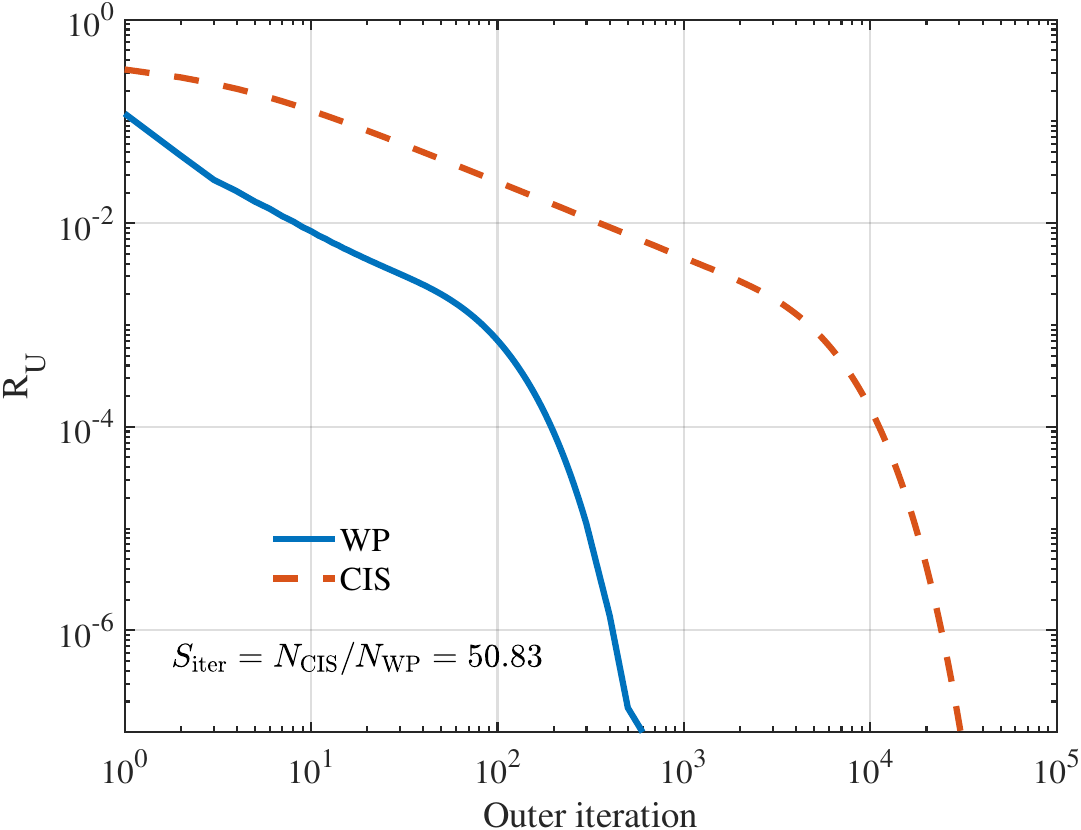}}\hspace{0.0200\figW}\raisebox{0.0000\figW}{\includegraphics[width=0.4684\figW]{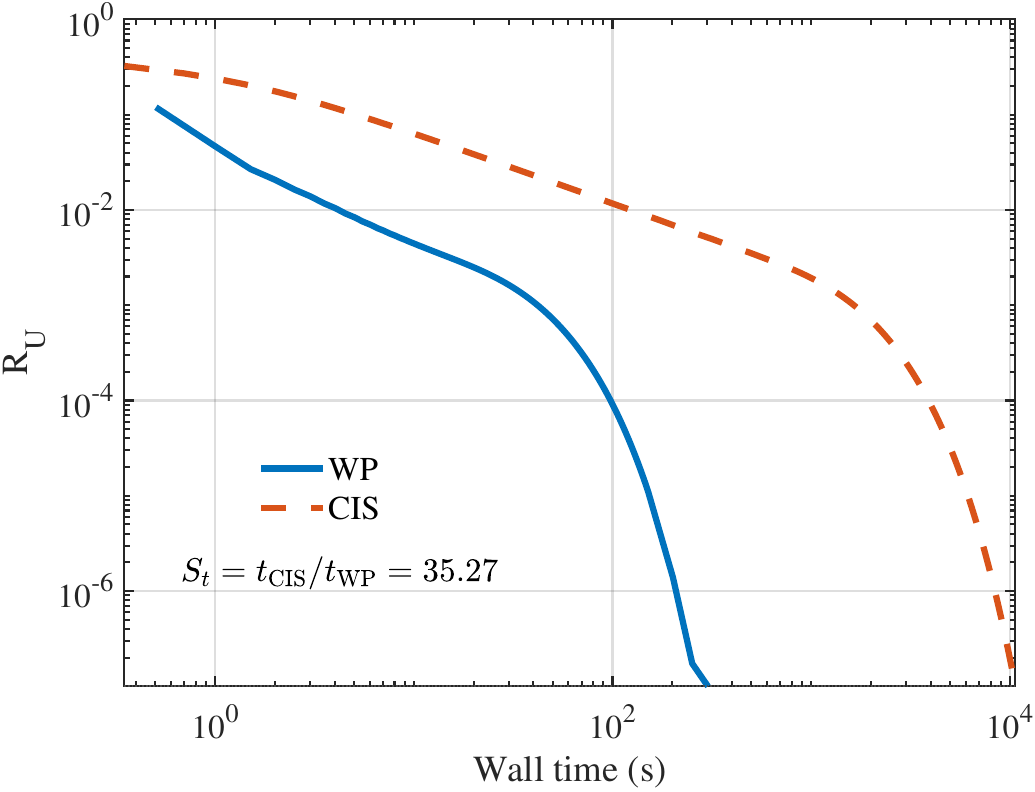}}}
\caption{$\varepsilon=10^{-2}$}\label{fig:couette-res-kn1em2}
\end{subfigure}
\par\smallskip
\begin{subfigure}[t]{0.9463\figW}\centering
\makebox[\linewidth]{\raisebox{0.0029\figW}{\includegraphics[width=0.4632\figW]{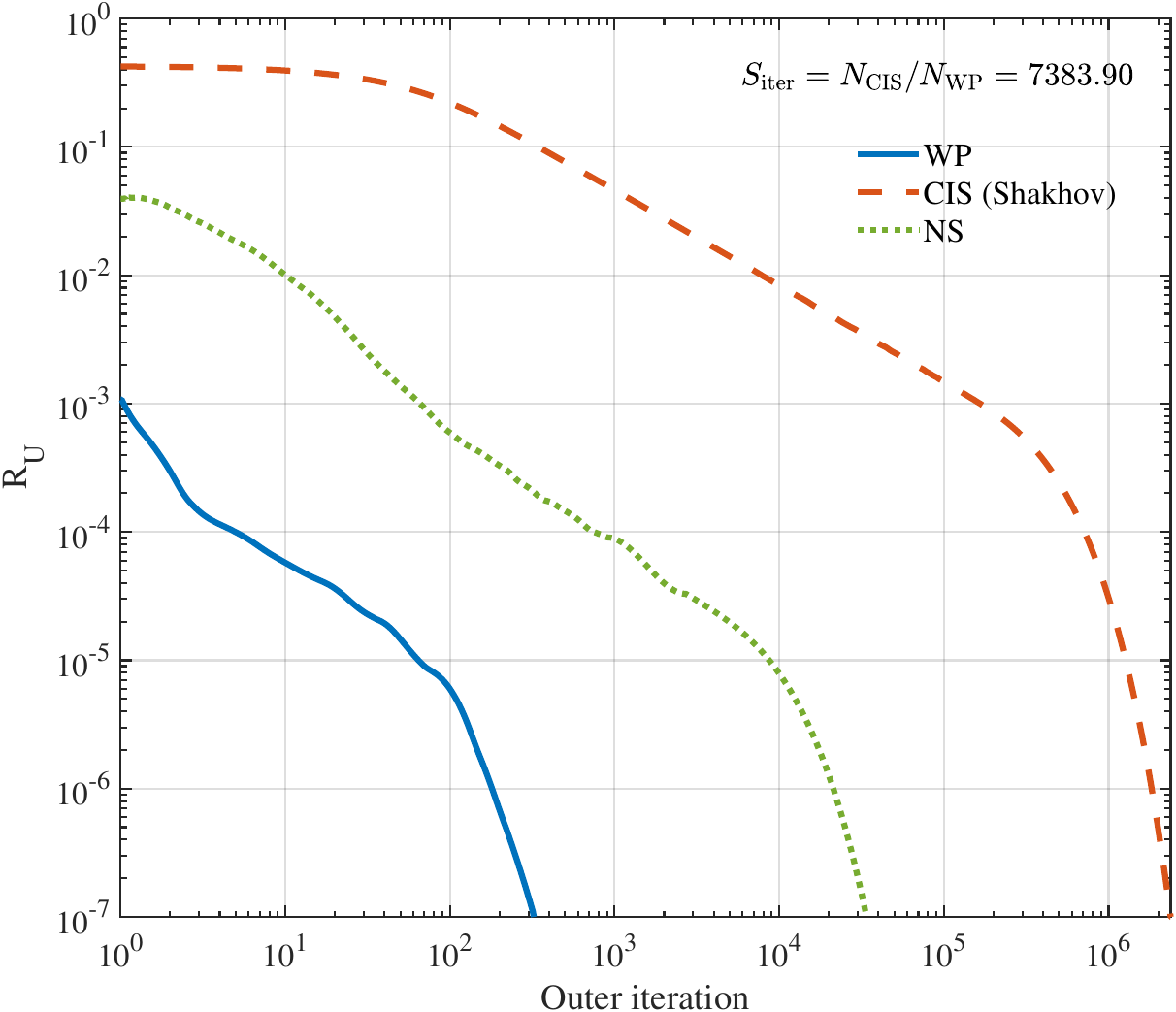}}\hspace{0.0200\figW}\raisebox{0.0000\figW}{\includegraphics[width=0.4632\figW]{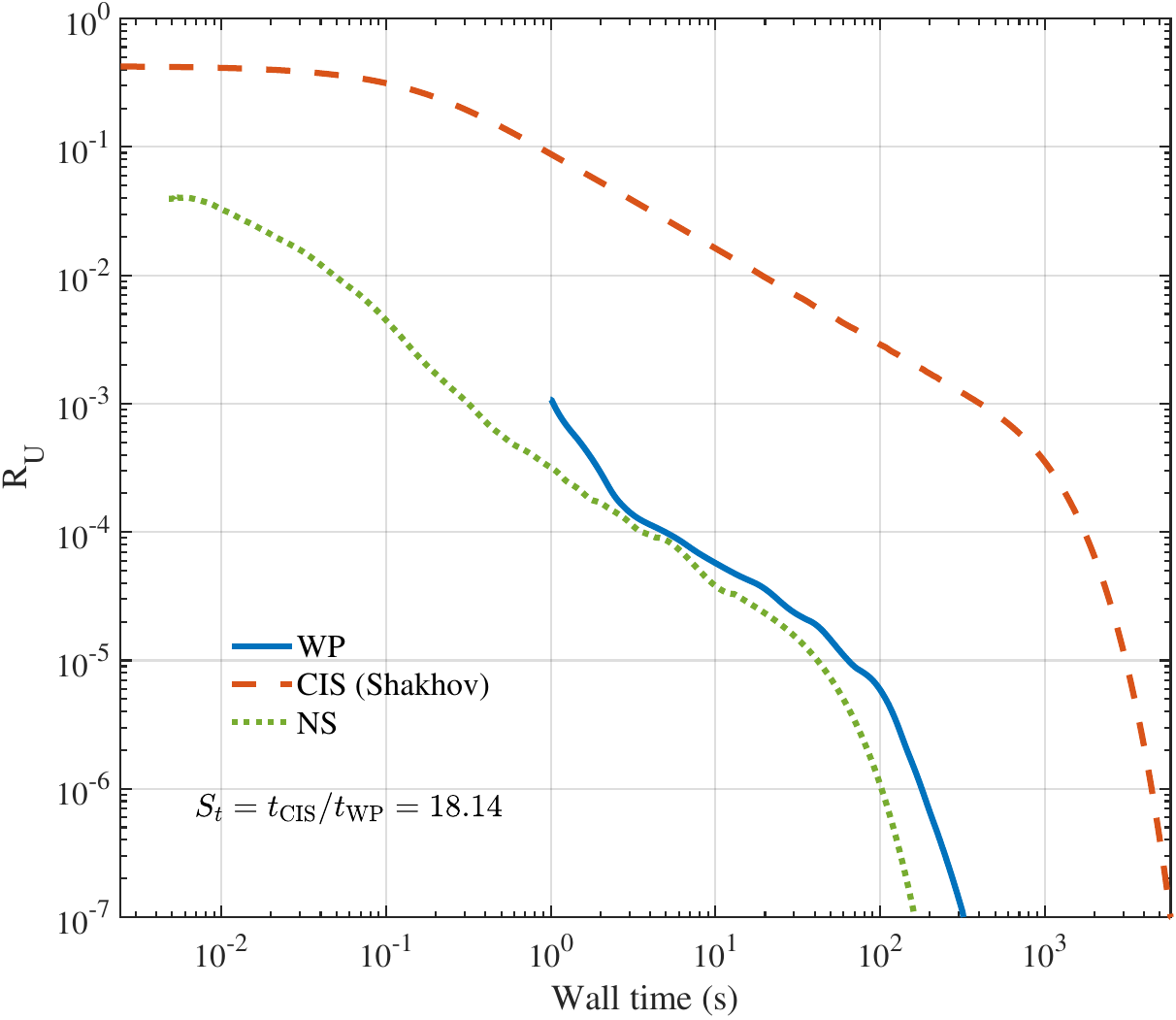}}}
\caption{$\varepsilon=10^{-4}$}\label{fig:couette-res-kn1em4}
\end{subfigure}
\caption{Couette flow: histories of the normalized macroscopic residual against outer iteration (left) and wall time (right). At $\varepsilon=1$ and $10^{-2}$ WP is compared with CIS, and the CIS-to-WP speedups are printed in the panels. At $\varepsilon=10^{-4}$ both panels show CIS of the Shakhov model~\cite{liu_wpd_shakhov} and the NS solver, whose abscissa in the iteration panel counts implicit steps of the NS solver.}\label{fig:couette-residuals}
\end{figure}

\subsection{Lid-driven cavity}
The lid-driven cavity occupies the unit square. Its upper wall moves at
$U_{\rm lid}=0.15$ while the other walls are at rest, and all walls are
diffuse with temperature $0.5$. At $\varepsilon=1$ and $0.075$ the mesh has
$64^2$ cells, the velocity grid has $64\times64\times32$ nodes on $[-5,5]^3$,
and WP runs W1/PR6/P1. The two Reynolds-number cases,
$\mathrm{Re}=\rho_{\rm ref}U_{\rm lid}L/\mu_{\rm ref}=100$ and $1000$ with the
reference density $\rho_{\rm ref}$ and the cavity length $L=1$, use $64^2$
and $100^2$ cells and $32\times32\times12$ velocity nodes, and they replace
the second-order reconstructions of Section~\ref{sec:alg_fluxes} by the
sixth-order central reconstruction of both components. In these cases WP runs
W24/PR6/P1 with implicit CFL number 150, and the NS reference is computed with
the same code, mesh, reconstruction and CFL numbers. Velocities are
normalized by $U_{\rm lid}$, and the vorticity is
$\partial(v/U_{\rm lid})/\partial x-\partial(u/U_{\rm lid})/\partial y$,
where $u$ and $v$ are the horizontal and vertical velocity components.

In the rarefied cases the WP fields in Figs.~\ref{fig:cavity-kn1-fields}
and~\ref{fig:cavity-kn0075-fields} differ from those of CIS by
$8.79\times10^{-10}$ and $9.28\times10^{-10}$ in density,
$3.99\times10^{-10}$ and $5.20\times10^{-10}$ in temperature,
$7.93\times10^{-8}$ and $4.99\times10^{-8}$ in speed and $3.96\times10^{-8}$
and $1.61\times10^{-8}$ in vorticity at the two Knudsen numbers, and the
centreline velocities of Fig.~\ref{fig:cavity-rarefied-profiles} differ by
$9.41\times10^{-8}$ and $1.47\times10^{-7}$ at $\varepsilon=1$ and by
$5.42\times10^{-8}$ and $1.14\times10^{-7}$ at $\varepsilon=0.075$. These
differences lie at the level of the iteration tolerance, as expected for two
iterations that share the same stationary equation.

\begin{figure}[tbp]
\centering\singlespacing\setlength{\figW}{\linewidth}
\begin{subfigure}[t]{0.4871\figW}\centering
\raisebox{0.0000\figW}{\includegraphics[width=0.4871\figW]{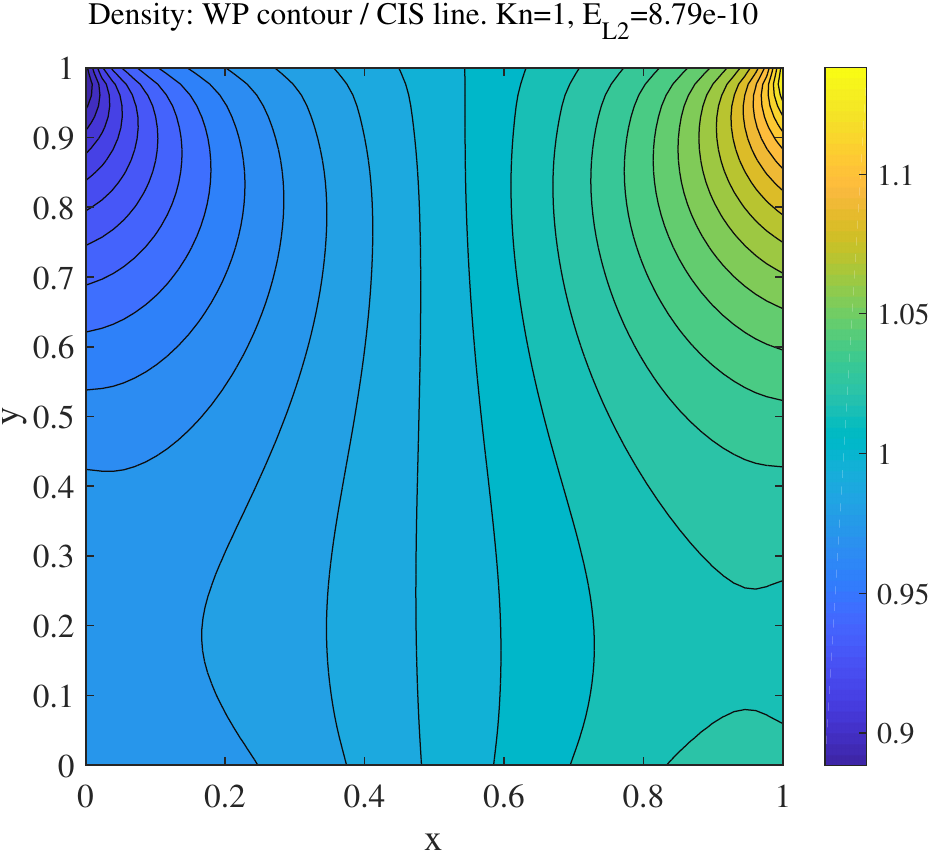}}
\caption{Density}
\end{subfigure}\hspace{0.0200\figW}%
\begin{subfigure}[t]{0.4879\figW}\centering
\raisebox{0.0029\figW}{\includegraphics[width=0.4879\figW]{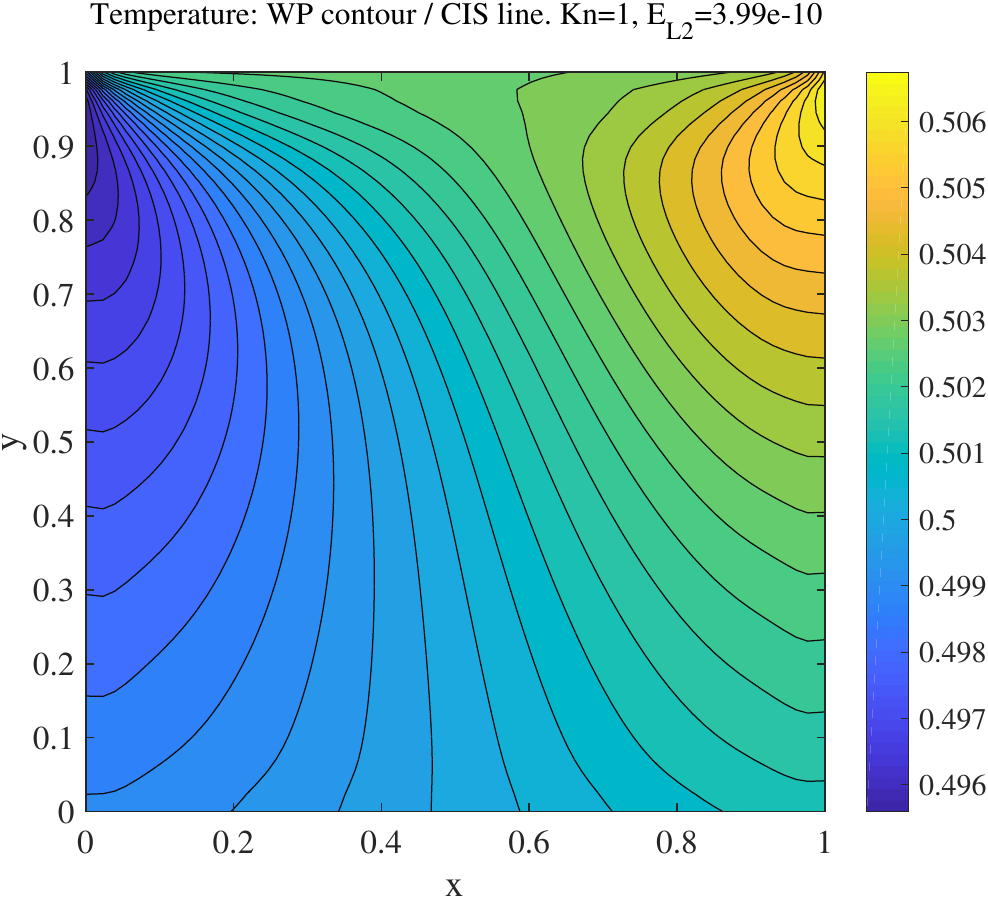}}
\caption{Temperature}
\end{subfigure}
\par\smallskip
\begin{subfigure}[t]{0.4825\figW}\centering
\raisebox{0.0094\figW}{\includegraphics[width=0.4825\figW]{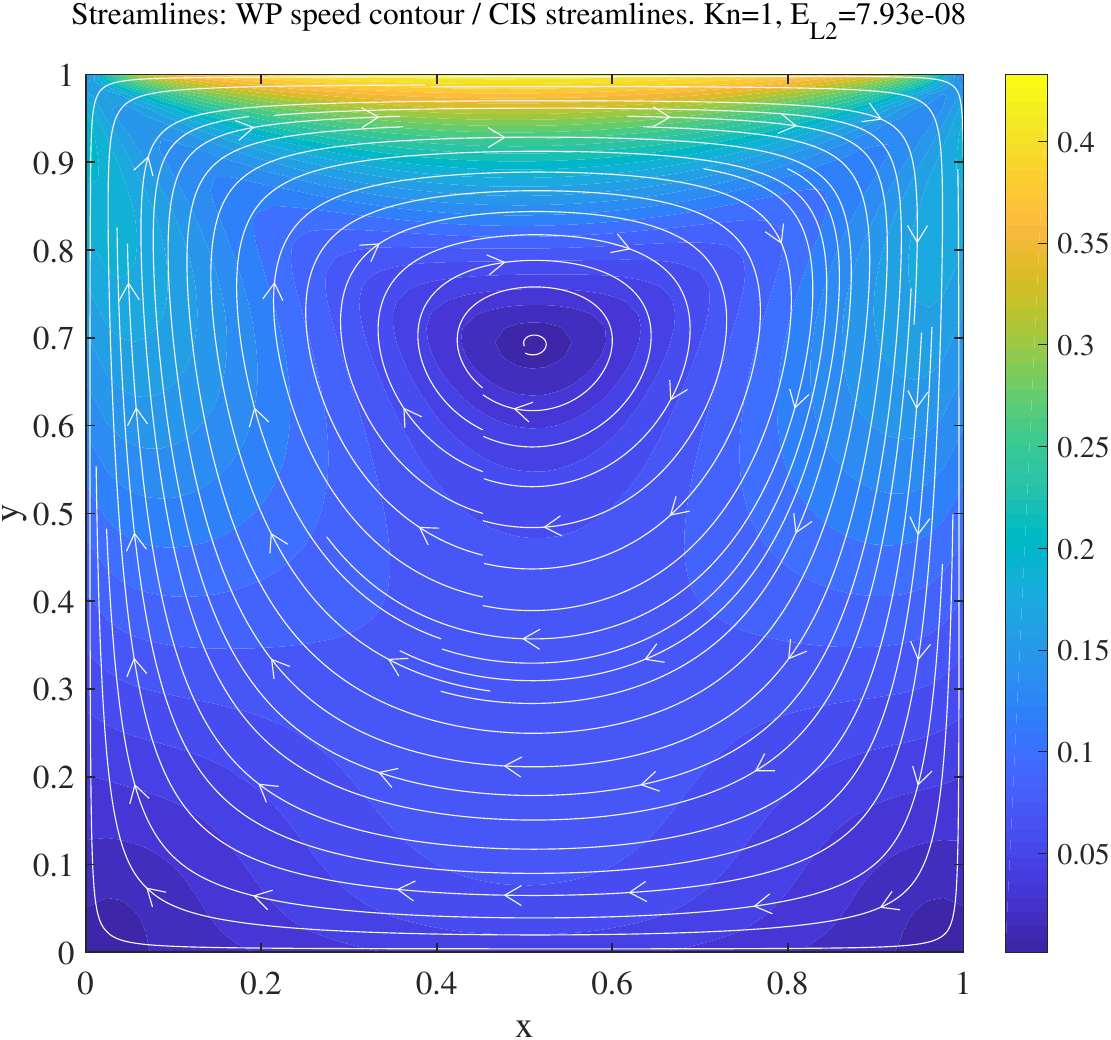}}
\caption{Speed and streamlines}
\end{subfigure}\hspace{0.0200\figW}%
\begin{subfigure}[t]{0.4925\figW}\centering
\raisebox{0.0000\figW}{\includegraphics[width=0.4925\figW]{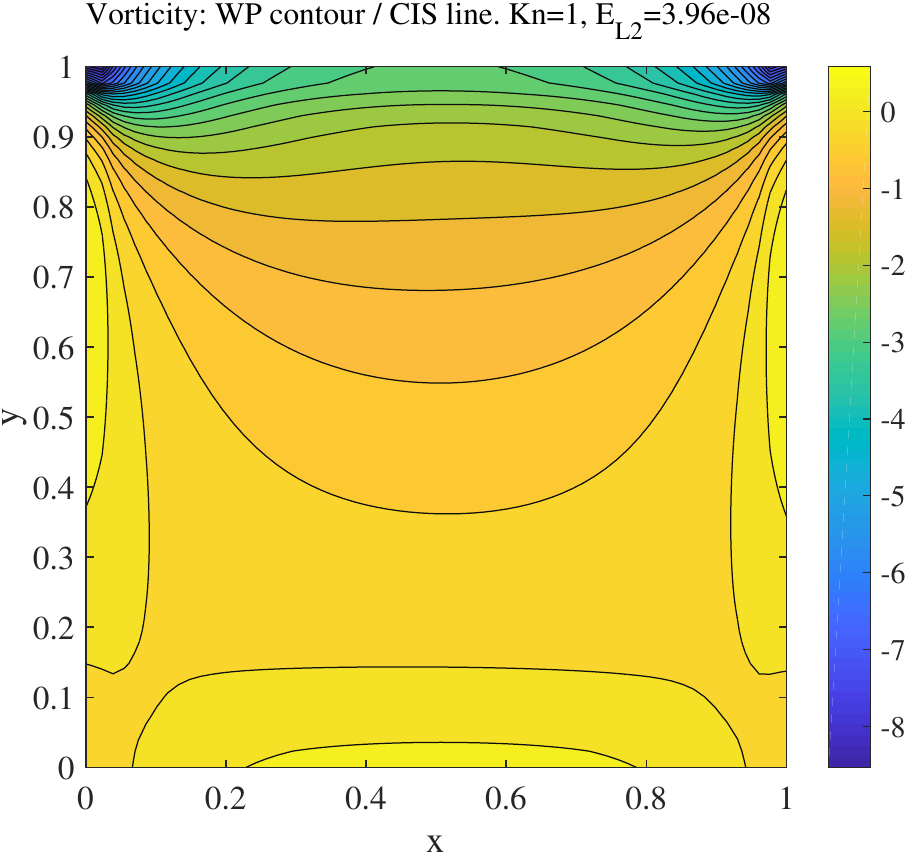}}
\caption{Vorticity}
\end{subfigure}
\caption{Lid-driven cavity at $\varepsilon=1$: density, temperature, speed with streamlines, and vorticity of WP compared with CIS. WP is shown by filled contours and the reference by lines at the same levels, and $E_{L_2}$ is printed above each panel.}\label{fig:cavity-kn1-fields}
\end{figure}
\begin{figure}[tbp]
\centering\singlespacing\setlength{\figW}{\linewidth}
\begin{subfigure}[t]{0.4874\figW}\centering
\raisebox{0.0000\figW}{\includegraphics[width=0.4874\figW]{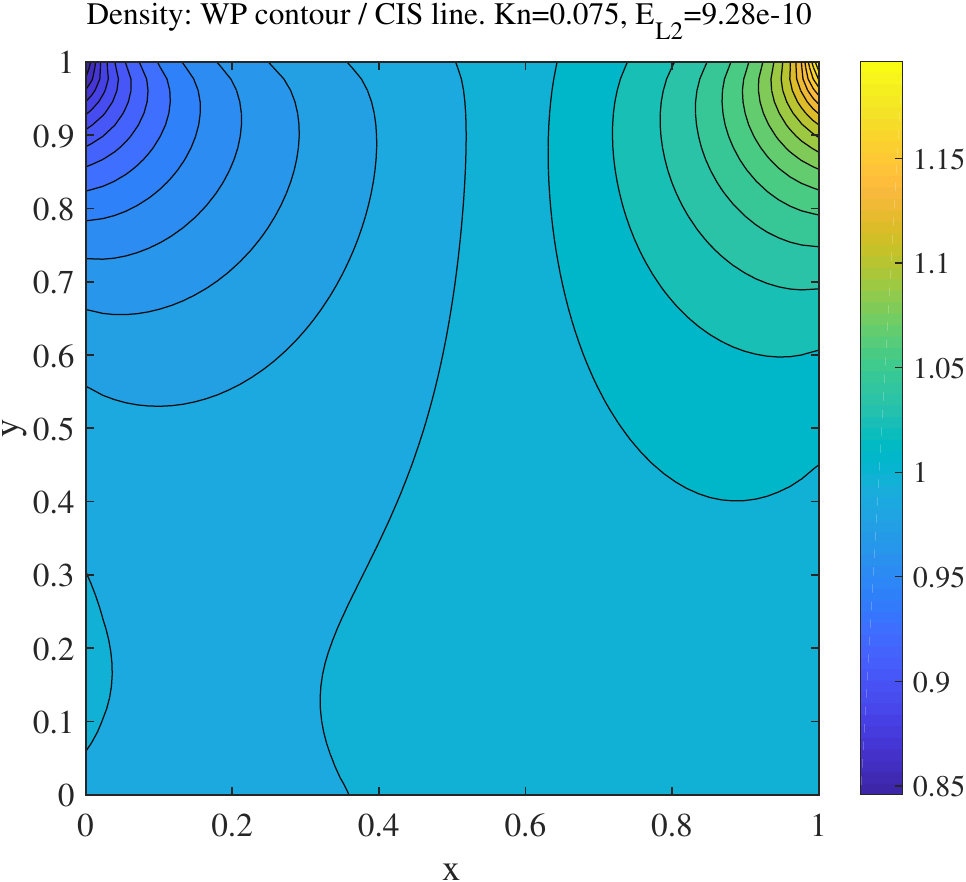}}
\caption{Density}
\end{subfigure}\hspace{0.0200\figW}%
\begin{subfigure}[t]{0.4876\figW}\centering
\raisebox{0.0028\figW}{\includegraphics[width=0.4876\figW]{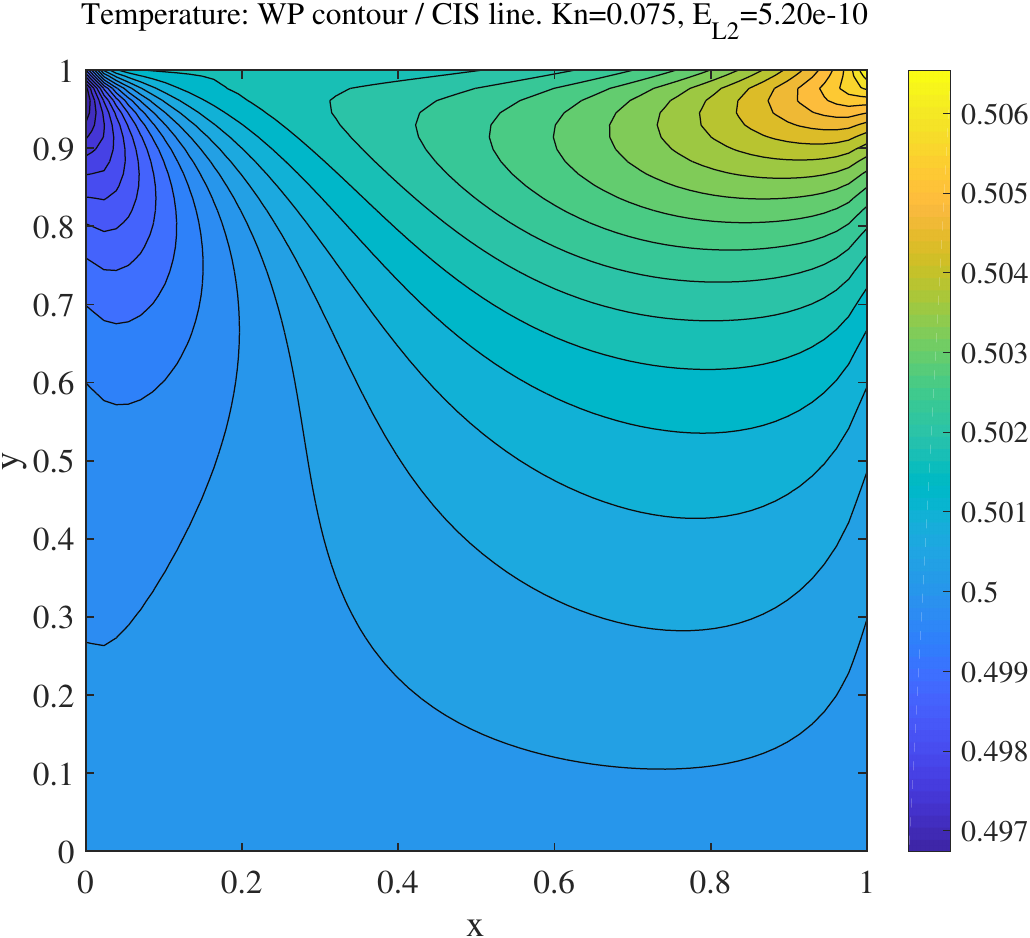}}
\caption{Temperature}
\end{subfigure}
\par\smallskip
\begin{subfigure}[t]{0.4825\figW}\centering
\raisebox{0.0034\figW}{\includegraphics[width=0.4825\figW]{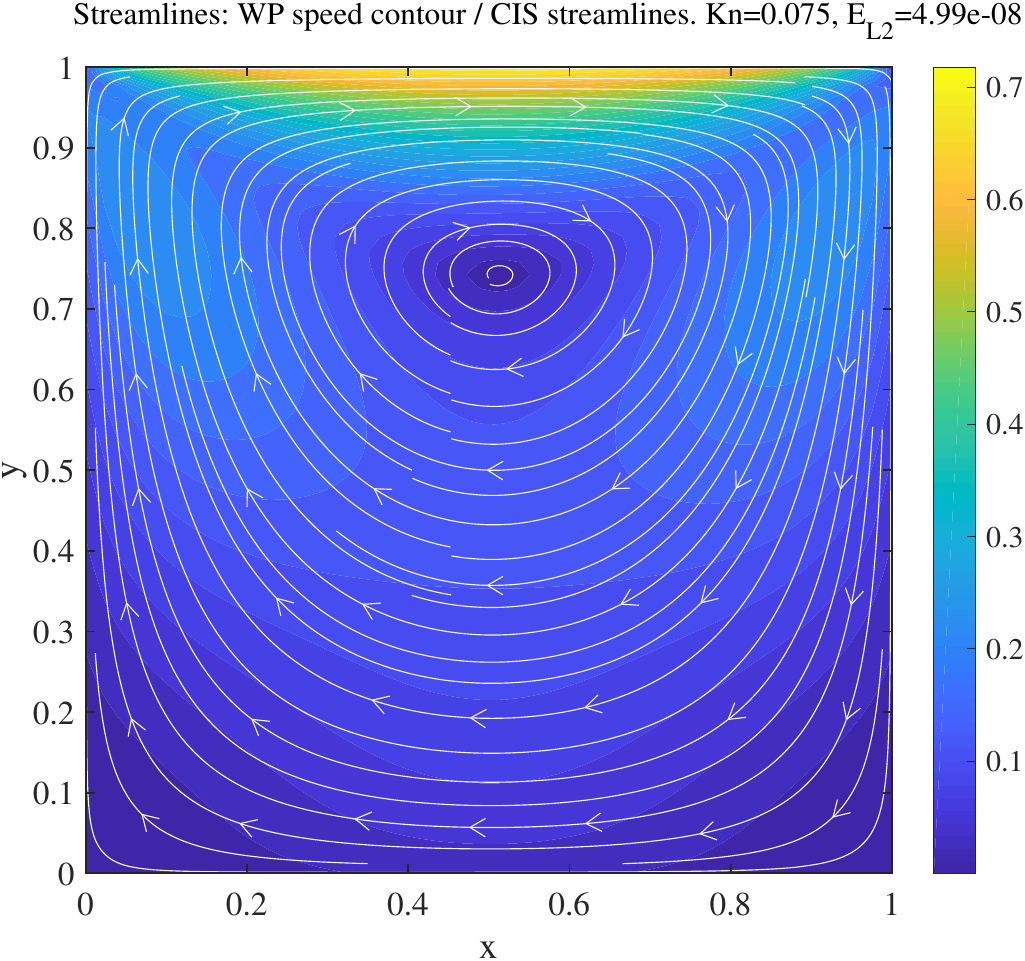}}
\caption{Speed and streamlines}
\end{subfigure}\hspace{0.0200\figW}%
\begin{subfigure}[t]{0.4925\figW}\centering
\raisebox{0.0000\figW}{\includegraphics[width=0.4925\figW]{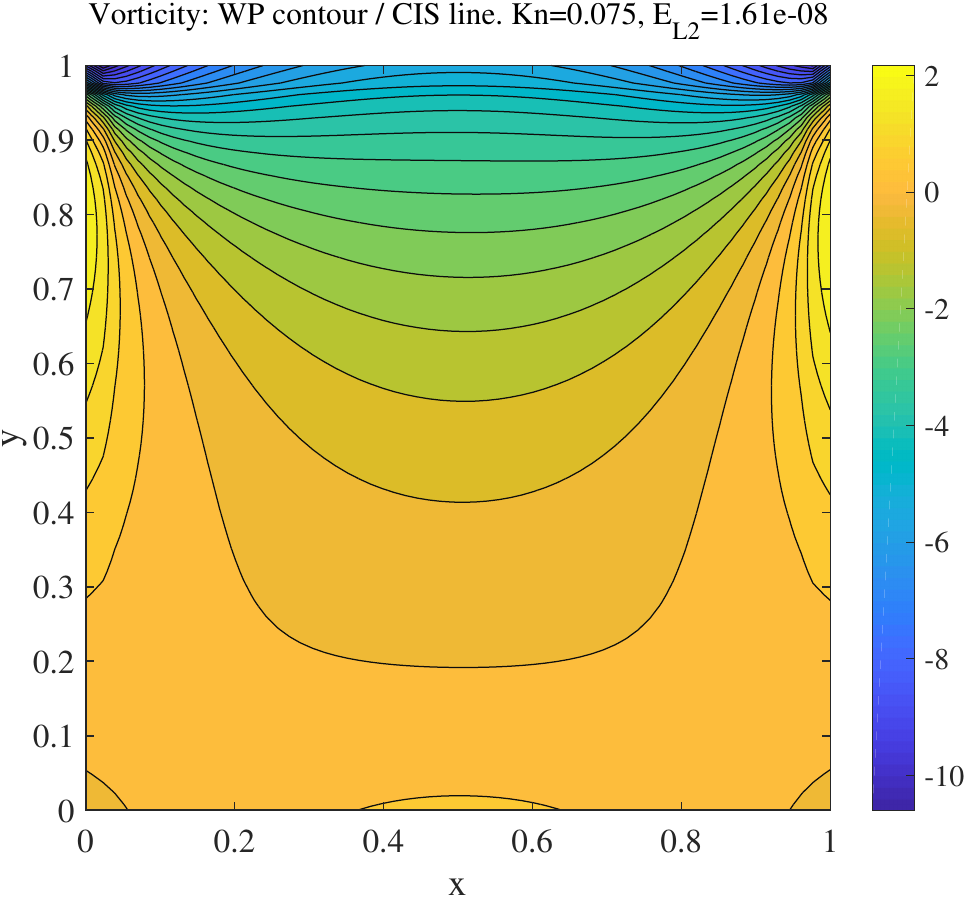}}
\caption{Vorticity}
\end{subfigure}
\caption{Lid-driven cavity at $\varepsilon=0.075$: density, temperature, speed with streamlines, and vorticity of WP compared with CIS. WP is shown by filled contours and the reference by lines at the same levels, and $E_{L_2}$ is printed above each panel.}\label{fig:cavity-kn0075-fields}
\end{figure}
\begin{figure}[tbp]
\centering\singlespacing\setlength{\figW}{\linewidth}
\begin{subfigure}[t]{0.4931\figW}\centering
\raisebox{0.0000\figW}{\includegraphics[width=0.4931\figW]{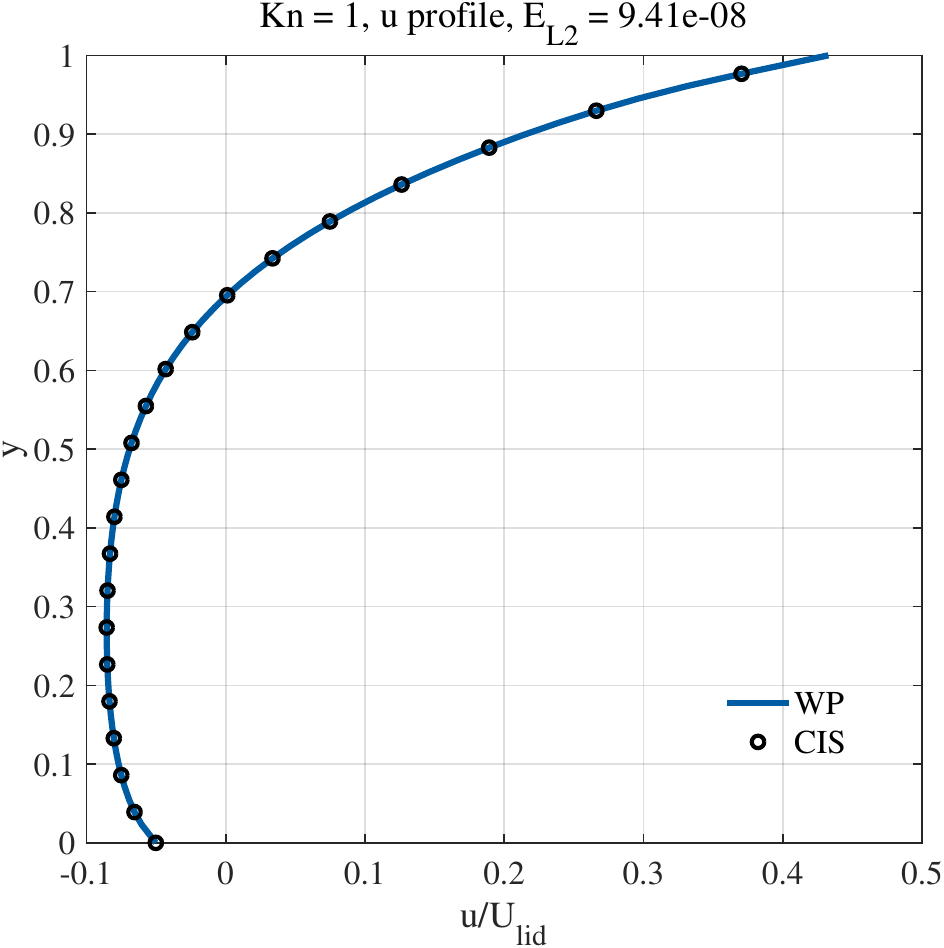}}
\caption{$u/U_{\rm lid}$, $\varepsilon=1$}
\end{subfigure}\hspace{0.0200\figW}%
\begin{subfigure}[t]{0.4819\figW}\centering
\raisebox{0.0111\figW}{\includegraphics[width=0.4819\figW]{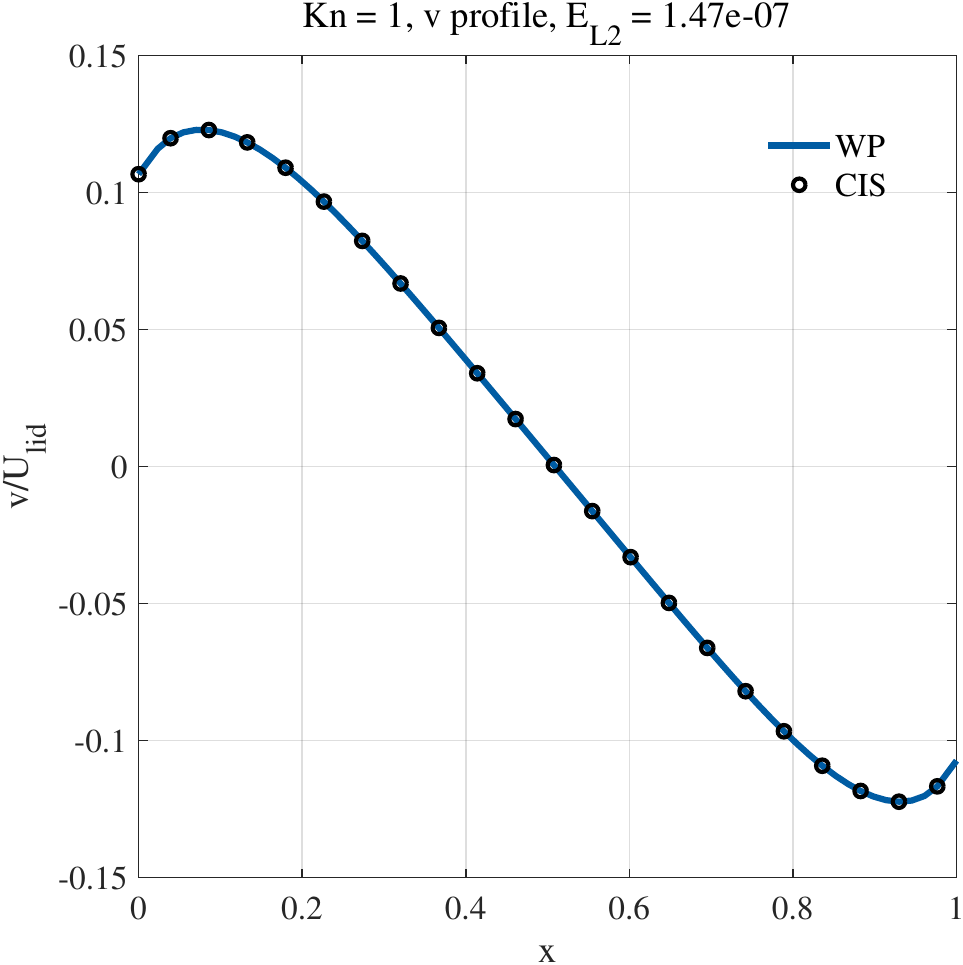}}
\caption{$v/U_{\rm lid}$, $\varepsilon=1$}
\end{subfigure}
\par\smallskip
\begin{subfigure}[t]{0.4830\figW}\centering
\raisebox{0.0000\figW}{\includegraphics[width=0.4830\figW]{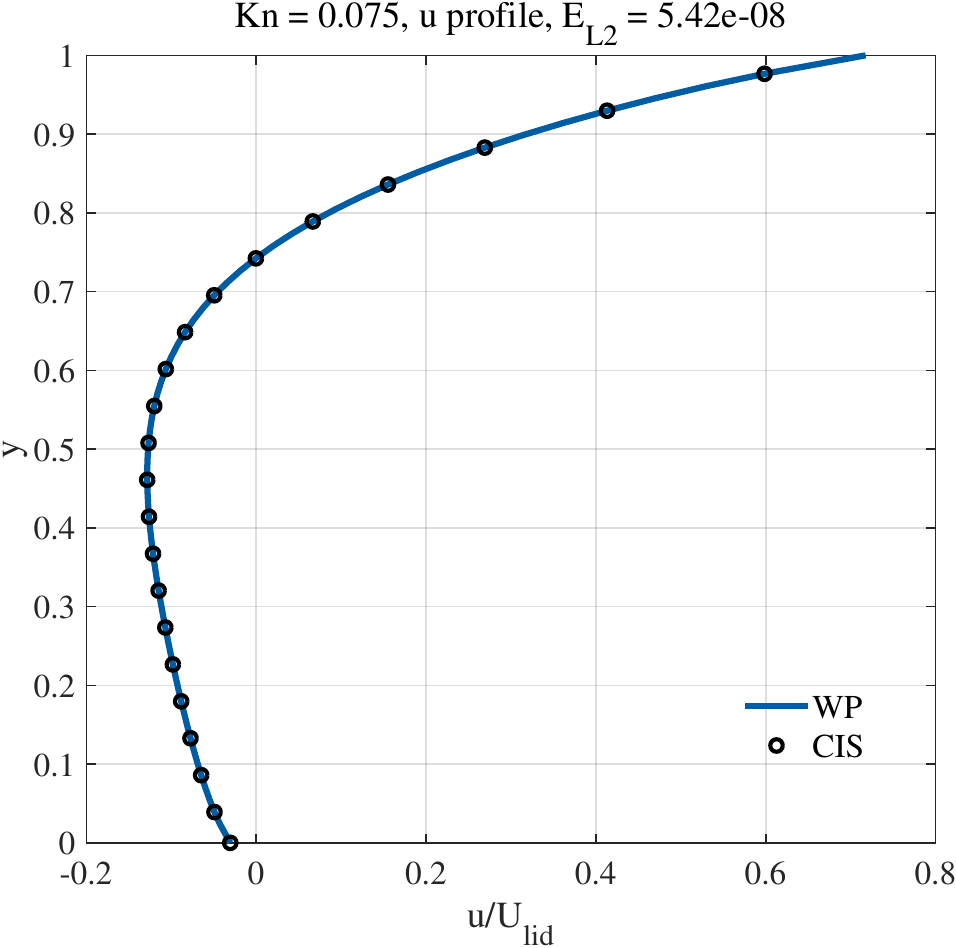}}
\caption{$u/U_{\rm lid}$, $\varepsilon=0.075$}
\end{subfigure}\hspace{0.0200\figW}%
\begin{subfigure}[t]{0.4920\figW}\centering
\raisebox{0.0107\figW}{\includegraphics[width=0.4920\figW]{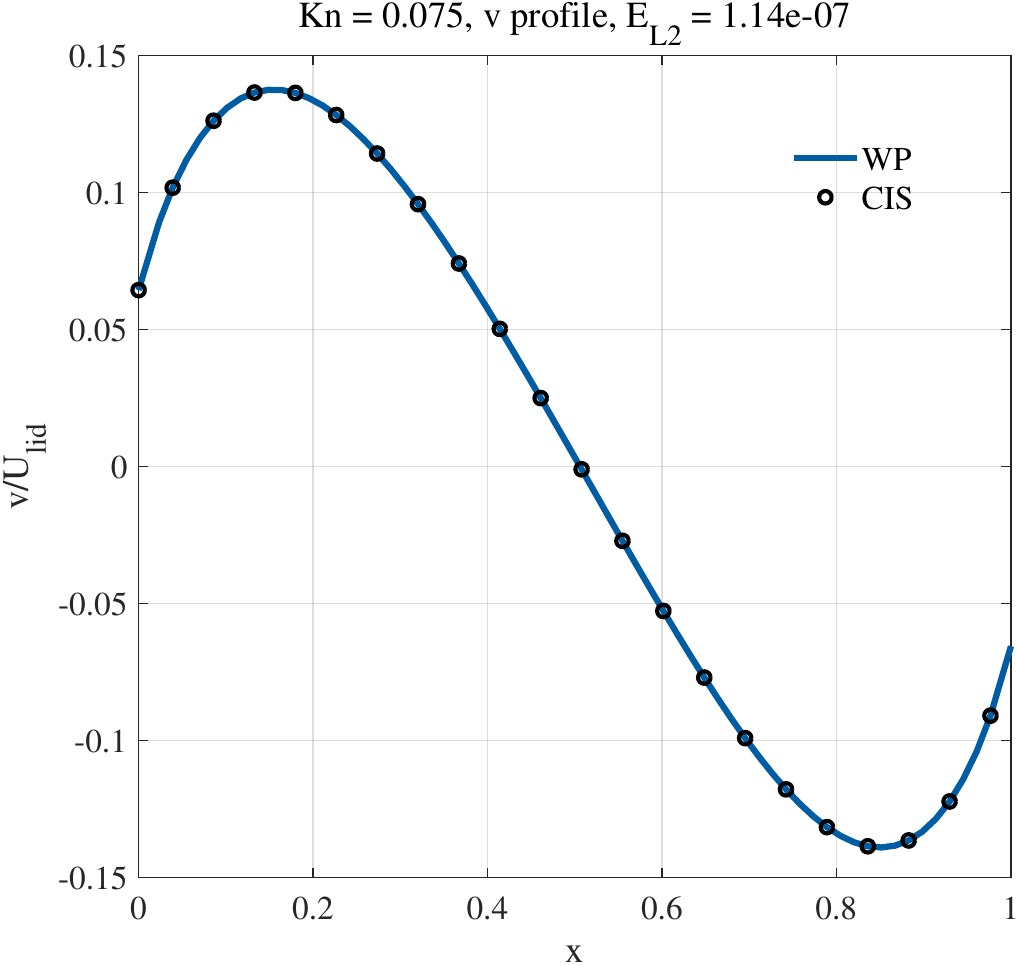}}
\caption{$v/U_{\rm lid}$, $\varepsilon=0.075$}
\end{subfigure}
\caption{Lid-driven cavity at $\varepsilon=1$ (a, b) and $\varepsilon=0.075$ (c, d): horizontal velocity on the vertical centreline and vertical velocity on the horizontal centreline. WP is shown by lines and CIS by symbols.}\label{fig:cavity-rarefied-profiles}
\end{figure}

At $\mathrm{Re}=100$ the WP solution in Fig.~\ref{fig:cavity-re100} differs
from the NS solution on the same mesh by $8.15\times10^{-3}$ in speed and
$5.14\times10^{-2}$ in vorticity, and its centreline velocity profiles differ
from the Navier--Stokes benchmark of Ghia et al.~\cite{ghia1982} by
$2.09\times10^{-3}$ and $8.46\times10^{-3}$. At $\mathrm{Re}=1000$
(Fig.~\ref{fig:cavity-re1000}) the stronger recirculation and the thinner
wall layers raise these differences to $1.58\times10^{-2}$ in speed and
$6.16\times10^{-2}$ in vorticity and to $1.44\times10^{-2}$ and
$9.31\times10^{-3}$ in the centreline profiles. Both flows are smooth and
close to equilibrium, as assumed in Theorem~\ref{thm:ap}(ii), and in both the
WP solution approaches the Navier--Stokes solution, with the largest
difference in the vorticity, which is a derivative of the velocity.

The CIS profiles in panels (c, d) of both figures depart from the benchmark,
slightly at $\mathrm{Re}=100$ and markedly at $\mathrm{Re}=1000$. The
conventional scheme transports the full distribution with a flux that does
not account for the collisions within a cell, so that its numerical
dissipation, which scales with the cell size rather than with the mean free
path, adds to the physical viscosity. The smaller the physical viscosity, the
larger the share of this numerical dissipation and the stronger the damping
of the primary vortex. The macroscopic flux of WP, by contrast, is the
gas-kinetic Navier--Stokes flux, whose dissipation in the continuum limit is
the physical viscosity (Theorem~\ref{thm:ap}(ii)), and WP therefore
reproduces the benchmark on the same mesh at both Reynolds numbers. The
comparison thus displays the asymptotic-preserving accuracy of WP in addition
to its faster convergence.

\begin{figure}[tbp]
\centering\singlespacing\setlength{\figW}{\linewidth}
\begin{subfigure}[t]{0.4783\figW}\centering
\raisebox{0.0069\figW}{\includegraphics[width=0.4783\figW]{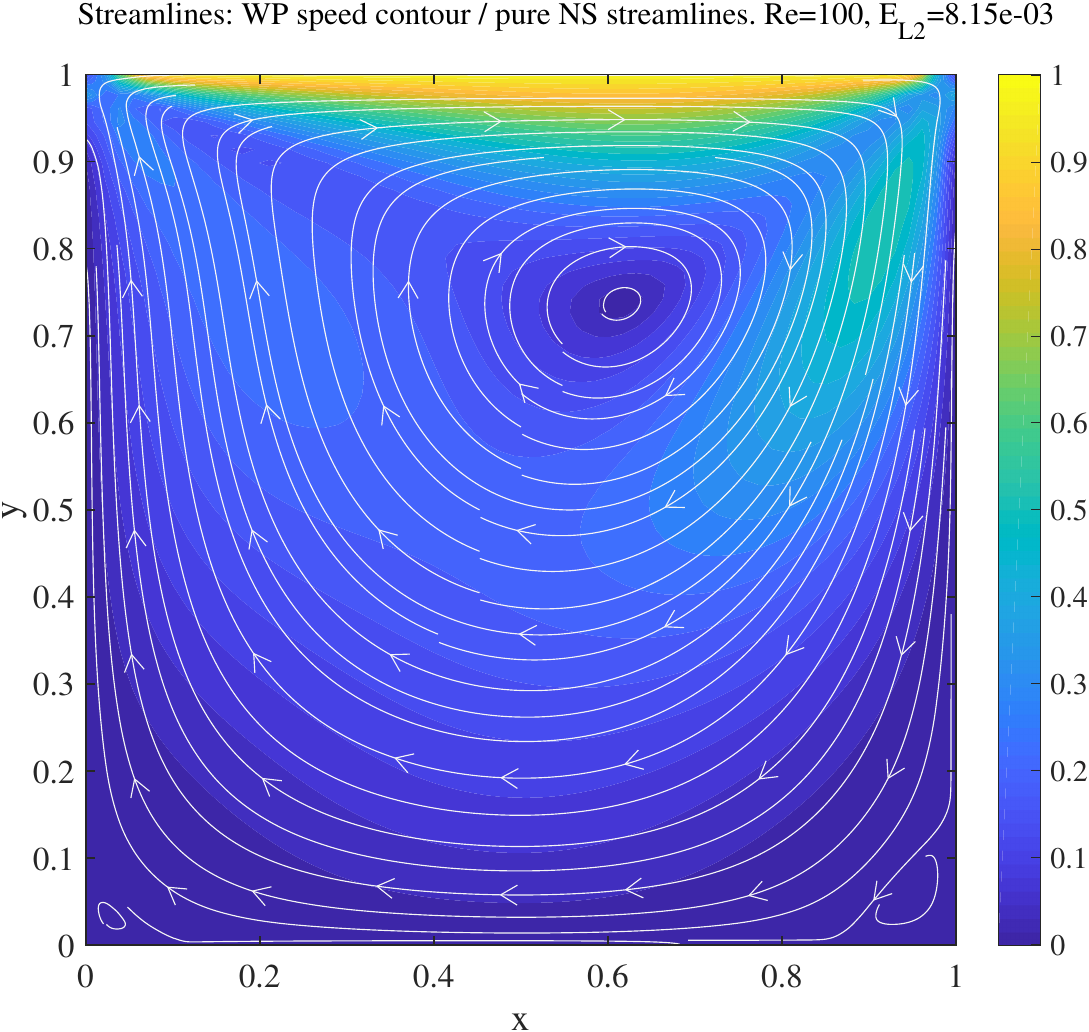}}
\caption{Speed and streamlines}
\end{subfigure}\hspace{0.0200\figW}%
\begin{subfigure}[t]{0.4967\figW}\centering
\raisebox{0.0000\figW}{\includegraphics[width=0.4967\figW]{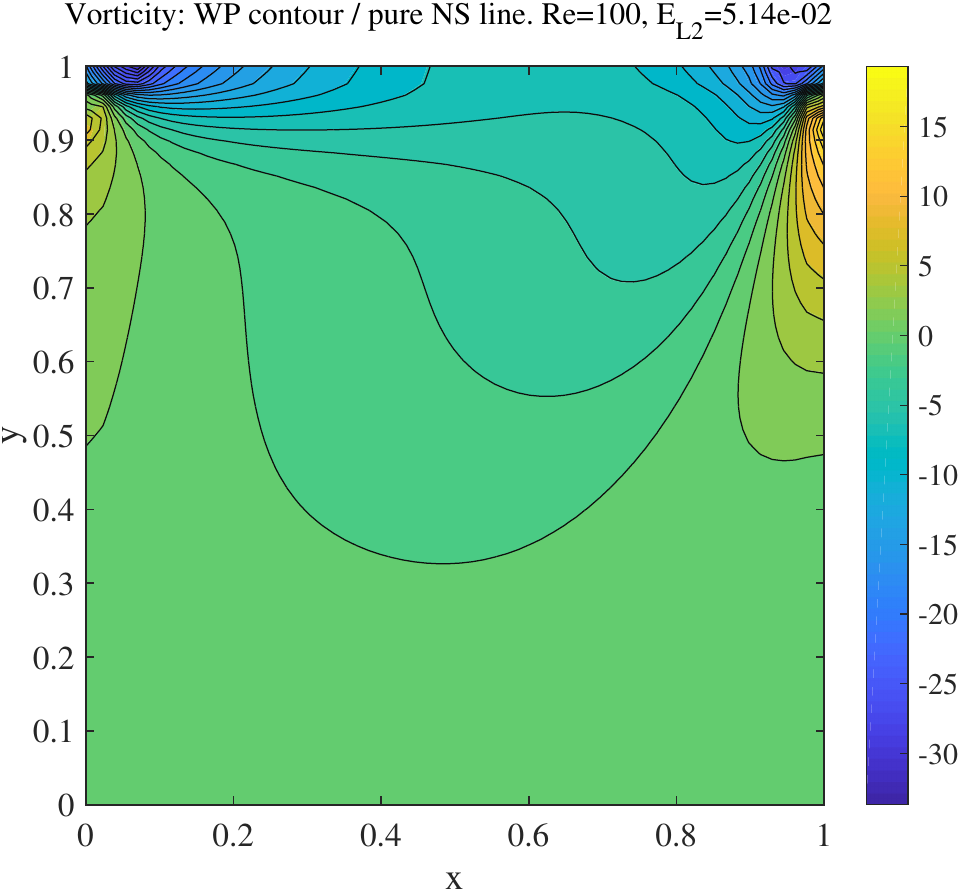}}
\caption{Vorticity}
\end{subfigure}
\par\smallskip
\begin{subfigure}[t]{0.4832\figW}\centering
\raisebox{0.0000\figW}{\includegraphics[width=0.4832\figW]{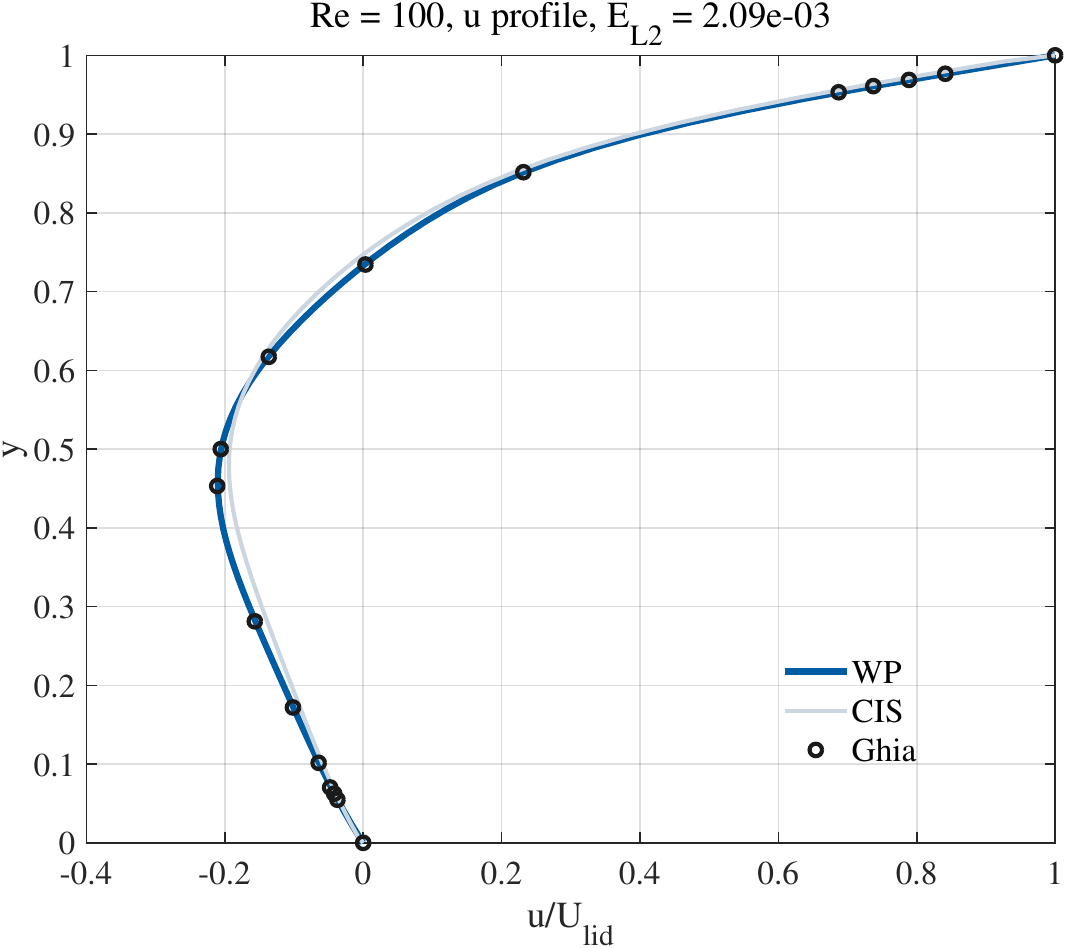}}
\caption{$u/U_{\rm lid}$ on the vertical centreline}
\end{subfigure}\hspace{0.0200\figW}%
\begin{subfigure}[t]{0.4918\figW}\centering
\raisebox{0.0099\figW}{\includegraphics[width=0.4918\figW]{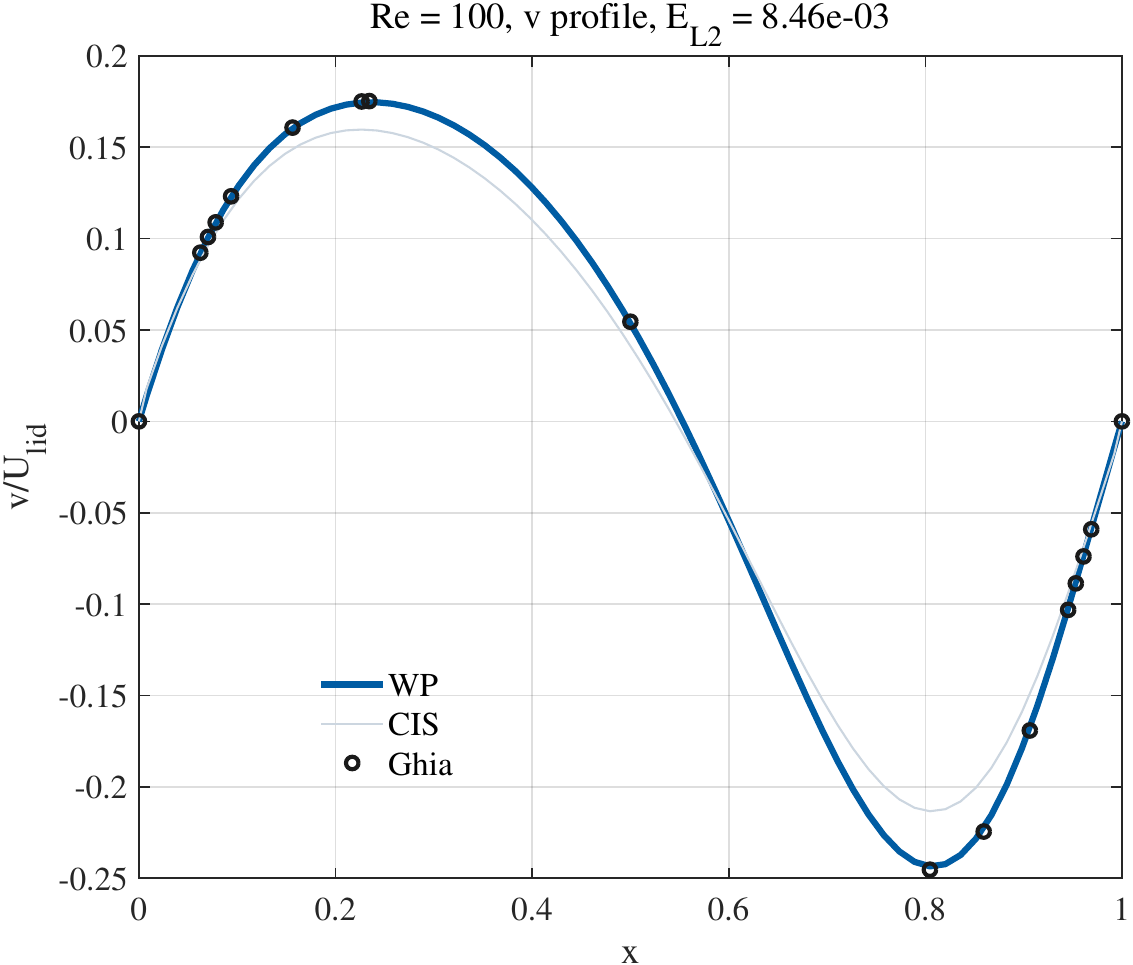}}
\caption{$v/U_{\rm lid}$ on the horizontal centreline}
\end{subfigure}
\caption{Lid-driven cavity at $\mathrm{Re}=100$: speed with streamlines (a) and vorticity (b) of WP compared with the NS solution, shown by filled contours and by lines at the same levels, and centreline velocity profiles (c, d) of WP (thick lines) compared with CIS (thin lines) and with the Navier--Stokes benchmark data of Ghia et al.~\cite{ghia1982} (symbols). $E_{L_2}$, printed above each panel, is the difference between WP and the NS solution in (a, b) and between WP and the benchmark in (c, d).}\label{fig:cavity-re100}
\end{figure}
\begin{figure}[tbp]
\centering\singlespacing\setlength{\figW}{\linewidth}
\begin{subfigure}[t]{0.4797\figW}\centering
\raisebox{0.0056\figW}{\includegraphics[width=0.4797\figW]{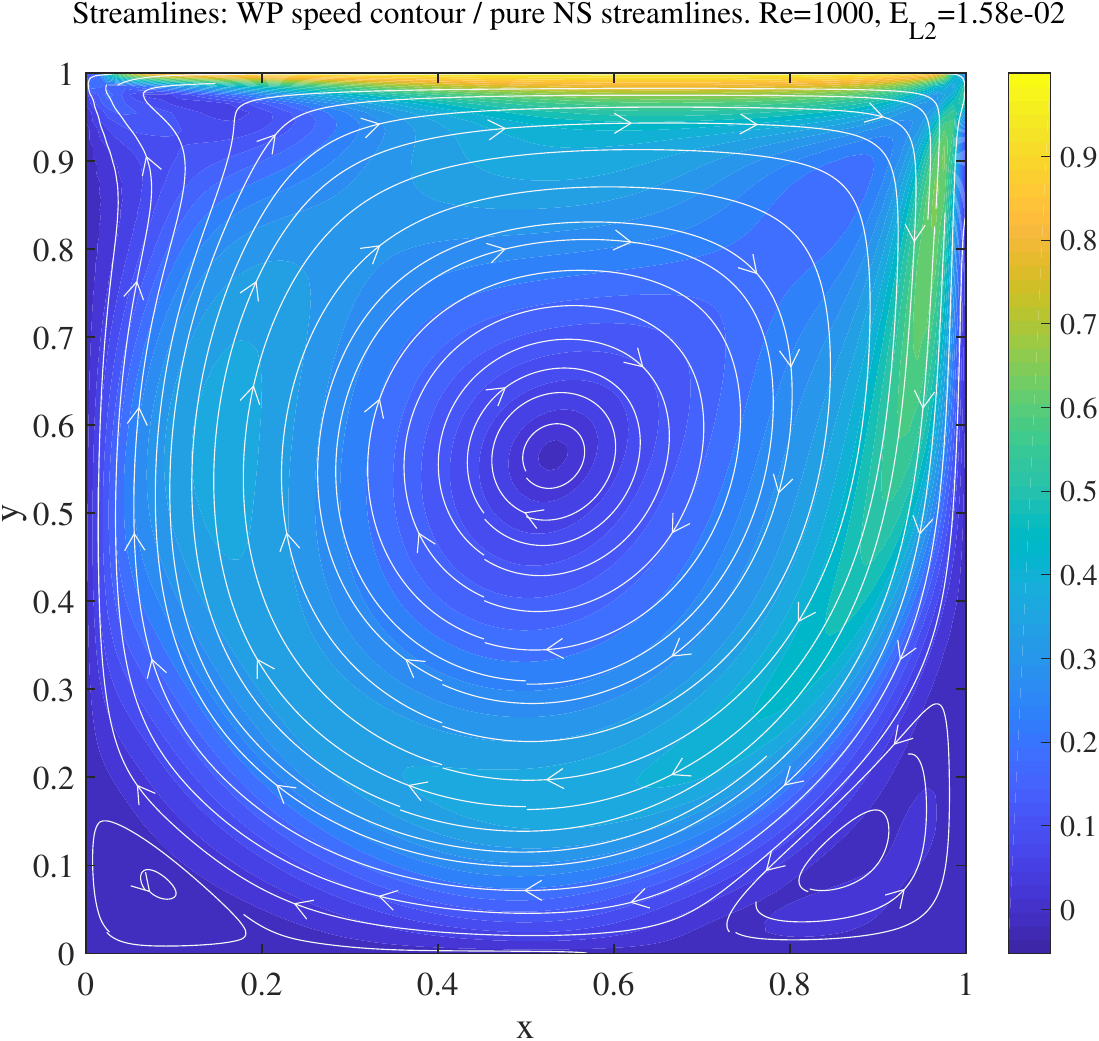}}
\caption{Speed and streamlines}
\end{subfigure}\hspace{0.0200\figW}%
\begin{subfigure}[t]{0.4953\figW}\centering
\raisebox{0.0000\figW}{\includegraphics[width=0.4953\figW]{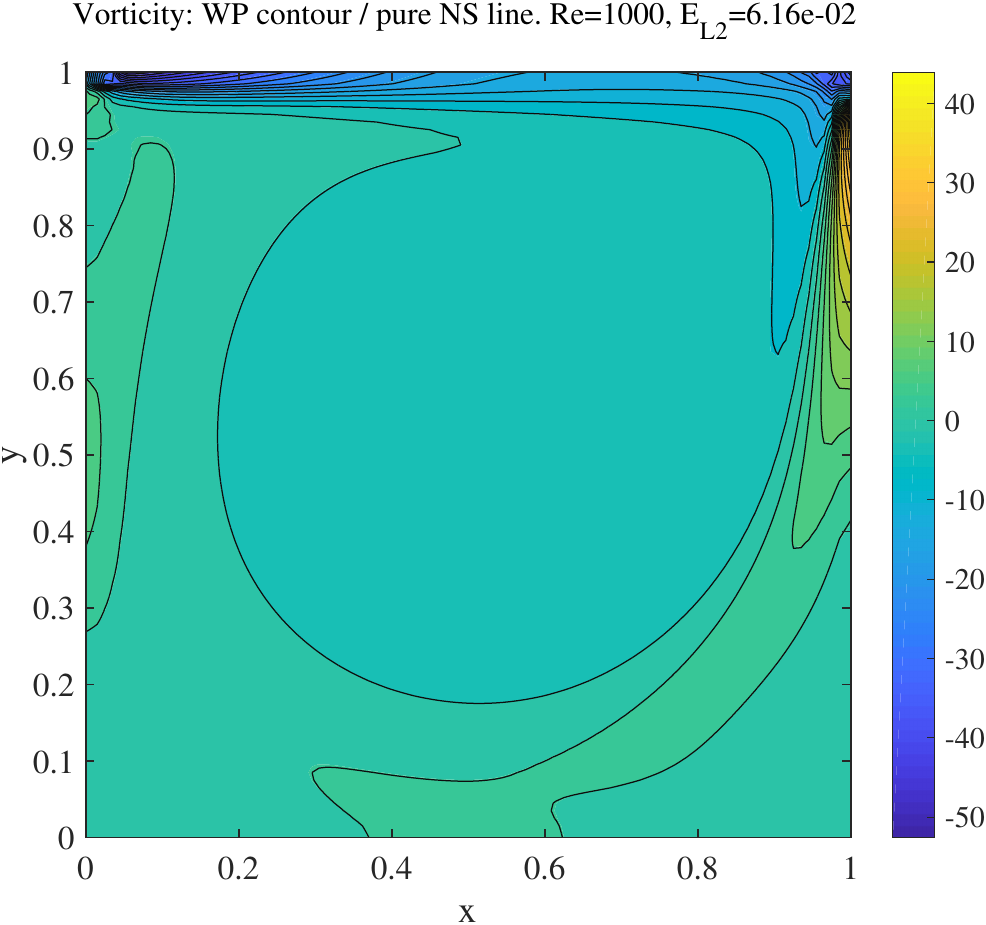}}
\caption{Vorticity}
\end{subfigure}
\par\smallskip
\begin{subfigure}[t]{0.4915\figW}\centering
\raisebox{0.0000\figW}{\includegraphics[width=0.4915\figW]{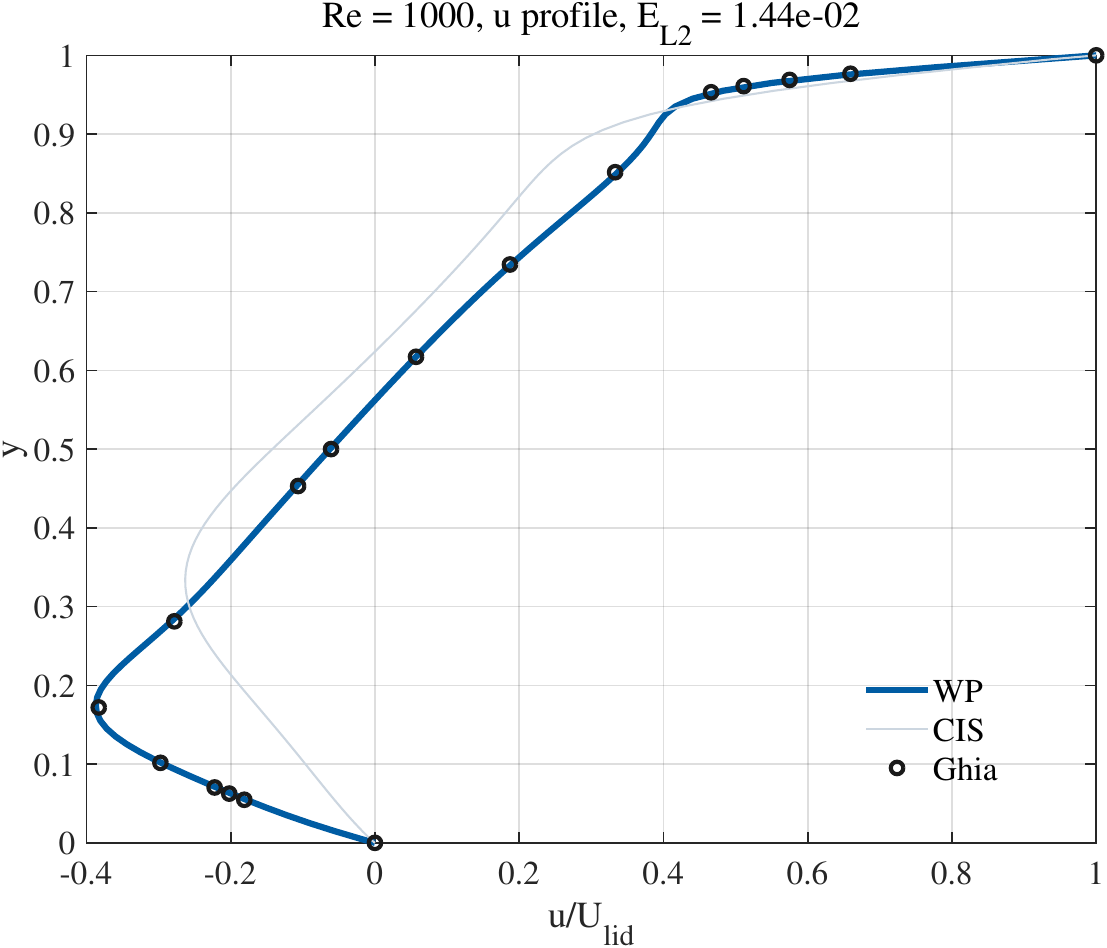}}
\caption{$u/U_{\rm lid}$ on the vertical centreline}
\end{subfigure}\hspace{0.0200\figW}%
\begin{subfigure}[t]{0.4835\figW}\centering
\raisebox{0.0094\figW}{\includegraphics[width=0.4835\figW]{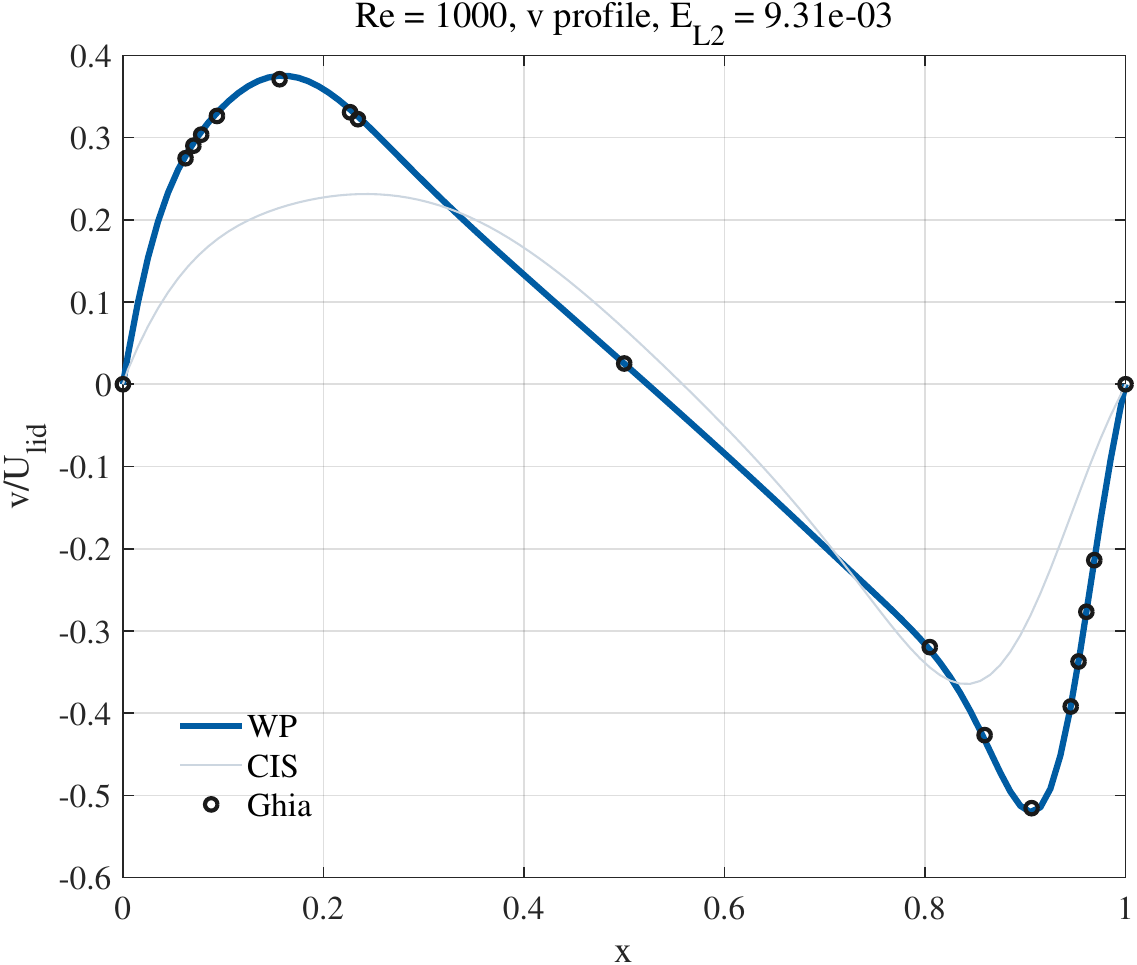}}
\caption{$v/U_{\rm lid}$ on the horizontal centreline}
\end{subfigure}
\caption{Lid-driven cavity at $\mathrm{Re}=1000$: speed with streamlines (a) and vorticity (b) of WP compared with the NS solution, shown by filled contours and by lines at the same levels, and centreline velocity profiles (c, d) of WP (thick lines) compared with CIS of the Shakhov model~\cite{liu_wpd_shakhov} (thin lines) and with the Navier--Stokes benchmark data of Ghia et al.~\cite{ghia1982} (symbols). $E_{L_2}$, printed above each panel, is the difference between WP and the NS solution in (a, b) and between WP and the benchmark in (c, d).}\label{fig:cavity-re1000}
\end{figure}

Figure~\ref{fig:cavity-residuals} shows the residual histories of the four
cavity calculations. The speedups are $S_{\rm iter}=1.78$ and $S_t=1.36$ at
$\varepsilon=1$ and increase to $9.47$ and $7.55$ at $\varepsilon=0.075$. At
$\mathrm{Re}=100$, where Boltzmann CIS converges as well, they reach
$S_{\rm iter}=758.11$ and $S_t=534.84$, so that from the rarefied to the
continuum cases the wall-time speedup grows by more than two orders of
magnitude, consistent with Theorem~\ref{thm:acceleration}. At
$\mathrm{Re}=1000$ the ratios with respect to CIS of the Shakhov
model~\cite{liu_wpd_shakhov} are $S_{\rm iter}=3797.50$ and $S_t=33.18$. For
both Reynolds numbers the wall time is also compared with that of the NS
solver, whose curves are included in Fig.~\ref{fig:cavity-residuals}. WP
reaches the residual $10^{-7}$ in $100.1$~s at $\mathrm{Re}=100$ and in
$1342.7$~s at $\mathrm{Re}=1000$, whereas the NS solver requires $48.0$~s and
$1275.0$~s. The converged full Boltzmann solution on the $32\times32\times12$
velocity grid thus costs only $2.09$ and $1.05$ times the NS wall time. At
$\mathrm{Re}=1000$ the kinetic work adds less than a tenth to the cost of the
NS computation, and the Boltzmann solution is obtained at nearly the
cost of the macroscopic solution.

\begin{figure}[tbp]
\centering\singlespacing\setlength{\figW}{\linewidth}
\begin{subfigure}[t]{0.6436\figW}\centering
\makebox[\linewidth]{\raisebox{0.0022\figW}{\includegraphics[width=0.3118\figW]{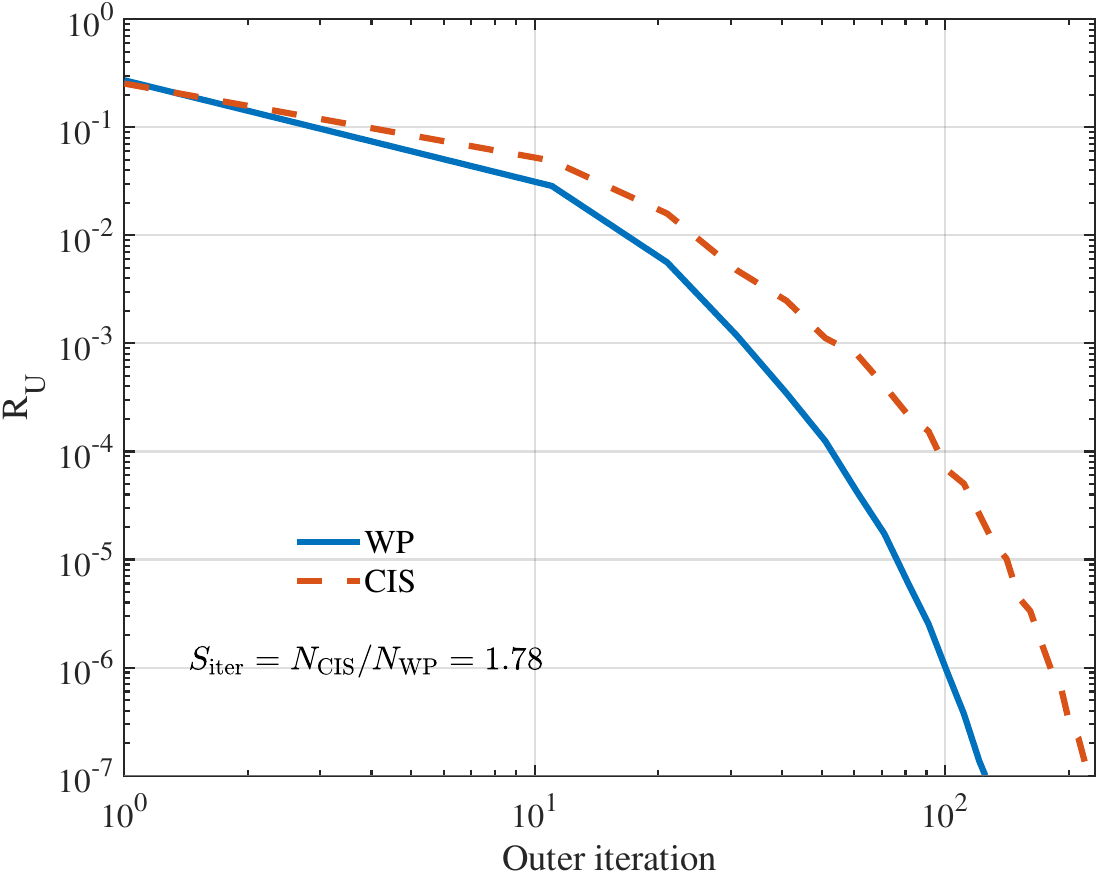}}\hspace{0.0200\figW}\raisebox{0.0000\figW}{\includegraphics[width=0.3119\figW]{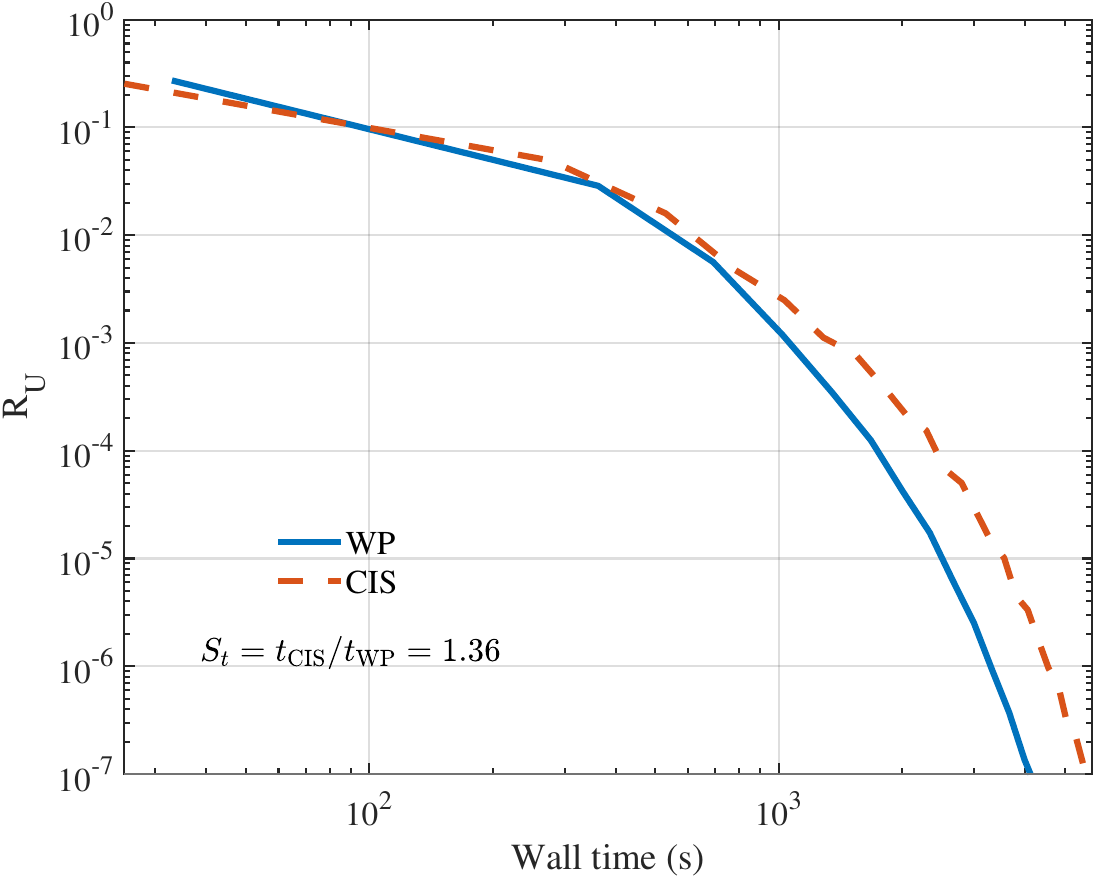}}}
\caption{$\varepsilon=1$}\label{fig:cavity-res-kn1}
\end{subfigure}
\par\smallskip
\begin{subfigure}[t]{0.6436\figW}\centering
\makebox[\linewidth]{\raisebox{0.0025\figW}{\includegraphics[width=0.3115\figW]{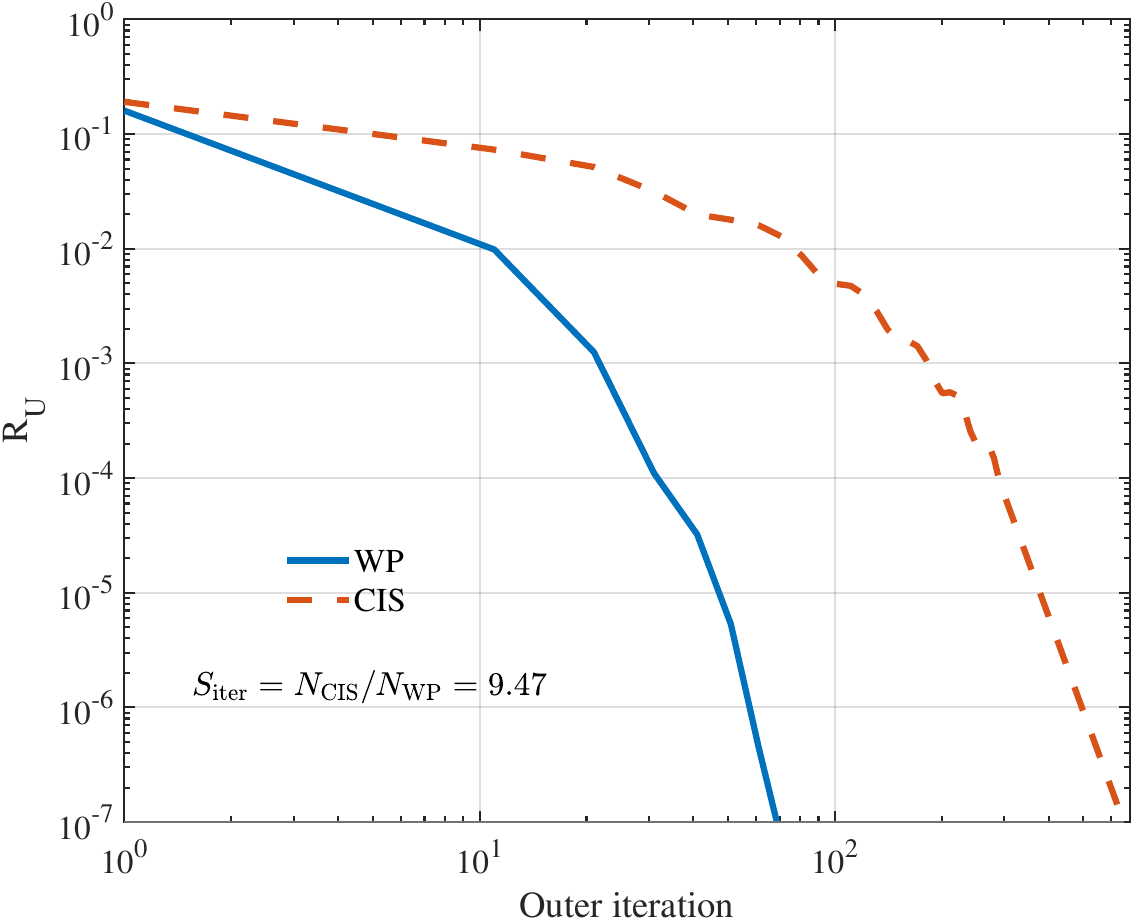}}\hspace{0.0200\figW}\raisebox{0.0000\figW}{\includegraphics[width=0.3122\figW]{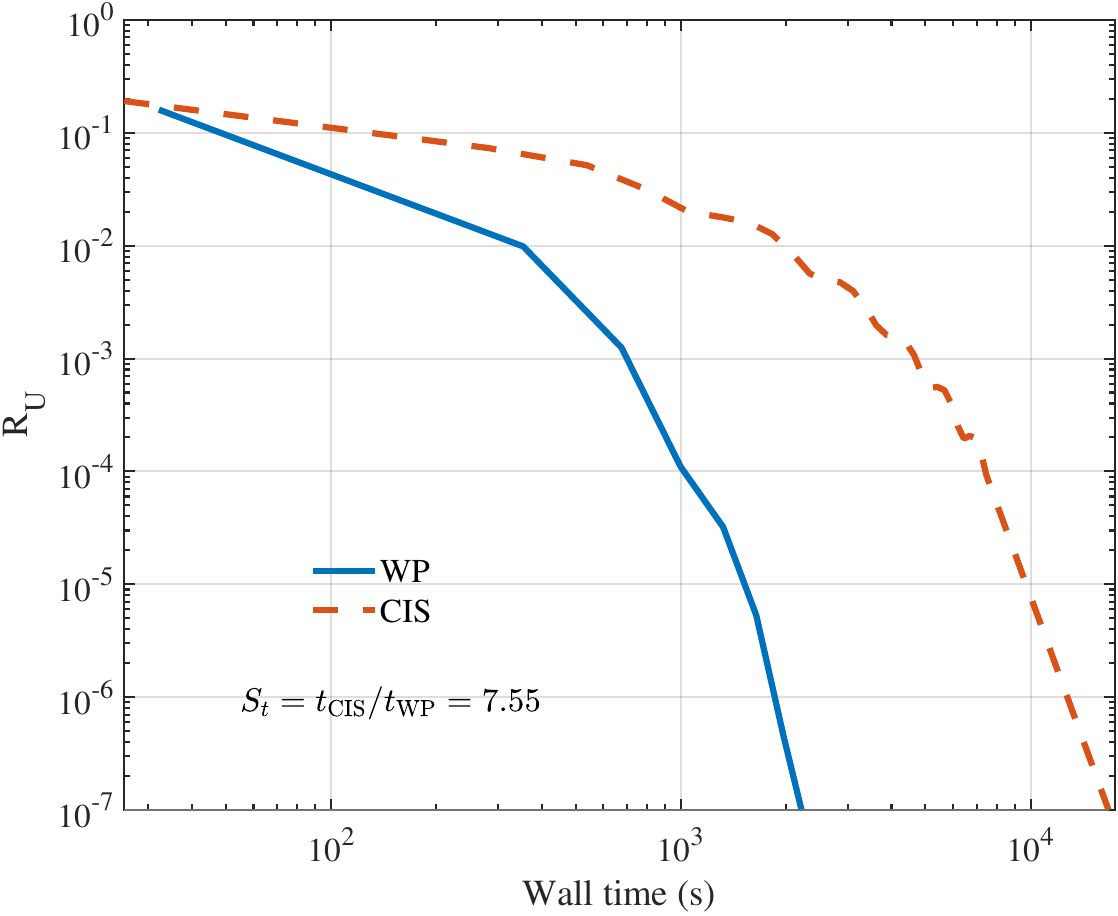}}}
\caption{$\varepsilon=0.075$}\label{fig:cavity-res-kn0075}
\end{subfigure}
\par\smallskip
\begin{subfigure}[t]{0.6436\figW}\centering
\makebox[\linewidth]{\raisebox{0.0019\figW}{\includegraphics[width=0.3117\figW]{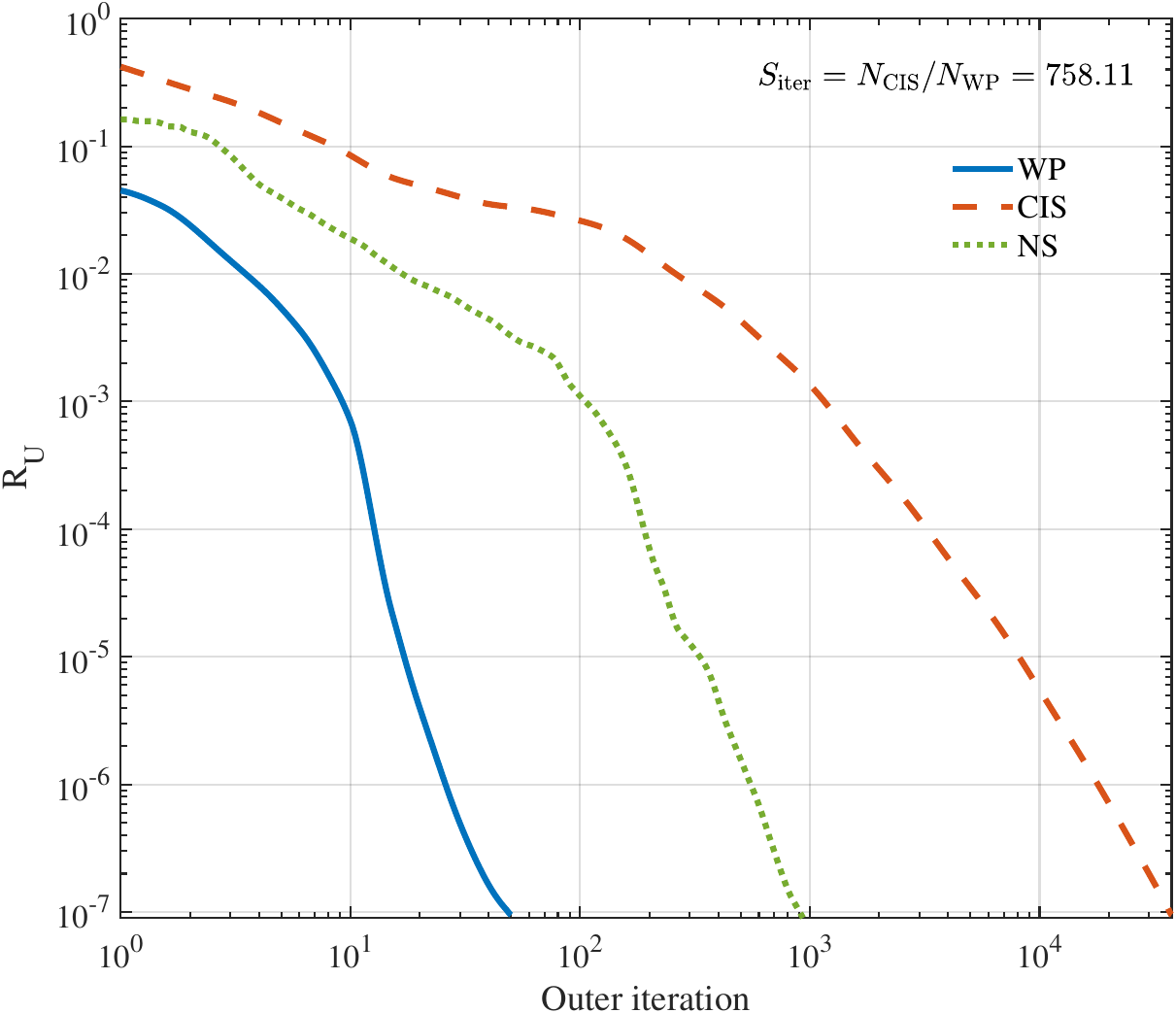}}\hspace{0.0200\figW}\raisebox{0.0000\figW}{\includegraphics[width=0.3119\figW]{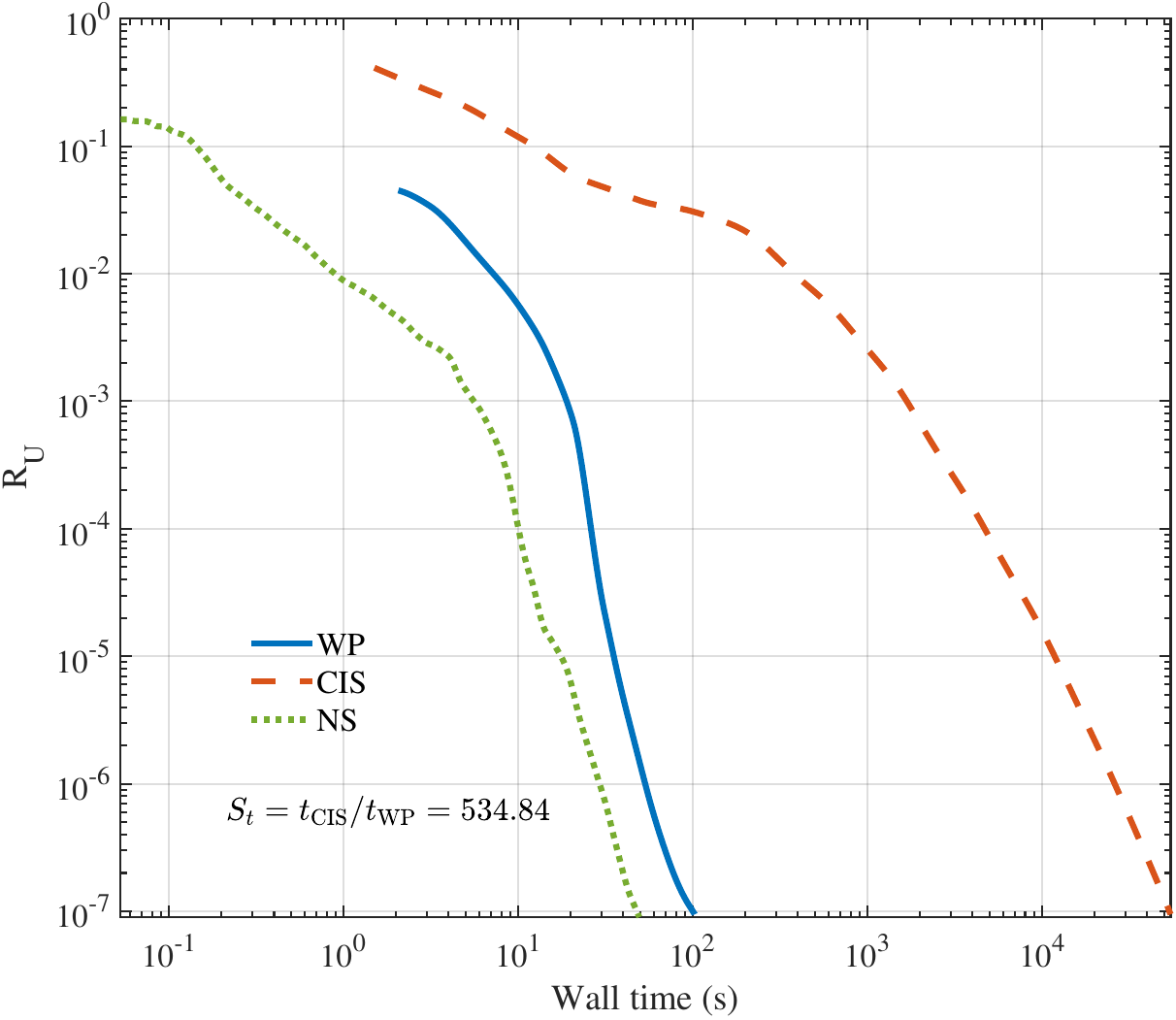}}}
\caption{$\mathrm{Re}=100$}\label{fig:cavity-res-re100}
\end{subfigure}
\par\smallskip
\begin{subfigure}[t]{0.6436\figW}\centering
\makebox[\linewidth]{\raisebox{0.0019\figW}{\includegraphics[width=0.3142\figW]{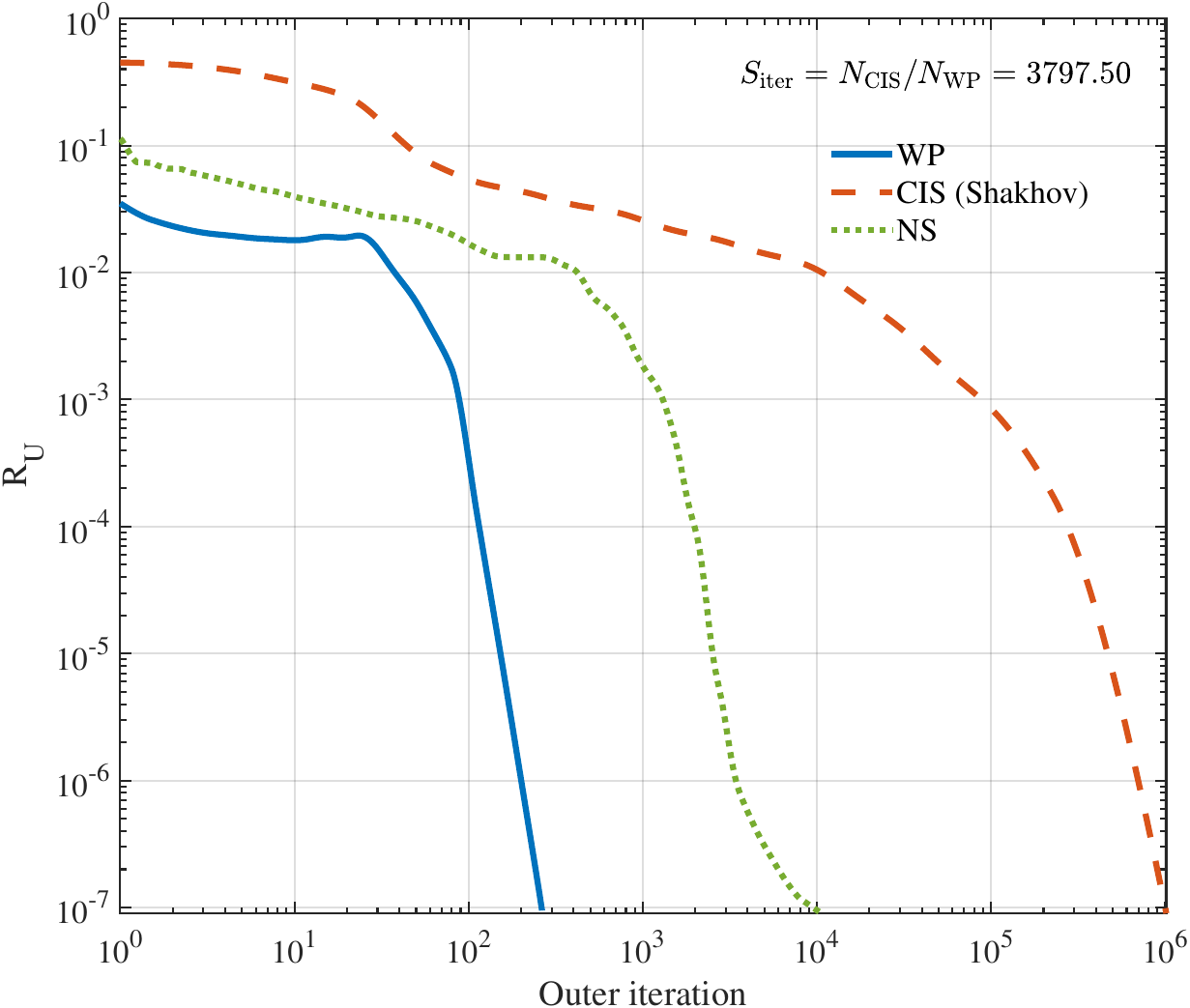}}\hspace{0.0200\figW}\raisebox{0.0000\figW}{\includegraphics[width=0.3094\figW]{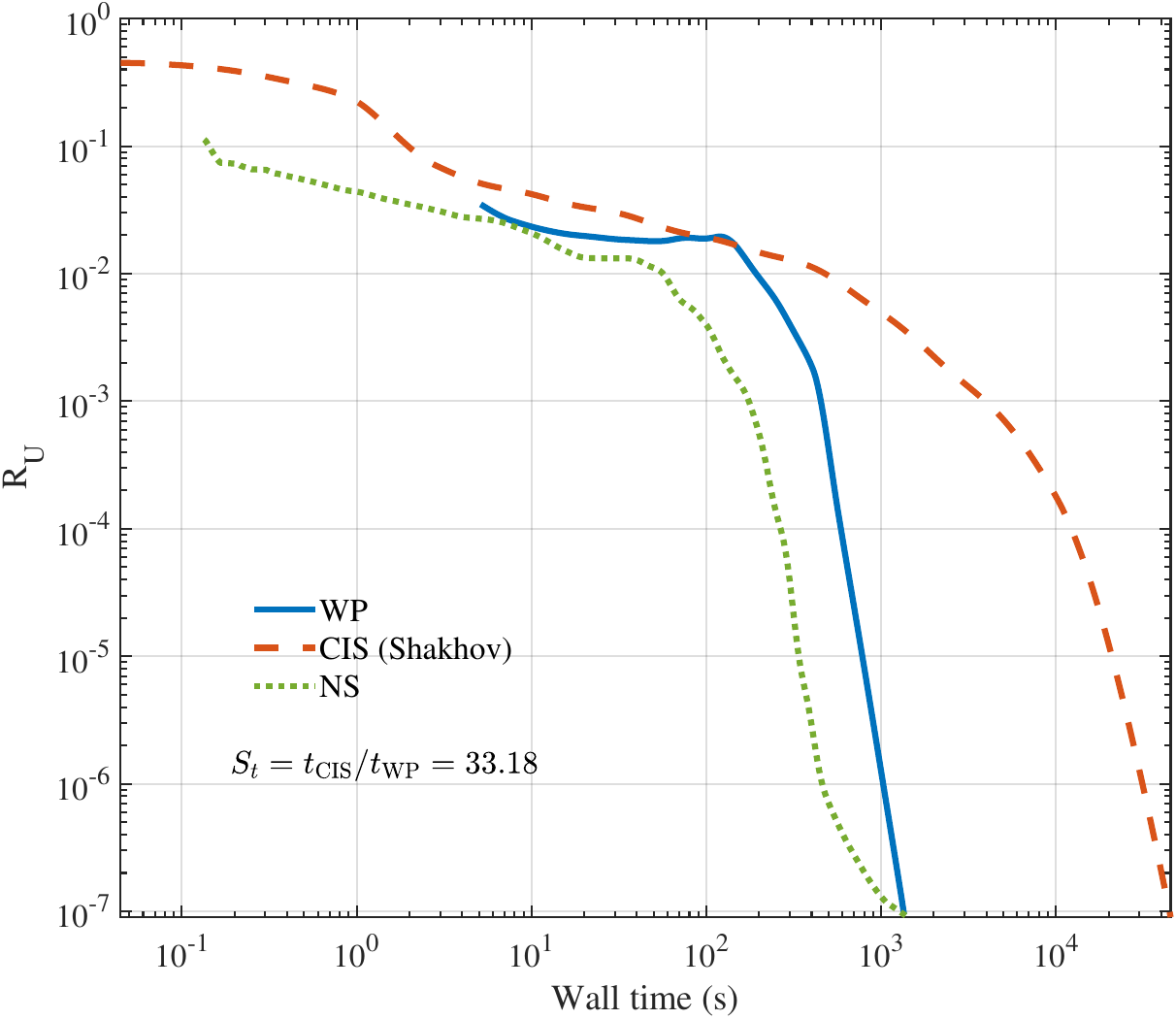}}}
\caption{$\mathrm{Re}=1000$}\label{fig:cavity-res-re1000}
\end{subfigure}
\caption{Lid-driven cavity: histories of the normalized macroscopic residual against outer iteration (left) and wall time (right). At $\varepsilon=1$, $\varepsilon=0.075$ and $\mathrm{Re}=100$ WP is compared with CIS, and the CIS-to-WP speedups are printed in the panels; at $\mathrm{Re}=100$ the NS solver is shown as well. At $\mathrm{Re}=1000$ both panels show CIS of the Shakhov model~\cite{liu_wpd_shakhov} and the NS solver. The abscissa of the NS iteration histories counts implicit steps of the NS solver.}\label{fig:cavity-residuals}
\end{figure}

\FloatBarrier
\subsection{Cylinder flow}
Mach-5 flow past a circular cylinder of unit radius tests the method on a
curved shock layer and a diffuse isothermal wall. The body-fitted O-grid has
$129\times80$ cells and outer radius $9.3953$ at $\varepsilon=1$ and
$129\times40$ cells and outer radius $6.2636$ at $\varepsilon=10^{-2}$ and
$10^{-4}$. The velocity grid has $96\times96\times24$ nodes at the two larger
Knudsen numbers and $32\times32\times12$ nodes at $\varepsilon=10^{-4}$, and
WP runs W1/PR2/P1 at $\varepsilon=1$ and $10^{-2}$ and W120/PR2/P1 at
$\varepsilon=10^{-4}$. The reference is CIS at $\varepsilon=1$ and $10^{-2}$
and CIS of the Shakhov model at $\varepsilon=10^{-4}$.

At $\varepsilon=1$ the WP and CIS fields in Fig.~\ref{fig:cyl-kn1-fields}
differ by $1.86\times10^{-5}$ in density, $1.08\times10^{-5}$ in temperature,
$2.68\times10^{-6}$ in speed and $6.45\times10^{-6}$ in pressure. The
pressure-drag coefficients of WP and CIS, both $1.25$, differ by
$2.55\times10^{-7}$, and the pressure lift is nearly zero, as symmetry
requires. The centreline and surface profiles in
Fig.~\ref{fig:cyl-kn1-profiles} compare the two solutions across the shock
layer and along the wall. The molecular distributions sampled on the
centreline, shown on linear and logarithmic scales in
Figs.~\ref{fig:cyl-kn1-dist-lin} and~\ref{fig:cyl-kn1-dist-log}, display the
non-Maxwellian structure ahead of and behind the body together with its
low-amplitude tails.

\begin{figure}[tbp]
\centering\singlespacing\setlength{\figW}{\linewidth}
\begin{subfigure}[t]{0.4856\figW}\centering
\raisebox{0.0000\figW}{\includegraphics[width=0.4856\figW]{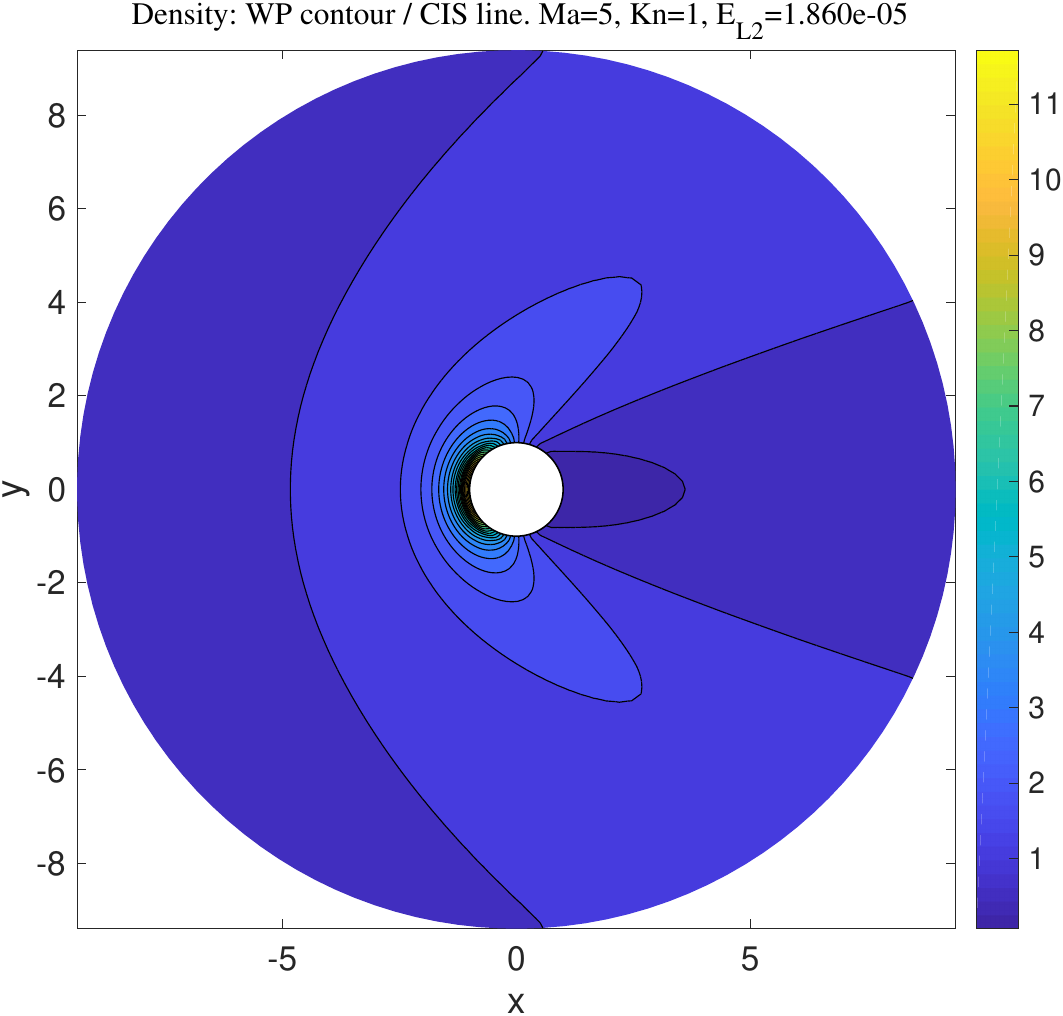}}
\caption{Density}
\end{subfigure}\hspace{0.0200\figW}%
\begin{subfigure}[t]{0.4894\figW}\centering
\raisebox{0.0000\figW}{\includegraphics[width=0.4894\figW]{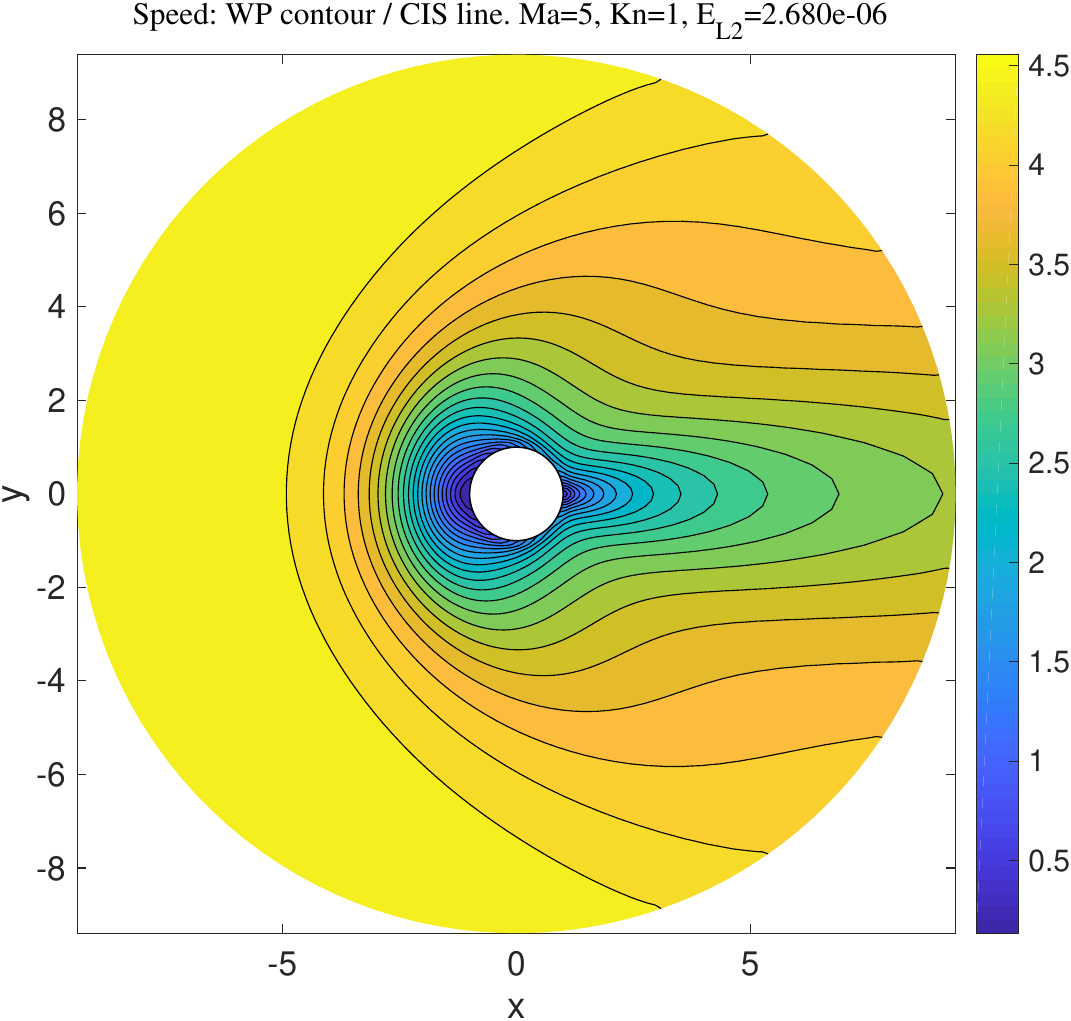}}
\caption{Speed}
\end{subfigure}
\par\smallskip
\begin{subfigure}[t]{0.4894\figW}\centering
\raisebox{0.0000\figW}{\includegraphics[width=0.4894\figW]{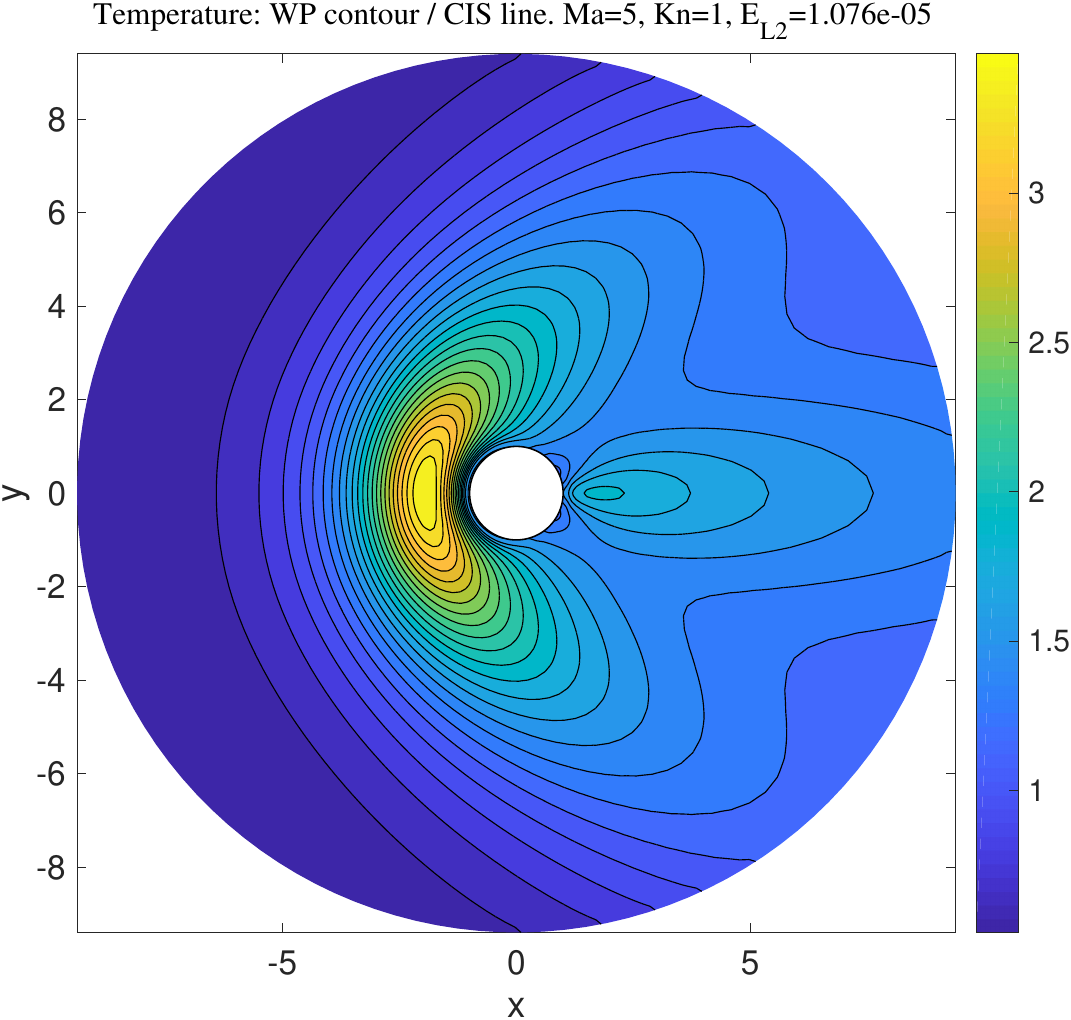}}
\caption{Temperature}
\end{subfigure}\hspace{0.0200\figW}%
\begin{subfigure}[t]{0.4856\figW}\centering
\raisebox{0.0000\figW}{\includegraphics[width=0.4856\figW]{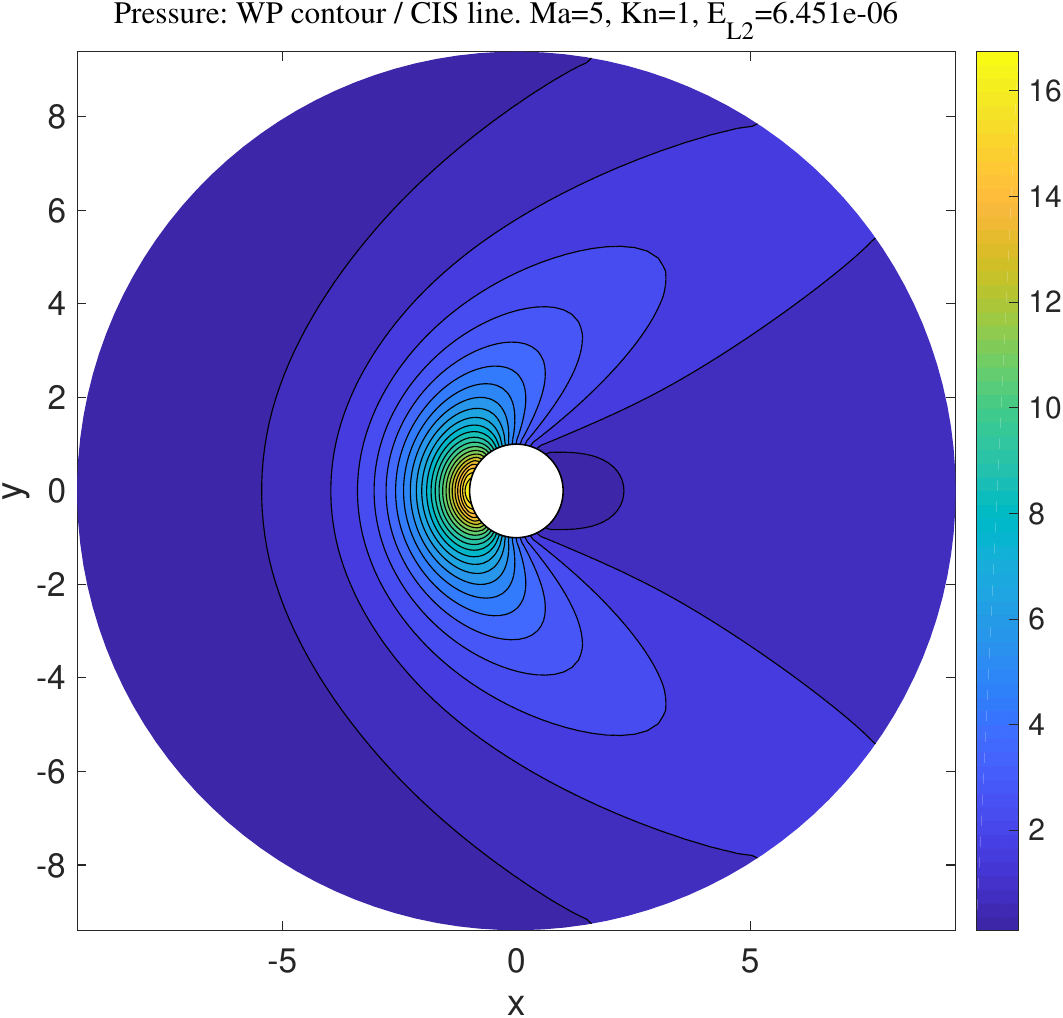}}
\caption{Pressure}
\end{subfigure}
\caption{Mach-5 flow past a cylinder, $\varepsilon=1$: density, speed, temperature and pressure fields of WP compared with CIS. WP is shown by filled contours and the reference by lines at the same levels, and $E_{L_2}$ is printed above each panel.}\label{fig:cyl-kn1-fields}
\end{figure}
\begin{figure}[tbp]
\centering\singlespacing\setlength{\figW}{\linewidth}
\begin{subfigure}[t]{0.4892\figW}\centering
\raisebox{0.0000\figW}{\includegraphics[width=0.4892\figW]{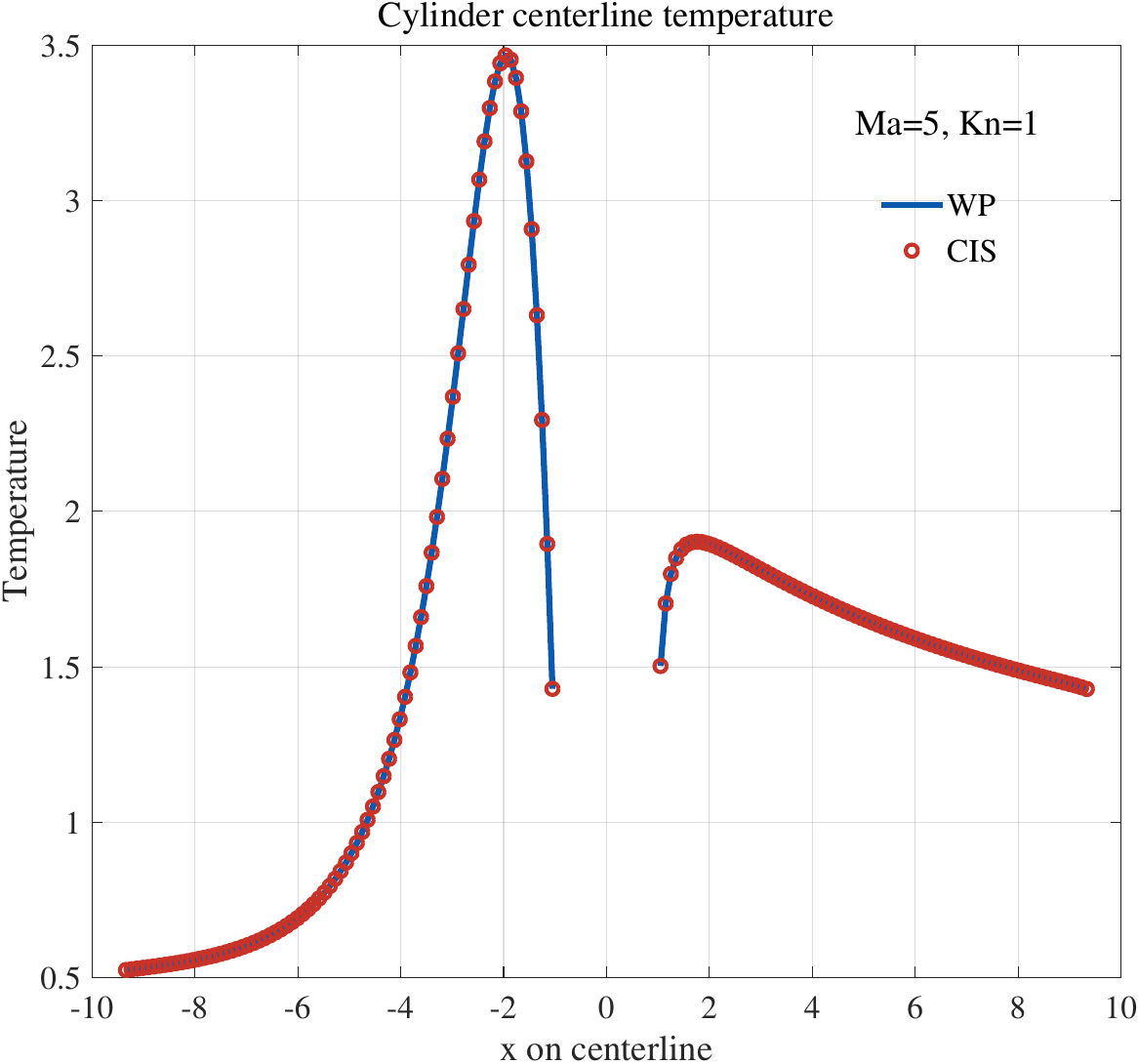}}
\caption{Centreline temperature}
\end{subfigure}\hspace{0.0200\figW}%
\begin{subfigure}[t]{0.4858\figW}\centering
\raisebox{0.0000\figW}{\includegraphics[width=0.4858\figW]{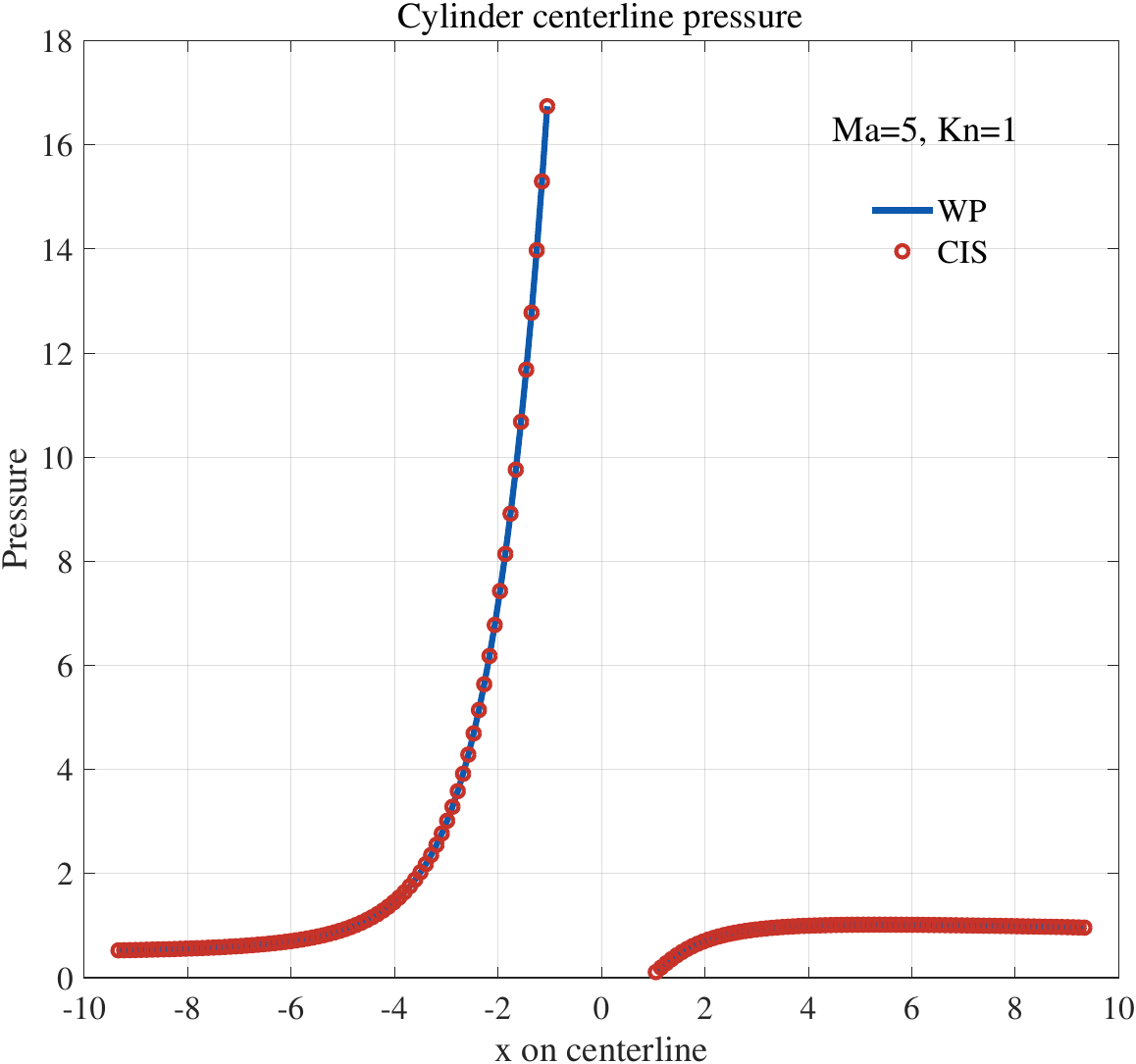}}
\caption{Centreline pressure}
\end{subfigure}
\par\smallskip
\begin{subfigure}[t]{0.4929\figW}\centering
\raisebox{0.0000\figW}{\includegraphics[width=0.4929\figW]{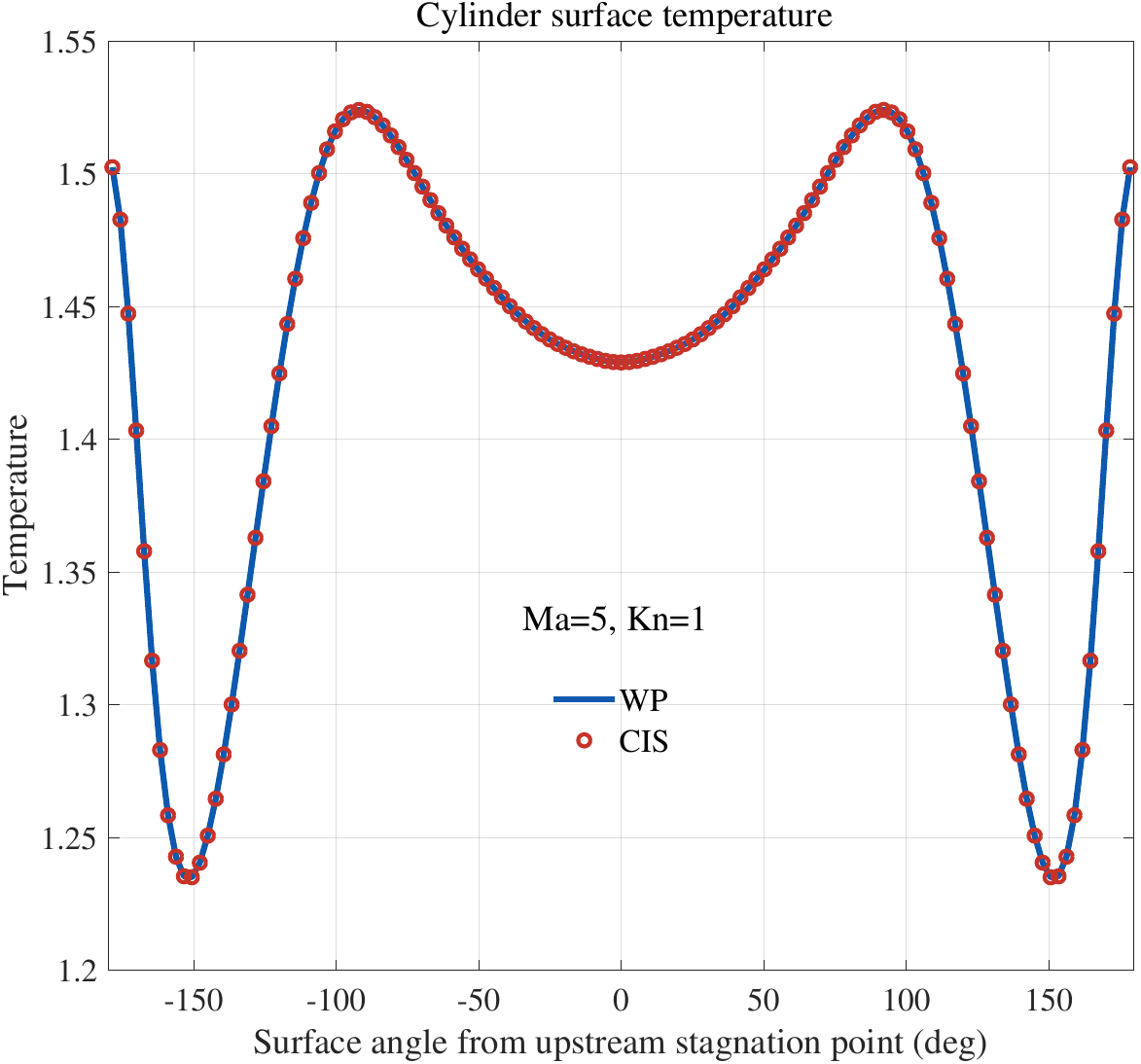}}
\caption{Surface temperature}
\end{subfigure}\hspace{0.0200\figW}%
\begin{subfigure}[t]{0.4821\figW}\centering
\raisebox{0.0000\figW}{\includegraphics[width=0.4821\figW]{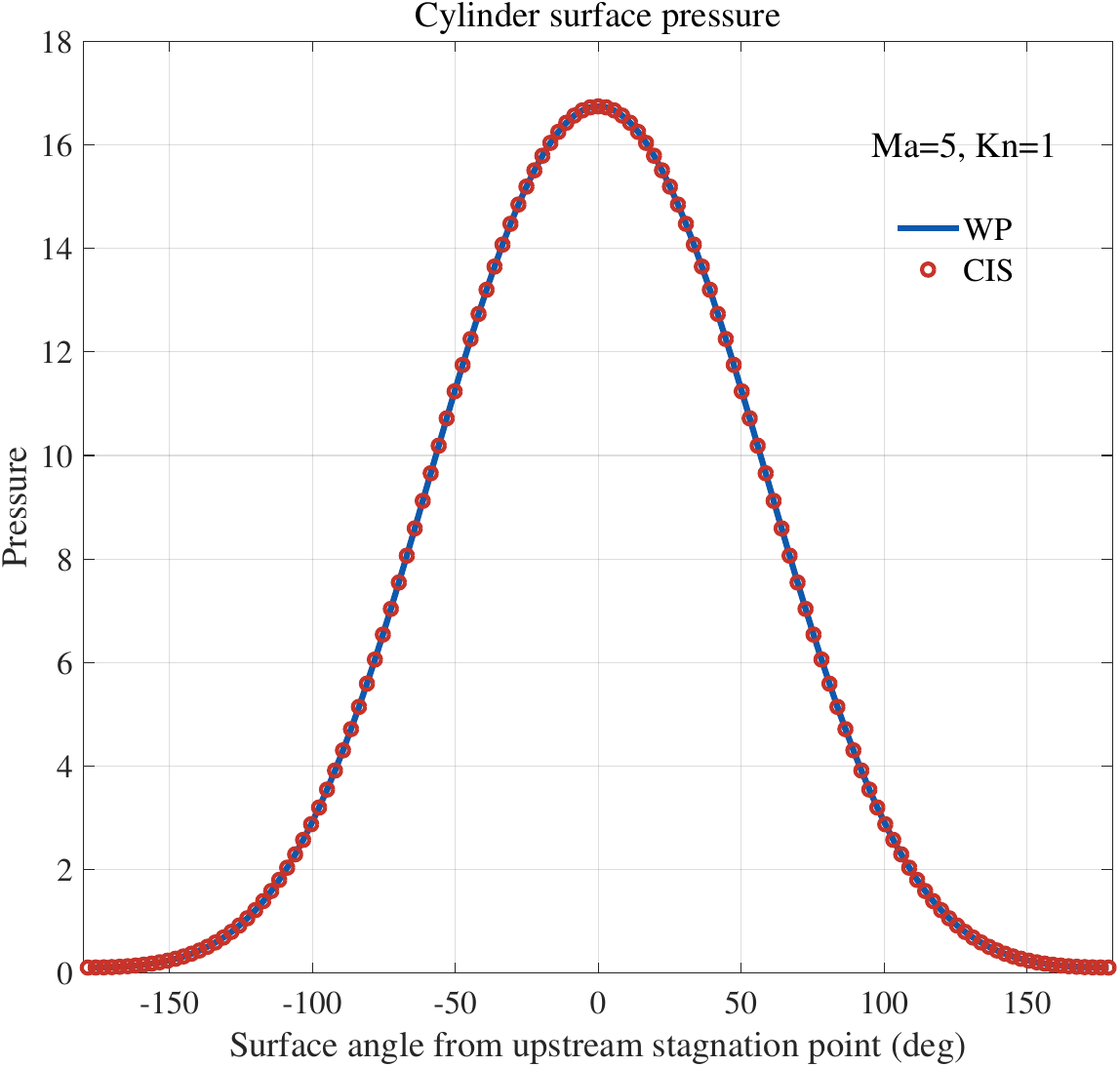}}
\caption{Surface pressure}
\end{subfigure}
\caption{Mach-5 flow past a cylinder, $\varepsilon=1$: temperature and pressure of WP and CIS along the centreline (a, b) and along the body surface (c, d). WP is shown by lines and the reference by symbols.}\label{fig:cyl-kn1-profiles}
\end{figure}
\begin{figure}[tbp]
\centering\singlespacing\setlength{\figW}{\linewidth}
\begin{subfigure}[t]{0.4913\figW}\centering
\raisebox{0.0000\figW}{\includegraphics[width=0.4913\figW]{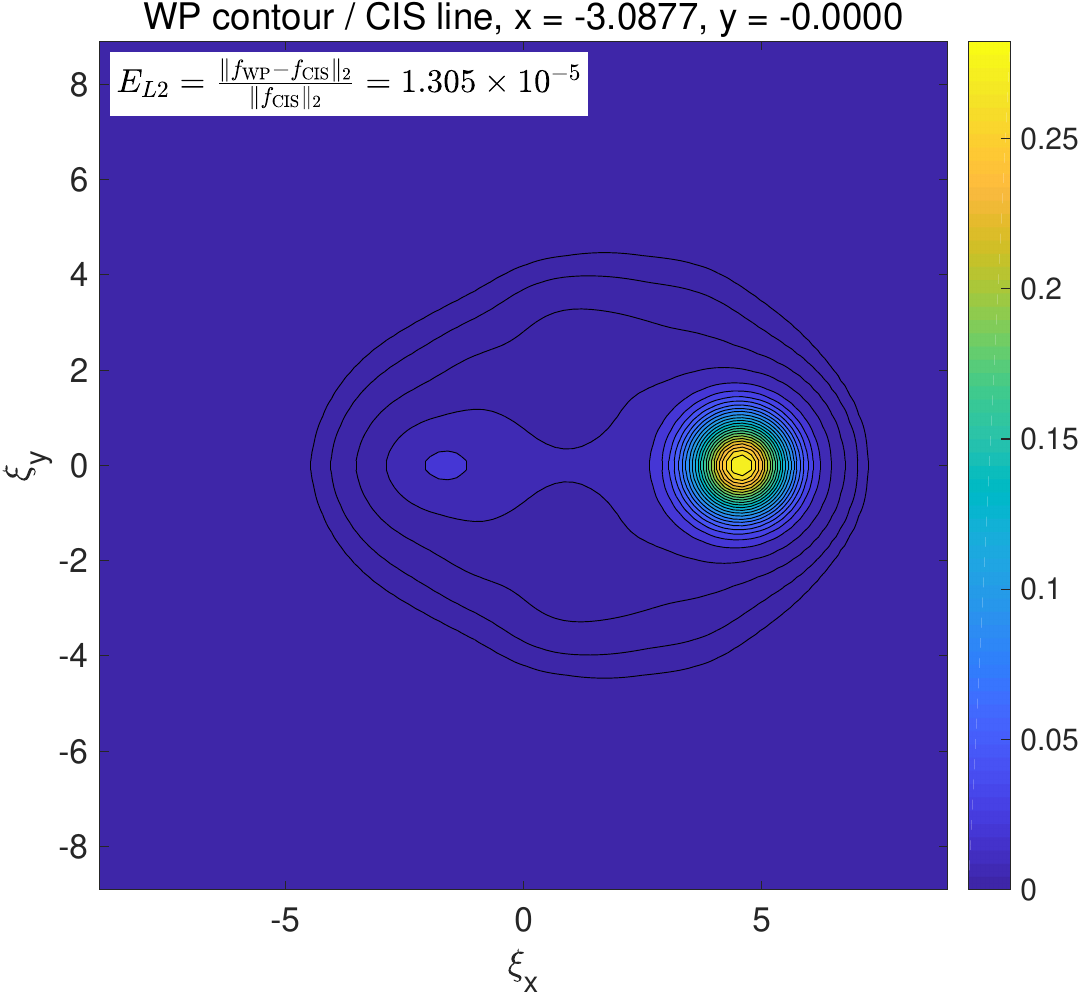}}
\caption{$(x,y)=(-3.09,\,0.00)$}
\end{subfigure}\hspace{0.0200\figW}%
\begin{subfigure}[t]{0.4837\figW}\centering
\raisebox{0.0000\figW}{\includegraphics[width=0.4837\figW]{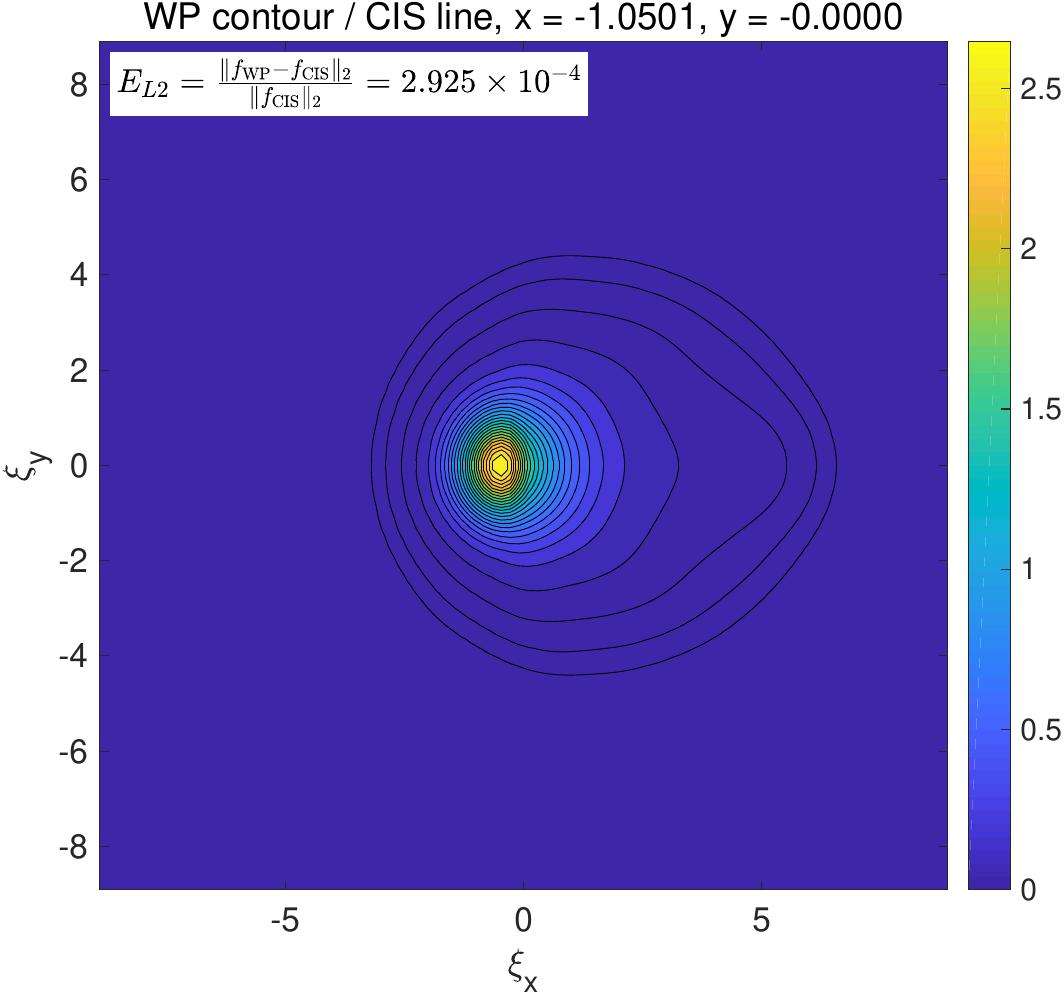}}
\caption{$(x,y)=(-1.05,\,0.00)$}
\end{subfigure}
\par\smallskip
\begin{subfigure}[t]{0.4875\figW}\centering
\raisebox{0.0000\figW}{\includegraphics[width=0.4875\figW]{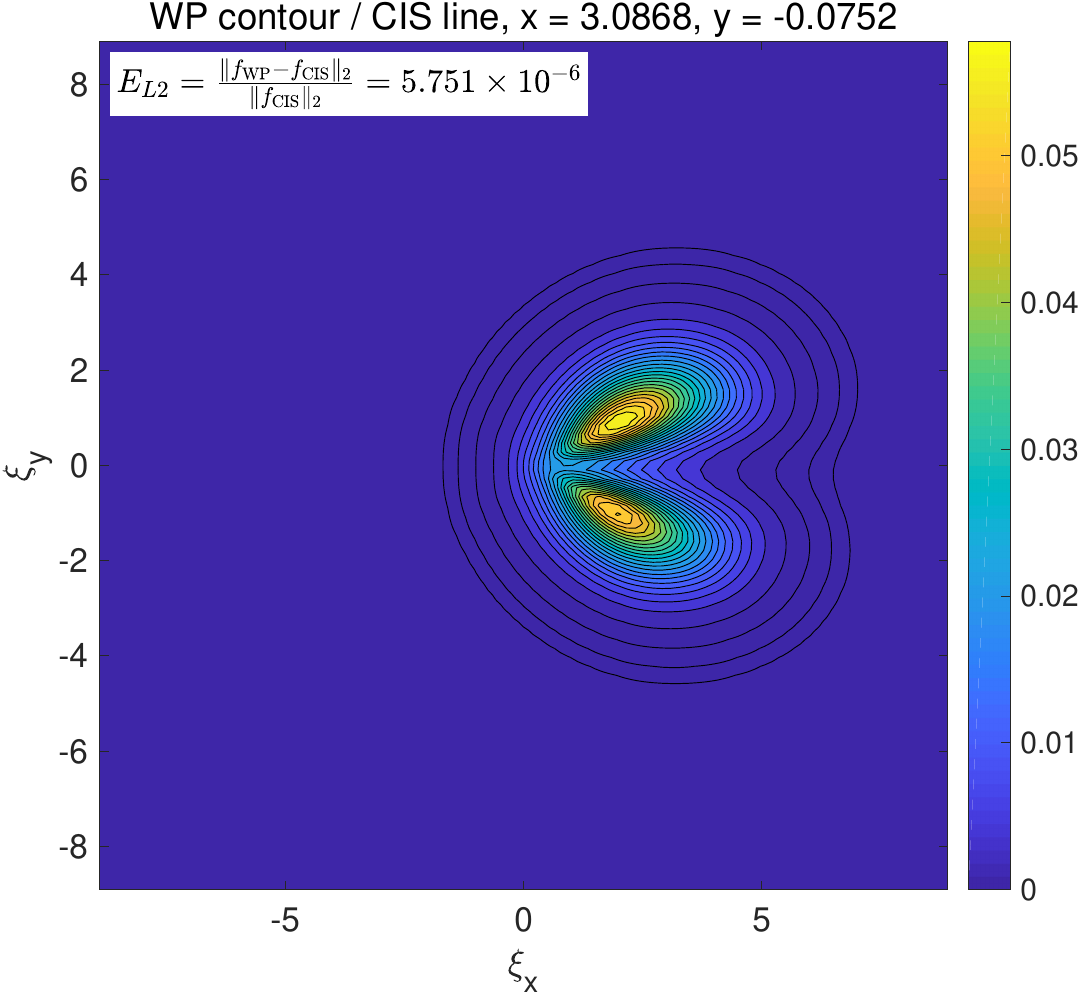}}
\caption{$(x,y)=(3.09,\,-0.08)$}
\end{subfigure}\hspace{0.0200\figW}%
\begin{subfigure}[t]{0.4875\figW}\centering
\raisebox{0.0000\figW}{\includegraphics[width=0.4875\figW]{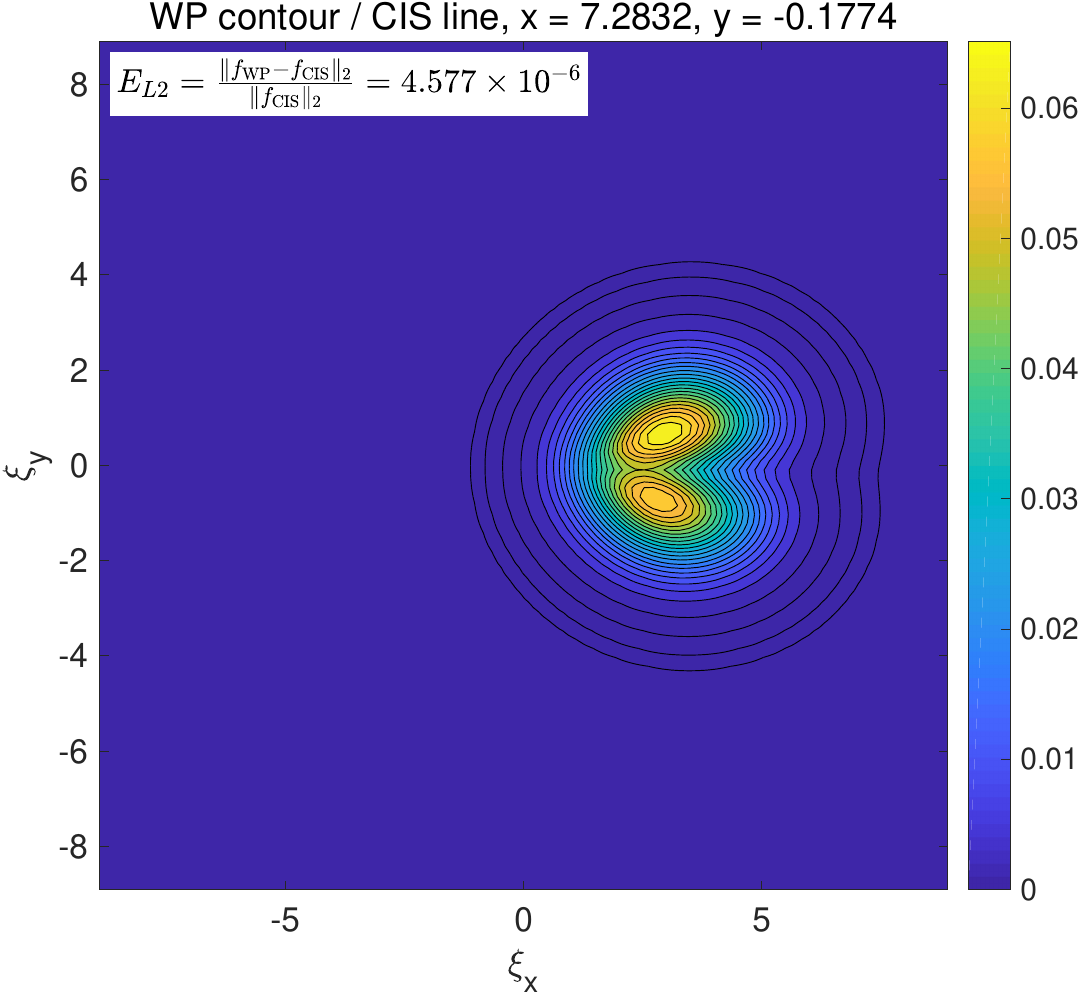}}
\caption{$(x,y)=(7.28,\,-0.18)$}
\end{subfigure}
\caption{Mach-5 flow past a cylinder, $\varepsilon=1$: reduced molecular distribution in the $(\xi_x,\xi_y)$ velocity plane at the marked sampling points, linear scale. WP is shown by filled contours and CIS by lines at the same levels, and $E_{L_2}$ is printed in each panel.}\label{fig:cyl-kn1-dist-lin}
\end{figure}
\begin{figure}[tbp]
\centering\singlespacing\setlength{\figW}{\linewidth}
\begin{subfigure}[t]{0.4875\figW}\centering
\raisebox{0.0000\figW}{\includegraphics[width=0.4875\figW]{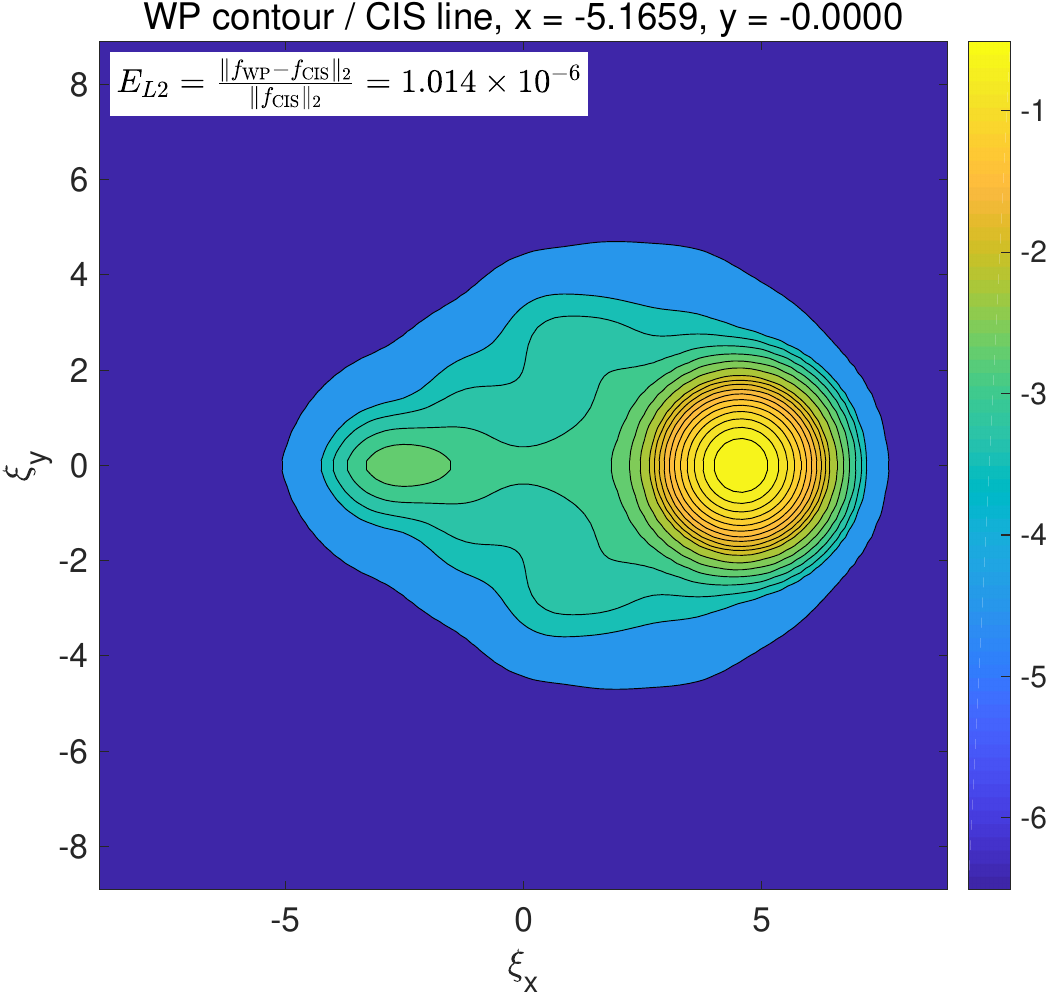}}
\caption{$(x,y)=(-5.17,\,0.00)$}
\end{subfigure}\hspace{0.0200\figW}%
\begin{subfigure}[t]{0.4875\figW}\centering
\raisebox{0.0000\figW}{\includegraphics[width=0.4875\figW]{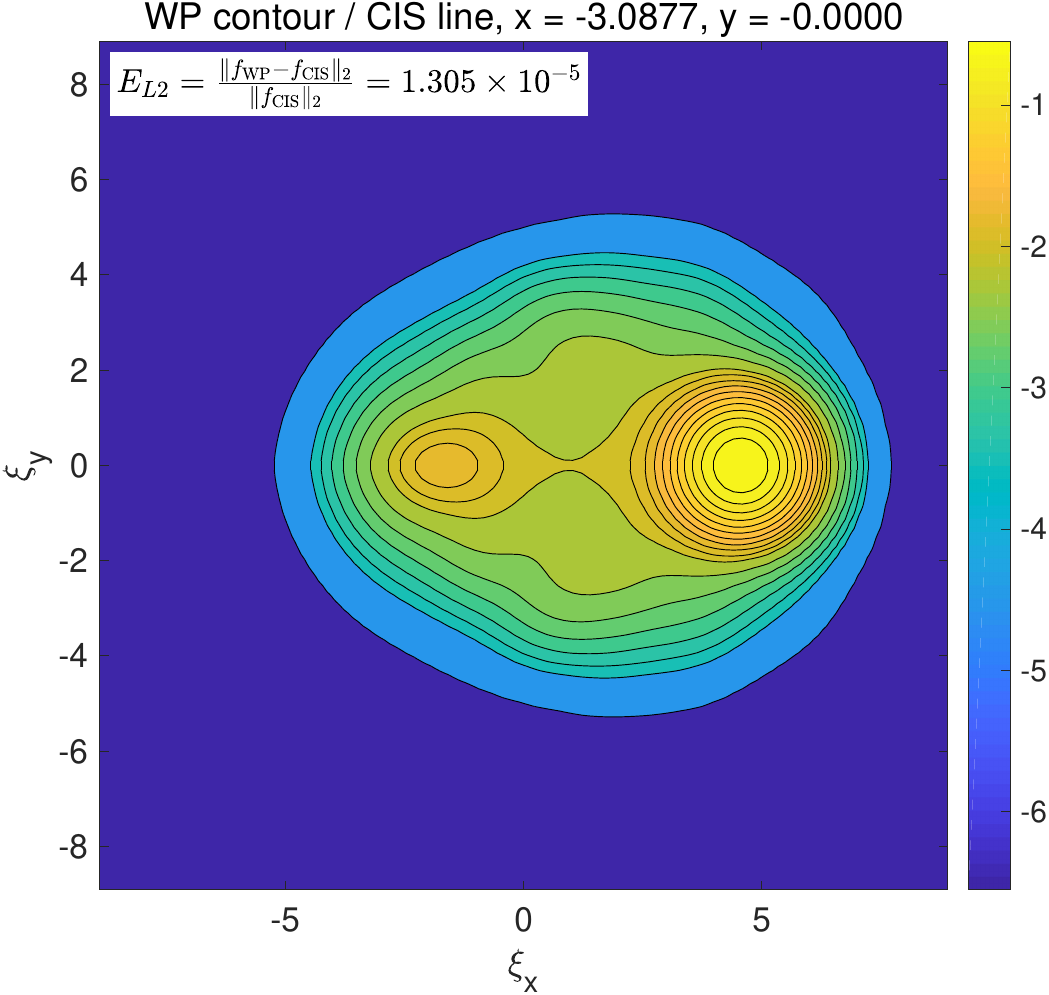}}
\caption{$(x,y)=(-3.09,\,0.00)$}
\end{subfigure}
\par\smallskip
\begin{subfigure}[t]{0.4875\figW}\centering
\raisebox{0.0000\figW}{\includegraphics[width=0.4875\figW]{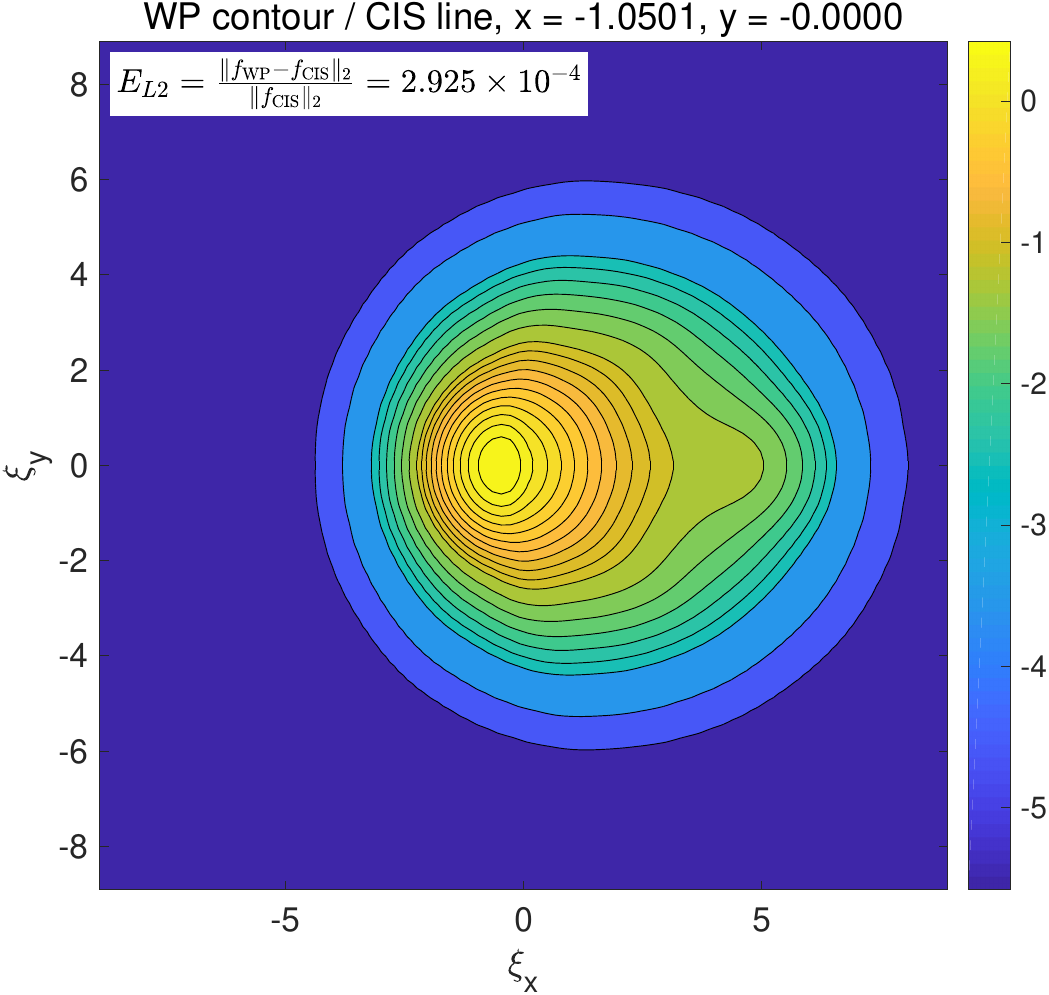}}
\caption{$(x,y)=(-1.05,\,0.00)$}
\end{subfigure}\hspace{0.0200\figW}%
\begin{subfigure}[t]{0.4875\figW}\centering
\raisebox{0.0000\figW}{\includegraphics[width=0.4875\figW]{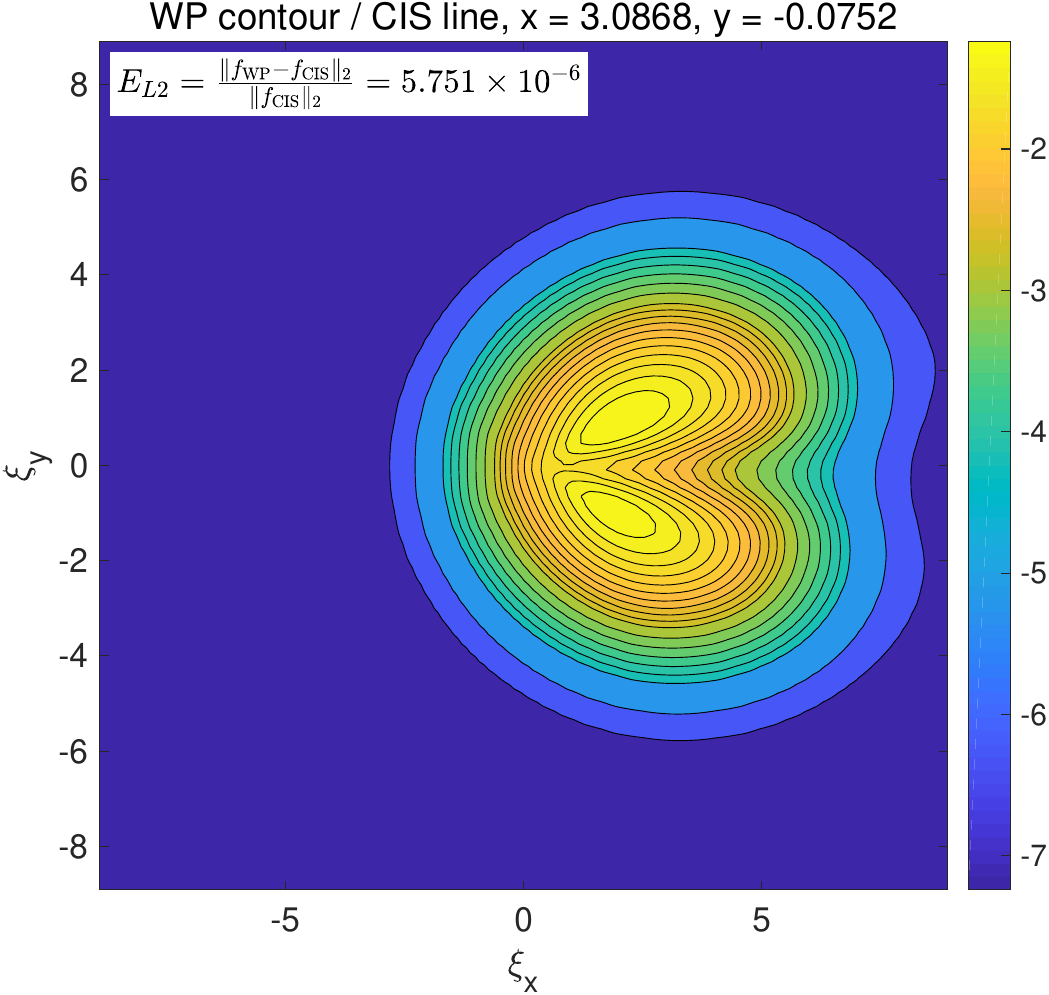}}
\caption{$(x,y)=(3.09,\,-0.08)$}
\end{subfigure}
\caption{Mach-5 flow past a cylinder, $\varepsilon=1$: reduced molecular distribution in the $(\xi_x,\xi_y)$ velocity plane at the marked sampling points, logarithmic scale. WP is shown by filled contours and CIS by lines at the same levels, and $E_{L_2}$ is printed in each panel.}\label{fig:cyl-kn1-dist-log}
\end{figure}

At $\varepsilon=10^{-2}$ the field differences in
Fig.~\ref{fig:cyl-kn1em2-fields} are $4.04\times10^{-4}$ in density,
$7.87\times10^{-4}$ in temperature, $1.82\times10^{-4}$ in speed and
$2.93\times10^{-4}$ in pressure, and the pressure-drag coefficients of WP and
CIS, both $1.29$, differ by $9.00\times10^{-5}$; the centreline and surface
profiles are compared in Fig.~\ref{fig:cyl-kn1em2-profiles}. At
$\varepsilon=10^{-4}$ the WP fields in Fig.~\ref{fig:cyl-kn1em4-fields} differ
from CIS of the Shakhov model by $4.12\times10^{-4}$,
$1.28\times10^{-3}$, $5.08\times10^{-4}$ and $7.77\times10^{-4}$ in density,
temperature, speed and pressure.

The residual histories of the three calculations are collected in
Fig.~\ref{fig:cyl-residuals}. At $\varepsilon=1$ the ratios are
$S_{\rm iter}=1.29$ and $S_t=1.08$. The particle carries the whole solution
in this regime, and the small reduction in the number of iterations pays for
the W updates, so that the two methods cost about the same. At $\varepsilon=10^{-2}$ the ratios grow to $31.74$ and
$17.52$, and at $\varepsilon=10^{-4}$ the ratios with respect to CIS of the
Shakhov model~\cite{liu_wpd_shakhov} are $S_{\rm iter}=54412.11$ and
$S_t=1434.53$. WP then reaches $R_U=10^{-7}$ in $193.0$~s, $4.17$ times the
$46.3$~s of the NS solver.

\begin{figure}[tbp]
\centering\singlespacing\setlength{\figW}{\linewidth}
\begin{subfigure}[t]{0.4856\figW}\centering
\raisebox{0.0000\figW}{\includegraphics[width=0.4856\figW]{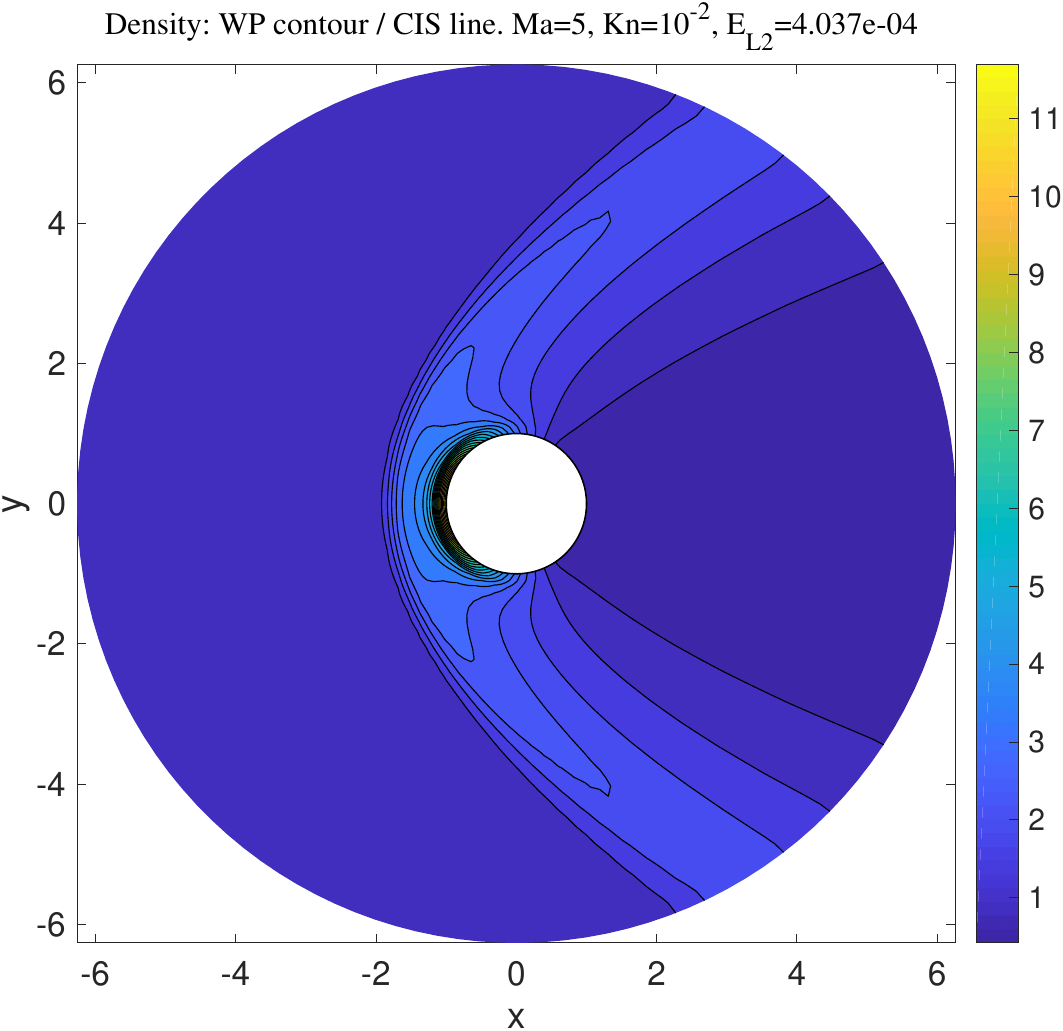}}
\caption{Density}
\end{subfigure}\hspace{0.0200\figW}%
\begin{subfigure}[t]{0.4894\figW}\centering
\raisebox{0.0000\figW}{\includegraphics[width=0.4894\figW]{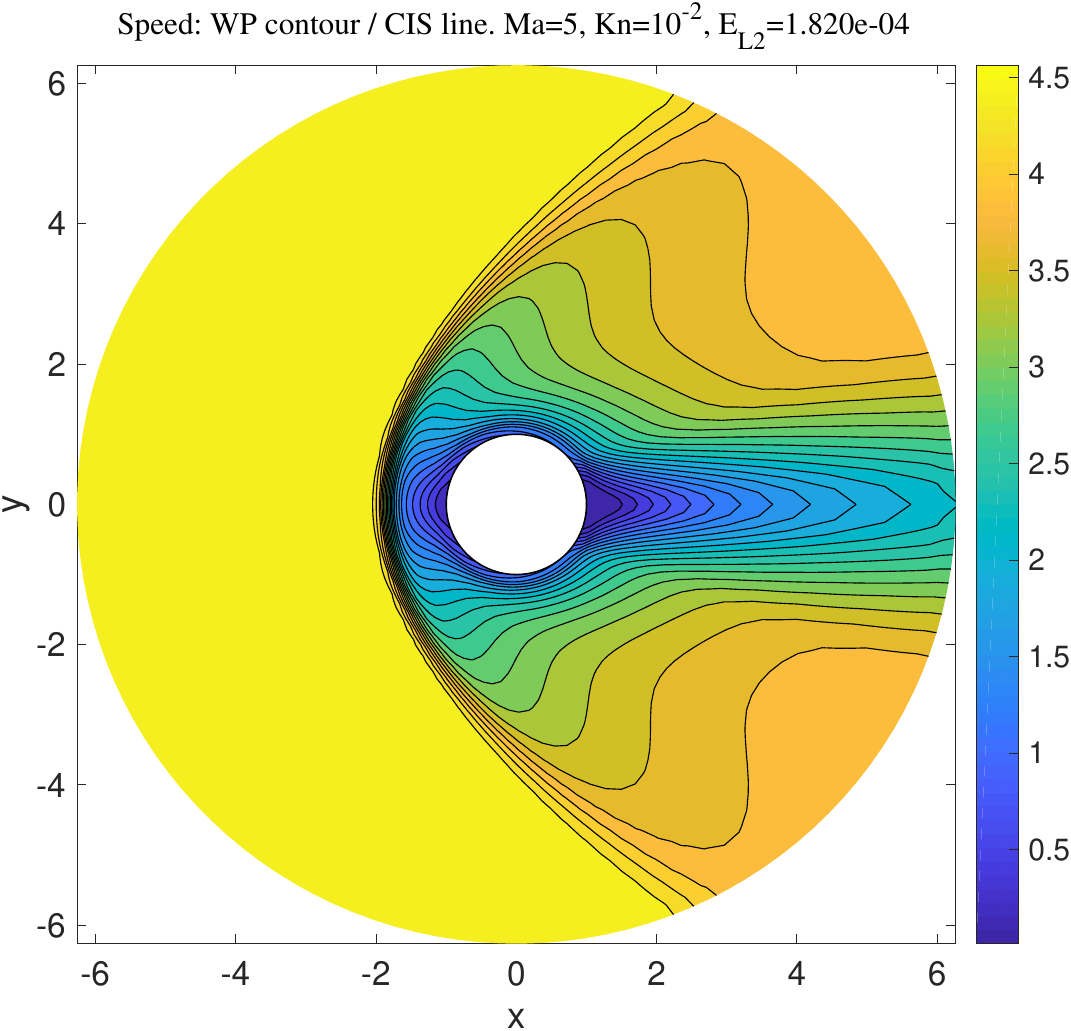}}
\caption{Speed}
\end{subfigure}
\par\smallskip
\begin{subfigure}[t]{0.4894\figW}\centering
\raisebox{0.0000\figW}{\includegraphics[width=0.4894\figW]{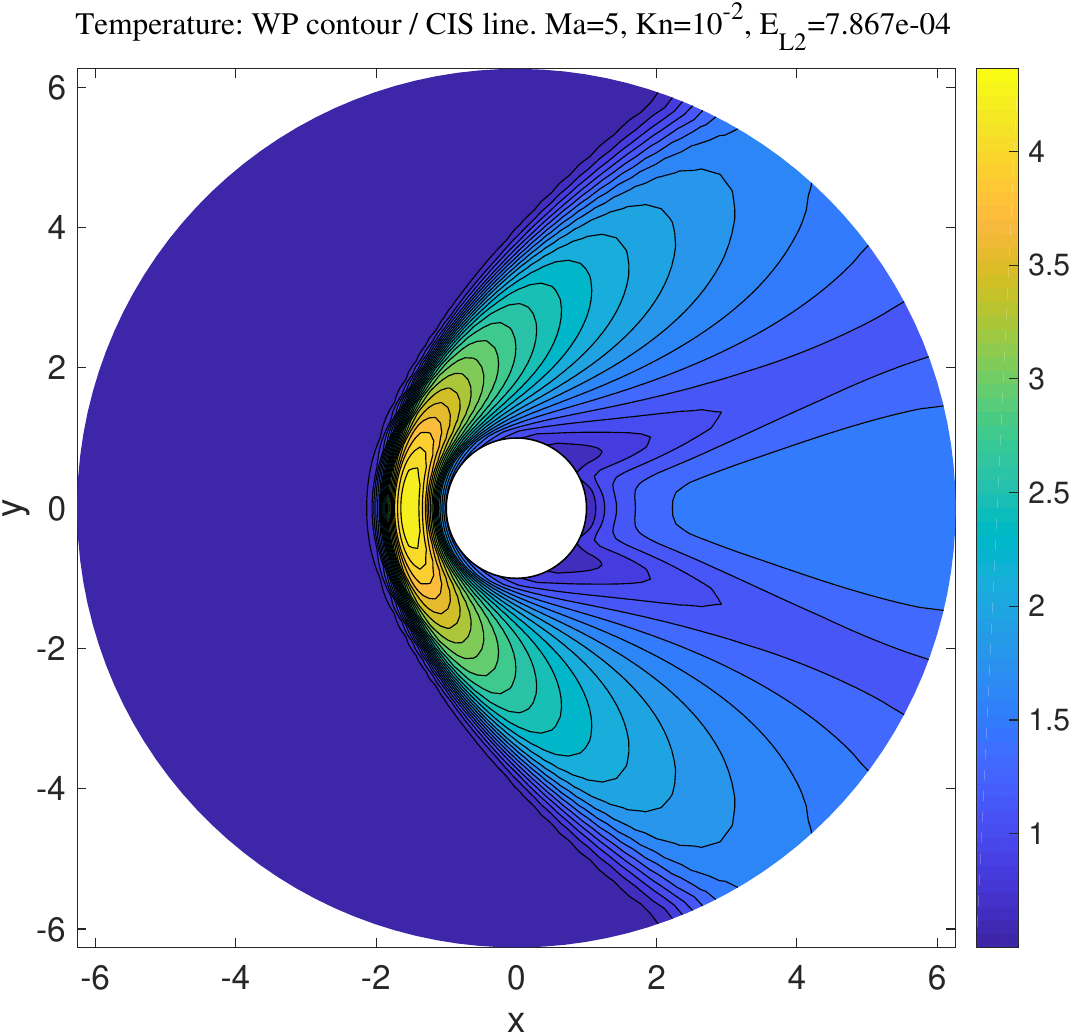}}
\caption{Temperature}
\end{subfigure}\hspace{0.0200\figW}%
\begin{subfigure}[t]{0.4856\figW}\centering
\raisebox{0.0000\figW}{\includegraphics[width=0.4856\figW]{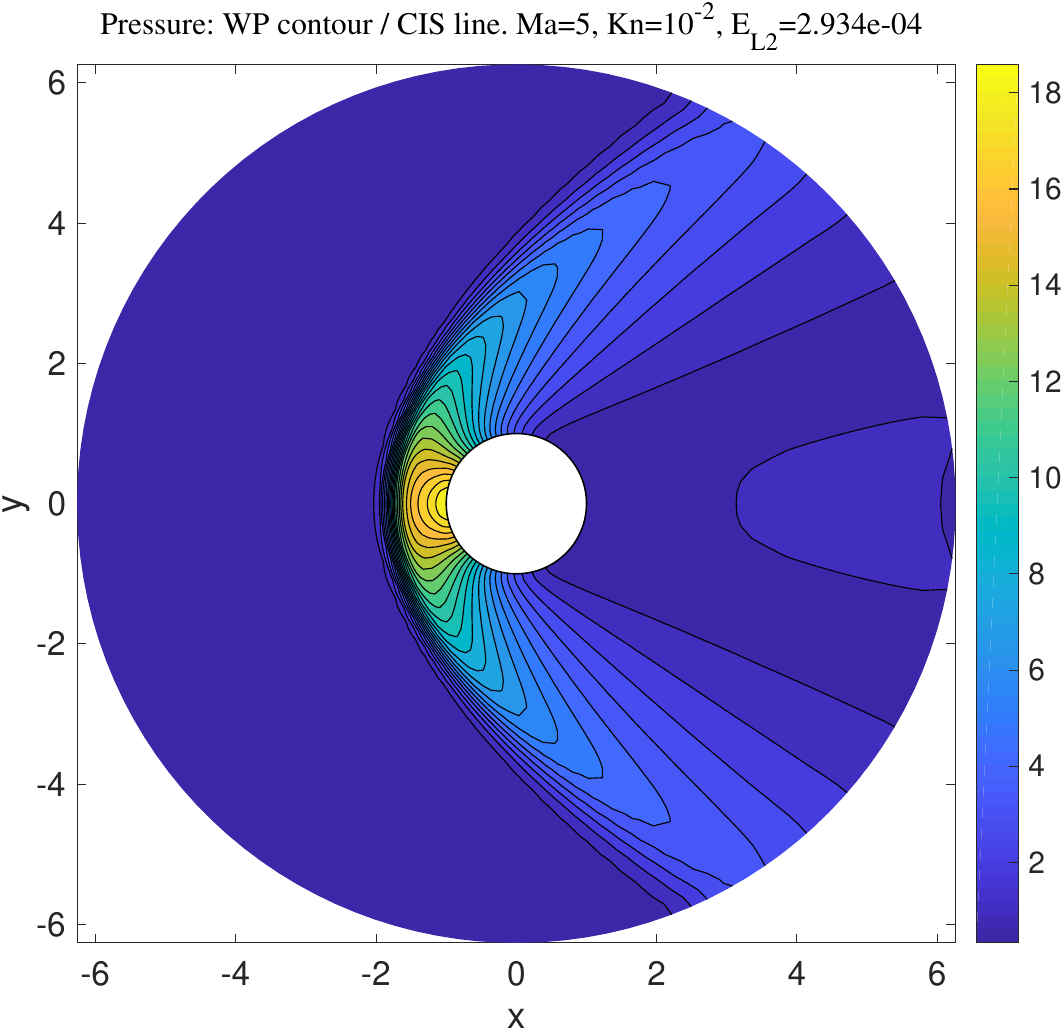}}
\caption{Pressure}
\end{subfigure}
\caption{Mach-5 flow past a cylinder, $\varepsilon=10^{-2}$: density, speed, temperature and pressure fields of WP compared with CIS. WP is shown by filled contours and the reference by lines at the same levels, and $E_{L_2}$ is printed above each panel.}\label{fig:cyl-kn1em2-fields}
\end{figure}
\begin{figure}[tbp]
\centering\singlespacing\setlength{\figW}{\linewidth}
\begin{subfigure}[t]{0.4892\figW}\centering
\raisebox{0.0000\figW}{\includegraphics[width=0.4892\figW]{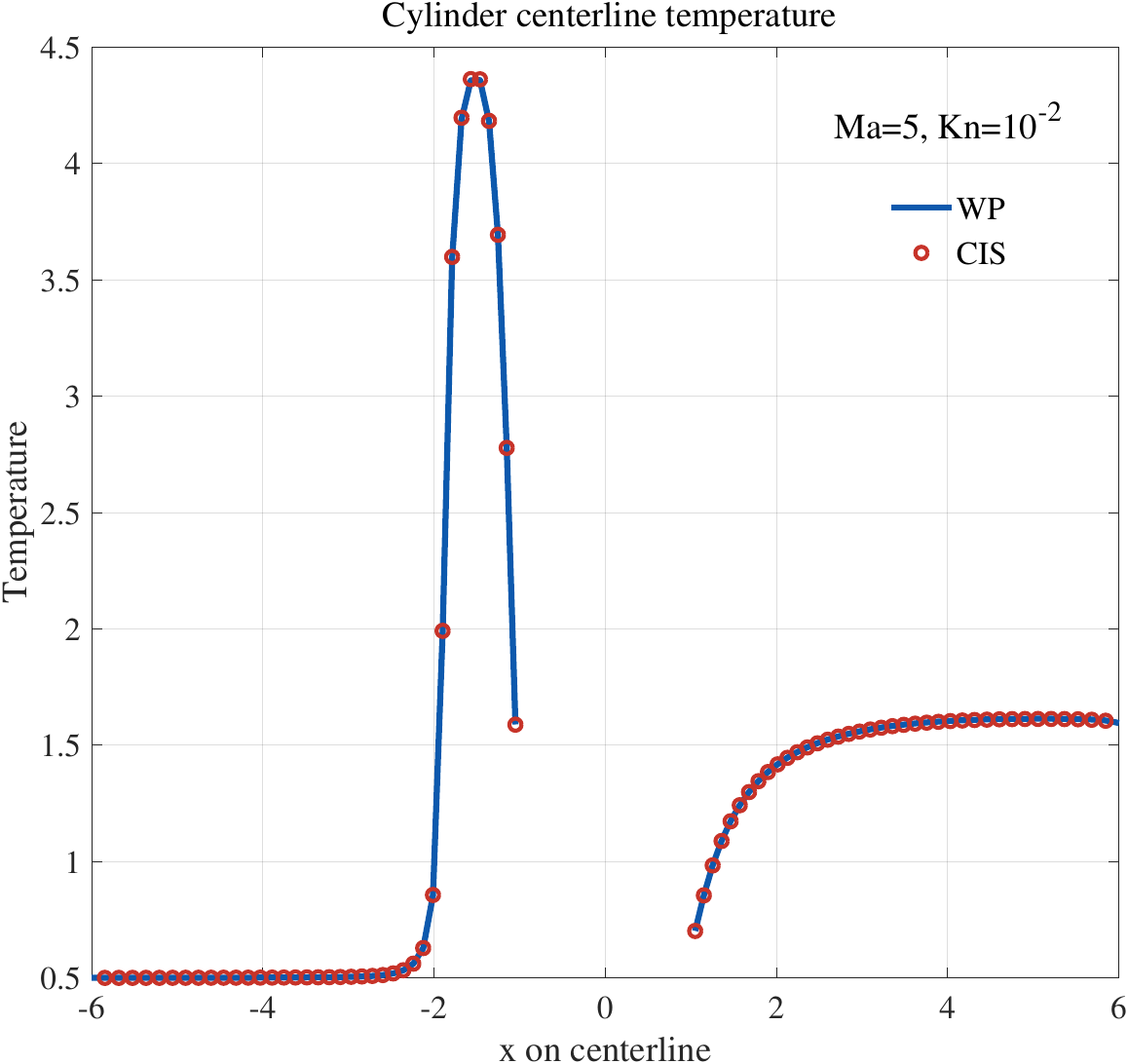}}
\caption{Centreline temperature}
\end{subfigure}\hspace{0.0200\figW}%
\begin{subfigure}[t]{0.4858\figW}\centering
\raisebox{0.0000\figW}{\includegraphics[width=0.4858\figW]{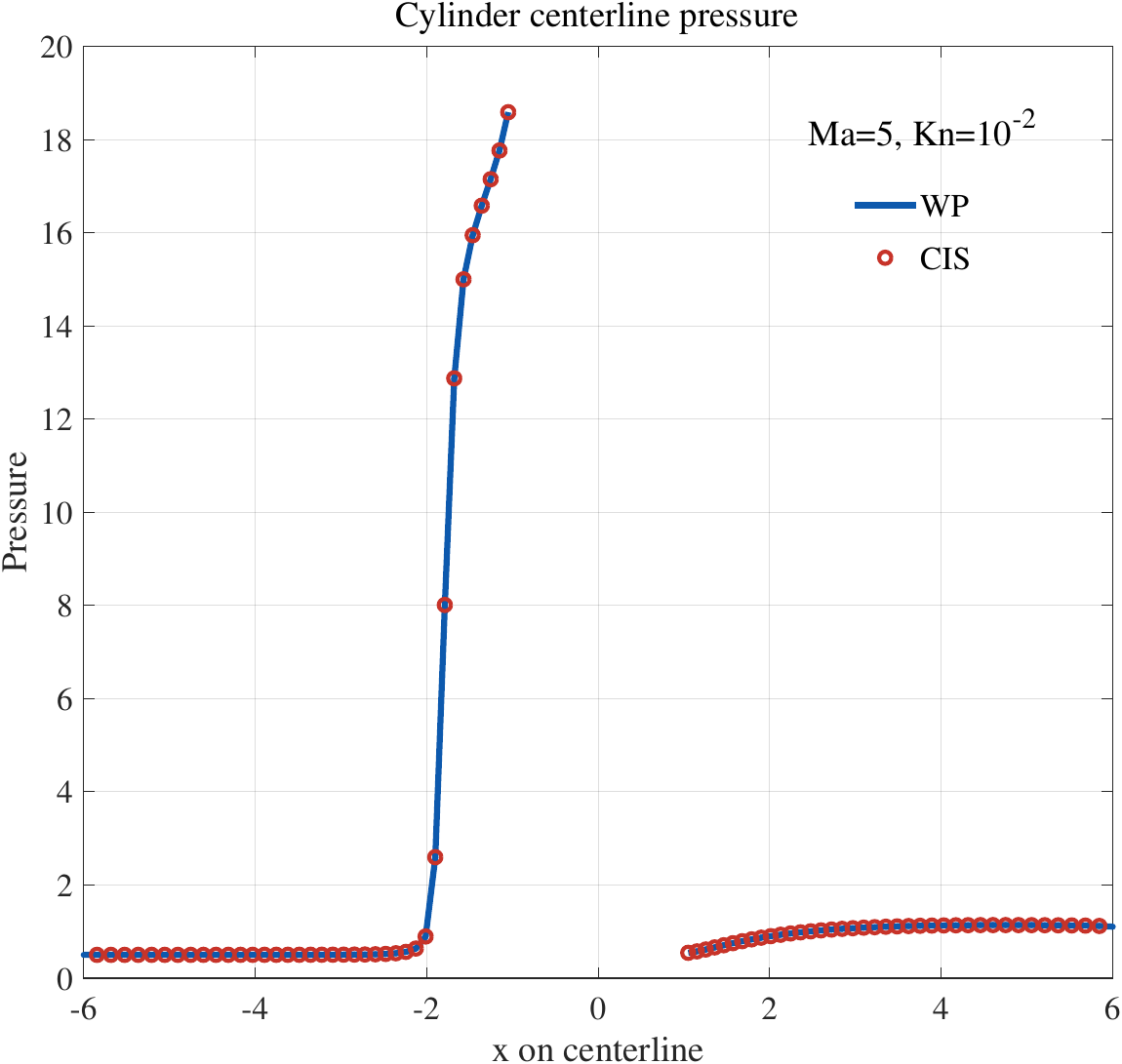}}
\caption{Centreline pressure}
\end{subfigure}
\par\smallskip
\begin{subfigure}[t]{0.4892\figW}\centering
\raisebox{0.0000\figW}{\includegraphics[width=0.4892\figW]{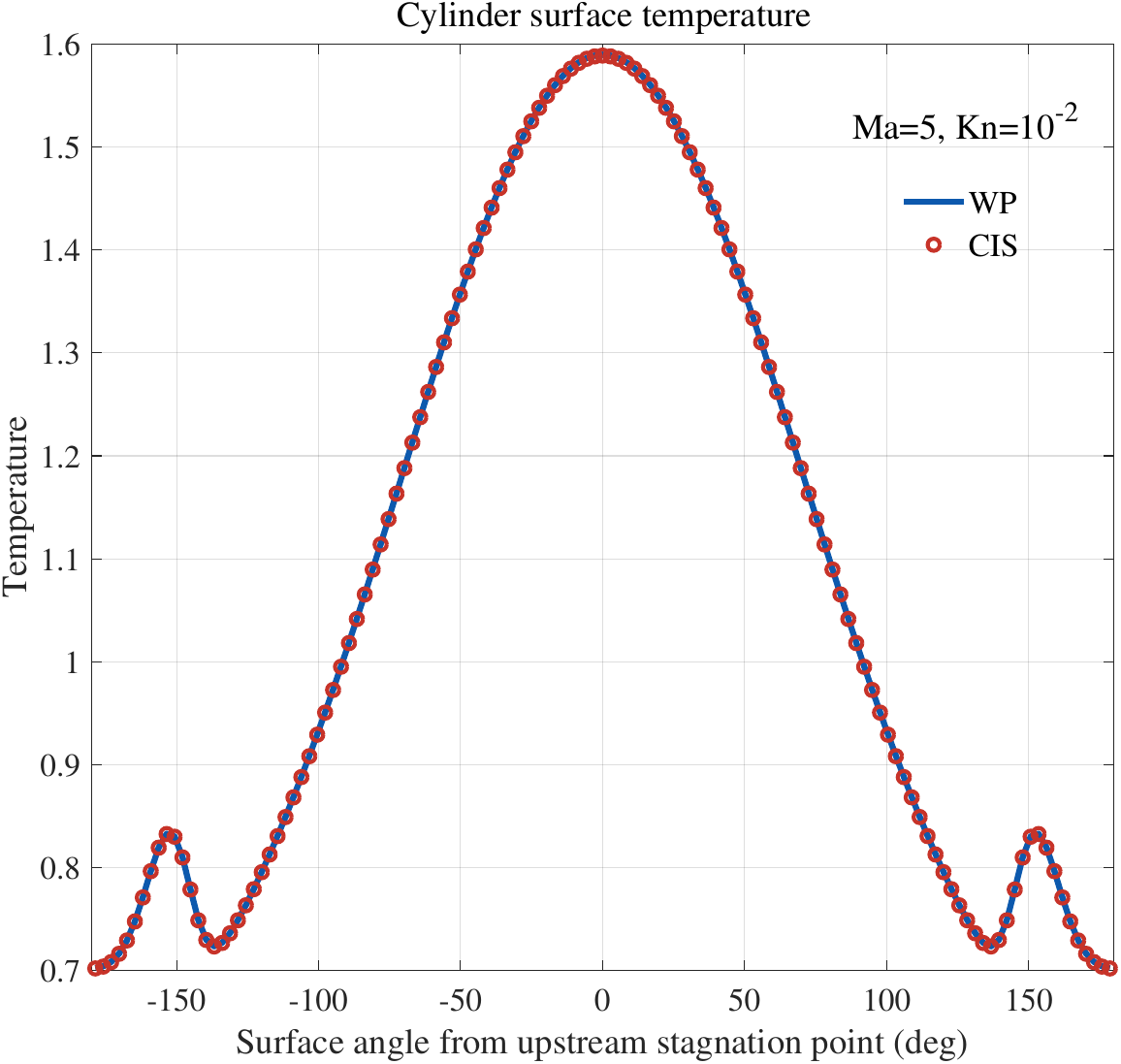}}
\caption{Surface temperature}
\end{subfigure}\hspace{0.0200\figW}%
\begin{subfigure}[t]{0.4858\figW}\centering
\raisebox{0.0000\figW}{\includegraphics[width=0.4858\figW]{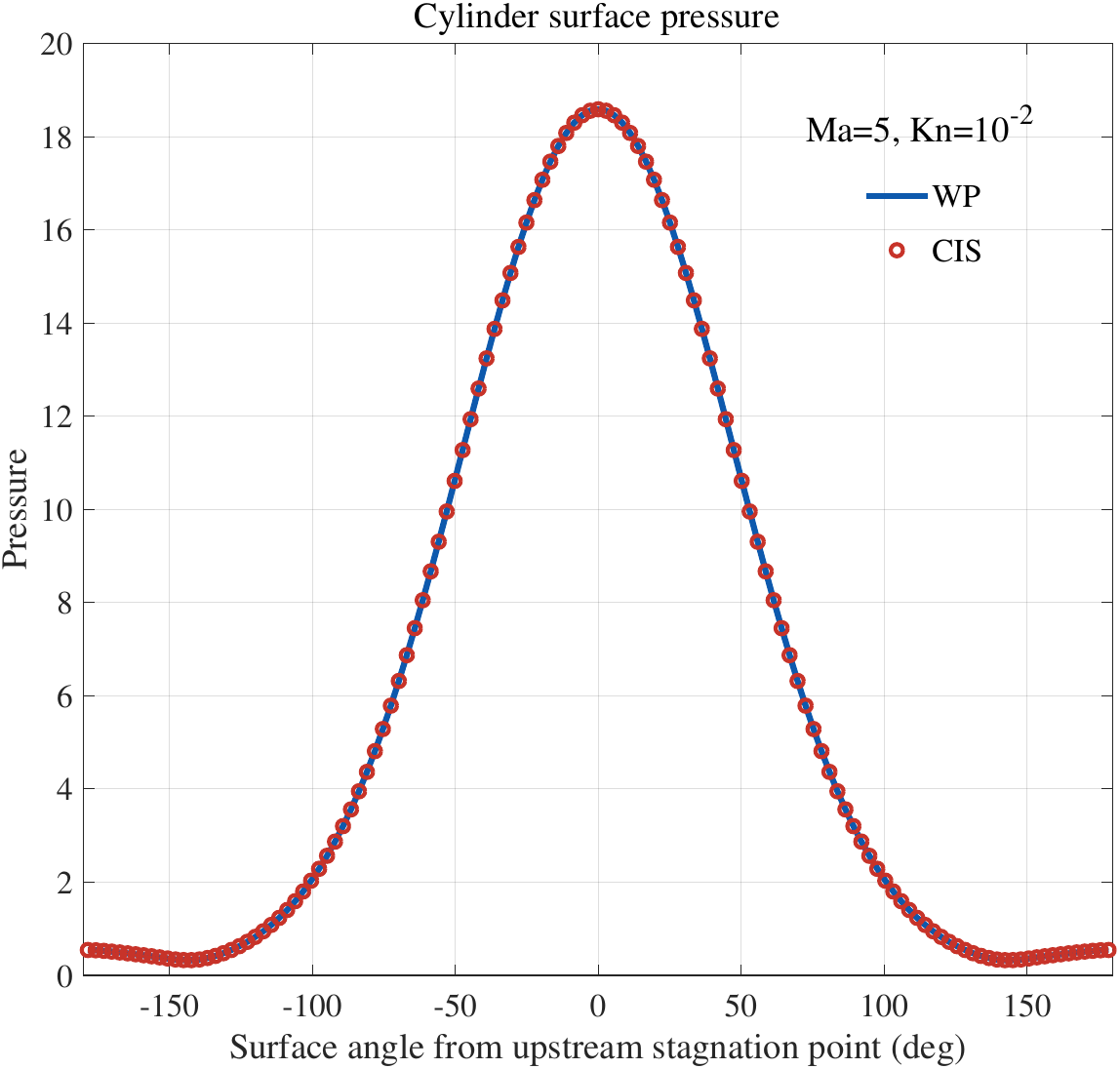}}
\caption{Surface pressure}
\end{subfigure}
\caption{Mach-5 flow past a cylinder, $\varepsilon=10^{-2}$: temperature and pressure of WP and CIS along the centreline (a, b) and along the body surface (c, d). WP is shown by lines and the reference by symbols.}\label{fig:cyl-kn1em2-profiles}
\end{figure}
\begin{figure}[tbp]
\centering\singlespacing\setlength{\figW}{\linewidth}
\begin{subfigure}[t]{0.4783\figW}\centering
\raisebox{0.0128\figW}{\includegraphics[width=0.4783\figW]{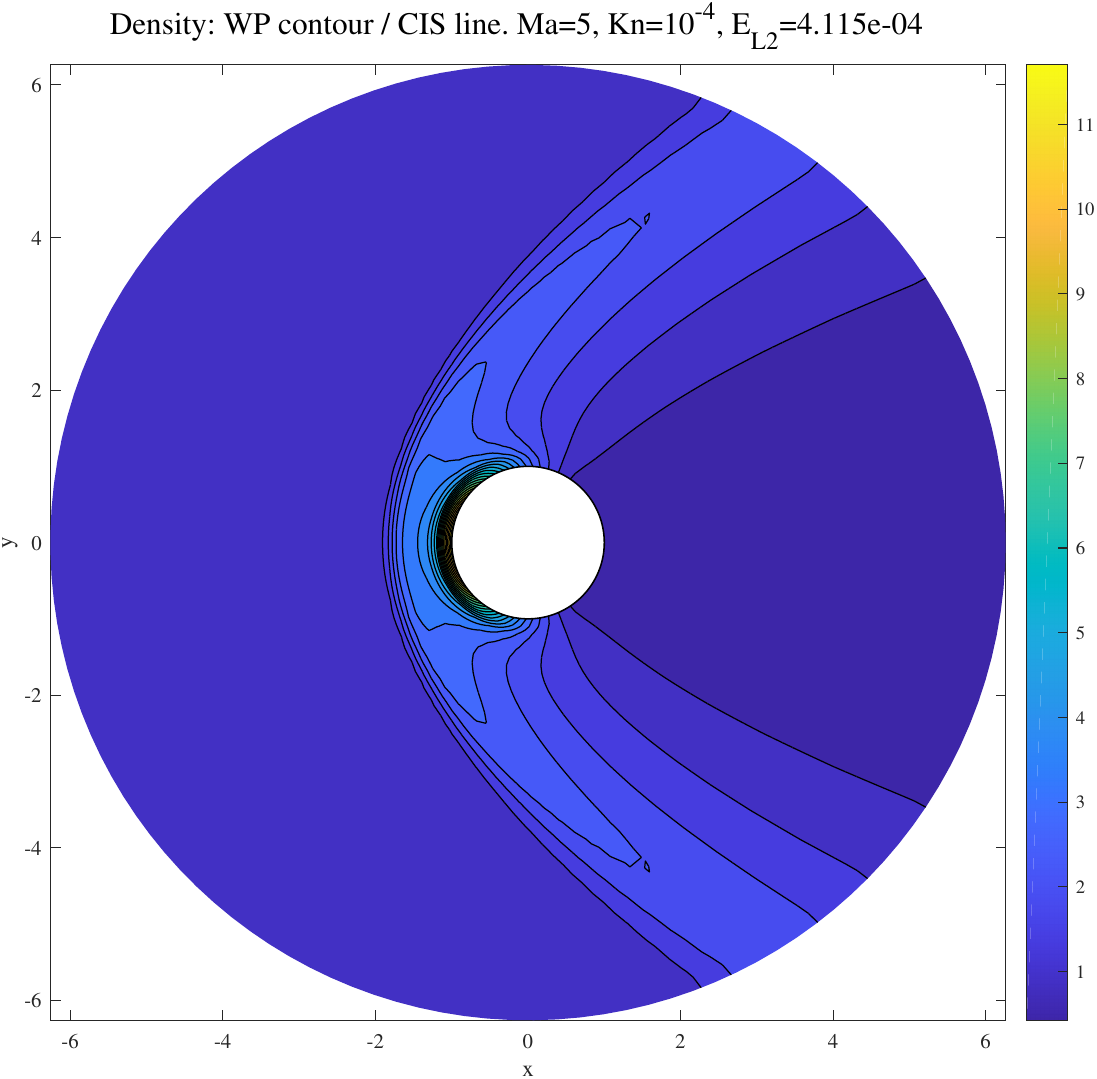}}
\caption{Density}
\end{subfigure}\hspace{0.0200\figW}%
\begin{subfigure}[t]{0.4967\figW}\centering
\raisebox{0.0000\figW}{\includegraphics[width=0.4967\figW]{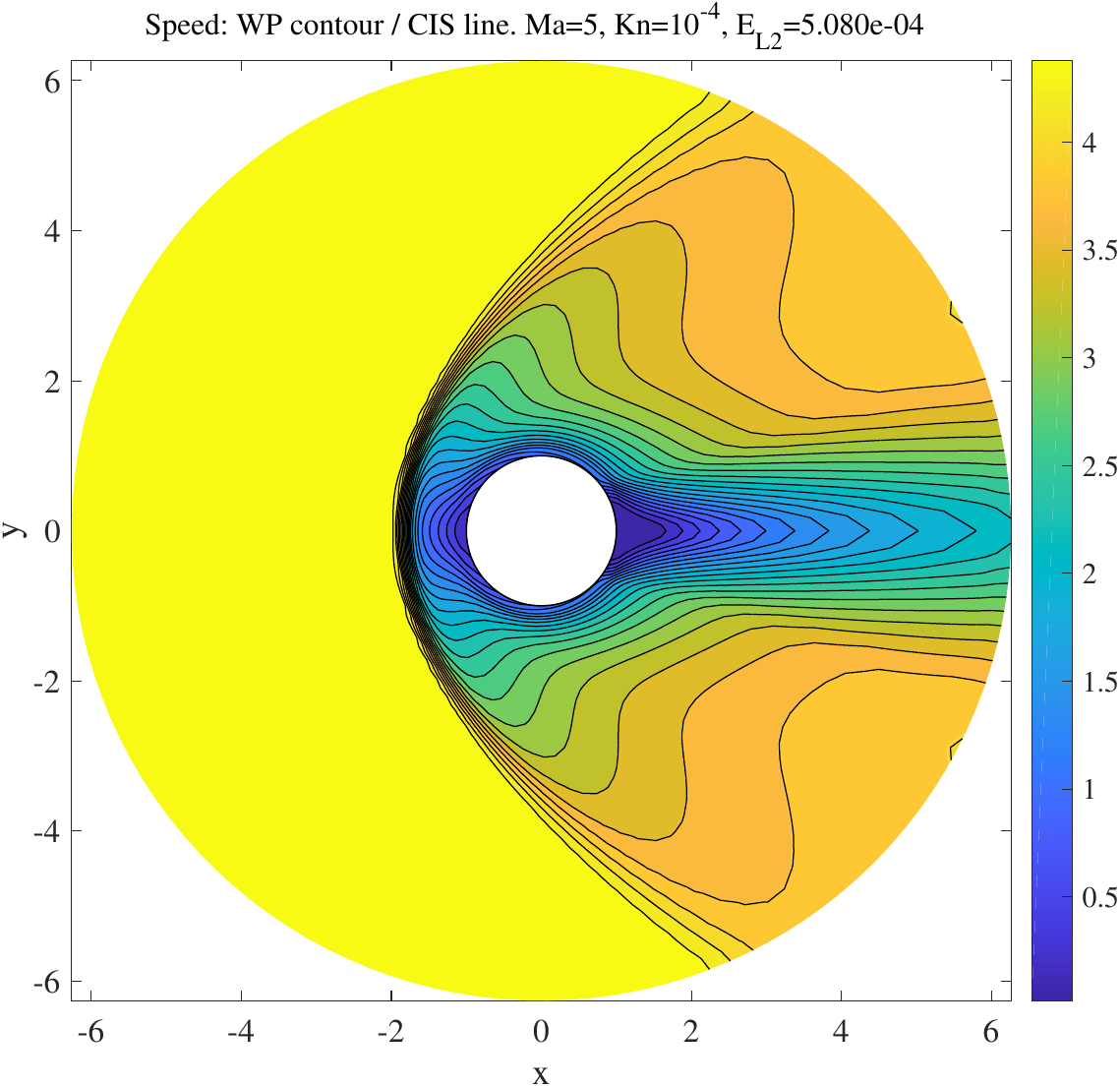}}
\caption{Speed}
\end{subfigure}
\par\smallskip
\begin{subfigure}[t]{0.4967\figW}\centering
\raisebox{0.0000\figW}{\includegraphics[width=0.4967\figW]{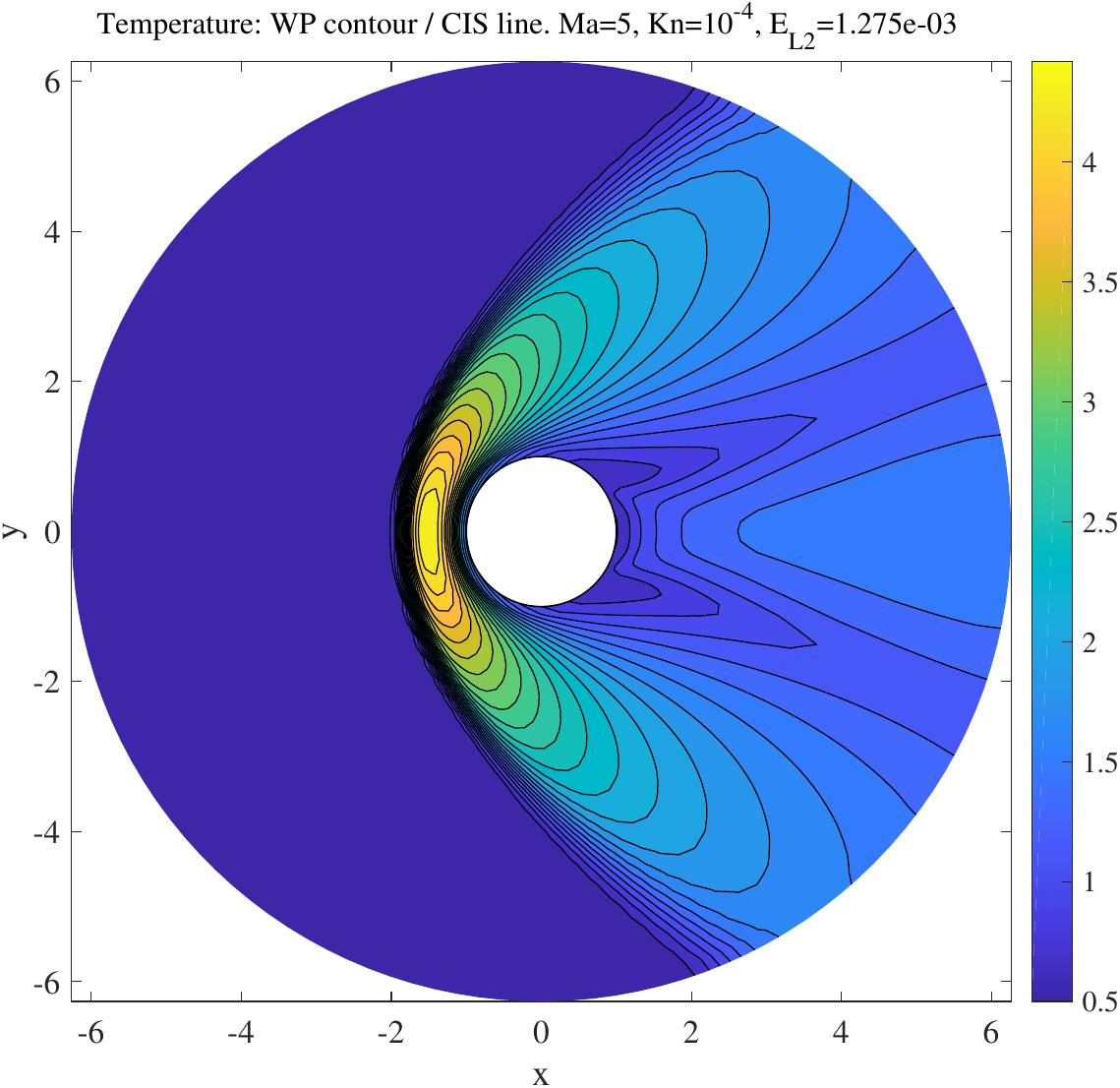}}
\caption{Temperature}
\end{subfigure}\hspace{0.0200\figW}%
\begin{subfigure}[t]{0.4783\figW}\centering
\raisebox{0.0128\figW}{\includegraphics[width=0.4783\figW]{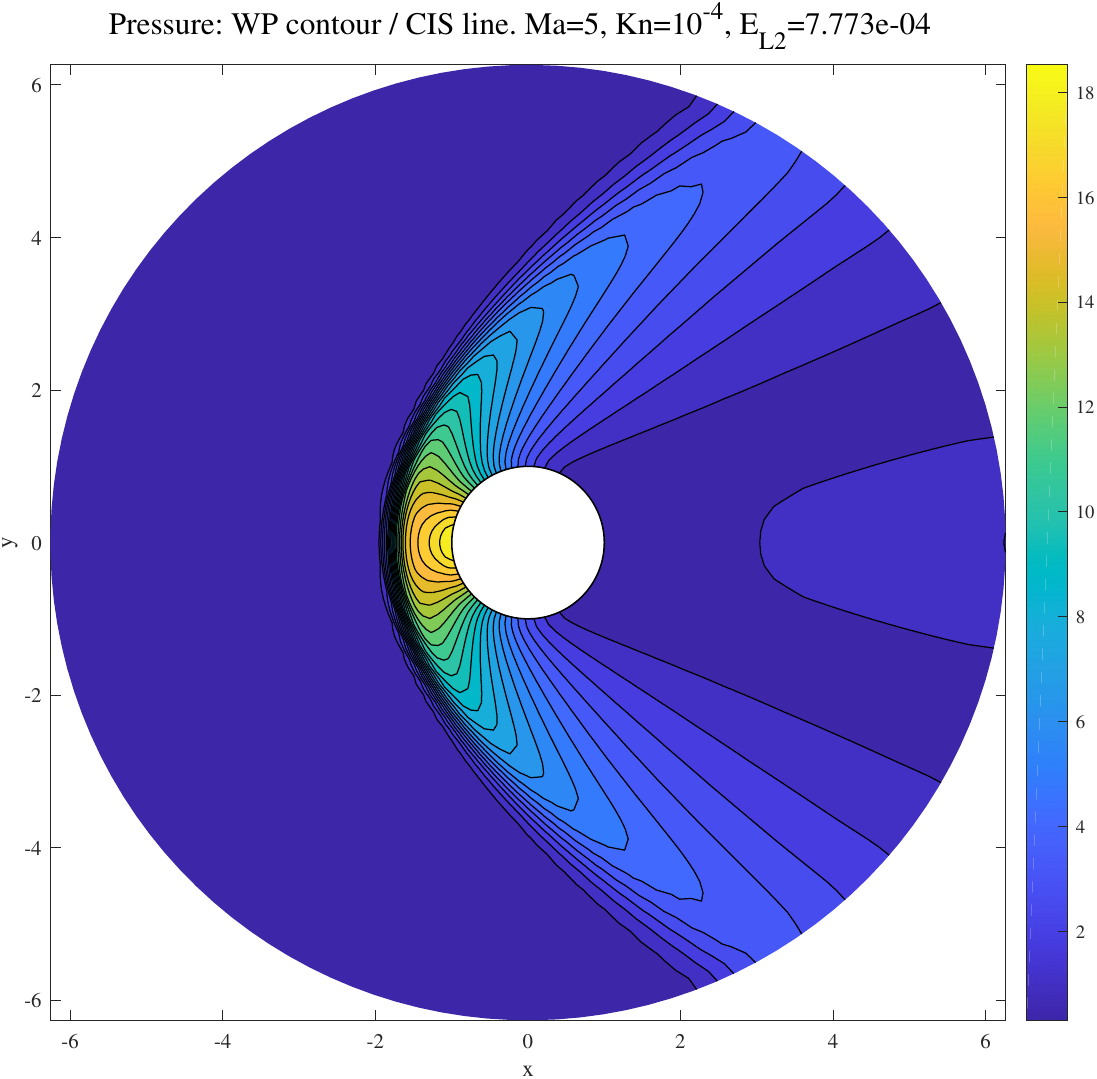}}
\caption{Pressure}
\end{subfigure}
\caption{Mach-5 flow past a cylinder, $\varepsilon=10^{-4}$: density, speed, temperature and pressure fields of WP compared with CIS of the Shakhov model~\cite{liu_wpd_shakhov}. WP is shown by filled contours and the reference by lines at the same levels, and $E_{L_2}$ is printed above each panel.}\label{fig:cyl-kn1em4-fields}
\end{figure}
\begin{figure}[tbp]
\centering\singlespacing\setlength{\figW}{\linewidth}
\begin{subfigure}[t]{0.9706\figW}\centering
\makebox[\linewidth]{\raisebox{0.0027\figW}{\includegraphics[width=0.4770\figW]{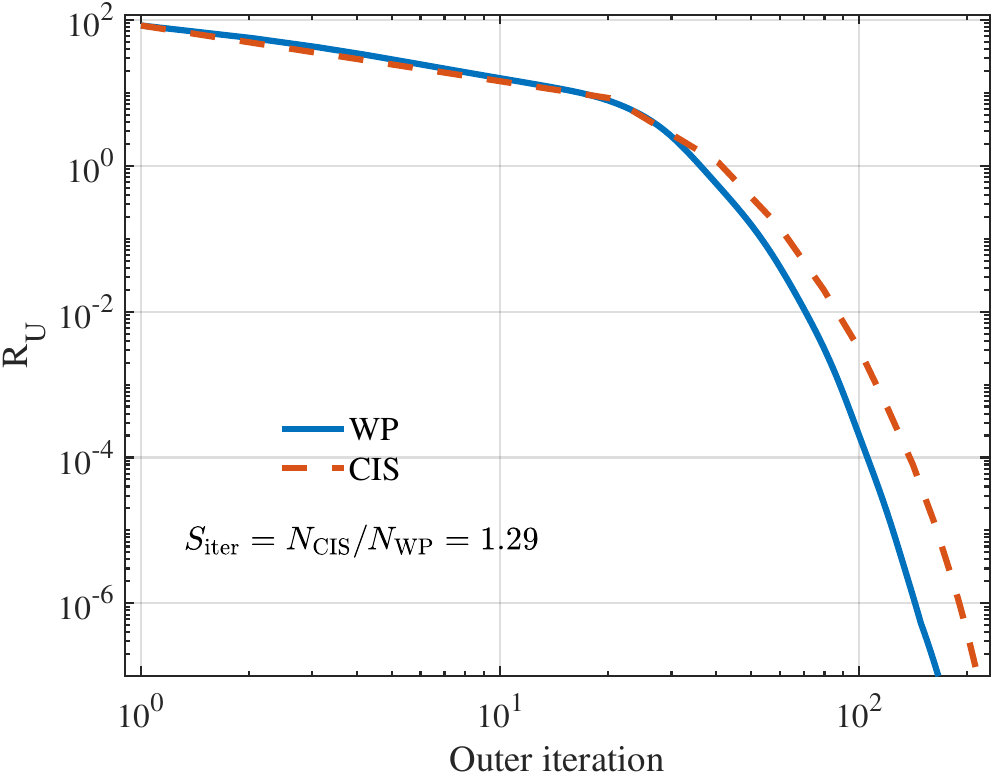}}\hspace{0.0200\figW}\raisebox{0.0000\figW}{\includegraphics[width=0.4736\figW]{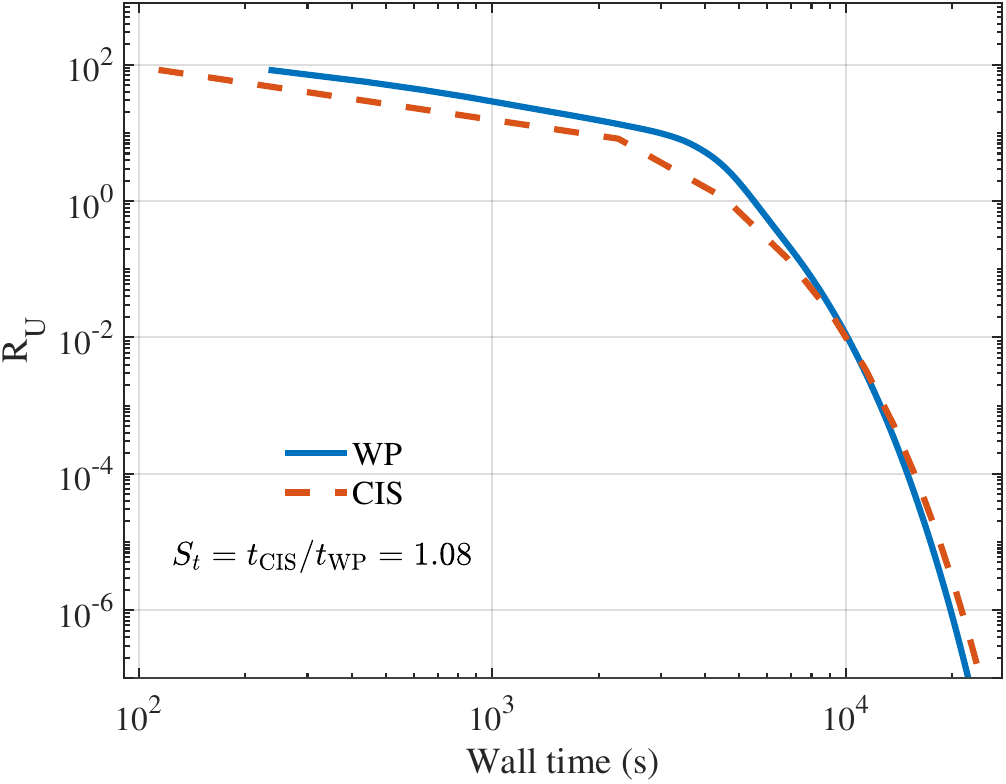}}}
\caption{$\varepsilon=1$}\label{fig:cyl-res-kn1}
\end{subfigure}
\par\smallskip
\begin{subfigure}[t]{0.9706\figW}\centering
\makebox[\linewidth]{\raisebox{0.0000\figW}{\includegraphics[width=0.4941\figW]{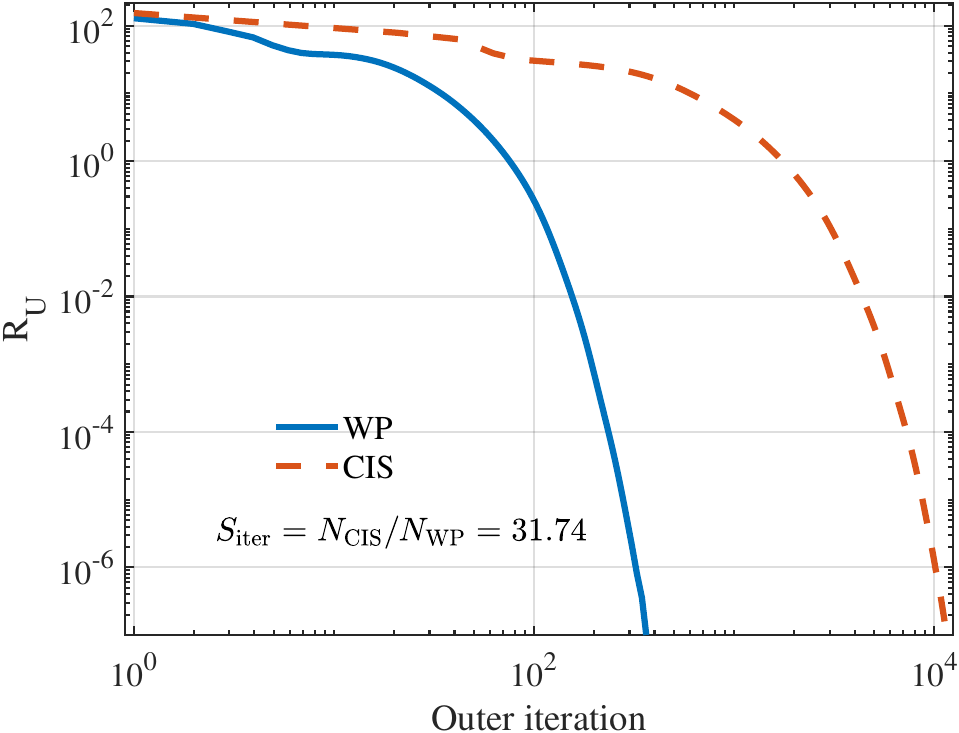}}\hspace{0.0200\figW}\raisebox{0.0053\figW}{\includegraphics[width=0.4564\figW]{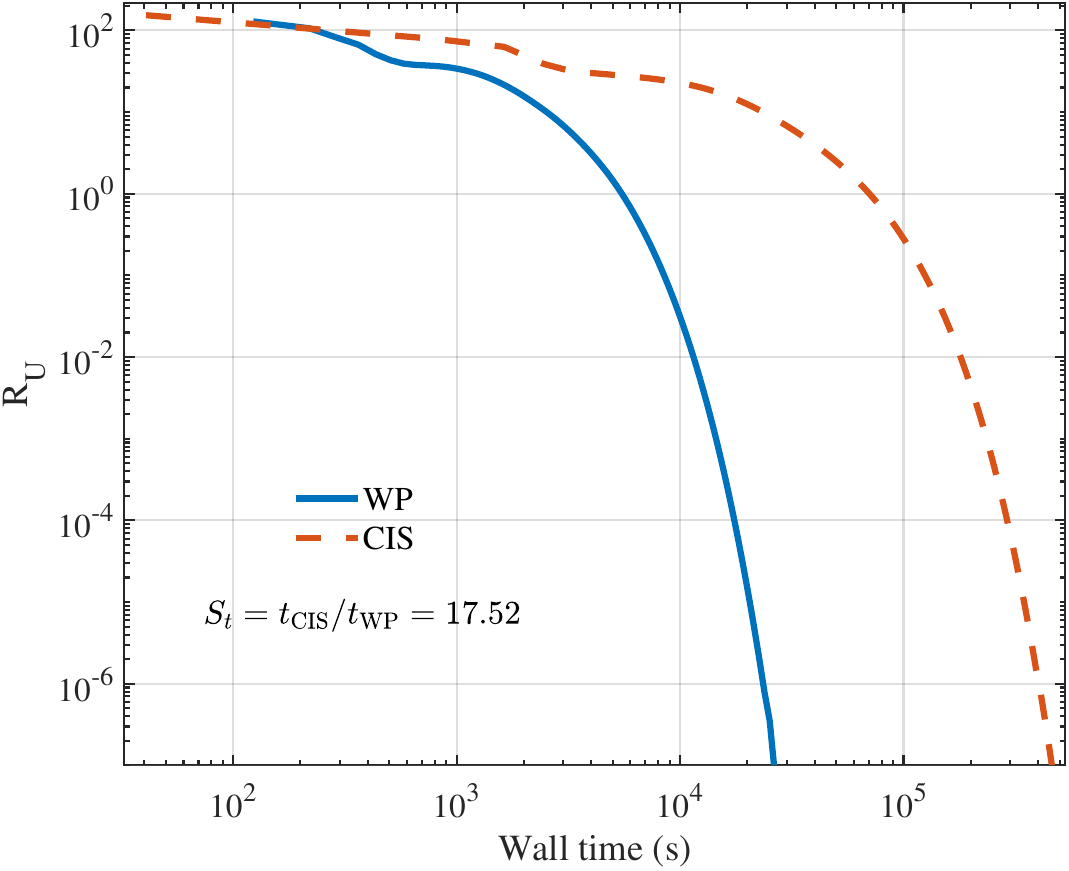}}}
\caption{$\varepsilon=10^{-2}$}\label{fig:cyl-res-kn1em2}
\end{subfigure}
\par\smallskip
\begin{subfigure}[t]{0.9706\figW}\centering
\makebox[\linewidth]{\raisebox{0.0000\figW}{\includegraphics[width=0.4997\figW]{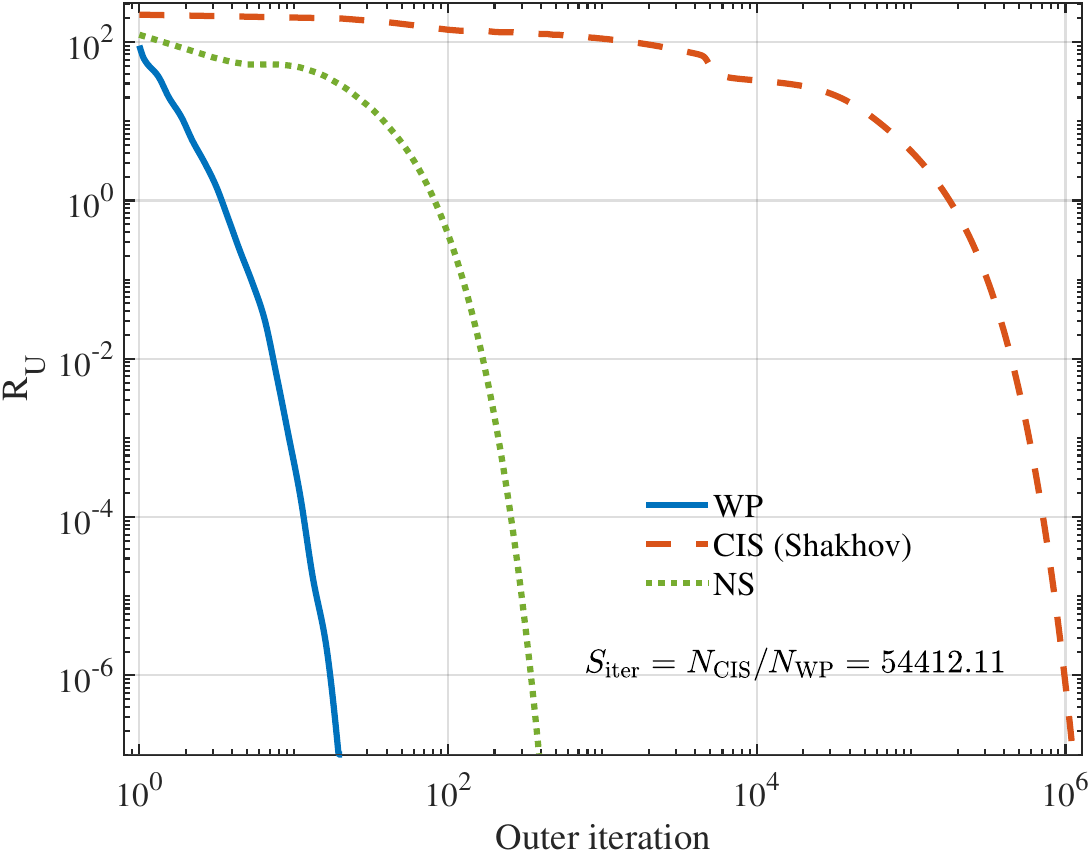}}\hspace{0.0200\figW}\raisebox{0.0056\figW}{\includegraphics[width=0.4509\figW]{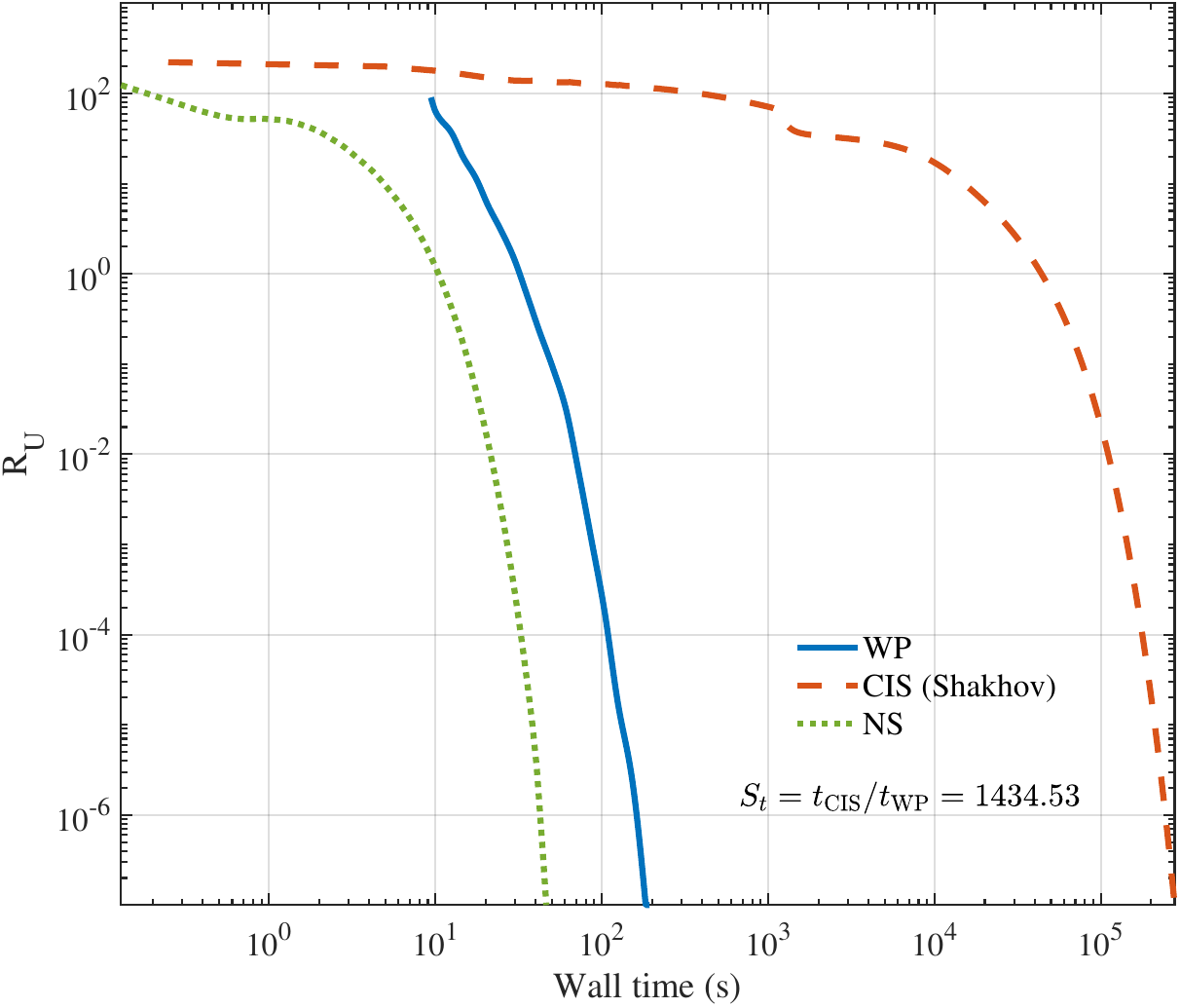}}}
\caption{$\varepsilon=10^{-4}$}\label{fig:cyl-res-kn1em4}
\end{subfigure}
\caption{Mach-5 flow past a cylinder: histories of the normalized macroscopic residual of WP and the reference solvers against outer iteration (left) and wall time (right). At $\varepsilon=10^{-4}$ both panels show CIS of the Shakhov model~\cite{liu_wpd_shakhov} and the NS solver.}\label{fig:cyl-residuals}
\end{figure}

\FloatBarrier
\subsection{Apollo flow}
The last test is Mach-5 flow around a two-dimensional section of the Apollo
capsule, with a stationary isothermal diffuse wall and a uniform far field.
The body-fitted O-grid has $64\times38$ cells at $\varepsilon=1$, where the
far-field boundary is placed farther out, and $64\times24$ cells at
$\varepsilon=10^{-2}$ and $10^{-4}$. The velocity grid covers $[-9,9]^3$ with
$96\times96\times24$ nodes at the two larger Knudsen numbers and
$32\times32\times12$ nodes at $\varepsilon=10^{-4}$. The controls and configurations of the
coupled WP iteration are those of the cylinder, and
the reference is again CIS at $\varepsilon=1$ and $10^{-2}$; at
$\varepsilon=10^{-4}$ the fields are compared with CIS of the Shakhov model and
the profiles with the NS solution.

At $\varepsilon=1$ the WP and CIS fields in Fig.~\ref{fig:apollo-kn1-fields}
differ by $3.08\times10^{-5}$ in density, $1.01\times10^{-5}$ in temperature,
$4.69\times10^{-6}$ in speed and $9.95\times10^{-6}$ in pressure, and the
pressure-drag coefficients of WP and CIS, both $1.38$, differ by
$2.19\times10^{-6}$. The centreline and surface profiles in
Fig.~\ref{fig:apollo-kn1-profiles} follow the compression ahead of the
forebody and the relaxation around the capsule, and the four molecular
distributions sampled on the centreline in Fig.~\ref{fig:apollo-kn1-dist}
show an asymmetric non-Maxwellian structure.

\begin{figure}[tbp]
\centering\singlespacing\setlength{\figW}{\linewidth}
\begin{subfigure}[t]{0.4856\figW}\centering
\raisebox{0.0000\figW}{\includegraphics[width=0.4856\figW]{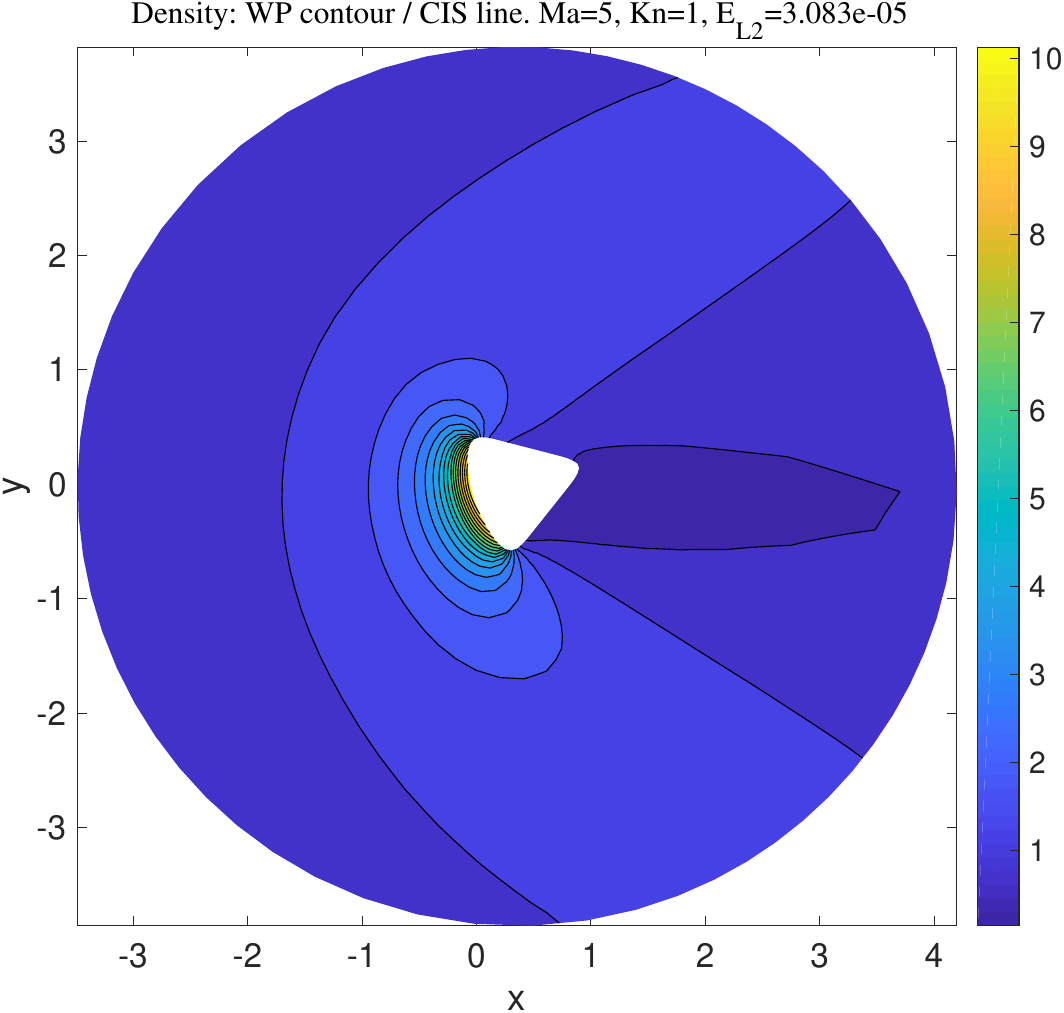}}
\caption{Density}
\end{subfigure}\hspace{0.0200\figW}%
\begin{subfigure}[t]{0.4894\figW}\centering
\raisebox{0.0000\figW}{\includegraphics[width=0.4894\figW]{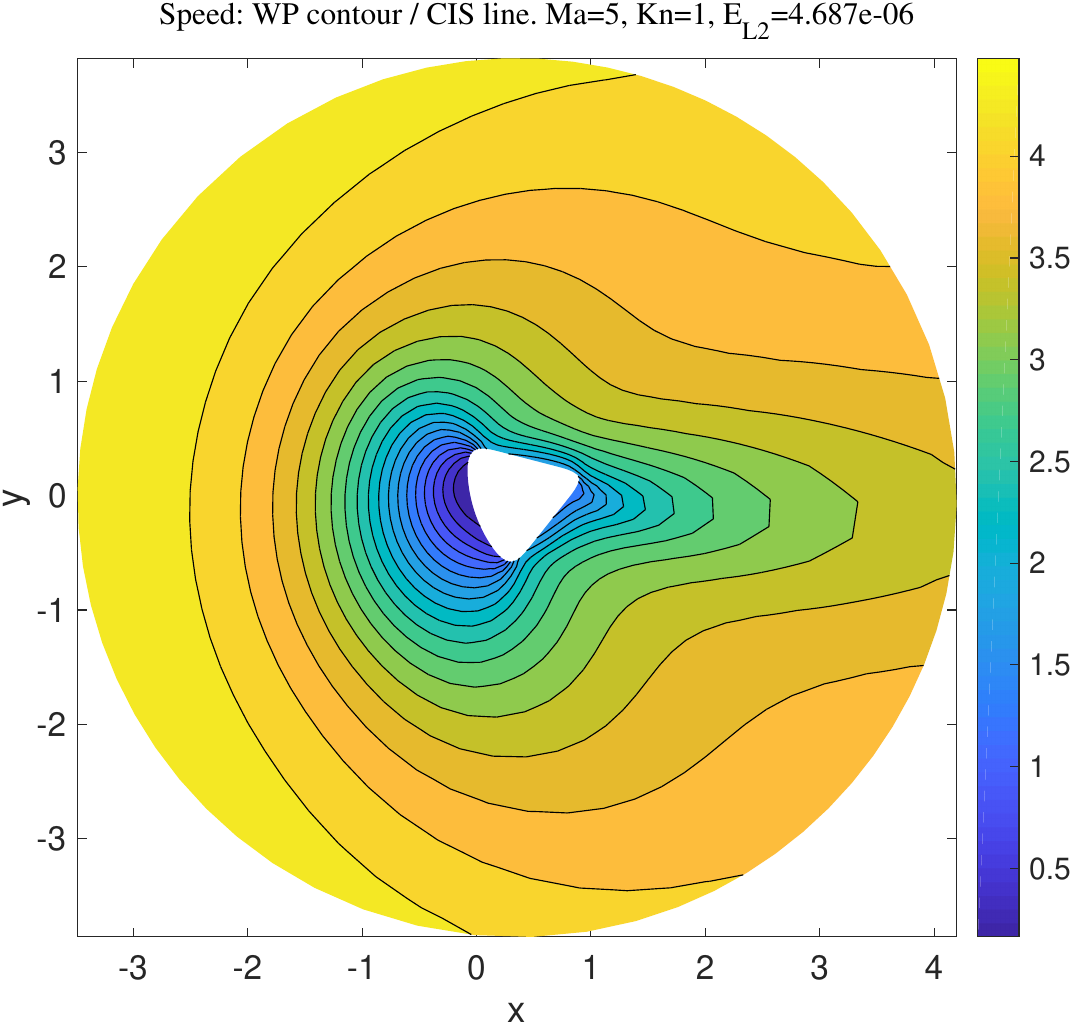}}
\caption{Speed}
\end{subfigure}
\par\smallskip
\begin{subfigure}[t]{0.4894\figW}\centering
\raisebox{0.0000\figW}{\includegraphics[width=0.4894\figW]{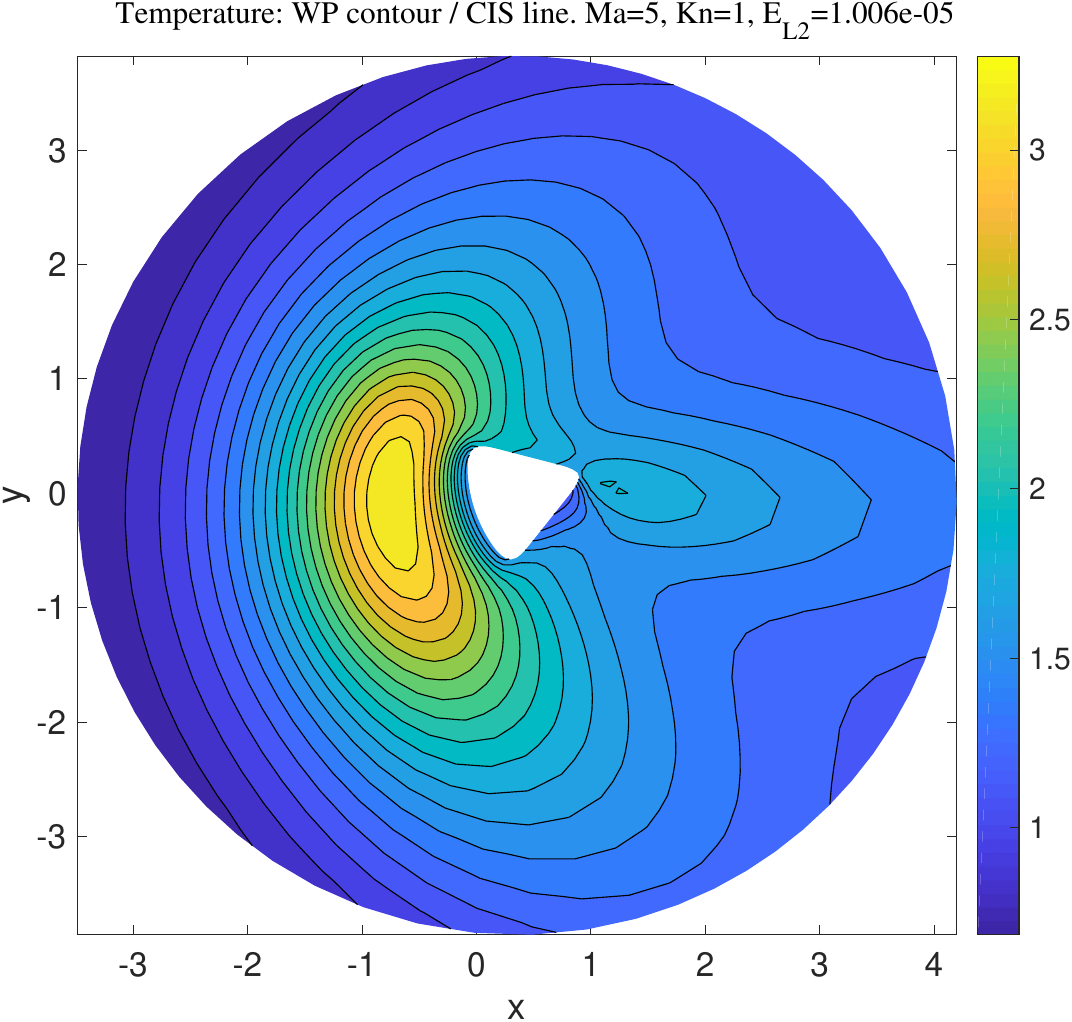}}
\caption{Temperature}
\end{subfigure}\hspace{0.0200\figW}%
\begin{subfigure}[t]{0.4856\figW}\centering
\raisebox{0.0000\figW}{\includegraphics[width=0.4856\figW]{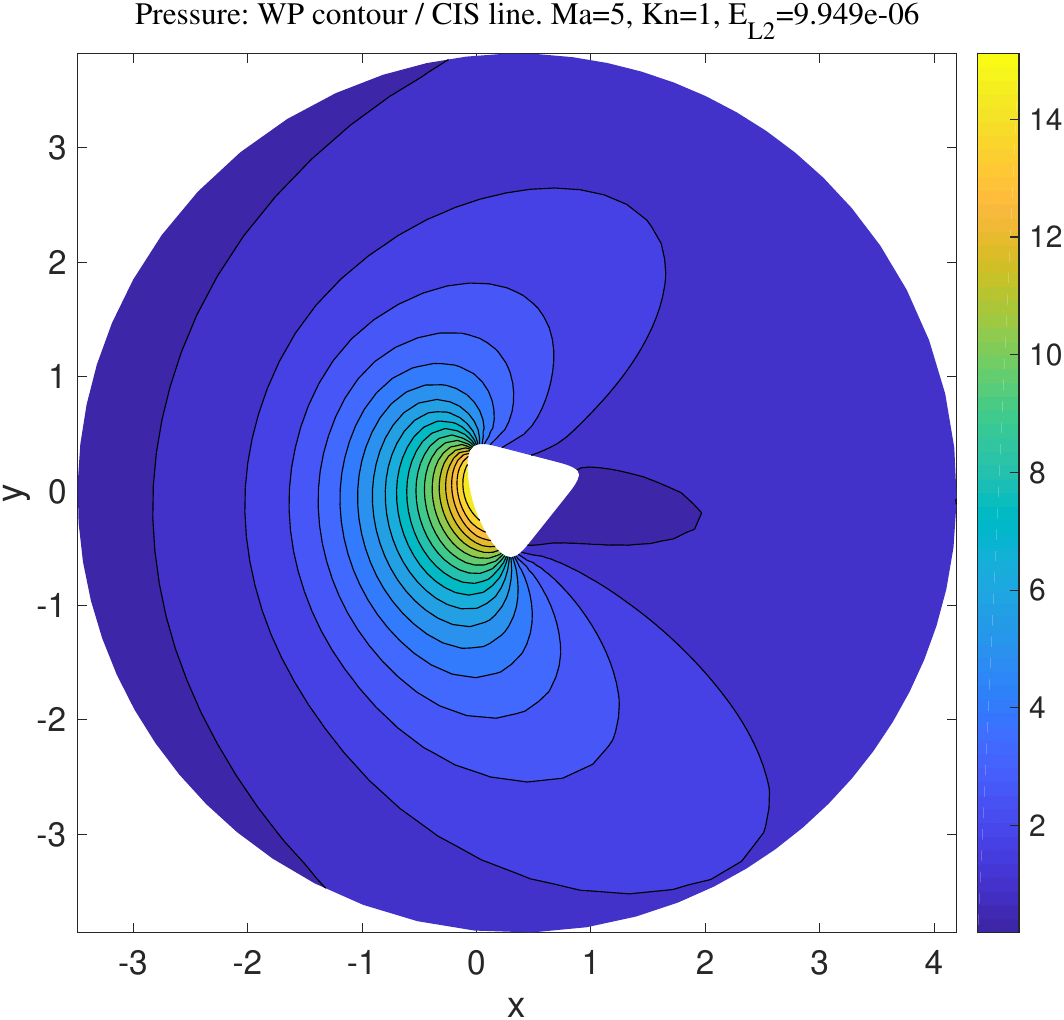}}
\caption{Pressure}
\end{subfigure}
\caption{Mach-5 flow around the Apollo section, $\varepsilon=1$: density, speed, temperature and pressure fields of WP compared with CIS. WP is shown by filled contours and the reference by lines at the same levels, and $E_{L_2}$ is printed above each panel.}\label{fig:apollo-kn1-fields}
\end{figure}
\begin{figure}[tbp]
\centering\singlespacing\setlength{\figW}{\linewidth}
\begin{subfigure}[t]{0.4908\figW}\centering
\raisebox{0.0000\figW}{\includegraphics[width=0.4908\figW]{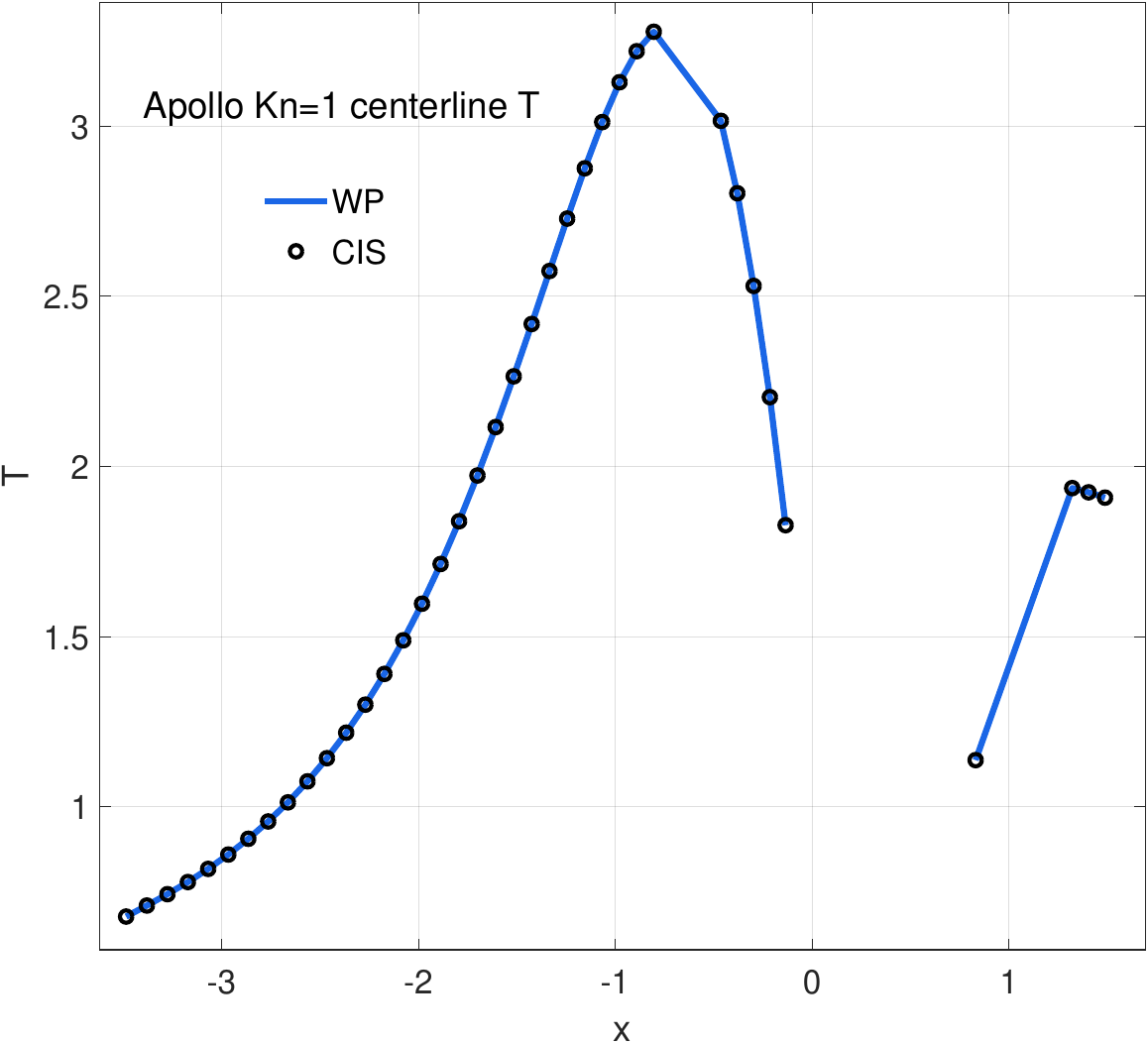}}
\caption{Centreline temperature}
\end{subfigure}\hspace{0.0200\figW}%
\begin{subfigure}[t]{0.4842\figW}\centering
\raisebox{0.0000\figW}{\includegraphics[width=0.4842\figW]{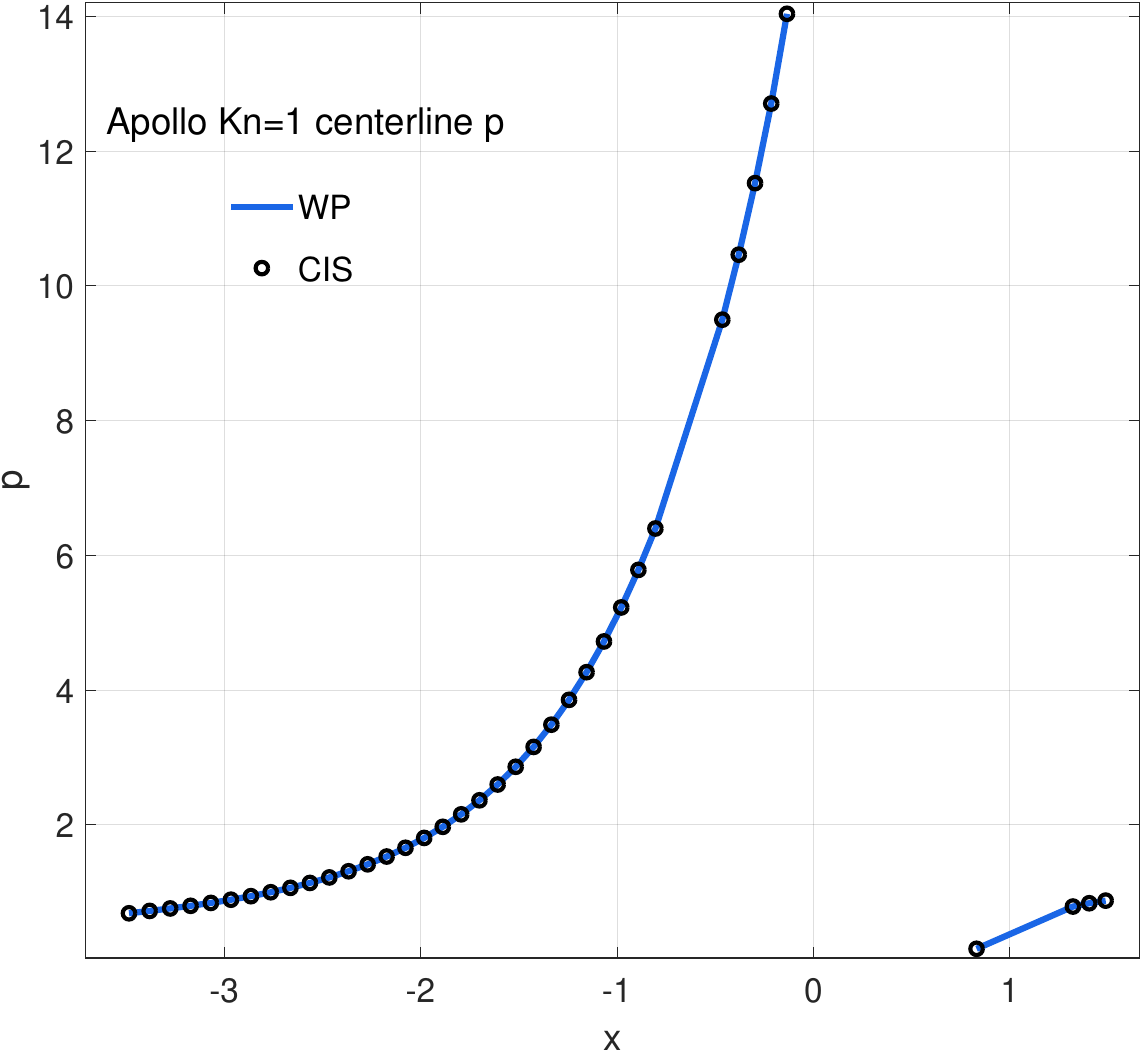}}
\caption{Centreline pressure}
\end{subfigure}
\par\smallskip
\begin{subfigure}[t]{0.4908\figW}\centering
\raisebox{0.0000\figW}{\includegraphics[width=0.4908\figW]{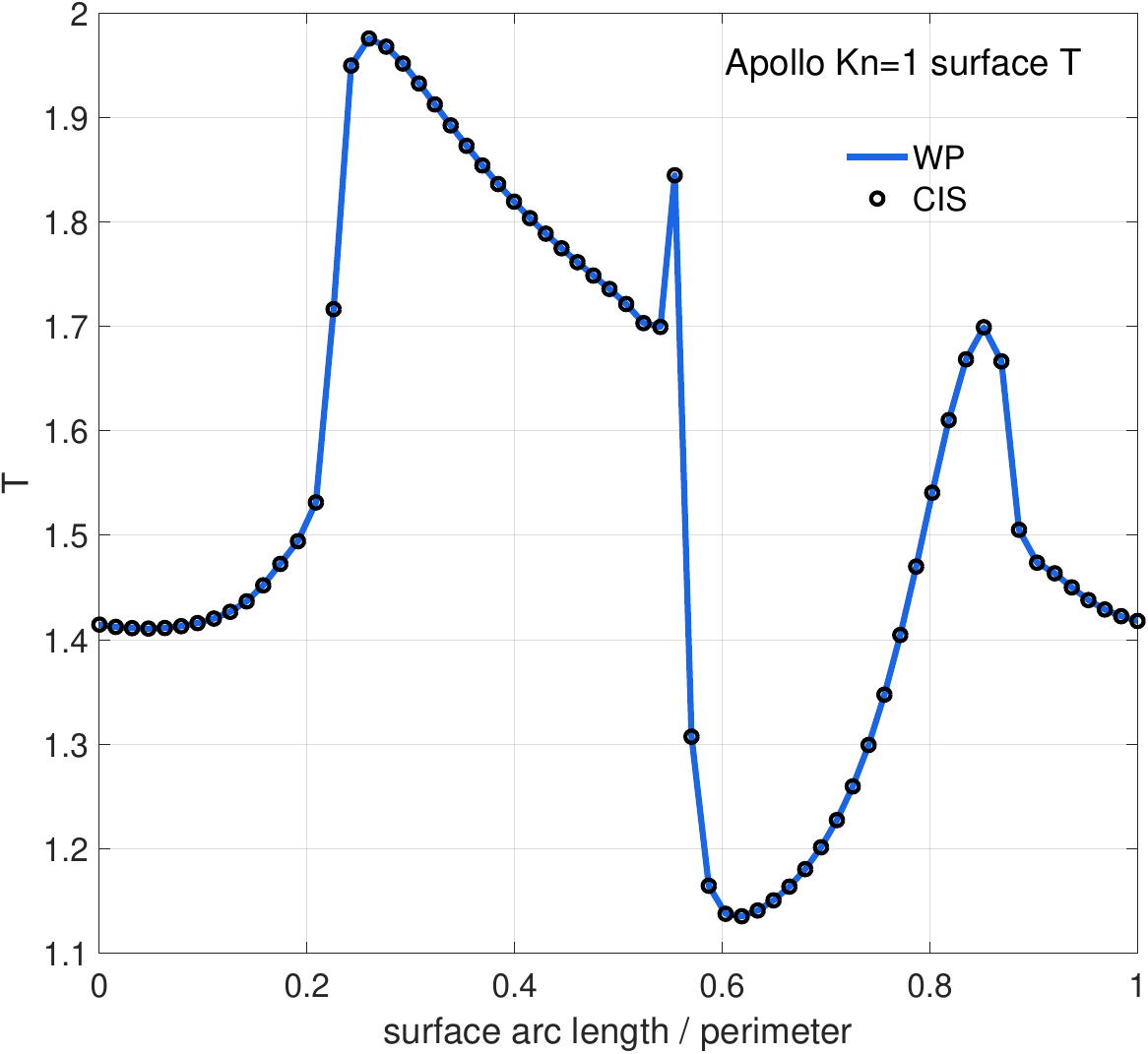}}
\caption{Surface temperature}
\end{subfigure}\hspace{0.0200\figW}%
\begin{subfigure}[t]{0.4842\figW}\centering
\raisebox{0.0000\figW}{\includegraphics[width=0.4842\figW]{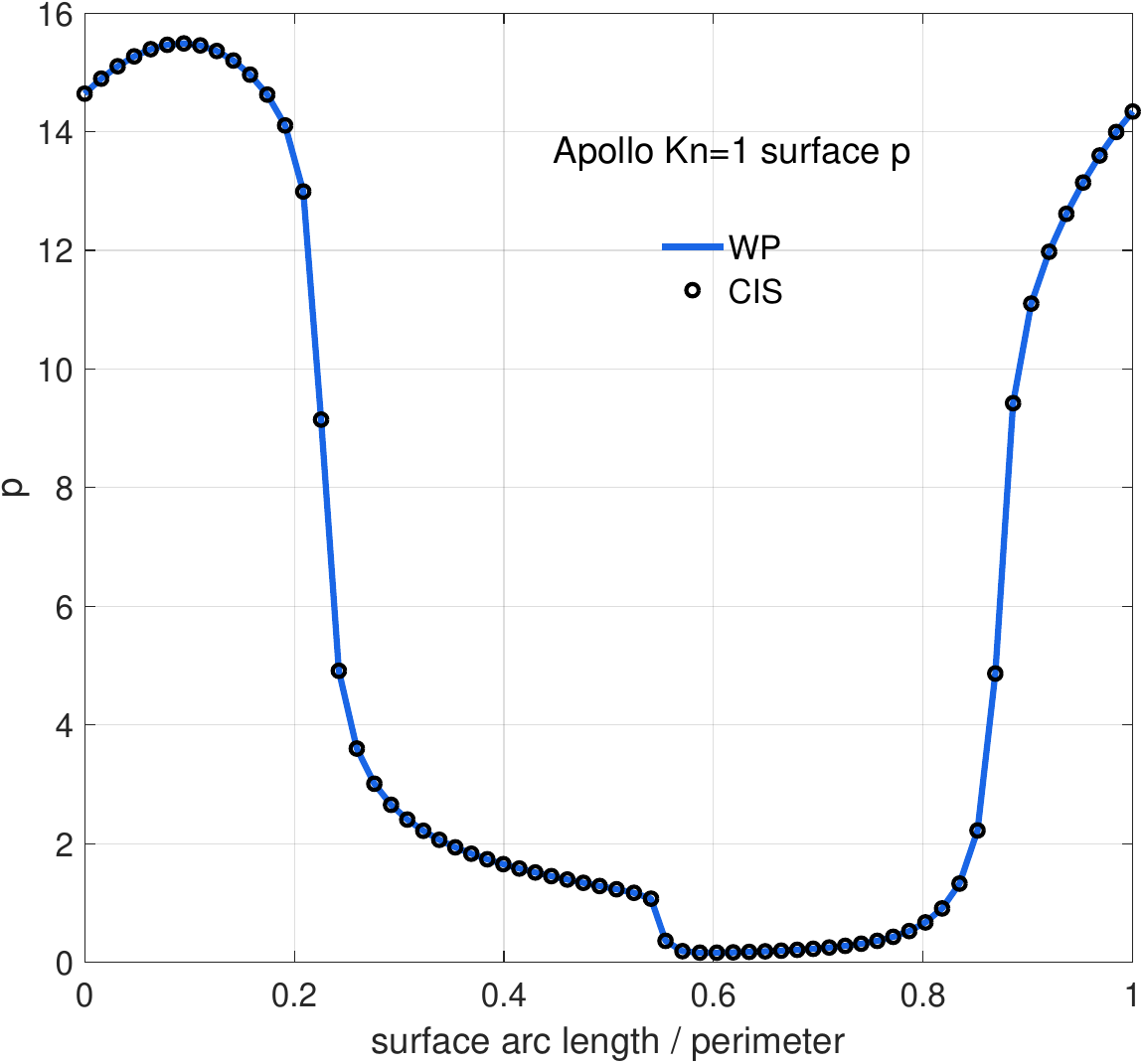}}
\caption{Surface pressure}
\end{subfigure}
\caption{Mach-5 flow around the Apollo section, $\varepsilon=1$: temperature and pressure of WP and CIS along the centreline (a, b) and along the body surface (c, d). WP is shown by lines and the reference by symbols.}\label{fig:apollo-kn1-profiles}
\end{figure}
\begin{figure}[tbp]
\centering\singlespacing\setlength{\figW}{\linewidth}
\begin{subfigure}[t]{0.4913\figW}\centering
\raisebox{0.0000\figW}{\includegraphics[width=0.4913\figW]{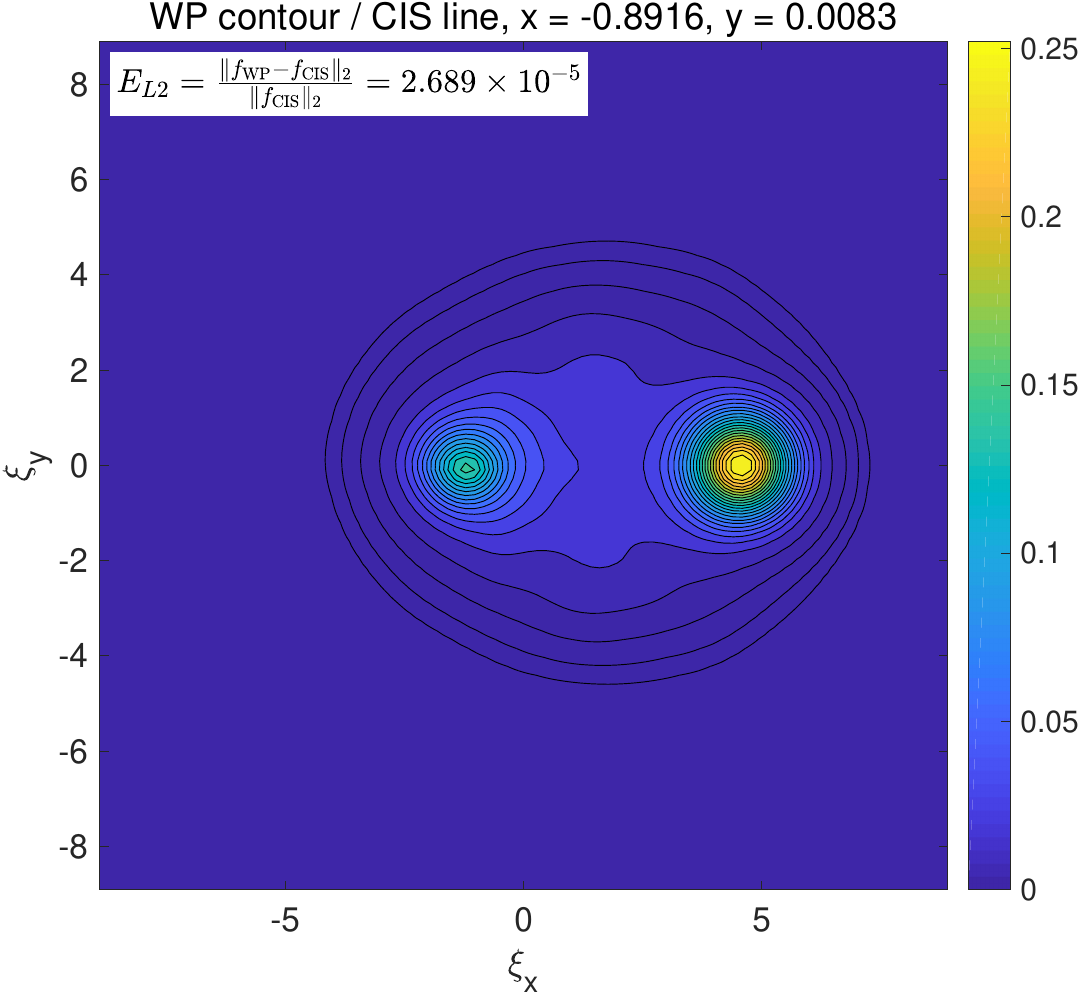}}
\caption{$(x,y)=(-0.89,\,0.01)$}
\end{subfigure}\hspace{0.0200\figW}%
\begin{subfigure}[t]{0.4837\figW}\centering
\raisebox{0.0000\figW}{\includegraphics[width=0.4837\figW]{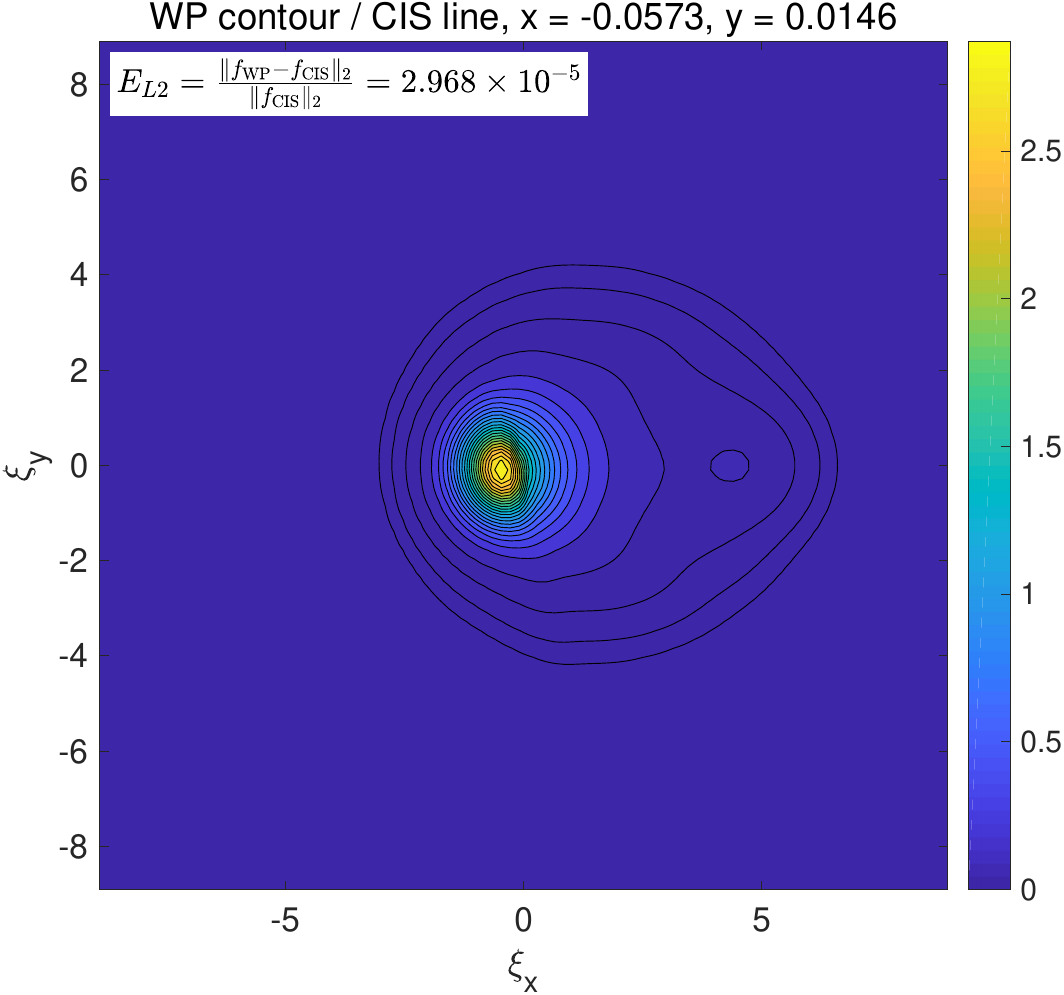}}
\caption{$(x,y)=(-0.06,\,0.01)$}
\end{subfigure}
\par\smallskip
\begin{subfigure}[t]{0.4912\figW}\centering
\raisebox{0.0000\figW}{\includegraphics[width=0.4912\figW]{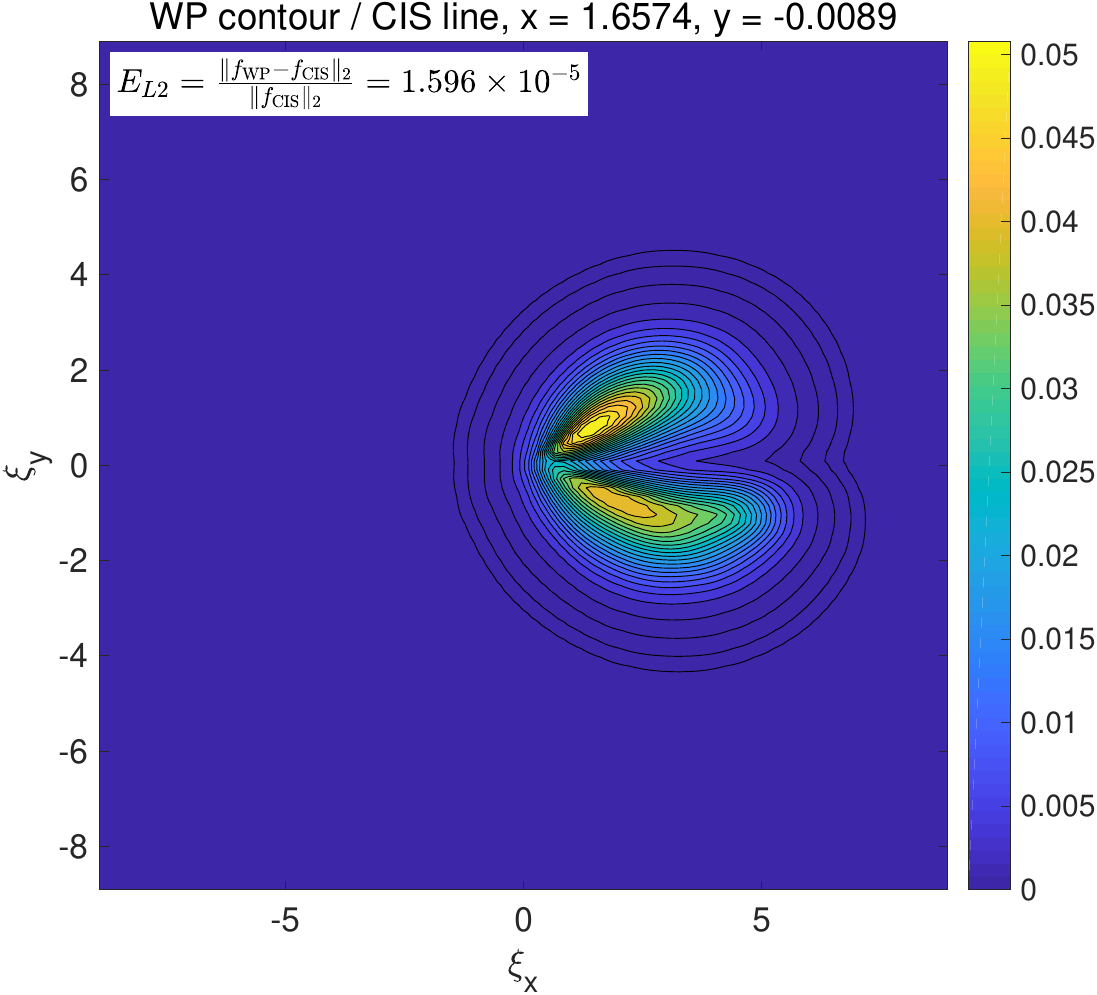}}
\caption{$(x,y)=(1.66,\,-0.01)$}
\end{subfigure}\hspace{0.0200\figW}%
\begin{subfigure}[t]{0.4838\figW}\centering
\raisebox{0.0000\figW}{\includegraphics[width=0.4838\figW]{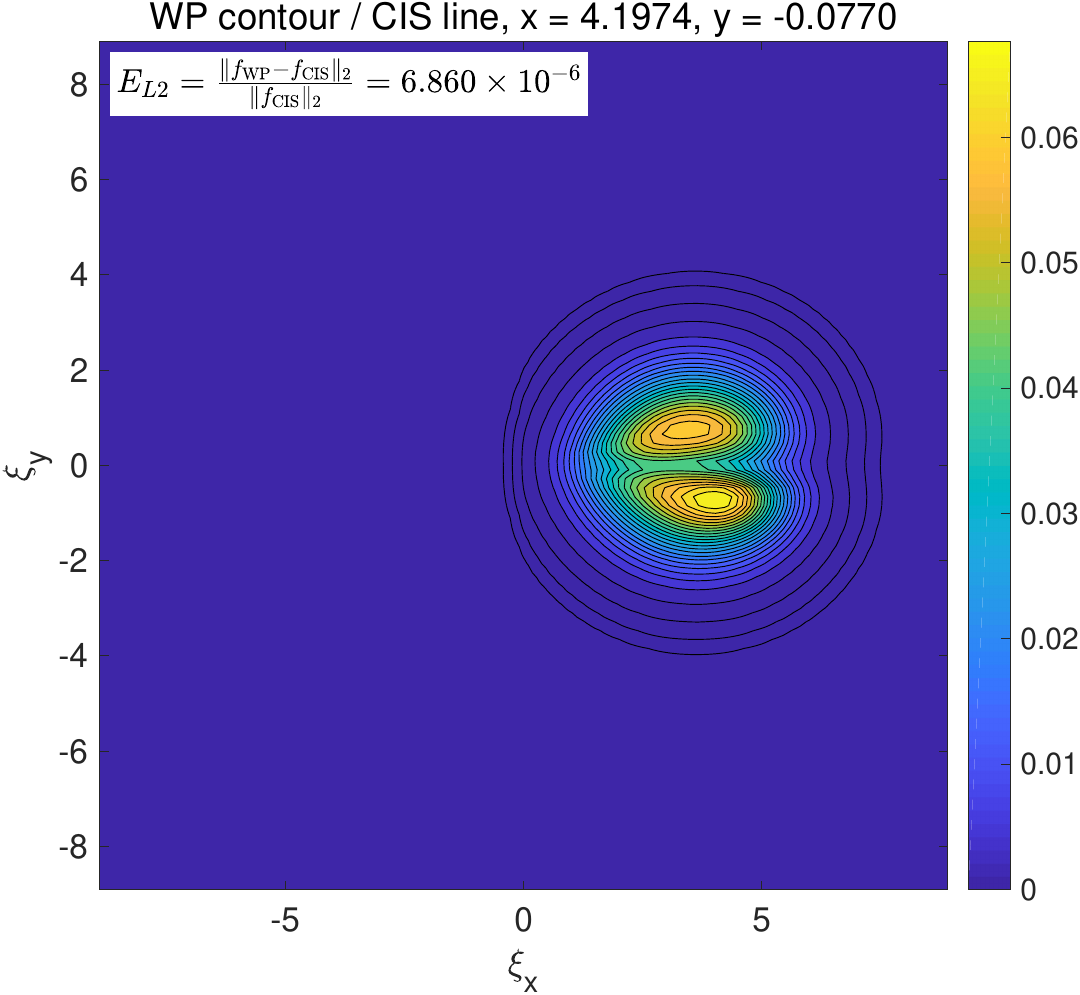}}
\caption{$(x,y)=(4.20,\,-0.08)$}
\end{subfigure}
\caption{Mach-5 flow around the Apollo section, $\varepsilon=1$: reduced molecular distribution in the $(\xi_x,\xi_y)$ velocity plane at the marked sampling points, linear scale. WP is shown by filled contours and CIS by lines at the same levels, and $E_{L_2}$ is printed in each panel.}\label{fig:apollo-kn1-dist}
\end{figure}

At $\varepsilon=10^{-2}$ the field differences in
Fig.~\ref{fig:apollo-kn1em2-fields} are $4.00\times10^{-4}$,
$6.97\times10^{-4}$, $1.81\times10^{-4}$ and $1.96\times10^{-4}$ in density,
temperature, speed and pressure, and the pressure-drag coefficients of WP and
CIS, both $1.51$, differ by $2.49\times10^{-5}$; the centreline and surface
profiles are shown in Fig.~\ref{fig:apollo-kn1em2-profiles}. At
$\varepsilon=10^{-4}$ the WP fields in Fig.~\ref{fig:apollo-kn1em4-fields}
differ from CIS of the Shakhov model by $4.40\times10^{-4}$,
$1.21\times10^{-3}$, $4.02\times10^{-4}$ and $8.47\times10^{-4}$ in the same
four quantities, and the centreline and surface profiles in
Fig.~\ref{fig:apollo-kn1em4-profiles} are compared with the NS solution.

The residual histories of the three calculations are collected in
Fig.~\ref{fig:apollo-residuals}. At $\varepsilon=1$ the speedups are
$S_{\rm iter}=1.11$ and $S_t=0.90$, so that the two methods again cost about
the same, as for the cylinder, and at $\varepsilon=10^{-2}$
they grow to $27.46$ and $16.10$. At $\varepsilon=10^{-4}$ the ratios with
respect to CIS of the Shakhov model, computed with the solver
of~\cite{liu_wpd_shakhov}, are $S_{\rm iter}=36488.23$ and $S_t=1470.93$, and
WP reaches $R_U=10^{-7}$ in $35.7$~s, $4.02$ times the $8.89$~s of the NS
solver.

\begin{figure}[tbp]
\centering\singlespacing\setlength{\figW}{\linewidth}
\begin{subfigure}[t]{0.4857\figW}\centering
\raisebox{0.0000\figW}{\includegraphics[width=0.4857\figW]{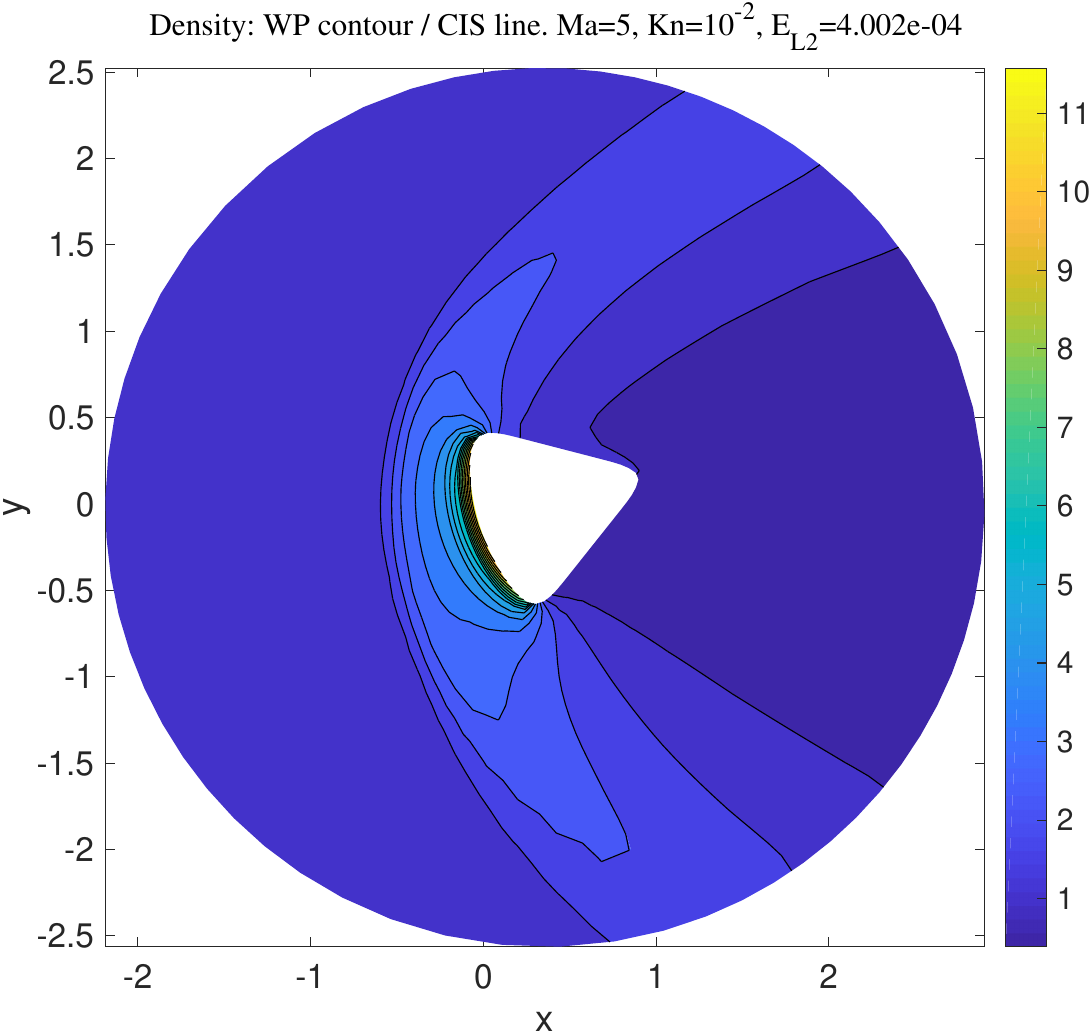}}
\caption{Density}
\end{subfigure}\hspace{0.0200\figW}%
\begin{subfigure}[t]{0.4893\figW}\centering
\raisebox{0.0000\figW}{\includegraphics[width=0.4893\figW]{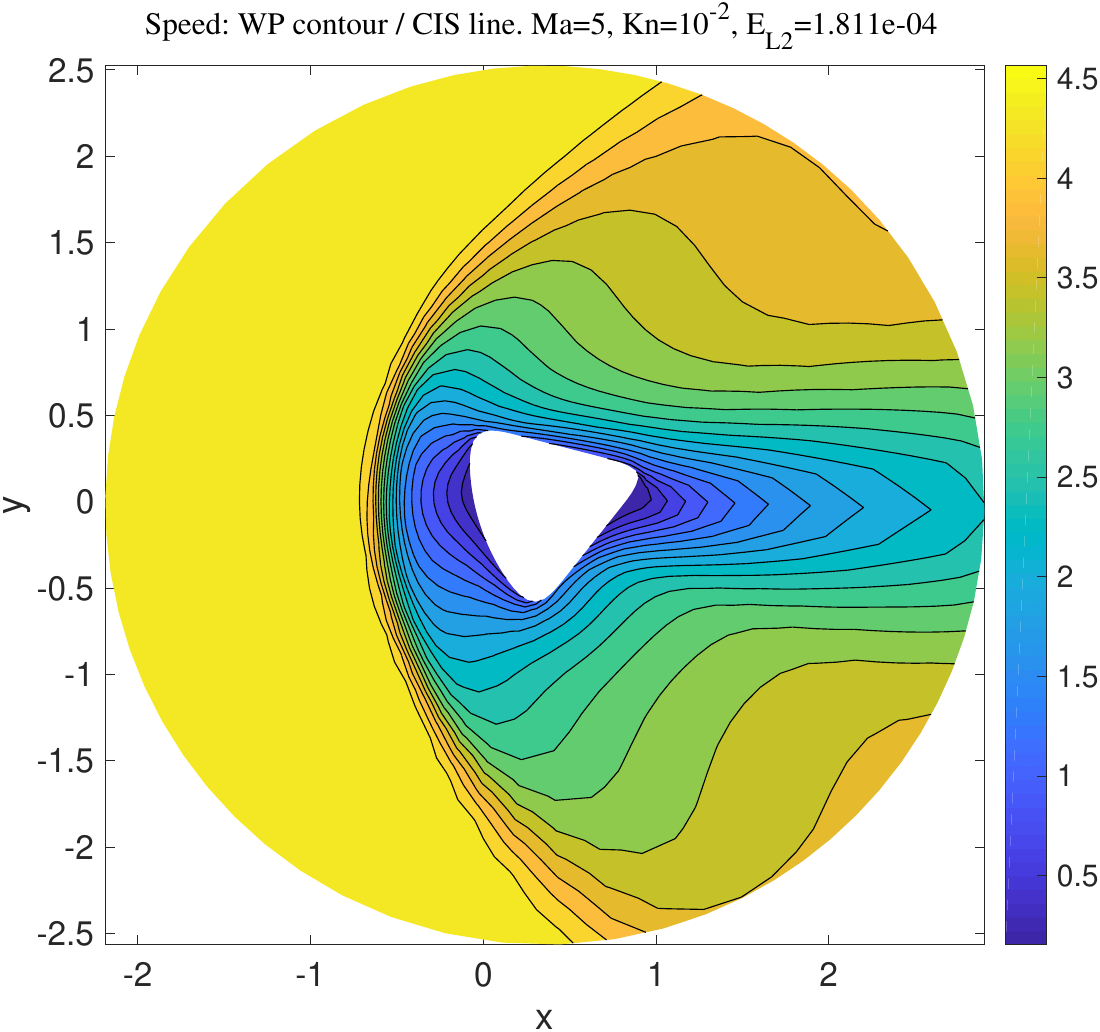}}
\caption{Speed}
\end{subfigure}
\par\smallskip
\begin{subfigure}[t]{0.4893\figW}\centering
\raisebox{0.0000\figW}{\includegraphics[width=0.4893\figW]{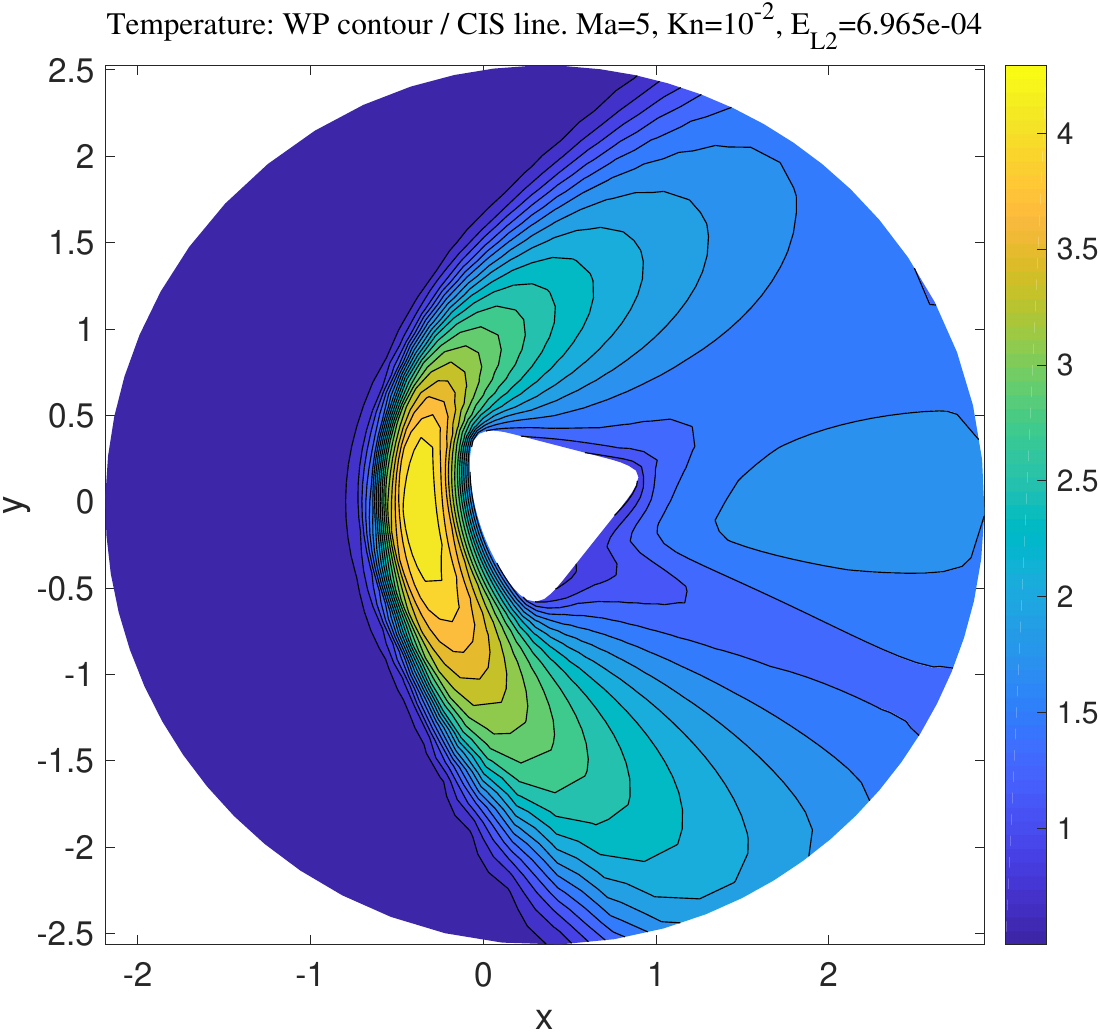}}
\caption{Temperature}
\end{subfigure}\hspace{0.0200\figW}%
\begin{subfigure}[t]{0.4857\figW}\centering
\raisebox{0.0000\figW}{\includegraphics[width=0.4857\figW]{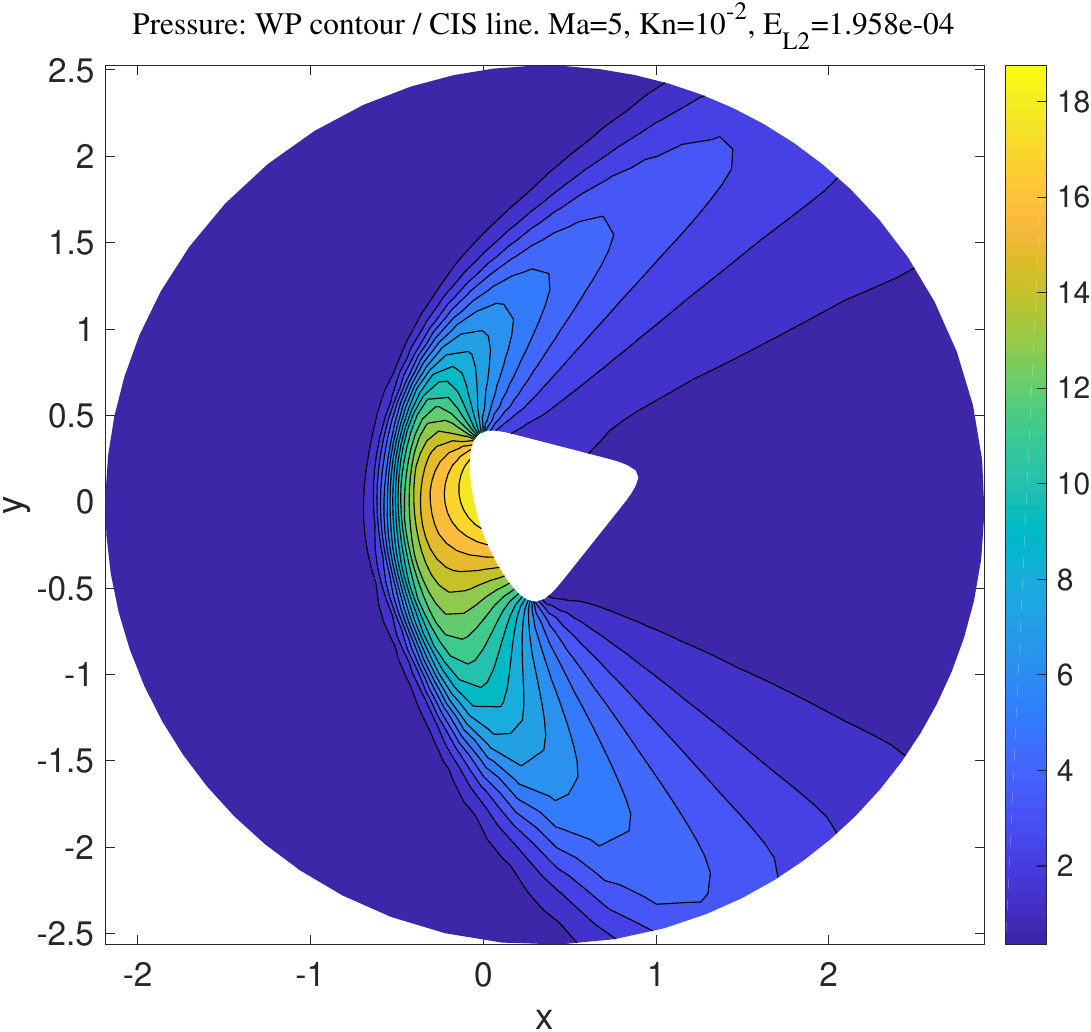}}
\caption{Pressure}
\end{subfigure}
\caption{Mach-5 flow around the Apollo section, $\varepsilon=10^{-2}$: density, speed, temperature and pressure fields of WP compared with CIS. WP is shown by filled contours and the reference by lines at the same levels, and $E_{L_2}$ is printed above each panel.}\label{fig:apollo-kn1em2-fields}
\end{figure}
\begin{figure}[tbp]
\centering\singlespacing\setlength{\figW}{\linewidth}
\begin{subfigure}[t]{0.4952\figW}\centering
\raisebox{0.0000\figW}{\includegraphics[width=0.4952\figW]{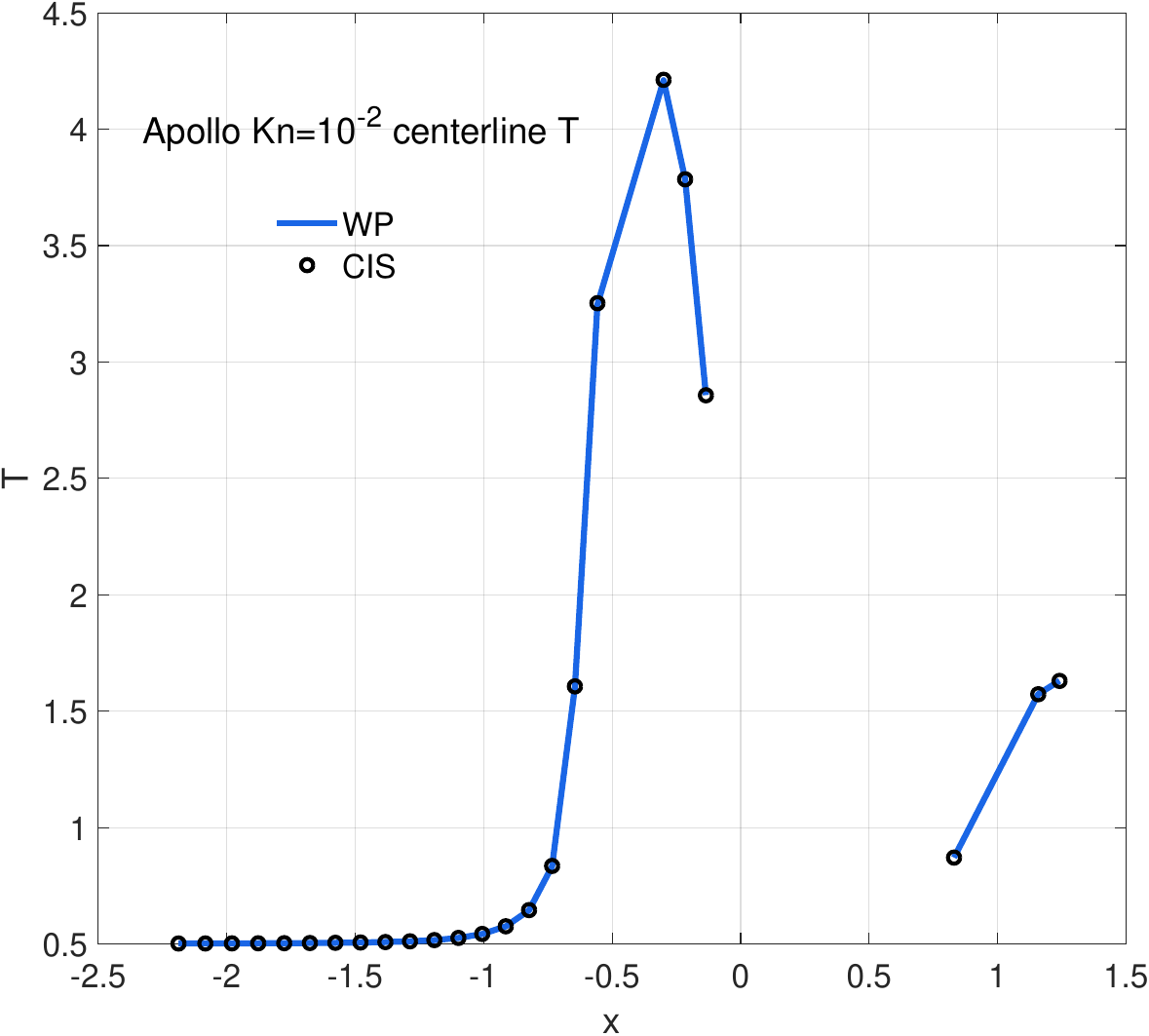}}
\caption{Centreline temperature}
\end{subfigure}\hspace{0.0200\figW}%
\begin{subfigure}[t]{0.4798\figW}\centering
\raisebox{0.0000\figW}{\includegraphics[width=0.4798\figW]{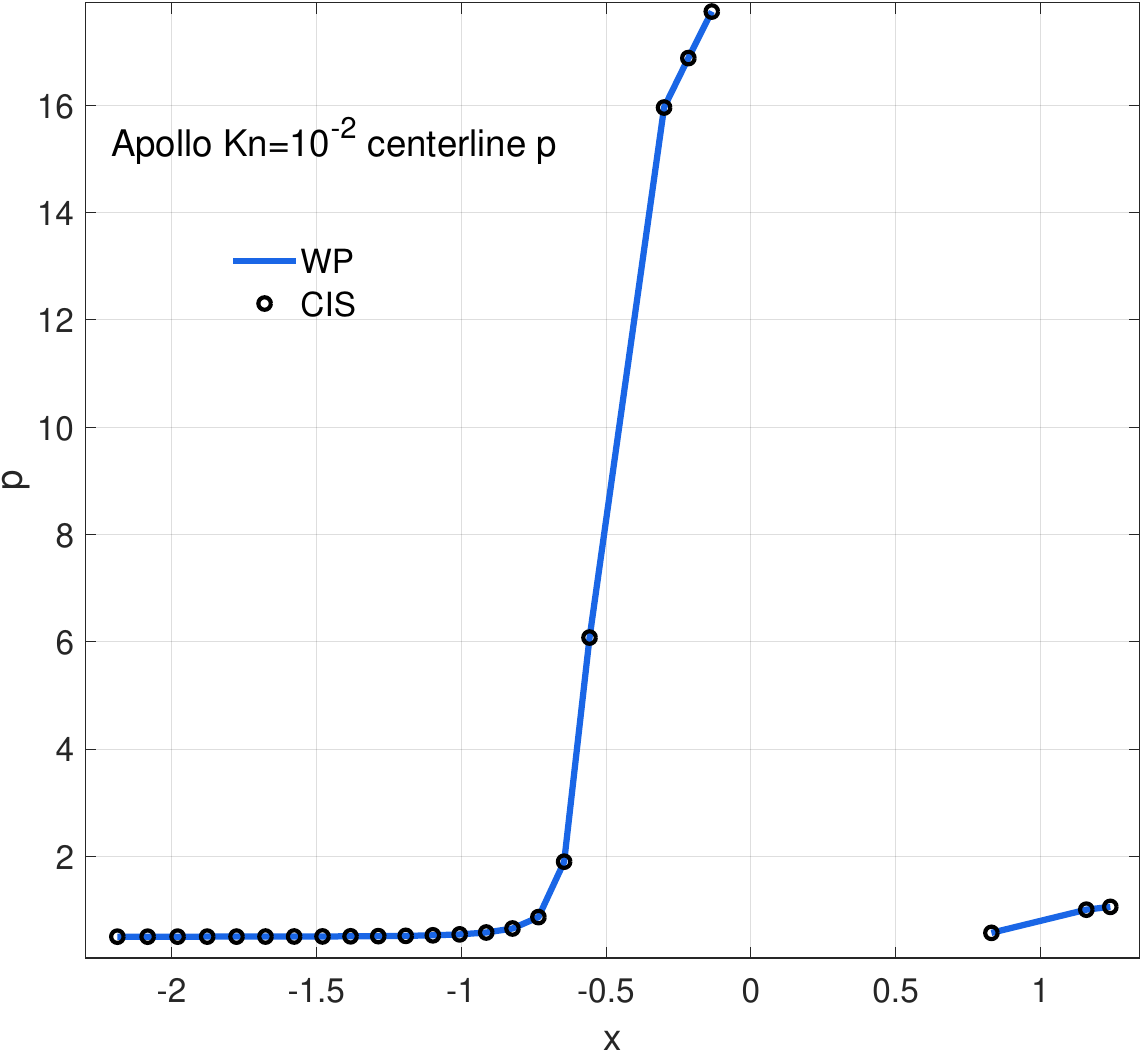}}
\caption{Centreline pressure}
\end{subfigure}
\par\smallskip
\begin{subfigure}[t]{0.4908\figW}\centering
\raisebox{0.0000\figW}{\includegraphics[width=0.4908\figW]{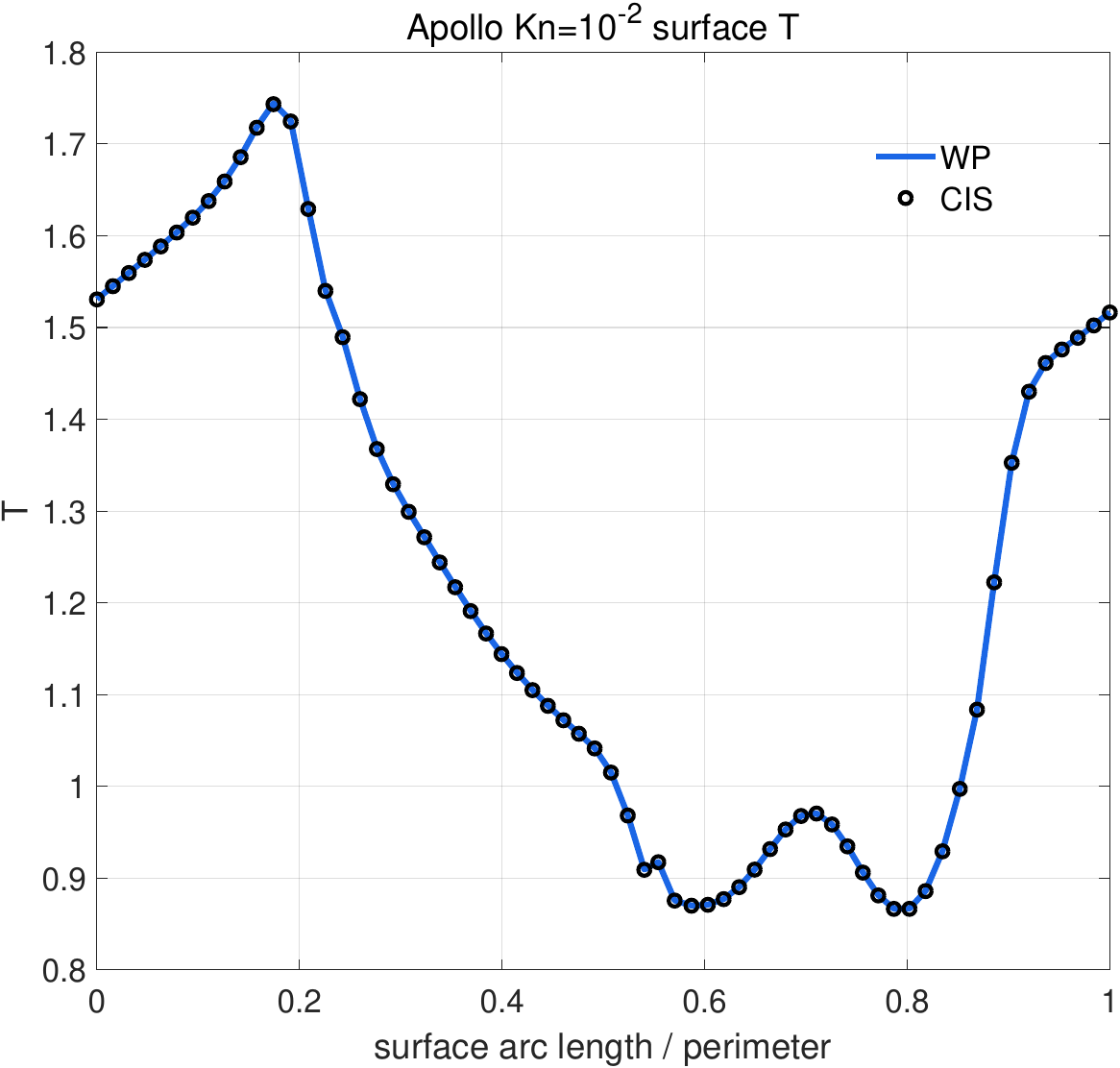}}
\caption{Surface temperature}
\end{subfigure}\hspace{0.0200\figW}%
\begin{subfigure}[t]{0.4842\figW}\centering
\raisebox{0.0000\figW}{\includegraphics[width=0.4842\figW]{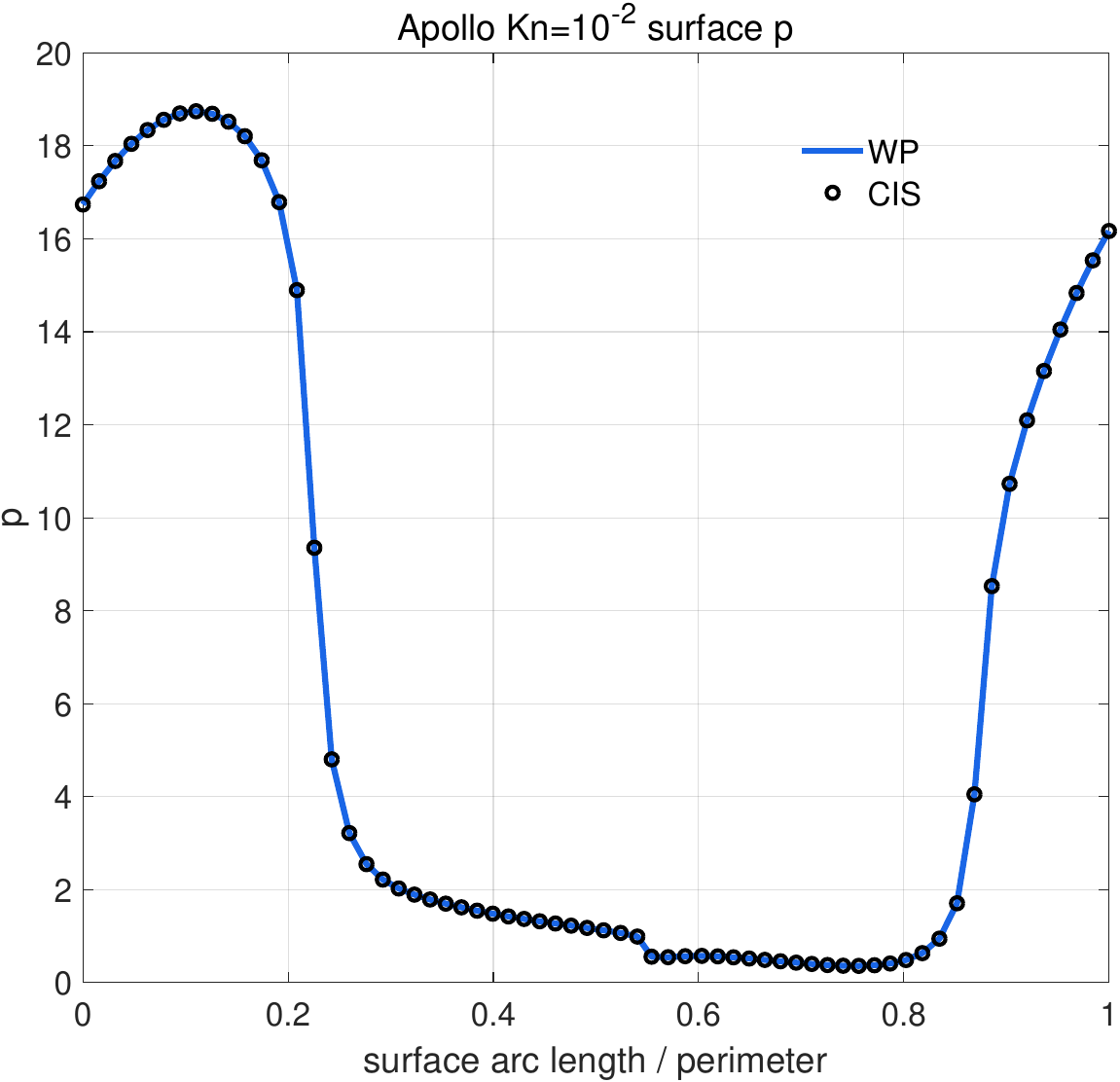}}
\caption{Surface pressure}
\end{subfigure}
\caption{Mach-5 flow around the Apollo section, $\varepsilon=10^{-2}$: temperature and pressure of WP and CIS along the centreline (a, b) and along the body surface (c, d). WP is shown by lines and the reference by symbols.}\label{fig:apollo-kn1em2-profiles}
\end{figure}
\begin{figure}[tbp]
\centering\singlespacing\setlength{\figW}{\linewidth}
\begin{subfigure}[t]{0.4857\figW}\centering
\raisebox{0.0000\figW}{\includegraphics[width=0.4857\figW]{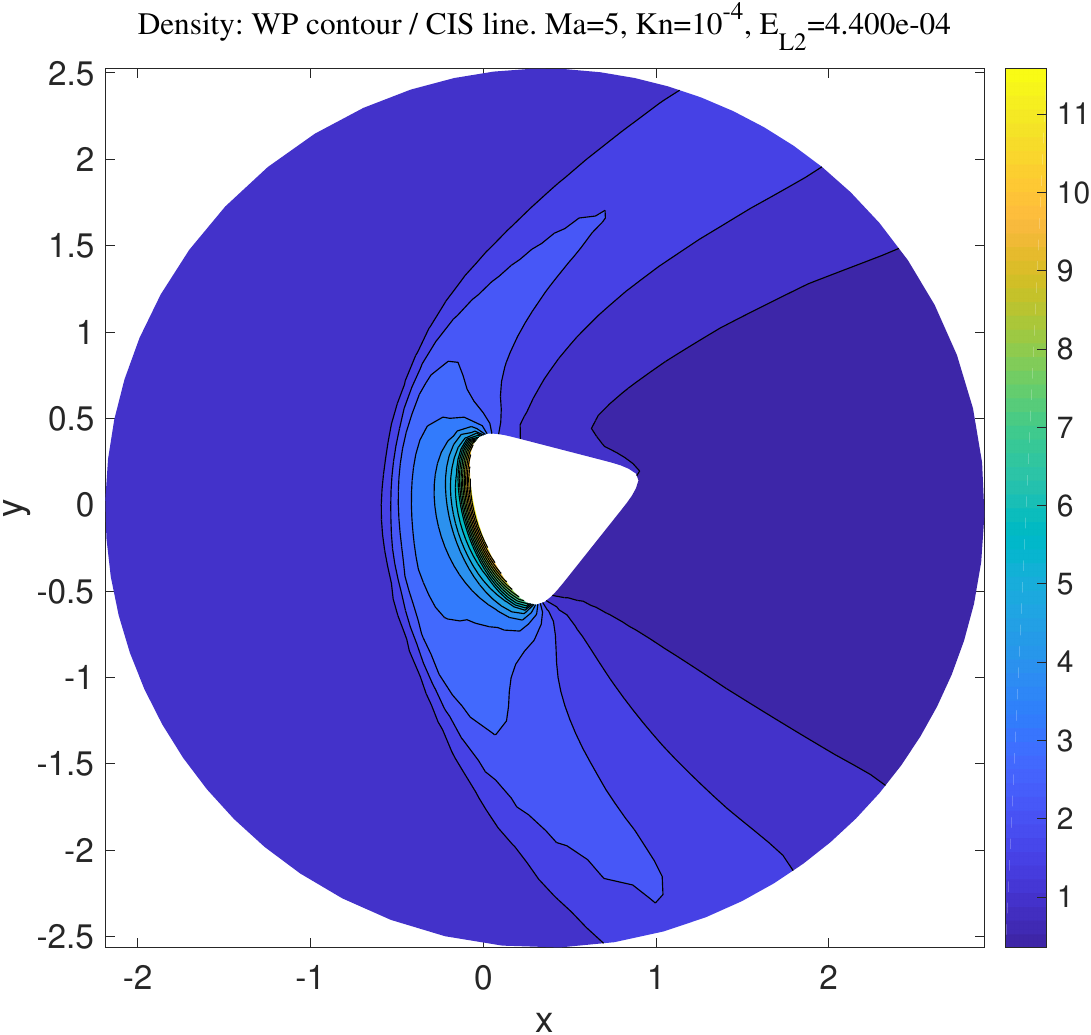}}
\caption{Density}
\end{subfigure}\hspace{0.0200\figW}%
\begin{subfigure}[t]{0.4893\figW}\centering
\raisebox{0.0000\figW}{\includegraphics[width=0.4893\figW]{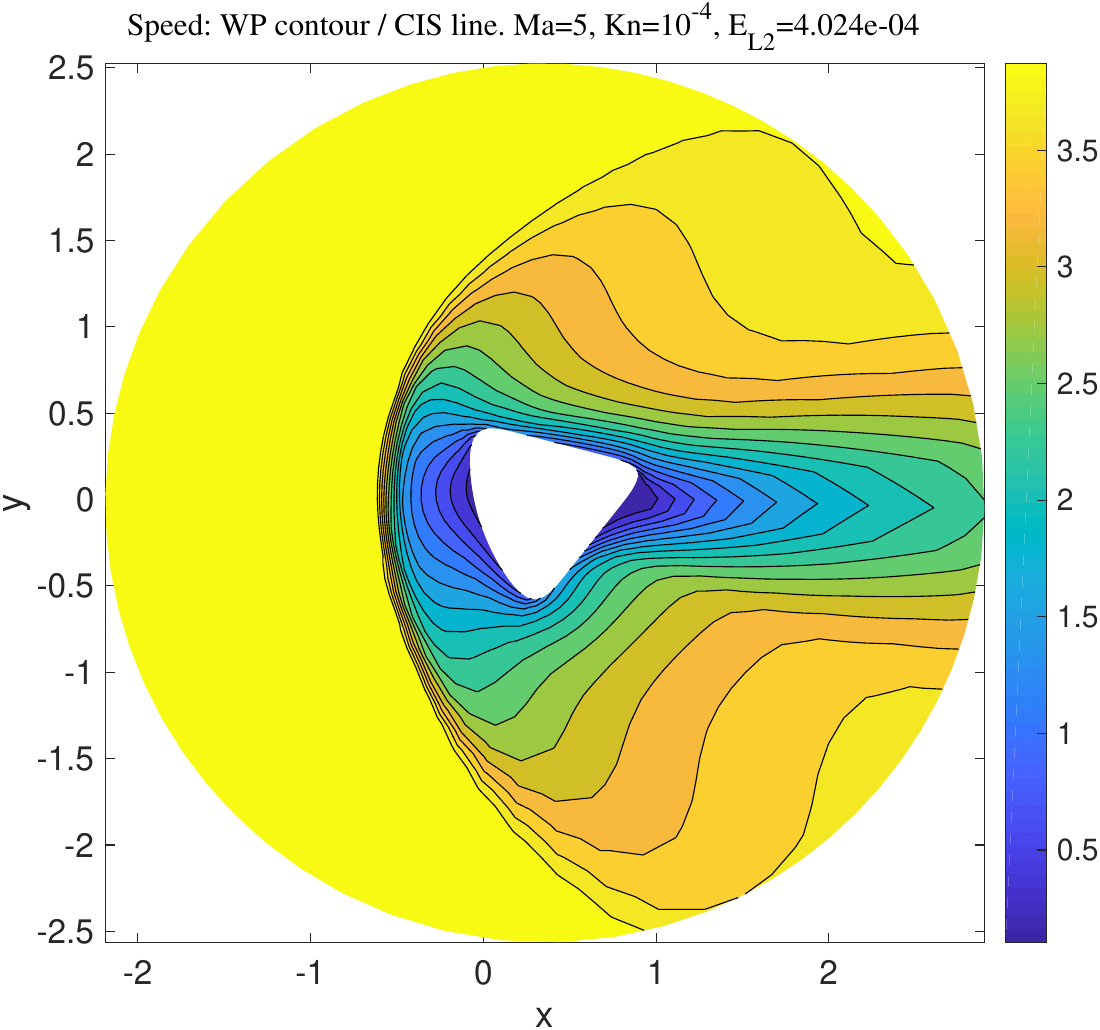}}
\caption{Speed}
\end{subfigure}
\par\smallskip
\begin{subfigure}[t]{0.4893\figW}\centering
\raisebox{0.0000\figW}{\includegraphics[width=0.4893\figW]{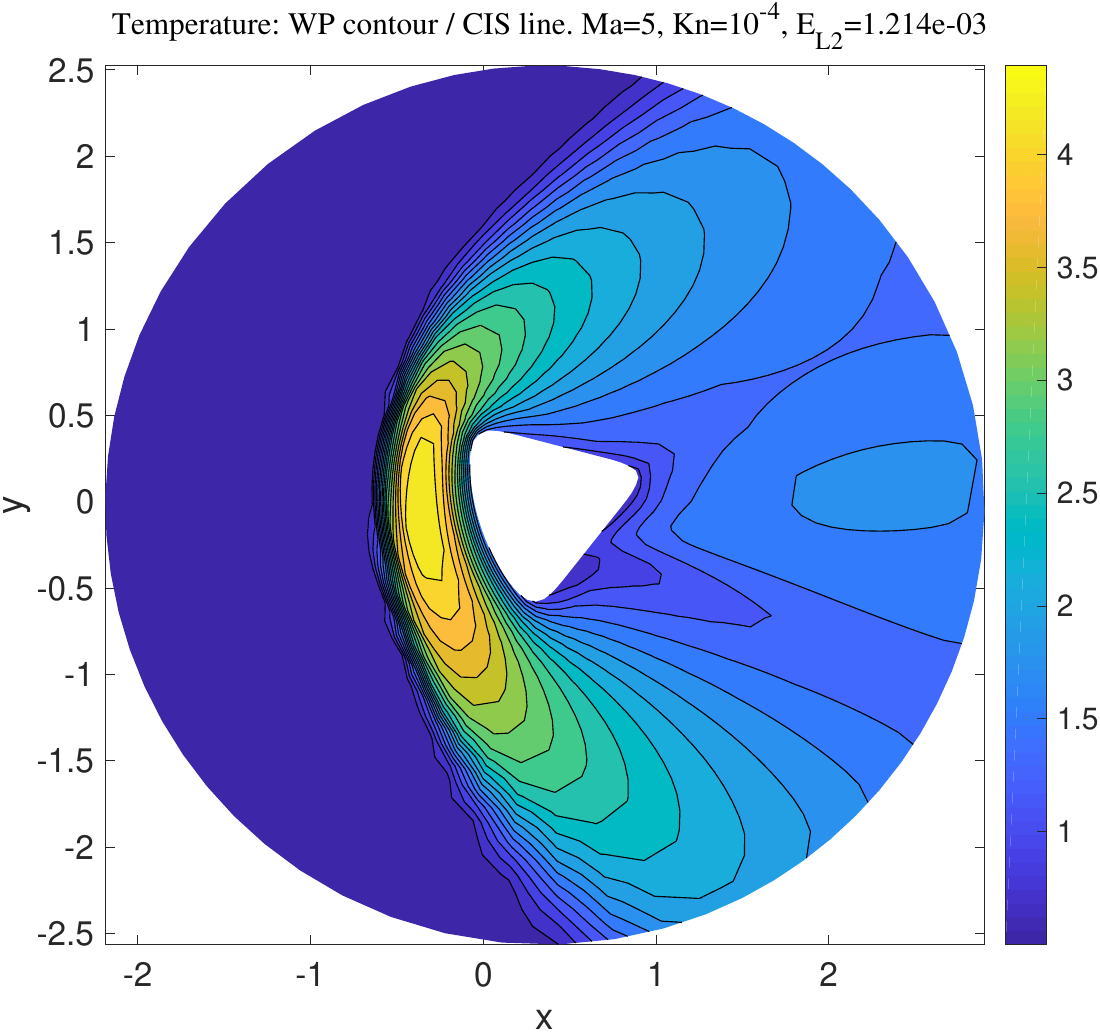}}
\caption{Temperature}
\end{subfigure}\hspace{0.0200\figW}%
\begin{subfigure}[t]{0.4857\figW}\centering
\raisebox{0.0000\figW}{\includegraphics[width=0.4857\figW]{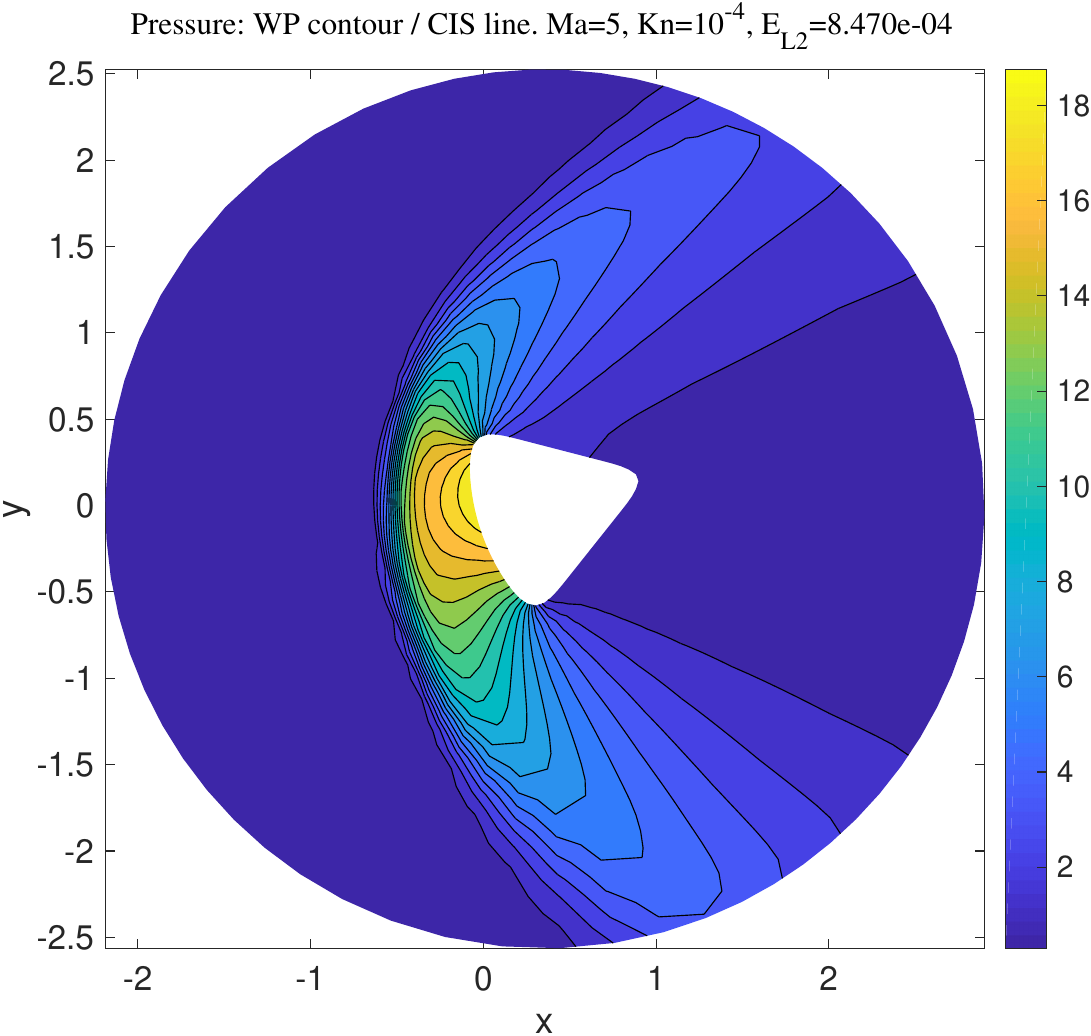}}
\caption{Pressure}
\end{subfigure}
\caption{Mach-5 flow around the Apollo section, $\varepsilon=10^{-4}$: density, speed, temperature and pressure fields of WP compared with CIS of the Shakhov model. WP is shown by filled contours and the reference by lines at the same levels, and $E_{L_2}$ is printed above each panel.}\label{fig:apollo-kn1em4-fields}
\end{figure}
\begin{figure}[tbp]
\centering\singlespacing\setlength{\figW}{\linewidth}
\begin{subfigure}[t]{0.4908\figW}\centering
\raisebox{0.0000\figW}{\includegraphics[width=0.4908\figW]{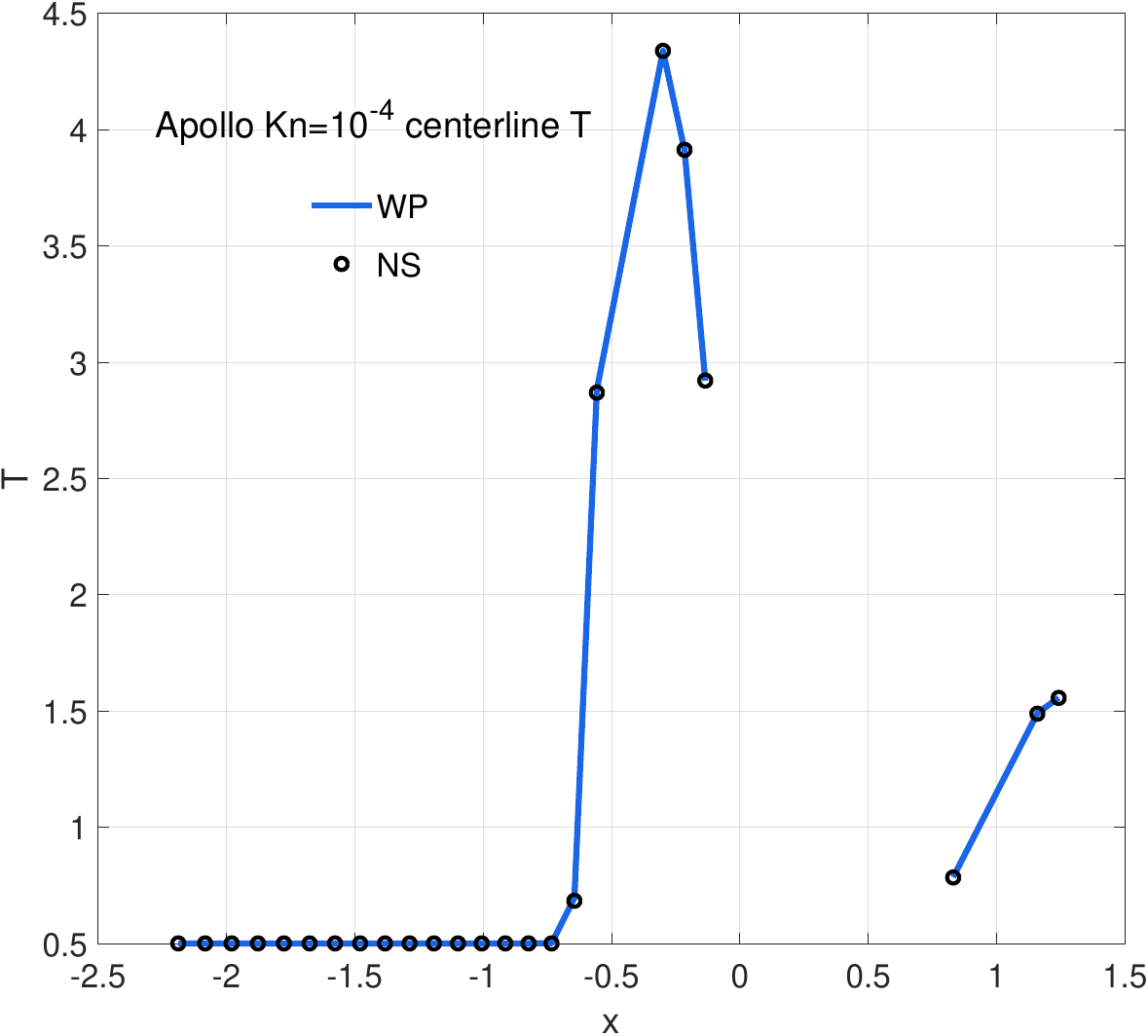}}
\caption{Centreline temperature}
\end{subfigure}\hspace{0.0200\figW}%
\begin{subfigure}[t]{0.4842\figW}\centering
\raisebox{0.0000\figW}{\includegraphics[width=0.4842\figW]{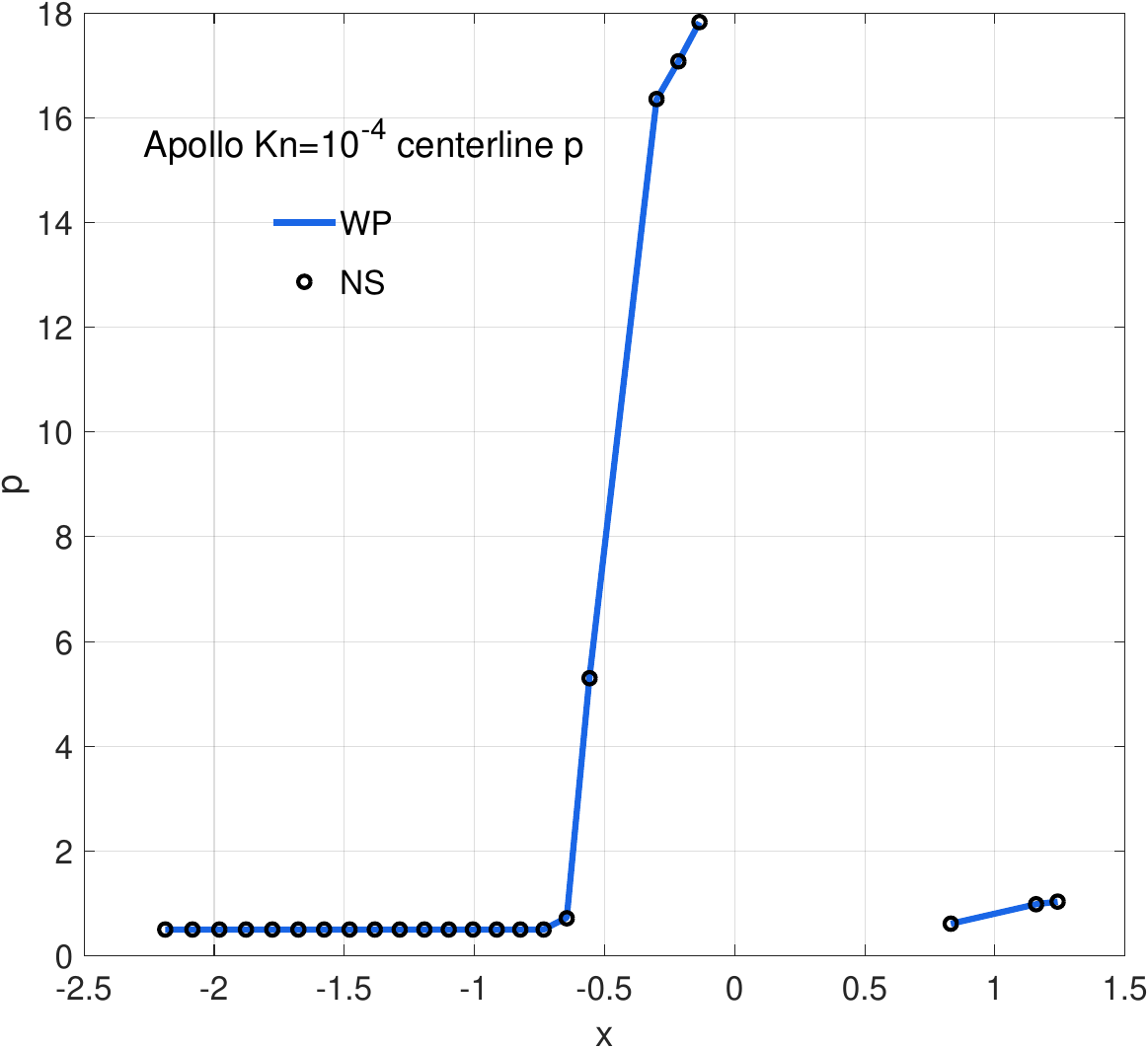}}
\caption{Centreline pressure}
\end{subfigure}
\par\smallskip
\begin{subfigure}[t]{0.4908\figW}\centering
\raisebox{0.0000\figW}{\includegraphics[width=0.4908\figW]{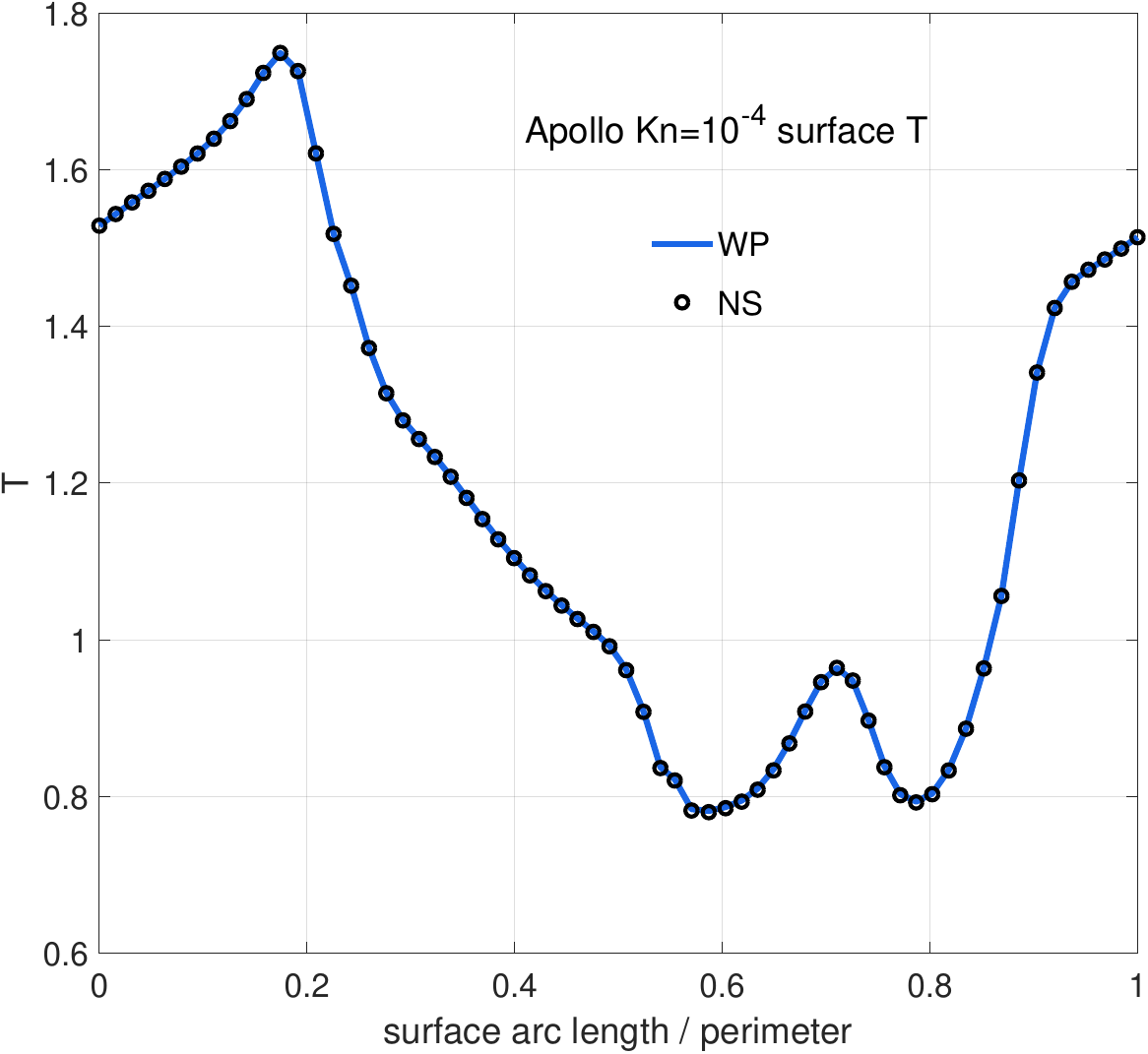}}
\caption{Surface temperature}
\end{subfigure}\hspace{0.0200\figW}%
\begin{subfigure}[t]{0.4842\figW}\centering
\raisebox{0.0000\figW}{\includegraphics[width=0.4842\figW]{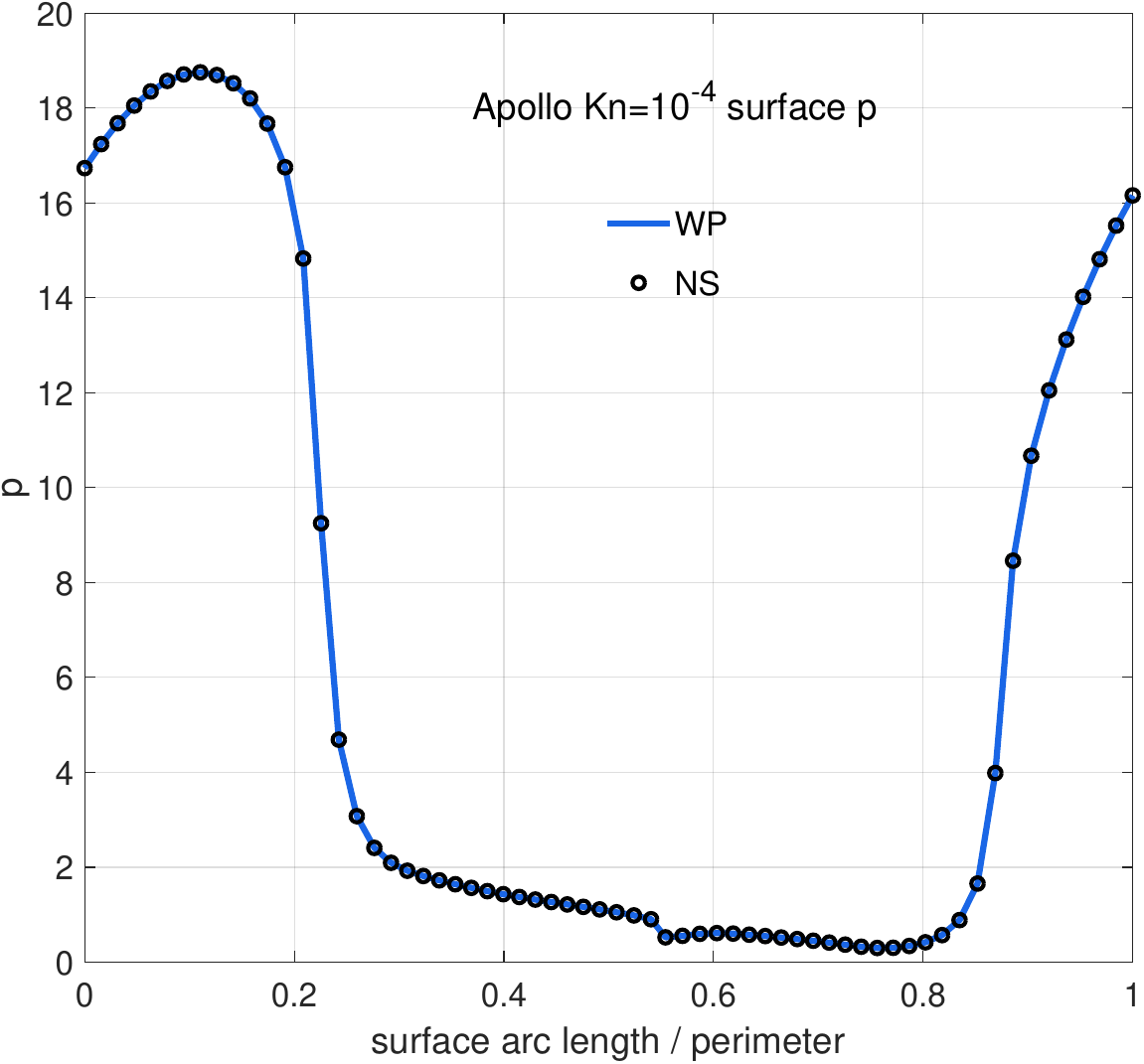}}
\caption{Surface pressure}
\end{subfigure}
\caption{Mach-5 flow around the Apollo section, $\varepsilon=10^{-4}$: temperature and pressure of WP and the NS solution along the centreline (a, b) and along the body surface (c, d). WP is shown by lines and the reference by symbols.}\label{fig:apollo-kn1em4-profiles}
\end{figure}
\begin{figure}[tbp]
\centering\singlespacing\setlength{\figW}{\linewidth}
\begin{subfigure}[t]{0.9820\figW}\centering
\makebox[\linewidth]{\raisebox{0.0000\figW}{\includegraphics[width=0.4962\figW]{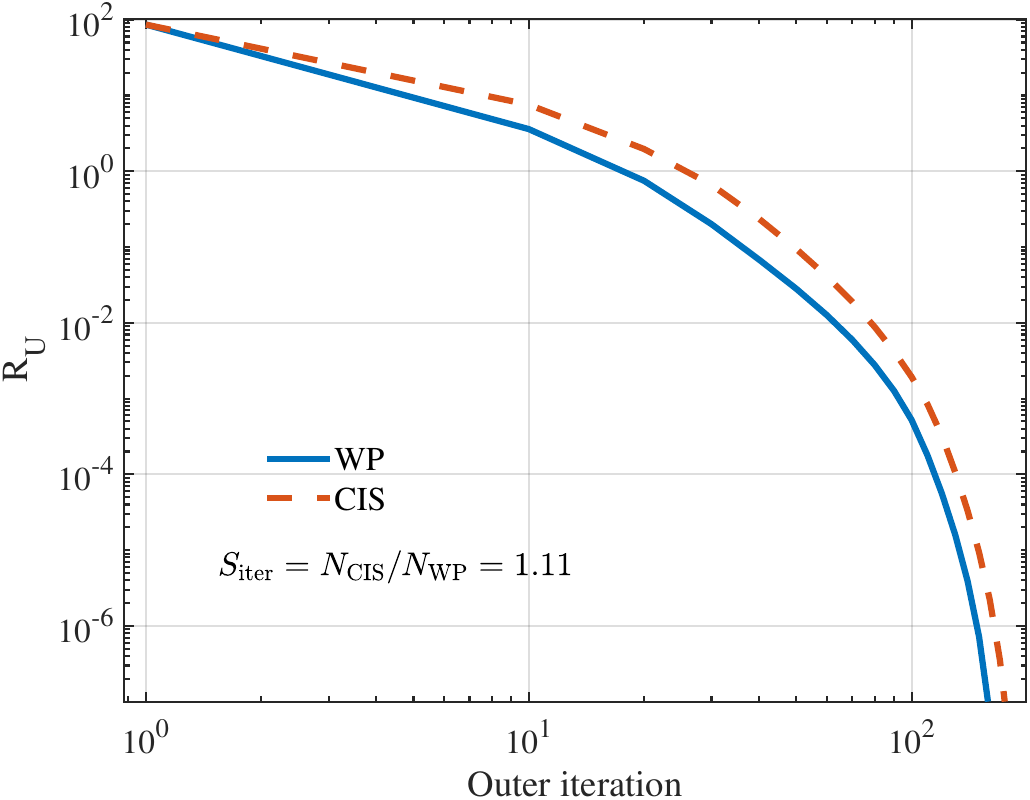}}\hspace{0.0200\figW}\raisebox{0.0026\figW}{\includegraphics[width=0.4658\figW]{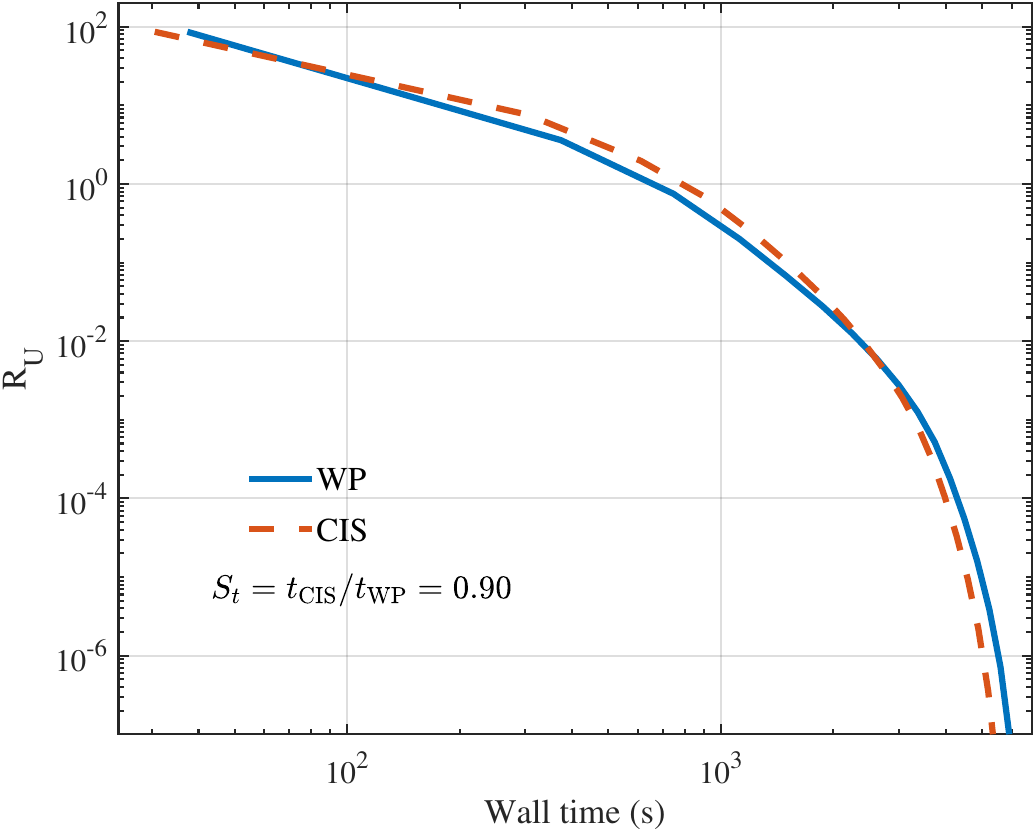}}}
\caption{$\varepsilon=1$}\label{fig:apollo-res-kn1}
\end{subfigure}
\par\smallskip
\begin{subfigure}[t]{0.9820\figW}\centering
\makebox[\linewidth]{\raisebox{0.0000\figW}{\includegraphics[width=0.4944\figW]{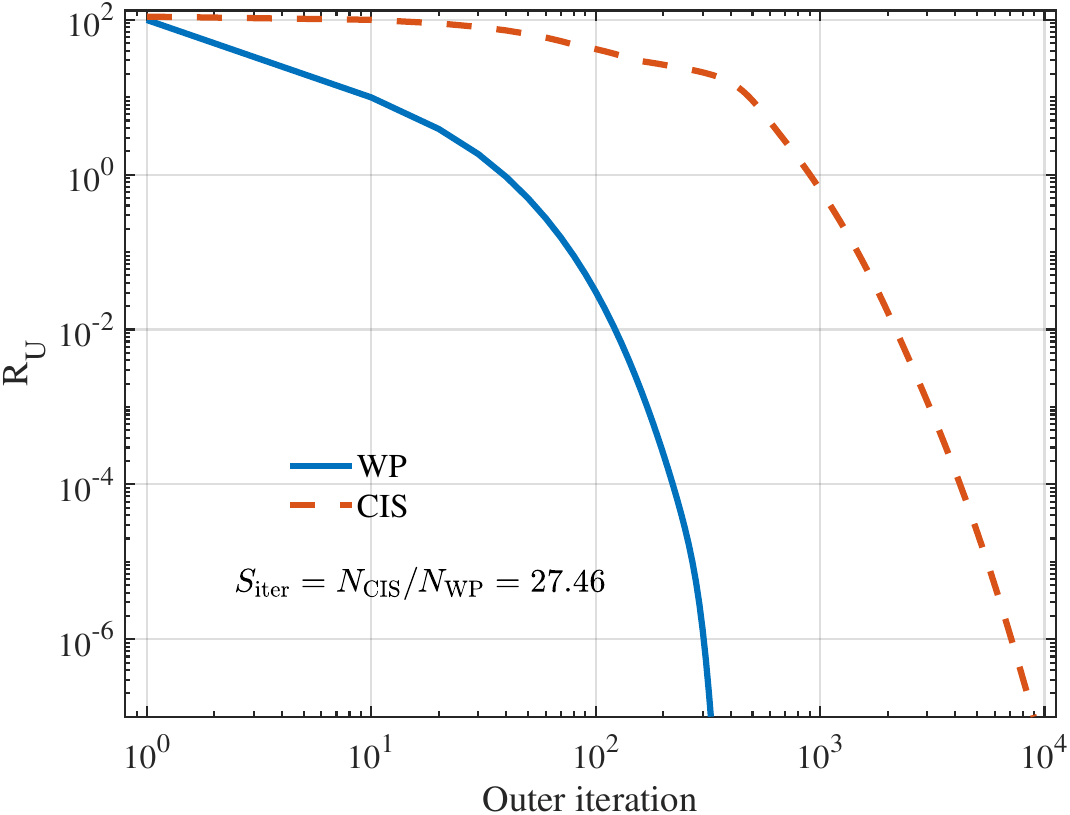}}\hspace{0.0200\figW}\raisebox{0.0005\figW}{\includegraphics[width=0.4676\figW]{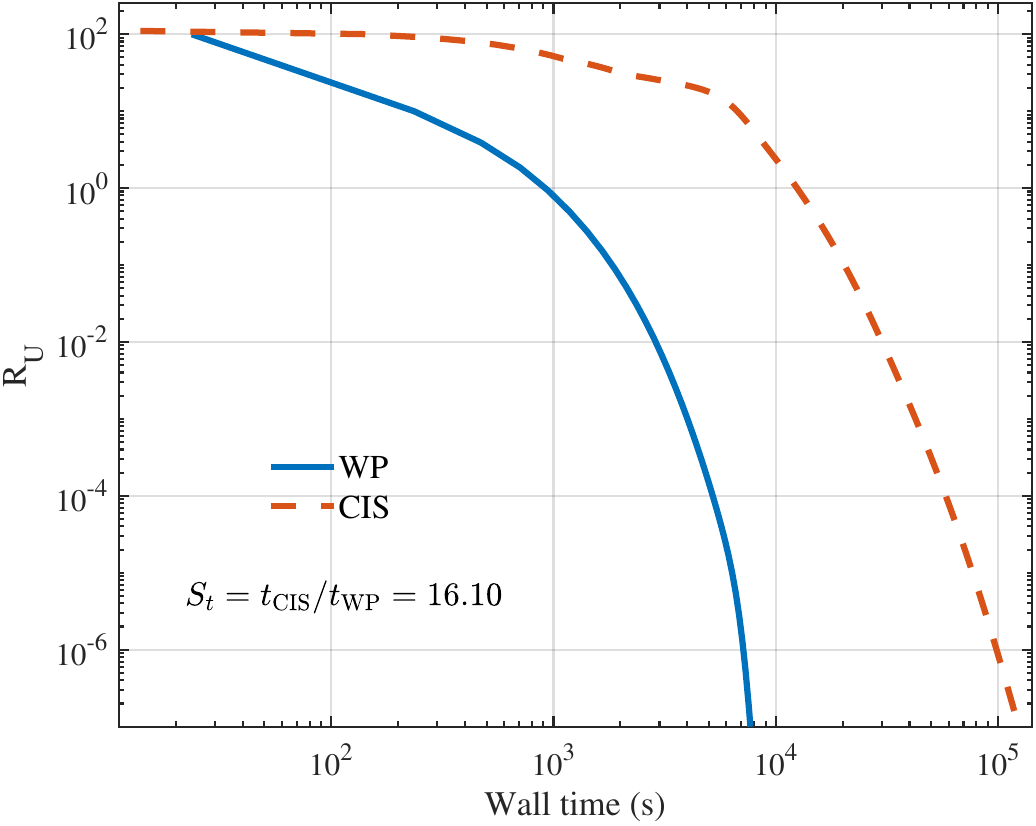}}}
\caption{$\varepsilon=10^{-2}$}\label{fig:apollo-res-kn1em2}
\end{subfigure}
\par\smallskip
\begin{subfigure}[t]{0.9820\figW}\centering
\makebox[\linewidth]{\raisebox{0.0036\figW}{\includegraphics[width=0.4721\figW]{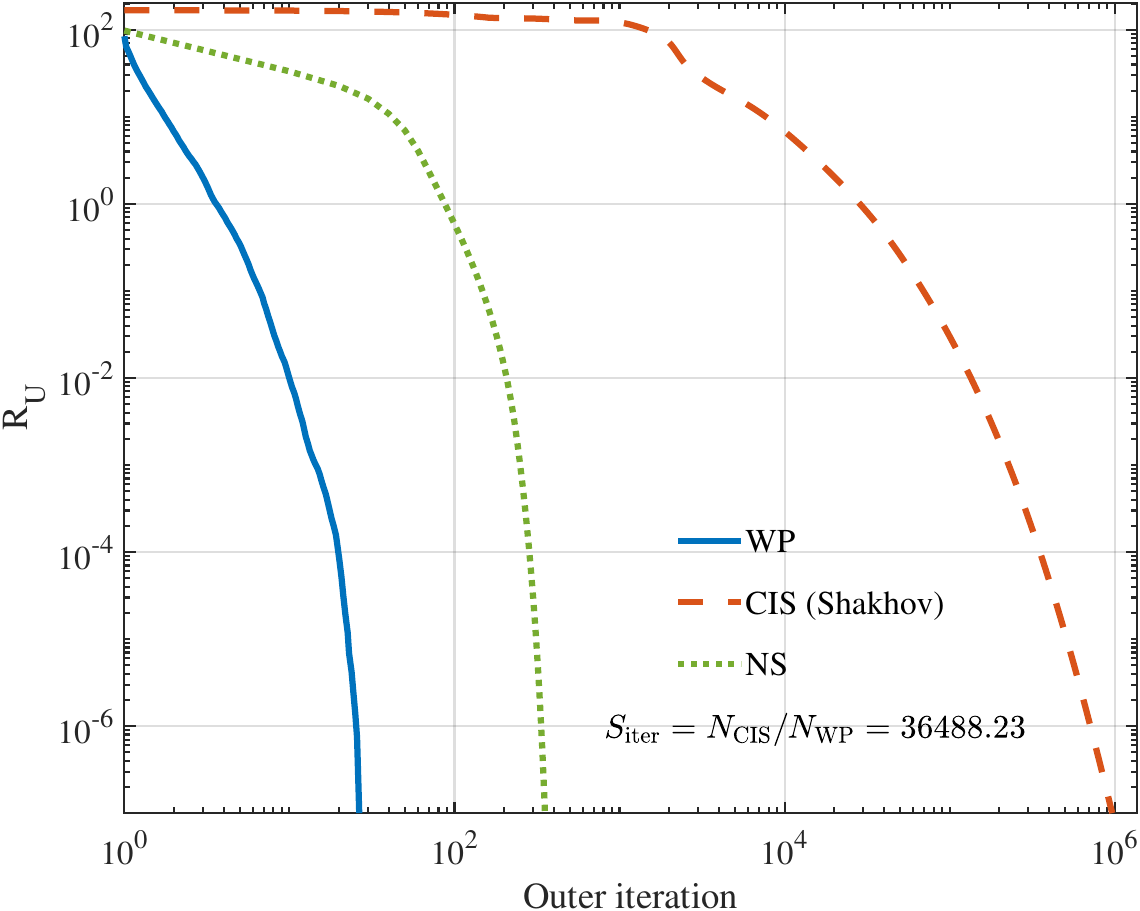}}\hspace{0.0200\figW}\raisebox{0.0000\figW}{\includegraphics[width=0.4900\figW]{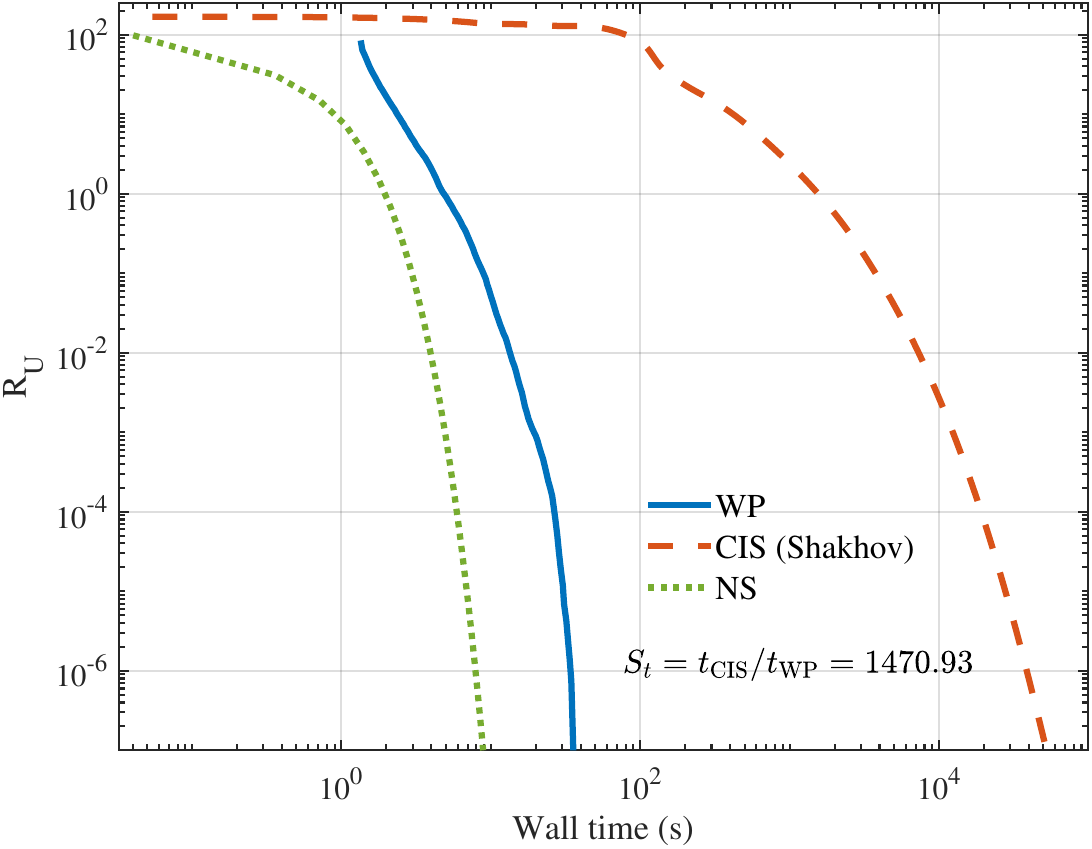}}}
\caption{$\varepsilon=10^{-4}$}\label{fig:apollo-res-kn1em4}
\end{subfigure}
\caption{Mach-5 flow around the Apollo section: histories of the normalized macroscopic residual of WP and the reference solvers against outer iteration (left) and wall time (right). At $\varepsilon=10^{-4}$ both panels show CIS of the Shakhov model and the NS solver.}\label{fig:apollo-residuals}
\end{figure}

\FloatBarrier
\subsection{Summary of the tests}\label{sec:test_summary}
Table~\ref{tab:efficiency} collects the iteration and wall-time ratios. At
$\varepsilon=1$ they are of order one, namely $1.56$ in iterations and $1.26$
in wall time for Couette flow, $1.78$ and $1.36$ for the cavity, $1.29$ and
$1.08$ for the cylinder and $1.11$ and $0.90$ for the Apollo section. In this
regime the particle carries nearly the whole solution, and the W updates add
20 to 30 per cent to the cost of an outer iteration, which the smaller number
of iterations largely pays for. As the
Knudsen number decreases the ratios grow
to $9.47$ and $7.55$ for the cavity at $\varepsilon=0.075$ and, at
$\varepsilon=10^{-2}$, to $50.83$ and $35.27$ for Couette flow, $31.74$ and
$17.52$ for the cylinder and $27.46$ and $16.10$ for the Apollo section. In
the continuum regime the speedups for the cavity at $\mathrm{Re}=100$ reach
$758.11$ and $534.84$. The iteration ratios with respect to CIS of the
Shakhov model reach $7383.90$ for Couette flow at $\varepsilon=10^{-4}$,
$3797.50$ for the cavity at $\mathrm{Re}=1000$, and $54412.11$ and $36488.23$
for the cylinder and the Apollo section at $\varepsilon=10^{-4}$, with
wall-time ratios of $18.14$, $33.18$, $1434.53$ and $1470.93$ against the same
solver. This trend agrees with Theorem~\ref{thm:acceleration}, according to
which CIS slows down as the resolved modes become long compared with the mean
free path, whereas the WP convergence factor is bounded by the particle share
of the transport.

Measured against the NS solver with the same macroscopic discretization and
iteration (Table~\ref{tab:ns_cost}), WP obtains the full Boltzmann solution
at $1.05$ to $2.09$ times the NS wall time in Couette and cavity flows and at
$4.17$ and $4.02$ times for the cylinder and the Apollo section, a cost
comparable to that of the Navier--Stokes solution.

The accuracy results are consistent with Theorem~\ref{thm:ap}. Where WP is
compared with CIS, the field differences range from $10^{-10}$ in the
rarefied cavity to about $10^{-3}$ in the normal shock and the hypersonic
external flows, and the molecular distributions of the shock and the external
flows reproduce the same non-equilibrium structure.
In the Reynolds-number cavity flows WP approaches the Navier--Stokes solution,
with speed differences of $8.15\times10^{-3}$ and $1.58\times10^{-2}$.

\begin{table}[!htb]
\centering
\small
\caption{CIS-to-WP ratios of iteration count and wall time, evaluated
independently at the common residual threshold of each comparison. Values
marked with an asterisk use the CIS iteration count and wall time of the Shakhov
model~\cite{liu_wpd_shakhov};
the wall time of all continuum cases is compared with the NS solver in
Table~\ref{tab:ns_cost}.}
\label{tab:efficiency}
\begin{tabular}{lllcc}
\toprule
Case & Regime & WP setting & Iteration speedup $S_{\rm iter}$ & Wall-time speedup $S_t$\\
\midrule
Couette & $\varepsilon=1$          & W1/PR6/P1   & $1.56$     & $1.26$\\
Couette & $\varepsilon=10^{-2}$    & W1/PR6/P1   & $50.83$    & $35.27$\\
Couette & $\varepsilon=10^{-4}$    & W256/PR6/P1 & $7383.90^{*}$ & $18.14^{*}$\\
Cavity  & $\varepsilon=1$          & W1/PR6/P1   & $1.78$     & $1.36$\\
Cavity  & $\varepsilon=0.075$      & W1/PR6/P1   & $9.47$     & $7.55$\\
Cavity  & $\mathrm{Re}=100$        & W24/PR6/P1  & $758.11$       & $534.84$\\
Cavity  & $\mathrm{Re}=1000$       & W24/PR6/P1  & $3797.50^{*}$  & $33.18^{*}$\\
Cylinder & $\varepsilon=1$         & W1/PR2/P1   & $1.29$     & $1.08$\\
Cylinder & $\varepsilon=10^{-2}$   & W1/PR2/P1   & $31.74$    & $17.52$\\
Cylinder & $\varepsilon=10^{-4}$   & W120/PR2/P1 & $54412.11^{*}$ & $1434.53^{*}$\\
Apollo  & $\varepsilon=1$          & W1/PR2/P1   & $1.11$     & $0.90$\\
Apollo  & $\varepsilon=10^{-2}$    & W1/PR2/P1   & $27.46$    & $16.10$\\
Apollo  & $\varepsilon=10^{-4}$    & W120/PR2/P1 & $36488.23^{*}$ & $1470.93^{*}$\\
\bottomrule
\end{tabular}
\end{table}

\begin{table}[!htb]
\centering
\small
\caption{Wall time to reach $R_U=10^{-7}$ for WP and for the NS solver with the
same mesh, reconstruction, implicit macroscopic iteration and CFL numbers in
the continuum cases. The NS solver performs no kinetic work; its wall time is
the reference cost of the macroscopic solution on the same mesh.}
\label{tab:ns_cost}
\begin{tabular}{lcccc}
\toprule
Case & Mesh & NS wall time (s) & WP wall time (s) & $t_{\rm WP}/t_{\rm NS}$\\
\midrule
Couette, $\varepsilon=10^{-4}$  & $4\times100$  & $163.0$  & $320.5$   & $1.97$\\
Cavity, $\mathrm{Re}=100$       & $64^2$        & $48.0$   & $100.1$   & $2.09$\\
Cavity, $\mathrm{Re}=1000$      & $100^2$       & $1275.0$ & $1342.7$  & $1.05$\\
Cylinder, $\varepsilon=10^{-4}$ & $129\times40$ & $46.3$   & $193.0$   & $4.17$\\
Apollo, $\varepsilon=10^{-4}$   & $64\times24$  & $8.89$   & $35.7$    & $4.02$\\
\bottomrule
\end{tabular}
\end{table}

\section{Conclusion}\label{sec:conclusion}
This paper has extended the wave--particle decomposition to the full
Boltzmann equation at both the continuous and the discrete level. At the
continuous level, an identity transformation splits the collision term into a
relaxation toward the local Maxwellian and a conservative remainder, leaving
the Boltzmann equation unchanged. The first term of the resulting integral
solution, the equilibrium integral over a local kinetic horizon, defines the
wave, and its complement defines a particle that carries the collision
remainder. The microscopic wave and particle equations add up to the
Boltzmann equation, and their moments are extended Navier--Stokes equations
whose generalized stresses and heat fluxes sum to those of the Boltzmann
distribution. The total conservation law and the particle equation form the
wave--particle multiscale equations, a closed system coupled in both
directions, in which the particle closes the stress and heat flux of the
conservation law and the macroscopic state supplies the Maxwellian, the wave
and the endpoint of the particle equation. These equations are exact for
every horizon, and the horizon-to-relaxation ratio parametrizes a continuous
spectrum from the Boltzmann equation to Navier--Stokes hydrodynamics, along
which the combined flux preserves the Euler and Navier--Stokes limits. As in
Part~I, the split is made in the equation, through a local kinetic horizon,
rather than in the numerical update, through the global time step of the
UGKWP method. For the Boltzmann equation the particle adds the signed
collision remainder to the collisionless fraction of Part~I and reduces to
that fraction when the remainder vanishes.

At the discrete level, the system is solved by the coupled WP iteration, a
two-way macroscopic--microscopic iteration. The W iteration advances the
conservative variables with the particle fixed, and the P iteration advances
the particle with the predicted wave fixed and with the endpoint traction
supplied by the macroscopic prediction. The wave flux is the horizon-weighted
second-order gas-kinetic flux, the particle flux is the second-order upwind
discrete-ordinate flux of MUSCL type, and acceptance reconstructs both
components from the updated total distribution, so that the two residuals
converge together. Under the stated assumptions of conservation, flux
compatibility, admissibility and solvability, the fixed-point equation is the
discrete Boltzmann equation and contains no preconditioning parameters. The linear model identifies the mechanism of the near-continuum
acceleration, in which the direct treatment of the conservative modes and the
endpoint coupling suppress the transport error passed to the P iteration
increasingly as the horizon-to-relaxation ratio grows.

In the normal shock, Couette, cavity, cylinder and Apollo calculations the
converged WP solutions agree with those of the conventional iterative scheme
in the rarefied and transition regimes and approach the Navier--Stokes
solutions in the continuum regime. As the Knudsen number decreases, the
wall-time speedup over the conventional iterative scheme for the Boltzmann
equation grows to about 35 for Couette flow at $\varepsilon=10^{-2}$ and to
more than 500 for the cavity at $\mathrm{Re}=100$, and in the continuum
regime the iteration count is reduced by about three to more than four orders
of magnitude. Where the Boltzmann scheme could not be carried to convergence
on the three-dimensional velocity grid, the reference is the Shakhov model,
whose iteration is the cheaper and faster of the two, so that the
corresponding wall-time ratios, from about 20 to about 1500, are conservative
lower bounds on the gain. Against the NS solver that shares its macroscopic
discretization, WP obtains the converged full Boltzmann solution at between
one and about four times the wall time.

\section*{Declaration of competing interest}
The authors declare that they have no known competing financial interests or
personal relationships that could have appeared to influence the work reported
in this paper.

\section*{Acknowledgements}
\begin{sloppypar}
Chang Liu is partially supported by the National Natural Science Foundation of
China (12031001, 12102061), the Beijing Natural Science Foundation (Z230003), and
the Presidential Foundation of the China Academy of Engineering Physics
(YZJJZQ2022017). The authors are partially supported by the National Key R\&D
Program of China (2022YFA1004500). Kun Xu is partially supported by the National
Natural Science Foundation of China (12172316, 92371107) and the Hong Kong
Research Grants Council (16301222, 16208324).
\end{sloppypar}


\clearpage
\begin{thebibliography}{99}
\bibitem{boltzmann1872} L. Boltzmann, Weitere Studien \"uber das W\"armegleichgewicht unter Gasmolek\"ulen, \textit{Sitzungsber. Kais. Akad. Wiss. Wien} 66 (1872) 275--370.
\bibitem{cercignani1988} C. Cercignani, \textit{The Boltzmann Equation and Its Applications}, Springer, New York, 1988.
\bibitem{sone2007} Y. Sone, \textit{Molecular Gas Dynamics: Theory, Techniques, and Applications}, Birkh\"auser, Boston, 2007.
\bibitem{chapman1970} S. Chapman, T.G. Cowling, \textit{The Mathematical Theory of Non-Uniform Gases}, 3rd ed., Cambridge University Press, Cambridge, 1970.
\bibitem{bird1994} G.A. Bird, \textit{Molecular Gas Dynamics and the Direct Simulation of Gas Flows}, Oxford University Press, Oxford, 1994.
\bibitem{yang1995} J.Y. Yang, J.C. Huang, Rarefied flow computations using nonlinear model Boltzmann equations, \textit{J. Comput. Phys.} 120 (1995) 323--339.
\bibitem{mieussens2000} L. Mieussens, Discrete-velocity models and numerical schemes for the Boltzmann-BGK equation in plane and axisymmetric geometries, \textit{J. Comput. Phys.} 162 (2000) 429--466.
\bibitem{bgk1954} P.L. Bhatnagar, E.P. Gross, M. Krook, A model for collision processes in gases. I. Small amplitude processes in charged and neutral one-component systems, \textit{Phys. Rev.} 94 (1954) 511--525.
\bibitem{shakhov1968} E.M. Shakhov, Generalization of the Krook kinetic relaxation equation, \textit{Fluid Dyn.} 3 (1968) 95--96, doi:10.1007/BF01029546.
\bibitem{mouhot2006} C. Mouhot, L. Pareschi, Fast algorithms for computing the Boltzmann collision operator, \textit{Math. Comp.} 75 (2006) 1833--1852.
\bibitem{wu2013} L. Wu, C. White, T.J. Scanlon, J.M. Reese, Y. Zhang, Deterministic numerical solutions of the Boltzmann equation using the fast spectral method, \textit{J. Comput. Phys.} 250 (2013) 27--52.
\bibitem{adams2002} M.L. Adams, E.W. Larsen, Fast iterative methods for discrete-ordinates particle transport calculations, \textit{Prog. Nucl. Energy} 40 (2002) 3--159.
\bibitem{jin1999} S. Jin, Efficient asymptotic-preserving (AP) schemes for some multiscale kinetic equations, \textit{SIAM J. Sci. Comput.} 21 (1999) 441--454.
\bibitem{pareschi2005} L. Pareschi, G. Russo, Implicit--explicit Runge--Kutta schemes and applications to hyperbolic systems with relaxation, \textit{J. Sci. Comput.} 25 (2005) 129--155.
\bibitem{bennoune2008} M. Bennoune, M. Lemou, L. Mieussens, Uniformly stable numerical schemes for the Boltzmann equation preserving the compressible Navier--Stokes asymptotics, \textit{J. Comput. Phys.} 227 (2008) 3781--3803.
\bibitem{filbet2010} F. Filbet, S. Jin, A class of asymptotic-preserving schemes for kinetic equations and related problems with stiff sources, \textit{J. Comput. Phys.} 229 (2010) 7625--7648.
\bibitem{dimarco2014} G. Dimarco, L. Pareschi, Numerical methods for kinetic equations, \textit{Acta Numer.} 23 (2014) 369--520.
\bibitem{su2020} W. Su, L. Zhu, P. Wang, Y. Zhang, L. Wu, Can we find steady-state solutions to multiscale rarefied gas flows within dozens of iterations?, \textit{J. Comput. Phys.} 407 (2020) 109245.
\bibitem{xu2010} K. Xu, J.-C. Huang, A unified gas-kinetic scheme for continuum and rarefied flows, \textit{J. Comput. Phys.} 229 (2010) 7747--7764.
\bibitem{huang2012ugks} J.-C. Huang, K. Xu, P. Yu, A unified gas-kinetic scheme for continuum and rarefied flows II: multi-dimensional cases, \textit{Commun. Comput. Phys.} 12 (2012) 662--690.
\bibitem{xu2014direct} K. Xu, \textit{Direct Modeling for Computational Fluid Dynamics: Construction and Application of Unified Gas-Kinetic Schemes}, World Scientific, Singapore, 2015.
\bibitem{liu2016boltzmann} C. Liu, K. Xu, Q. Sun, Q. Cai, A unified gas-kinetic scheme for continuum and rarefied flows IV: Full Boltzmann and model equations, \textit{J. Comput. Phys.} 314 (2016) 305--340, doi:10.1016/j.jcp.2016.03.014.
\bibitem{zhu2016implicit} Y. Zhu, C. Zhong, K. Xu, Implicit unified gas-kinetic scheme for steady state solutions in all flow regimes, \textit{J. Comput. Phys.} 315 (2016) 16--38, doi:10.1016/j.jcp.2016.03.038.
\bibitem{zhu2017multigrid} Y. Zhu, C. Zhong, K. Xu, Unified gas-kinetic scheme with multigrid convergence for rarefied flow study, \textit{Phys. Fluids} 29 (2017) 096102.
\bibitem{xu2022implicit} X. Xu, Y. Zhu, C. Liu, K. Xu, UGKS-based implicit iterative method for multiscale nonequilibrium flow simulations, \textit{SIAM J. Sci. Comput.} 44 (2022) B996--B1017.
\bibitem{guo2013} Z. Guo, K. Xu, R. Wang, Discrete unified gas kinetic scheme for all Knudsen number flows: low-speed isothermal case, \textit{Phys. Rev. E} 88 (2013) 033305.
\bibitem{sun2015ugks} W. Sun, S. Jiang, K. Xu, An asymptotic preserving unified gas kinetic scheme for gray radiative transfer equations, \textit{J. Comput. Phys.} 285 (2015) 265--279.
\bibitem{liu2017plasma} C. Liu, K. Xu, A unified gas kinetic scheme for continuum and rarefied flows V: Multiscale and multi-component plasma transport, \textit{Commun. Comput. Phys.} 22 (2017) 1175--1223.
\bibitem{liu2019multiphase} C. Liu, Z. Wang, K. Xu, A unified gas-kinetic scheme for continuum and rarefied flows VI: Dilute disperse gas-particle multiphase system, \textit{J. Comput. Phys.} 386 (2019) 264--295.
\bibitem{zhu2021decade} Y. Zhu, K. Xu, The first decade of unified gas kinetic scheme, arXiv:2102.01261 (2021).
\bibitem{liu2020ugkwp} C. Liu, Y. Zhu, K. Xu, Unified gas-kinetic wave-particle methods I: Continuum and rarefied gas flow, \textit{J. Comput. Phys.} 401 (2020) 108977.
\bibitem{zhu2019ugkwp} Y. Zhu, C. Liu, C. Zhong, K. Xu, Unified gas-kinetic wave-particle methods II: Multiscale simulation on unstructured mesh, \textit{Phys. Fluids} 31 (2019) 067105.
\bibitem{li2020ugkwp} W. Li, C. Liu, Y. Zhu, J. Zhang, K. Xu, Unified gas-kinetic wave-particle methods III: Multiscale photon transport, \textit{J. Comput. Phys.} 408 (2020) 109280.
\bibitem{liu2021plasma} C. Liu, K. Xu, Unified gas-kinetic wave-particle methods IV: Multi-species gas mixture and plasma transport, \textit{Adv. Aerodyn.} 3 (2021) 9, doi:10.1186/s42774-021-00062-1.
\bibitem{xu2021diatomic} X. Xu, Y. Chen, C. Liu, Z. Li, K. Xu, Unified gas-kinetic wave-particle methods V: Diatomic molecular flow, \textit{J. Comput. Phys.} 442 (2021) 110496, doi:10.1016/j.jcp.2021.110496.
\bibitem{wei2024ugkwp} Y. Wei, Y. Zhu, K. Xu, Unified gas-kinetic wave-particle methods VII: Diatomic gas with rotational and vibrational nonequilibrium, \textit{J. Comput. Phys.} 497 (2024) 112610, doi:10.1016/j.jcp.2023.112610.
\bibitem{yang2022multiphase} X. Yang, C. Liu, X. Ji, W. Shyy, K. Xu, Unified gas-kinetic wave-particle methods VI: Disperse dilute gas-particle multiphase flow, \textit{Commun. Comput. Phys.} 31 (2022) 669--706, doi:10.4208/cicp.OA-2021-0153.
\bibitem{long2024nonequilibrium} W. Long, Y. Wei, K. Xu, Nonequilibrium flow simulations using unified gas-kinetic wave-particle method, \textit{AIAA J.} 62 (2024) 1411--1433, doi:10.2514/1.J063641.
\bibitem{yang2026turbulence} X. Yang, K. Xu, Wave-particle based multiscale modeling and simulation of non-equilibrium turbulent flows, \textit{Comput. Fluids} (2026) 107162.
\bibitem{guo2026representation} Z. Guo, Y. Zhu, K. Xu, Kinetic representation of the unified gas-kinetic wave-particle method and beyond, \textit{Commun. Comput. Phys.} 39 (2026) 1512--1535, doi:10.4208/cicp.OA-2024-0192.
\bibitem{yang2026lts} S. Yang, C. Zhong, H. Jin, S. Liu, C. Zhuo, Convergence-accelerated simplified unified wave-particle method via local time-stepping for 3D hypersonic non-equilibrium flow simulation, \textit{Theor. Appl. Mech. Lett.} (2026) 100668, doi:10.1016/j.taml.2026.100668.
\bibitem{guo2026lts} W. Guo, J. Cao, W. Long, K. Xu, Rigorously justified local time stepping in the unified gas-kinetic wave-particle method for steady multiscale flow simulation, arXiv:2607.09552 (2026).
\bibitem{liu_wpd_I} C. Liu, K. Xu, Wave-particle decomposition for kinetic equations I: Theory and numerics, arXiv:2606.26915 (2026).
\bibitem{vanleer1974} B. van Leer, Towards the ultimate conservative difference scheme. II. Monotonicity and conservation combined in a second-order scheme, \textit{J. Comput. Phys.} 14 (1974) 361--370.
\bibitem{xu2001} K. Xu, A gas-kinetic BGK scheme for the Navier--Stokes equations and its connection with artificial dissipation and Godunov method, \textit{J. Comput. Phys.} 171 (2001) 289--335.
\bibitem{vanleer1979} B. van Leer, Towards the ultimate conservative difference scheme. V. A second-order sequel to Godunov's method, \textit{J. Comput. Phys.} 32 (1979) 101--136.
\bibitem{yoon1988} S. Yoon, A. Jameson, Lower--upper symmetric-Gauss--Seidel method for the Euler and Navier--Stokes equations, \textit{AIAA J.} 26 (1988) 1025--1026.
\bibitem{liu_wpd_shakhov} C. Liu, K. Xu, An implicit scheme for the wave--particle decomposition of the Shakhov kinetic model, preprint (2026).
\bibitem{ghia1982} U. Ghia, K.N. Ghia, C.T. Shin, High-Re solutions for incompressible flow using the Navier--Stokes equations and a multigrid method, \textit{J. Comput. Phys.} 48 (1982) 387--411.
\end{thebibliography}
\end{document}